\font\tengoth=eufm10 at 10pt
\font\sevengoth=eufm7 at 6pt
\newcommand{\mlabel}[1]{\marginpar{#1}\label{#1}}
\newcommand{\g}{{\mathfrak g}}
\newcommand{\h}{{\mathfrak h}}
\newcommand{\z}{{\mathfrak z}}
\newcommand{\fa}{{\mathfrak a}}
\newcommand{\fc}{{\mathfrak c}}
\newcommand{\fe}{{\mathfrak e}}
\newcommand{\fg}{{\mathfrak g}}
\newcommand{\fh}{{\mathfrak h}}
\newcommand{\fj}{{\mathfrak j}}
\newcommand{\fk}{{\mathfrak k}}
\newcommand{\fn}{{\mathfrak n}}
\newcommand{\fq}{{\mathfrak q}}
\newcommand{\fp}{{\mathfrak p}}
\newcommand{\fs}{{\mathfrak s}}
\newcommand{\ft}{{\mathfrak t}}
\newcommand{\fu}{{\mathfrak u}}
\newcommand{\fz}{{\mathfrak z}}
\renewcommand\sp{\mathfrak {sp}}
\renewcommand{\:}{\colon}
\newcommand{\1}{\mathbf{1}}
\newcommand{\cE}{\mathcal{E}}
\newcommand{\cF}{\mathcal{F}}
\newcommand{\cH}{\mathcal{H}}
\newcommand{\cM}{\mathcal{M}}
\newcommand{\cO}{\mathcal{O}}
\newcommand{\cR}{\mathcal{R}}
\newcommand{\cS}{\mathcal{S}}
\newcommand{\cT}{\mathcal{T}}
\newcommand{\cV}{\mathcal{V}}
\newcommand{\cW}{\mathcal{W}}
\newcommand\bx{{\bf{x}}}
\newcommand\by{{\bf{y}}}
\newcommand{\bO}{\mathbf{O}}
\newcommand{\eset}{\emptyset}
\newcommand{\trile}{\trianglelefteq}
\newcommand{\subeq}{\subseteq}
\newcommand{\supeq}{\supseteq}
\newcommand{\into}{\hookrightarrow}
\newcommand{\eps}{\varepsilon}
\newcommand{\N}{{\mathbb N}}
\newcommand{\Z}{{\mathbb Z}}
\newcommand{\R}{{\mathbb R}}
\newcommand{\C}{{\mathbb C}}
\newcommand{\K}{{\mathbb K}}
\renewcommand{\H}{{\mathbb H}}
\newcommand{\T}{{\mathbb T}}
\newcommand{\bS}{{\mathbb S}}
\renewcommand{\tilde}{\widetilde}
\renewcommand{\L}{\mathop{\bf L{}}\nolimits}
\newcommand{\GL}{\mathop{{\rm GL}}\nolimits}
\newcommand{\SL}{\mathop{{\rm SL}}\nolimits}
\newcommand{\AU}{\mathop{{\rm AU}}\nolimits}
\newcommand{\PSL}{\mathop{{\rm PSL}}\nolimits}
\newcommand{\SO}{\mathop{{\rm SO}}\nolimits}
\newcommand{\OO}{\mathop{\rm O{}}\nolimits}
\newcommand{\UU}{\mathop{\rm U{}}\nolimits}
\newcommand{\U}{\mathop{\rm U{}}\nolimits}
\newcommand{\Sym}{\mathop{{\rm Sym}}\nolimits}
\newcommand{\Skew}{\mathop{{\rm Skew}}\nolimits}
\newcommand{\gl}  {\mathop{{\mathfrak{gl} }}\nolimits}
\newcommand{\fsl} {\mathop{{\mathfrak{sl} }}\nolimits}
\newcommand{\su}  {\mathop{{\mathfrak{su} }}\nolimits}
\newcommand{\so}  {\mathop{{\mathfrak{so} }}\nolimits}
\newcommand{\Exp}{\mathop{{\rm Exp}}\nolimits}
\newcommand{\Fix}{\mathop{{\rm Fix}}\nolimits}
\newcommand{\ad}{\mathop{{\rm ad}}\nolimits}
\newcommand{\Ad}{\mathop{{\rm Ad}}\nolimits}
\renewcommand{\Im}{\mathop{{\rm Im}}\nolimits}
\newcommand{\tr}{\mathop{{\rm tr}}\nolimits}
\newcommand{\Herm}{\mathop{{\rm Herm}}\nolimits}
\newcommand{\Aherm}{\mathop{{\rm Aherm}}\nolimits}
\newcommand{\Aut}{\mathop{{\rm Aut}}\nolimits}
\newcommand{\diag}{\mathop{{\rm diag}}\nolimits}
\newcommand{\End}{\mathop{{\rm End}}\nolimits}
\newcommand{\id}{\mathop{{\rm id}}\nolimits}
\newcommand{\rk}{\mathop{{\rm rank}}\nolimits}
\renewcommand{\dim}{\mathop{{\rm dim}}\nolimits}
\newcommand{\im}{\mathop{{\rm im}}\nolimits}
\newcommand{\Inn}{\mathop{{\rm Inn}}\nolimits}
\newcommand{\Int}{\mathop{{\rm int}}\nolimits}
\newcommand{\cone}{\mathop{{\rm cone}}\nolimits}
\newcommand{\dS}{\mathop{{\rm dS}}\nolimits}
\renewcommand{\phi}{\varphi}
\newcommand{\Rarrow}{\Rightarrow}
\newcommand{\nin}{\noindent} 
\newcommand{\oline}{\overline}
\newcommand{\la}{\langle}
\newcommand{\ra}{\rangle}
\newcommand{\up}{\mathop{\uparrow}}
\newcommand{\res}{\vert}
\newcommand{\spann}{{\rm span}}
\newcommand{\Spec}{{\rm Spec}}
\newcommand{\ssssarr}{\hbox to 15pt{\rightarrowfill}}
\newcommand{\sssarr}{\hbox to 20pt{\rightarrowfill}}
\newcommand{\ssarr}{\hbox to 30pt{\rightarrowfill}}
\newcommand{\sarr}{\hbox to 40pt{\rightarrowfill}}
\newcommand{\arr}{\hbox to 60pt{\rightarrowfill}}
\newcommand{\larr}{\hbox to 60pt{\leftarrowfill}}
\newcommand{\Arr}{\hbox to 80pt{\rightarrowfill}}
\def\theoremname{Theorem}
\def\propositionname{Proposition}
\def\corollaryname{Corollary}
\def\lemmaname{Lemma}
\def\remarkname{Remark}
\def\conjecturename{Conjecture} 
\def\definitionname{Definition}
\def\exercisename{Exercise}
\def\examplename{Example}
\def\examplesname{Examples}
\def\problemname{Problem}
\def\problemsname{Problems}
\def\satzname{Satz} 
\def\koroname{Korollar}
\def\folgname{Folgerung}
\def\bemerkname{Bemerkung}
\def\aufgname{Aufgabe}
\def\beisname{Beispiel}
\def\beissname{Beispiele}
\def\bewname{Beweis}
\def\@thmcounter#1{\noexpand\arabic{#1}}
\def\@thmcountersep{}
\def\@begintheorem#1#2{\it \trivlist \item[\hskip 
\labelsep{\bf #1\ #2.\quad}]}
\def\@opargbegintheorem#1#2#3{\it \trivlist
      \item[\hskip \labelsep{\bf #1\ #2.\quad{\rm #3}}]}
\newtheorem{theor}{\theoremname}[section]
\newtheorem{propo}[theor]{\propositionname}
\newtheorem{coro}[theor]{\corollaryname}
\newtheorem{lemm}[theor]{\lemmaname}
\newenvironment{thm}{\begin{theor}\it}{\end{theor}}
\newenvironment{theorem}{\begin{theor}\it}{\end{theor}}
\newenvironment{prop}{\begin{propo}\it}{\end{propo}}
\newenvironment{cor}{\begin{coro}\it}{\end{coro}}
\newenvironment{lem}{\begin{lemm}\it}{\end{lemm}}
\newenvironment{lemma}{\begin{lemm}\it}{\end{lemm}}
\newtheorem{rema}[theor]{\remarkname}
\newenvironment{rem}{\begin{rema}\rm}{\end{rema}}
\newtheorem{stepnow}[theor]{}
\newtheorem{defin}[theor]{\definitionname} 
\newenvironment{definition}{\begin{defin}\rm}{\end{defin}}
\newenvironment{defn}{\begin{defin}\rm}{\end{defin}}
\newtheorem{exerc}{\exercisename}[section]
\newtheorem{exa}[theor]{\examplename}
\newenvironment{example}{\begin{exa}\rm}{\end{exa}}
\newenvironment{ex}{\begin{exa}\rm}{\end{exa}}
\newtheorem{exas}[theor]{\examplesname}
\newenvironment{exs}{\begin{exas}\rm}{\end{exas}}
\newtheorem{conj}[theor]{\conjecturename}
\newtheorem{pro}[theor]{\problemname}
\newenvironment{prob}{\begin{pro}\rm}{\end{pro}}
\newtheorem{prs}[theor]{\problemsname}
\newtheorem{aufg}{\aufgname}[section]
\newenvironment{prf}{\begin{proof}}{\end{proof}}
\newcommand{\pmat}[1]{\begin{pmatrix} #1 \end{pmatrix}}
\qed\end{trivlist}}
\newenvironment{beweis*}{\begin{trivlist}\item[\hskip%
\labelsep{\bf\bewname.\quad}]}%
{\end{trivlist}}
\newtheorem{satzn}[theor]{\satzname}
\newtheorem{koro}[theor]{\koroname}
\newtheorem{folg}[theor]{\folgname}
\newtheorem{bem}[theor]{\bemerkname}
\newtheorem{aufgn}[theor]{\aufgname}
\newtheorem{beis}[theor]{\beisname}
\newtheorem{beiss}[theor]{\beissname}
\numberwithin{equation}{section}
\renewcommand{\rk}{{\mathop{\rm rk}}}
\newcommand{\sE}{{\sf E}}
\renewcommand\up{{\uparrow}}
\newcommand{\sH}{{\sf H}}
\newcommand{\sV}{{\tt V}}
\newcommand{\be}{{\bf{e}}}
\renewcommand{\bO}{\mathbb O}
\renewcommand{\phi}{\varphi}
\newcommand{\otau}{\overline{\tau}}
\renewcommand\mlabel{\label} 
\begin{document}
\title{Wedge domains in non-compactly causal symmetric spaces }
\author{Karl-Hermann Neeb, Gestur \'Olafsson
\thanks{The research of K.-H. Neeb was partially supported
  by DFG-grant NE 413/10-1. The research of G. \'Olafsson was partially supported by Simons grant 586106.}} 

\maketitle

\abstract{This article is part of an ongoing project
aiming at the connections between causal structures
on homogeneous spaces, Algebraic Quantum Field Theory (AQFT),
modular theory of operator algebras 
and unitary representations of Lie groups.
In this article we concentrate on non-compactly causal symmetric spaces $G/H$. This
class contains de Sitter space but also other spaces with invariant partial ordering. 
 The central ingredient 
is an Euler element $h$ in the Lie
algebra of $\fg$. We define three different kinds of wedge domains 
depending on $h$ and
the causal structure on $G/H$.  Our main result is that
the connected component  containing the base point $eH$ of these seemingly different 
domains all agree. Furthermore we discuss the connectedness of
those wedge domains. We show that each of these spaces has a natural extension to a non-compactly
causal symmetric space of the form $G_\C/G^c$ where
$G^c$ is certain real form of the complexification $G_\C$ of $G$.
As $G_\C/G^c$ is non-compactly causal, it also contains  three types of
wedge domains. Our results says that the intersection of these domains with $G/H$
agree with the wedge domains in $G/H$.}
\bigskip

\noindent
MSC 2010: Primary 22E45; Secondary 81R05, 81T05
\tableofcontents

\section{Introduction}
\mlabel{sec:1}

This article is part of an ongoing project
aiming at the connections between causal structures
on homogeneous spaces, Algebraic Quantum Field Theory (AQFT),
modular theory of operator algebras 
and unitary representations of Lie groups.

The causal homogeneous spaces we consider in this paper
are symmetric spaces $M = G/H$, where
$G$ is a connected reductive Lie group, endowed with an involutive automorphism
$\tau$, and $H$ is an open subgroup of the group $G^\tau$ of
$\tau$-fixed points. The causal structure on $M$ is represented
by a $G$-invariant field of pointed open convex
cones $V_+(m) \subeq T_m(M), m \in M$, so that
we have on the infinitesimal level a triple
$(\g,\tau,C)$, where $C = \oline{V_+(eH)} \subeq T_{eH}(M) \cong \g^{-\tau}$ is a closed
pointed generating $\Ad(H)$-invariant convex cone
(see \cite{HO97} for details).

There are three major types of causal symmetric spaces: 
\begin{itemize}
\item The flat spaces, where $M$ is a vector space and $G$ acts by affine maps; discussed
in \cite{NOO21}. 
\item The compactly causal  (cc)  spaces: Here the
elements in the interior $C^\circ$ are elliptic (have imaginary spectrum). Those
spaces were treated in \cite{NO21}.
\item The 
non-compactly causal (ncc) spaces: The elements $x \in C^\circ$ are 
hyperbolic ($\ad x$ is semisimple with real spectrum).
\end{itemize}
In this article we 
discuss the geometry of the ncc case.

Before discussing the geometry and the content of the article in more details, let
us briefly discuss the motivation of this work, the Algebraic
Quantum Field Theory, in short AQFT,   
in the sense of Haag--Kastler. Here one considers 
{\it nets} of von Neumann algebras $\cO\mapsto \cM(\cO)$
of operators on a fixed Hilbert space $\cH$ (\cite{Ha96}). 
The hermitian elements of the algebra $\cM(\cO)$ represent
observables  that can be measured in the ``laboratory'' $\cO$,
an open subset of the spacetime~$M$. 
Accordingly, one requires {\it isotony}, i.e.,
that $\cO_1 \subeq \cO_2$ implies 
$\cM(\cO_1) \subeq \cM(\cO_2)$. 
Causality enters by the {\it locality} assumption that 
$\cM(\cO_1)$ and $\cM(\cO_2)$ commute if 
$\cO_1$ and $\cO_2$ are space-like separated, i.e.,
cannot be connected by  causal curves. One further assumes the existence of a unitary representation 
$U \: G \to \U(\cH)$  of a Lie group~$G$, acting as a space-time symmetry 
group on $M$, such that
\[U(g) \cM(\cO) U(g)^* = \cM(g\cO)\quad\text{for}\quad
g \in G.\]
In addition, one assumes a $U(G)$-fixed 
unit vector $\Omega \in \cH$, representing typically 
a vacuum state of a quantum field. 

The domains $\cO \subeq M$ for which 
$\Omega$ is cyclic and separating for $\cM(\cO)$ are of particular 
relevance. For these domains $\cO$, the
Tomita--Takesaki Theorem (\cite[Thm.~2.5.14]{BR87})
yields for the von Neumann algebra 
$\cM(\cO)$ a conjugation (antiunitary involution) $J_\cO$ and a
positive selfadjoint operator $\Delta_\cO$ satisfying
\begin{equation} \mlabel{eq:j1}
J_\cO \cM(\cO) J_\cO = \cM(\cO)' 
\quad \mbox{ and } \quad
\Delta_\cO^{it} \cM(\cO)\Delta_\cO^{-it} = \cM(\cO) \quad \mbox{
  for } \quad t \in \R. 
\end{equation} 
We thus obtain the modular automorphism group of $\cM(\cO)$ defined by 
$\alpha_t(A) = \Delta_\cO^{-it/2\pi} A \Delta_\cO^{it/2\pi}$,
$A \in \cM(\cO)$. 
It is now an interesting question when this
modular group is ``geometric'' 
in the sense that it is implemented by a one-parameter subgroup of~$G$, hence 
corresponds to a one-parameter group of symmetries of~$M$. For the modular 
conjugation $J_\cO$, we may likewise ask for the existence of an involutive 
automorphism $\tau_G$ of $G$ and an involution $\tau_M$ on $M$ 
reversing the causal structure, such that  
\begin{equation} 
  \mlabel{eq:j2} J_\cO \cM(\cO) J_\cO = \cM(\tau_M(\cO)), \qquad 
J_\cO U(g) J_\cO = U(\tau_G(g))\quad 
\mbox{ for } \quad g \in G, \cO \subeq M,
\end{equation}
and that $\tau_M$ and $\tau_G$ are compatible in the sense that 
\[ \tau_M \circ g = \tau_G(g) \circ \tau_M  \quad \mbox{ for } \quad g \in G. \]

It is often natural to simplify the structures
 by considering instead of the pair $(\cM,\Omega)$ the
  corresponding real subspace $\sV := \sV_{(\cM,\Omega)}
  := \oline{\cM_h\Omega}$, where $\cM_h =\{  M \in \cM \: M^* =M\}$.
  This subspace is called
  \begin{itemize}
  \item {\it cyclic} if $\sV + i \sV$ is dense in $\cH$,
    which means that $\Omega$ is cyclic for $\cM$.
  \item {\it separating} if $\sV \cap i \sV = \{0\}$, 
    which means that $\Omega$ is separating for $\cM$.
  \item {\it standard} if it is cyclic and separating, i.e.,
if $\sV \cap i \sV = \{0\}$ and $\oline{\sV + i \sV} = \cH$.
\end{itemize}
These three properties of real subspaces make sense without
any reference to operator algebras, but they still reflect
an important part of the underlying structures that can be studied
in the much simpler context of real subspaces.  In particular,
every standard subspace~$\sV$ leads to a densely defined closed conjugation
$S_{\sV}(u+iv)= u -iv $, $u, v \in \sV$, and then by polarization to an anti-unitary
involution $J_{\sV}$ and positive densely defined operator $\Delta_\sV$ such that
$J_\sV\Delta_\sV J_\sV=\Delta^{-1}$ and $S_\sV = J_\sV\Delta^{1/2}$.

Here is where the representations enter the picture.
Start with an antiunitary representation
$U \:  G \rtimes \{\id_G,\tau\} \to \AU(\cH)$, i.e.,
  $U(G) \subeq \U(\cH)$ and $J := U(\tau)$ is a conjugation.
Recall that the subspace 
$\cH^\infty \subeq \cH$ of vectors $v \in \cH$ for which the orbit map 
$U^v \: G \to \cH, g \mapsto U(g)v$, is smooth 
(the {\it smooth vectors}) is dense
and carries a natural Fr\'echet topology for which the action of 
$G$ on this space is smooth (\cite{Ne10}). 
The space $\cH^{-\infty}$ of continuous antilinear functionals $\eta \: \cH^\infty \to \C$ 
(the {\it distribution vectors}) 
contains in particular Dirac's kets 
$\la \cdot, v \ra$, $v \in \cH$, so that 
we obtain a {\it rigged Hilbert space} 
\[ \cH^\infty \into \cH \into \cH^{-\infty},\] 
where $G$ acts on all three spaces 
by representations denoted $U^\infty, U$ and $U^{-\infty}$, respectively.
To any real subspace $\sE \subeq \cH^{-\infty}$ and every open subset 
$\cO \subeq G$, we associate the closed real subspace 
\begin{equation}
  \label{eq:HE}
  \sH_\sE(\cO) := \oline{\spann_\R U^{-\infty}(C^\infty_c(\cO,\R))\sE},
\quad \mbox{ where } \quad 
 U^{-\infty}(\phi) = \int_G \phi(g)U^{-\infty}(g)\, dg, \ \ 
\phi \in C^\infty_c(G) 
\end{equation}
denotes the integrated representation of the convolution algebra 
$C^\infty_c(G)$ of test functions on $G$ on the space~$\cH^{-\infty}$.
On a homogeneous space $M = G/H$ with the projection map
$q \: G \to M$ we now obtain a ``push-forward net''
\begin{equation}
  \label{eq:pushforward}
  \sH^M_\sE(\cO) := \sH_\sE(q^{-1}(\cO)).
\end{equation}
This assignment is $G$-covariant and monotone.
One can now use the functorial process provided by Second Quantization 
(\cite{Si74}) to associate to any real  subspace $\sH_E(\cO) \subeq \cH$   
a  von Neumann algebra $\cM(\cO) := \cR_\pm(\sH_E(\cO))$ 
on the bosonic/fermionic Fock space $\cF_\pm(\cH)$.
This method has been developed by Araki and Woods  in the context of 
free bosonic quantum fields (\cite{AW63}); 
some of the corresponding fermionic results are more recent 
(cf.\ \cite{BJL02}). Other statistics (anyons) 
are developed in \cite{Schr97} and more recent deformations of 
this procedure are discussed in \cite[\S 3]{Le15}.
Although this construction ignores field theoretic interactions,
it displays
already some crucial features of quantum
field theories. This was explored for the flat case in \cite{NOO21}.

In \cite{NO21}, this construction has been
carried out for the class of {\it compactly causal symmetric spaces}.
The key point is to find suitable subspaces $\sE$ of distribution vectors such
that the net $\sH_E^M$ has nice properties,
such as the Bisognano--Wichmann property. 
 It asserts that there exists an element $h \in \g$ satisfying 
  $\tau(h) = h$ and an open subset $W \subeq M$ (called a wedge region),
  such that $\sV := \sH^M_\sE(W)$ is a standard subspace
  with modular conjugation $J_\sV = U(\tau)$ and
  modular operator $\Delta_\sV = e^{2\pi i \cdot \partial U(h)}$.
  
In the physics literature no uniform definition of a wedge region
in a spacetime exists, but there are several approaches which
share many aspects with our wedge domains in ncc symmetric spaces.
In \cite{DLM11}, wedge domains in $1+3$-dimensional spacetimes
are specified as connected components of the spacelike
complement of $2$-dimensional spacelike subspaces. In our context
this corresponds to starting with
the fixed point space $M^\alpha$ of the modular flow
$\alpha_t(m) = \exp(th).m$ and specifying
the wedge domain $W_M(h)$ in terms of the polar decomposition.
On de Sitter space, \cite{BB99} specifies wedge domains as the
causal completion of the world line
of a uniformly accelerated observer, and
we know from \cite{MNO22b}
that this picture fits for general ncc symmetric spaces.
This approach even works for anti-de Sitter space
(\cite{BS04}; see also \cite{LR08})
which is studied from the symmetric space perspective in \cite{NO21}. 
Further, \cite{Bo09} defines wedge regions as the union of a family
of order intervals between elements on two fixed lightrays,
and it is not so clear how this approach should be transcribed to symmetric
spaces, where the sets $\Exp_{eH}(C_\pm)$ are analogs of lightrays,
but in general of higher dimension. 

Causal symmetric spaces provide natural geometric objects for 
this theory and the most natural generalization of homogenous
time-oriented Lorentzian manifolds. So
a first step in our program is to understand how nets of real subspaces
can be constructed on causal symmetric spaces
$M = G/H$ corresponding to the data $(\g,\tau,C)$.

To connect with the modular theory of operator algebras,
we add to $(\g,\tau,C)$ additional structure represented
by an {\it Euler element} $h \in \fh$, i.e.,
$\ad h$ defines a $3$-grading
\[ \g = \g_1(h) \oplus \g_0(h) \oplus \g_{-1}(h),\quad \fg_j (h) = \ker (\ad (h)-  j\1), \, j=1,0,-1.\]
Our structural data is therefore given by the quadruple $(\g,\tau,C,h)$,
called a {\it modular   causal symmetric Lie algebra}.

With a view towards the connection with AQFT,
the set of pairs $(h,\tau) \in \g \times \Aut(\g)$,
consisting of an Euler element
$h \in \g$ and an involution $\tau$ with $\tau(h) = h$ have
been studied abstractly in \cite{MN21}, and we shall use some of
the obtained results along the way. Here we shall focus on the
relevant global geometric aspects of $M$ and in particular on
  the candidates for wedge regions.

On the global level $h$ generates the 
{\it modular flow} 
\[\alpha_t(gH) = \exp(th) g H,\quad t\in \R.\]
As we are particularly interested in domains
$\cO \subeq M$ invariant under the modular flow $\alpha$
for which there may exist nets of von Neumann algebras
such that $U(\exp t h) = \Delta_{\cM(\cO)}^{-it/2\pi}$ is the
modular group of $\cM(\cO)$, we are confronted with the 
problem to formulate sufficient conditions on~$\cO$.
As the specific examples in AQFT suggest, the modular flow on $\cO$
should be timelike future-oriented
(\cite{TW97, BB99, BMS01, BS04, Bo09, LR08}, \cite[\S 3]{CLRR22}),
which in our context means that
the {\it modular vector field}
\begin{equation}
  \label{eq:xhdef}
 X_h^M(m) 
:= \frac{d}{dt}\Big|_{t = 0} \alpha_t(m) 
\end{equation}
should satisfy
\[ X^M_h(m) \in V_+(m) \quad \mbox{ for all } \quad m \in \cO.\]
To understand how to find regions with this property, we therefore
have to study the  {\it positivity region}
\[ W_M^+(h) := \{ m \in M \: X^M_h(m) \in V_+(m)  \} \]
of the modular flow. Its connected components are called {\it wedge domains}.
In \cite{NO21} this has been carried out for compactly causal
symmetric spaces (cf.\ Definition~\ref{def:ssp}),
which turned out to be rather easy because
the wedge domains are orbits of certain subsemigroups of $G$ with a
rather simple structure. In \cite{NO21} we even constructed
second quantization nets of operator algebras for
unitary highest weight representations of $G$.

In this paper we turn to the geometric aspects of wedge domains in
non-compactly causal symmetric spaces (cf.\ Definition~\ref{def:ssp})
and we leave the construction
of nets on these spaces for the future. This is substantially harder than
in the compactly causal space because some of the groups $G$ we are dealing
with have no unitary representations in which any Lie algebra element
has semibounded spectrum. One therefore has to understand first from which
class of representations such nets may be constructed. We are actually
optimistic and think that this is always possible.
For the case where $G = \SO_{1,2}(\R)_e$ and $M = \dS^2$ is $2$-dimensional
de Sitter space. We refer to \cite{BM96} for results in this direction.\\

First we introduce several types of ``wedge domains''  
in reductive non-compactly causal (ncc) symmetric spaces $M = G/H$, 
specified by a modular causal symmetric Lie algebra 
$(\g,\tau,C,h)$,  a connected Lie group $G$
and an open subgroup $H \subeq G^\tau$ with $\Ad(H)C = C$. 
To strip off artificial difficulties arising from coverings,
we  make {\bf two structural assumptions} (GP) and (Eff) 
specified in Subsection~\ref{subsec:globass} which imply that
\[ G \cong \fz(\g) \times \Inn([\g,\g]), \]
where $\Inn$ denotes the group of inner automorphisms and $\fz(\g)$ is
the center of $\g$. Further,
$h$ is an Euler element contained in $\fh := \g^\tau$
which induces a one-parameter group 
$\alpha_t := e^{t \ad h}$ of automorphisms of $\g$ and also
automorphisms of $G$ and $M$.
Then 
\[ \tau_h := e^{\pi i \ad h}  \]
is an involutive automorphism of $\g$
and 
\[ \kappa_h := e^{-\frac{\pi i}{2} \ad h} \in \Aut(\g_\C) \] 
is an  automorphism of order~$4$ of $\g_\C$
with $\kappa_h^2 = \tau_h$.
Let $V_+(gH) := g.C^\circ \subeq T_{gH}(M)$
denote the open cones
\begin{footnote}
{We write $C^\circ$ for the relative interior of the cone $C$ in its span.}  
\end{footnote}
defining the causal structure on $M$ and recall 
the {\it modular vector field} $X_h^M \in \cV(M)$ defined by \eqref{eq:xhdef}.

In addition to $W_M^+(h)$ we introduce the following domains associated to this data: 
\begin{itemize}  
\item The {\it tube domain of $M$} (here we assume for simplicity that 
 $M \subeq M_\C := G_\C/G^c$, see the beginning of Section \ref{sect:Complex} for definitions.)
\[ \cT_M := G.\Exp_{eH}(i C^\pi), \qquad 
C^\pi := \{ x \in C^\circ \: s(x) < \pi \},\] 
where $s \: C^\circ \to (0,\infty)$ is an $\Ad(H_e)$-invariant function 
specified in terms of roots as follows. We fix a  maximal
abelian subspace $\fa$ consisting of hyperbolic elements. Then 
$C^\circ = \Ad(H_e).(C^\circ\cap \fa)$ and
for
$y = \Ad(g) x \in C^\circ$ with $x\in \fa \cap 
C^\circ$, $g \in H$, we have 
\[ s(y) = \max \{ |\alpha(x)|, 2|\beta(x)| \: \alpha \in \Delta_p, 
\beta \in \Delta_k \}.\] 
Here $\Delta_k$ is the set of compact roots and $\Delta_p$ 
the set of non-compact roots (\cite{KN96,HO97} and Definition~\ref{def:3.1}). 
\item The {\it  KMS wedge domain} 
\[ W_M^{\rm KMS}(h) := \{ m \in M \: 
(\forall z \in \cS_\pi = \{ z \in \C \: 0 < \Im z < \pi\})\ \alpha_z(m) \in \cT_M\},\] 
where $(\alpha_z)_{z \in \C}$ denotes the holomorphic extension 
of the modular flow to a holomorphic $\C$-flow on the complex symmetric space~$M_\C$. 
\item The {\it wedge domains of polar type} are the domains 
  \begin{equation}
    \label{eq:wmh}
 W_M (h) 
= (G^{\tau_h})_e.\Exp_{eH}( (C_+ + C_-)^\pi) 
= (G^h)_e.\Exp_{eH}( (C_+ + C_-)^\pi),
  \end{equation}
where 
$C_\pm := \pm C \cap \fq_{\pm 1}(h)$,
$\fq_{\pm 1}(h) = \ker(\ad h \mp \1) \cap \fq$, and 
\[ (C_+ + C_-)^\pi := \{ x = x_+ + x_- \in C_+^\circ+ C_-^\circ) 
\: s(x_+ - x_-) < \pi\}. \]
 \end{itemize}
By definition the domains $\cT_M$  and $W_M(h)$ are connected and it follows from
Theorem \ref{thm:6.1} that  $W_M^{\rm KMS}(h)$ likewise is. But
in general $W_M^+(h)$ is not connected.
One main result is that: 
\begin{equation}
  \label{eq:e1}
 W_M^+(h)_{eH} = W_M^{\rm KMS}(h) = \kappa_h(\cT_M^{\oline\tau_h})
= W_M(h),  
\end{equation}
where $W_M^+(h)_{eH}$ is the connected component of $W_M^+(h)$ containing 
$W_M(h)$ and $\oline\tau_h$ is the antilinear extension of $\tau_h$ to
$\g_\C$, so that $\cT_M^{\oline\tau_h}$ denotes the fixed point set of
an antiholomorphic involution on the complex manifold $\cT_M$. 
The proofs are given in  Theorem~\ref{thm:6.1}, 
Corollary~\ref{cor:4.12}, and Theorem~\ref{thm:7.4}. 
We also show that $\cT_M$ is open in $M_\C$ 
(see Theorem~\ref{thm:4.7} for details). 
 Furthermore we show in Corollary~\ref{cor:polwedgeopen}
that $W_M(h)$ is open in $M$, which also follows from \eqref{eq:e1} because 
$W_M^+(h)$ is obviously open. 


The structure of this paper is as follows:
In Section~\ref{sec:2} we introduce 
modular causal symmetric spaces and
fix the standard assumptions on the symmetric 
Lie algebras and groups considered in this article.
In particular, we assume throughout that
{\bf $\g$ is reductive}.   Our standard
references for causal symmetric spaces are \cite{HN93, KN96, HO97,Ol91}.
The last two references only deal with semisimple symmetric spaces, whereas the
first two treat general ones.\\

In Section~\ref{sec:3} we then turn to the fine structure
of causal symmetric Lie algebras,
which is encoded in the structure of restricted roots.
A key feature is that the set of roots decomposes nicely into the
so-called compact and non-compact roots and the
positive systems in the non-compact roots determine
minimal and maximal $\Ad(H)$-invariant cones
$C_\fq^{\rm min} \subeq C_\fq^{\rm max}$ in $\fq$.
This information goes back to 
\cite{OO91} but was systematized 
in \cite{Ol91} and developed further in \cite{HO97,KN96,HN93}.
As a key technical tool we use
in Section~\ref{subsec:olaf}  maximal $\tau$-invariant
sets of strongly orthogonal roots. 
In Subsection~\ref{sect:Complex} we introduce the important method
of embedding ncc symmetric spaces into spaces of complex type,
which are of the form $G_\C/G^c$, where $G^c$ is the integral subgroup
of $G_\C$ with Lie algebra $\g^c = \fg^\tau \oplus i\fg^{-\tau}$
(cf.~Lemma~\ref{lem:3.1}). \\

In Section~\ref{sec:4} we first  introduce 
a special class of Euler elements $h_c$, the 
{\it causal Euler elements} for $(\g,\tau , C)$,
which are those contained in $C^\circ \cap \fq_\fp$.
The associated {\it causal Riemannian element} is $x_r := \frac{\pi}{2} h_c
\in C^\pi \cap \fa$. It plays an important role in connecting the tube domain 
$\cT_M$ of the causal symmetric space $M$ 
with the crown domain of the Riemannian symmetric space $M^r := G/K$ 
(\cite{AG90, GK02, KS05}).
 Both domains have, up to coverings,
  natural realizations in the complex symmetric space
  $G_\C/K_\C \cong G_\C/H_\C$.
Concretely, we show in Theorem~\ref{thm:4.7} that,
for $m := \Exp_{eH}(i x_r) \in \cT_M$, the orbit
$G.m$ is isomorphic to $M^r$ and that we thus obtain an identification
of the tube domain $\cT_M$ with the crown domain
\[ \cT_{M^r} = G.\Exp_m(i\Omega_\fp), 
\quad \mbox{ where } \quad 
\Omega_\fp 
= \Big\{ x \in \fp \: \rho(\ad x)  < \frac{\pi}{2}\Big\}.\]
This result, proved in Section~\ref{sec:5}, is prepared in several steps.
In Subsection~\ref{subsec:4.1a}, we first show that
the polar decomposition of the crown domain
$\cT_{M^r}$ of the Riemannian symmetric space $M^r$
is a diffeomorphism (see \cite{AG90}). 
We also obtain a new
characterization of real crown domains as a submanifold of the real tube domain
$\cT_{C_\fq} = \fh + C^\circ \subeq \g$ of 
the cone $C$ (Theorem~\ref{thm:crownchar-gen}):
For a causal Euler element $h_c \in C \cap \fq_\fp$,
the connected component of $h_c$ in the intersection 
$\cO_{h_c} \cap \cT_{C_\fq}$ is the domain 
\[ \cT_{M_H} = \Ad(H) e^{\ad \Omega_{\fq_\fk}} h_c, \quad \mbox{ where } \quad 
\Omega_{\fq_\fk} = \Big\{ x \in \fq_\fk \: \rho(\ad x) < \frac{\pi}{2}\Big\}.\] 

Our main result is then proved in Section~\ref{sec:6}: 
Theorem~\ref{thm:6.1} asserts that 
\[W_M(h) = W_M^{\rm KMS}(h) = \kappa_h^{-1}((\cT_M)^{\oline\tau_h}) . \]

In Section~\ref{sec:7} we eventually connect this identity
to the positivity domain by showing that
its identity component $W^+_M(h)_{eH}$ (whose boundary contains the
base point $eH$) coincides with these three domains.
We do not address other connected components of $W_M^+(h)$ in
  this article. However,  we show in Proposition \ref{prop:7.8}
  that it is connected 
for Cayley type spaces of the form
$G/G^\tau$, where $G = \Inn(\g)$ and $\tau = \tau_h$.
For
$G = \Inn(\g)$ (the centerfree case) and non-compactly causal spaces,
it is shown in \cite[\S 7]{MNO22b} that $W^+_M(h)$ is always connected.
This contains the present result on Cayley type spaces,
  but as the argument is very short and direct in this case
  and provides a nice description of the wedge domains, we include
  this case in the present paper.

We conclude this paper with a discussion of several open problems
in Section~\ref{sec:8}. One important problem concerns
the connectedness of the positivity domains introduced here.
The domains $\cT_M$  and $W_M(h)$ are connected by
definition, but it is not at all clear if $W_M^+(h)$ and $W_M^{\rm KMS}(h)$ 
are connected or not. Theorem \ref{thm:6.1} shows that, under the conditions assumed
in this article, $W_M^{\rm KMS}(h)$ is connected. On the other hand,
if $\g$ is simple, for $W_M^+(h)$ to be connected, it is necessary that
$Z(G) = \{e\}$, i.e., $G \cong \Inn(\g)$ and that
$H = \{ g \in G^\tau \: \Ad(g)C = C \}$ is the maximal subgroup of $G^\tau$
preserving the cone~$C$ (Subsection~\ref{subsec:8.1}),
but it is not clear if this is sufficient in all cases. \\

We also add several appendices.
The first presents a table with the classification of irreducible
modular non-compactly causal symmetric spaces in a way most suitable
for our approach. Appendix~\ref{subsec:sl2b}
contains some calculations in $\fsl_2(\R)$.
Appendix~\ref{app:polmaps} contains some finer results on
polar maps in symmetric spaces and their singularities
and  Appendix~\ref{app:1} discusses
quadrics as symmetric spaces
and in particular de Sitter space $\dS^d$ in some detail.
For $d = 2$, this example is used in some proofs to 
deal with the case $\g = \fsl_2(\R) \cong \so_{1,2}(\R)$.

\vspace{5mm}

\nin {\bf Notation:} 
In this article we will use the following notation:
\begin{itemize}
\item  A pair $(\fg,\tau)$ of a Lie algebra $\g$ and an 
involutive automorphism $\tau$ is called a {\it symmetric Lie algebra}. 
For a symmetric Lie algebra we set
\[\fh =\fg^\tau =\{x\in\fg \: \tau (x) = x\}\quad\text{and}\quad
\fq = \g^{-\tau}= \{x\in \fg \: \tau(x) =-x\}.\]
\item  If $(\fg, \tau)$ is a symmetric Lie algebra with $\fg = \fh\oplus \fq$, 
 then the $c$-dual is $(\fg^c ,\tau^c)$ with $\fg^c= \fh+ i\fq$ and
$\tau^c(x + i y) = x- iy$ for $x \in \fg, y \in \fq$; 
see  Section \ref{sect:Complex} for more details. 
\item The {\it complex case} refers to the situation where 
$\g \cong \fh_\C$ and $\tau$ is complex conjugation. The
group case is the $c$-dual case $\fg = \fh \times \fh$ with
the flip involution $\tau (x,y) = (y,x)$. 
\item An element $x$ of a Lie algebra $\g$ is called {\it hyperbolic} if 
$\ad x$ is diagonalizable (over $\R$) and {\it elliptic} if 
$\ad x$ is semisimple with purely imaginary spectrum, i.e., 
if $\oline{e^{\R \ad x}}$ is a compact subgroup of~$\Aut(\g)$.
\item For a linear map $A \in \End(V)$, $\dim V < \infty$, we write 
\[ \rho_i(A) := \sup \{ |\Im \lambda| \: \lambda \in \Spec(A) \} 
\quad \mbox{ and }\quad 
\rho(A) := \sup \{ |\lambda| \: \lambda \in \Spec(A) \} \] 
for the {\it imaginary spectral radius} and the {\it spectral radius} 
 of $A$. 
\item For $x \in \g$, we consider the automorphisms 
$\sigma_x := e^{-2\ad x}$ and $\zeta_x := e^{-\ad x}$ of~$\g$. 
We write $\sigma_x^G$ and $\zeta_x^G$ for the corresponding automorphisms 
of~$G$ defined by conjugation with $\exp(-2x)$ and $\exp(-x)$, respectively.
\item For an Euler element $h \in \g$, i.e., $\ad h$ is diagonalizable 
with eigenvalues $-1,0,1$, we 
write 
$\tau_h :=  e^{-\pi i \ad h} = e^{\pi i \ad h}$ 
and $\kappa_h :=  e^{-\frac{\pi i}{2} \ad h}$. We write 
$\cE(\g)$ for the set of Euler elements in~$\g$.
\item We typically denote automorphism of $G$, its Lie algebra, 
and the induced automorphisms of homogeneous spaces of $G$ by the 
same letter. 
\item If $\fq$ is a subspace of $\g$ then $\z (\fq)=\{x\in \fq\: [x,\fq]=0\}$. Note that
$\z (\fq)$ is contained in the center of the Lie algebra generated by $\fq$.
\item If $x\in \fg$ is diagonalizable over $\R$   and $\lambda\in\R$, then
$\fg_\lambda (x)=\{y\in \g\: [h,x]= \lambda x\}$. 
For a symmetric Lie group $(G,\tau)$, we consider the action of 
$G$ on itself by 
\begin{equation}
  \label{eq:tauact}
  g.h := gh\tau(g)^{-1} = gh g^\sharp.
\end{equation} 
To an open subgroup $H \subeq G^\tau$, we associate the symmetric 
space $M = G/H$. Note that $H$ contains the identity component 
$G^\tau_e := (G^\tau)_e$. We sometimes also call the triple 
$(G,\tau,H)$ a {\it symmetric Lie group} because this triple specifies 
the symmetric space~$M$. Then the map 
$Q \: M = G/H \to G, Q(gH) := g \tau(g)^{-1} = g g^\sharp$ 
is called the {\it quadratic representation of $M$}. 
It is a $G$-equivariant covering 
of the submanifold $G^{-\tau}_e$, 
the identity component of 
\[ G^{-\tau} := \{ g \in G \: \tau(g) = g^{-1}\} \] 
(see the appendix in \cite{NO21}). 
\item  The tangent bundle $T(M)$ 
of $M$ is identified with $G\times_H\fq$ 
and the canonical 
action of $G$ on $TM$ is denoted by 
$G \times T(M) \to T(M), (g,v) \mapsto g.v $.
\item For a convex cone $C$ in a finite-dimensional vector space,
  we write $C^\circ := \Int_{C-C}(C)$ for the relative interior of
  $C$ in its span. 
\end{itemize}

\section{Causal symmetric spaces}
\mlabel{sec:2}

In this section  we introduce 
modular causal symmetric spaces and
fix the standard assumptions that we will make on the symmetric 
Lie algebras and groups considered in this article.
One of the standard assumptions is that the Lie algebra $\g$ is 
{\bf reductive}.   Our standard
references for causal symmetric spaces are \cite{HN93, KN96, HO97,Ol91}. The last
two references only deal with semisimple symmetric spaces, whereas the 
first two treat general spaces.

\subsection{Modular causal symmetric Lie algebras} 
\mlabel{subsec:2.1}

Let $\g$ be a Lie algebra and $\tau : \g\to \g$ an involution, i.e.,  
$(\g,\tau)$ is a {\it symmetric Lie algebra}. We write
\[ \fh := \g^\tau =\{x\in \g\: \tau (x)=x\}\quad 
\mbox{ and } \quad \fq := \g^{-\tau } 
= \{x\in\g\: \tau (x) = -x\}.\] 
The {\it Cartan dual} symmetric Lie algebra is $(\g^c, \tau^c)$ with 
\[ \g^c = \fh + i \fq \quad \mbox{ and } \quad 
\tau^c(x+ iy) = x - iy \quad \mbox{ for } \quad x \in \fh, y \in \fq.\] 
We have $\g^c =\fg_\C^{\overline{\tau}}$, 
where $\otau$ is the conjugate-linear extension of $\tau$ to $\g_\C$; 
in particular $\g^c$ also is a real form of $\g_\C$.

\begin{defn}\label{def:ssp}
A {\it causal symmetric Lie algebra} is a 
triple $(\g,\tau ,C)$, where  $(\g,\tau)$ is a symmetric Lie algebra and 
$C\subset \fq$ is a pointed generating closed convex cone
invariant under the group $\Inn_\g(\fh)$.  
We say that the causal symmetric Lie algebra is
\begin{itemize}
\item[\rm (cc)] {\it compactly causal}, or simply cc,  if the cone $C$ is {\it 
elliptic}, i.e., its interior consists of elliptic elements.  
\item[\rm (ncc)] {\it non-compactly causal}, or simply ncc,  if the cone $C$ is {\it 
hyperbolic}, i.e., its interior consists of hyperbolic elements.  
\item[\rm (ct)]  of Cayley type if
there exists an Euler element $h$ such that $\tau= \tau_h$. 
\end{itemize}
\end{defn}

\begin{rem}
Since $\ad x = \ad (x+z)$ for $x\in \g$ and $z\in \fz(\g)$, 
a cone $C$ in a reductive Lie algebra 
is hyperbolic or elliptic if and only 
if its projection $C^\prime = p_{[\g,\g]}(C)$ to  the 
semisimple commutator algebra 
$[\fg, \fg]$ along the center $\fz(\g)$ has this property. 
\end{rem}

\begin{exs} \mlabel{exs:3dim}
(Reducible Lorentzian causal symmetric Lie algebras  and 
non-adapted cones) \\
(a) If $(\fs,\theta)$ is a semisimple Lie algebra and 
$\theta$ a Cartan involution, then 
\[ \g := \R \oplus \fs \quad \mbox{ with } \quad 
\tau := (-\id_\R) \oplus \theta \]  
is a non-compactly causal symmetric Lie algebra with the cone 
\[ C := \{ (t,x) \in \fq = \R \oplus \fs^{-\theta} \: 0 \leq t, \kappa(x,x) 
\leq t^2 \},\] 
where $\kappa$ is the Cartan--Killing form of $\fs$ (which is positive definite 
on $\fs^{-\theta}$). 

\nin (b) The dual situation arises for a compact symmetric Lie algebra $(\fs,\sigma)$, 
$\g = \R\oplus \fs$ and $\tau= (-\id_\R) \oplus \sigma$. 
Then  we obtain a compactly causal symmetric Lie algebra by 
\[ C := \{ (t,x) \in \fq = \R \oplus \fs^{-\sigma} \: 0 \leq t, -\kappa(x,x) 
\leq t^2 \},\] 
where $\kappa$ is the negative definite Cartan--Killing form of $\fs$. 
\end{exs}

Motivated by the preceding examples, we shall use the following concept  
of a Cartan involution on reductive Lie algebras: 

\begin{defn} \mlabel{def:2.3} (Cartan involutions) 
We call an involutive automorphism $\theta$ of a reductive 
Lie algebra $\g$ a {\it Cartan involution} if 
$\fz(\g) \subeq \g^{-\theta}$ and the restriction to the commutator 
algebra is a Cartan involution. 
For $\fk = \g^\theta$ and $\fp = \g^{-\theta}$, this is equivalent to the requirement 
that all elements in $\fk$ are elliptic, all elements in $\fp$ are hyperbolic 
and $\fk \subeq [\g,\g]$.

Note that $\theta$ is a Cartan involution if 
$(\g,\theta)$ is an effective Riemannian symmetric Lie algebra of non-compact type. 
\end{defn}

\begin{rem}  If $(\fg,\tau_h,C)$ is of Cayley type,  
then $\fz(\g) \subset \fh = \ker(\ad h)$ 
and hence $\tau_h|_{\fz(\g)} = \id_{\fz(\g)}$.  
For any open subgroup $H \subeq G^\tau$, 
the central subgroup $Z(G)\cap H$ acts trivially
on $G/H$ so that one can reduce many question to the 
semisimple case. However, the non-semisimple Examples~\ref{exs:3dim}, \ref{ex:gln} and \ref{ex:gl2} show that 
the cones are typically not adapted to the splitting into 
center and commutator algebra. 
Later we shall assume that $\fz(\g) \subeq \fq$, 
which for Cayley type Lie algebras implies semisimplicity.

If $(\g,\tau)$ is irreducible, then 
 \cite[Thm.~5.6]{Ol91} implies that $(\fg,\tau, C)$ is
of Cayley type if and only if the space 
$\fq$ is not irreducible for the adjoint action of $\fh$. 
The $h$-eigenspace decomposition 
$\fq= \fq_{+1}(h)\oplus \fq_{-1}(h)$
further has the property that  
$\fq_{+1}(h)$ and $\fq_{-1}(h)$ are
isomorphic as vector spaces,  but not as $\fh$-modules 
because $h$ acts with different eigenvalues 
(cf.\ the proof of \cite[Thm.~5.6]{Ol91}). 
\end{rem}

We now introduce the main structure of
this paper on the infinitesimal level. 
To $(\g,\tau,C)$ we add an Euler element $h$ satisfying 
certain compatibility conditions.
\begin{defn} A {\it modular causal symmetric Lie algebras} 
is a  quadruple $(\g,\tau, C, h)$, where 
$(\g,\tau,C)$ is a causal symmetric Lie algebra  
and $h \in \g^\tau$ is an {\it Euler element} 
satisfying $\tau_h(C) = - C$. It is called
{\it compactly causal}, respectively
{\it non-compactly causal} if $(\fg,\tau ,C)$
is compactly causal, respectively, non-compactly causal. 
A  modular causal symmetric
Lie algebras is said to be of {\it Cayley type} if $\tau=\tau_h$. 
\end{defn}

\begin{example} (Cayley type spaces) 
Let $G= \SL_2(\R)$, $\fg = \fsl_2(\R)$, and  $\tau=\tau_h$ with  
 $h=\diag(1/2,-1/2)$, so that 
$(\fg , \tau)$ is of  Cayley type. 
Thus  
\begin{equation}\label{eq:tauSL}
\tau\begin{pmatrix} a & b \\ c & d\end{pmatrix}
= \begin{pmatrix} a & -b \\ -c & d\end{pmatrix} 
\quad \mbox{ and } \quad 
\fh = \R h\quad\text{and}\quad \fq = \left\{\begin{pmatrix} 0 & x \\ y & 0 \end{pmatrix}\: x,y\in \R\right\}.
\end{equation}
The cone 
\[C_h =\left\{ \begin{pmatrix} 0 & x\\ y & 0 \end{pmatrix}\: x,y \geq 0\right\} \]
is hyperbolic and $(\fg, \tau ,C_h)$ is non-compactly causal. 
In fact, every element in $C_h^\circ$ 
is $\Inn_\g(\fh)$-conjugate to a multiple of $h$ 
which is hyperbolic.
On the other hand, every element in the interior of 
\[C_e =\left\{ \begin{pmatrix} 0 & x\\ - y & 0 \end{pmatrix}\: 
x,y \geq 0\right\} \] 
is $\Inn_\g(\fh)$-conjugate to an element of $\so_2(\R)$, hence elliptic.  
Therefore $(\fg, \tau ,C_e)$ is compactly causal. 
 Both $(\fg, \tau,C_h,h)$ and $(\fg, \tau,C_e,h)$ 
are modular causal symmetric Lie algebras.
\end{example}
 
\begin{example} In the definition of a modular ncc 
symmetric Lie algebra $(\g,\tau,C,h)$, we
require that $-\tau_h(C) = C$.  We now show that there are examples
  where $-\tau_h (C) \not= C$. 
Note that 
$C' := -\tau_h(C) \subeq \fq$ is a pointed generating closed cone in 
$\fq$ invariant under $\Inn_\g(\fh)$ (note that $\tau_h$ preserves $\fh$).   
If $\theta(h) = -h$, which we may assume after conjugating with 
$\Inn_\g(\fh)$, then $\tau = \theta \tau_{h_c}$
for a causal Euler element $h_c \in C^\circ$
(\cite[Thm.~4.5]{MNO22a}) implies $\tau_{h_c}h = -h$, 
and thus \cite[Thm.~3.13]{MN21} yields 
that $\tau_h(h_c) = - h_c$. This shows that 
\[ h_c\in C \cap C'.\] 
As $h_c \in C^\circ$, it follows that $C \cap C'$ is pointed generating 
and $-\tau_h$-invariant. 
It follows in particular that 
$-\tau_h(C_\fq^{\rm max}) = C_\fq^{\rm max}$ 
and $-\tau_h(C_\fq^{\rm min}) = C_\fq^{\rm min}$.

To see an example, where $-\tau_h(C) \not= C$, we consider the following
Lie algebra of complex type: 
\[ \g = \fsl_{2r}(\C) \supeq \fh = \su_{r,r}(\C), \quad
  \fq = i \fh, \quad
h = \frac{1}{2} \pmat{ 0 & \1 \\ \1 & 0} \]
and the causal Euler element
\[ h_c = \frac{1}{2} \pmat{\1 & 0 \\ 0 & -\1}.\]
Then
\[ \fq_\fp = i \fh_\fk = \Big\{ \pmat{ a & 0 \\ 0 & d} \: a^* = a, d^* = d,
  \tr(a) + \tr(d) = 0\Big\}
  \quad \mbox{ and } \quad
\tau_h\pmat{ a & b \\ -b^* & d} = \pmat{d & - b^* \\ b & a}.\]
We may choose $\fa \subeq \fq_\fp$ as the subspace of diagonal matrices.
Then $\cW \cong S_{2r} \supeq \cW_k \cong S_r \times S_r$
(cf.\ Remark~\ref{rem:3.4}) act by permutation
of diagonal entries. Further
\[ \Sigma_1 = \{ \eps_j - \eps_k \: j \leq r < k \} \quad \mbox{ and } \quad 
  \Sigma_0 = \{ \eps_j - \eps_k \: j, k \leq r\ \mbox{ or } r  < j,k \}.\]
Here
\[ C_\fa^{\rm min} = \{ \diag(x,y) \in \fa \: x \geq 0, y \leq 0\}
  \subeq C_\fa^{\rm max} = \{ \diag(x,y) \in \fa \: \min(x) \geq \max(y)\}\]
(see \eqref{eq:minmaxconesina} below). 

For $r = 2$, we consider the element
\[ z := \diag(2,2,1,-5) \in C_\fa^{\rm max} \setminus C_\fa^{\rm min}
  \quad \mbox{ with } \quad
  -\tau_h(z) = \diag(-1,5,-2,-2).\]
The cone $C_\fa \subeq \fa$ generated by $C_\fa^{\rm min}$ and
$\cW_k z$ (cf.\ Remark~\ref{rem:3.4})
consists of elements  $x = \diag(x_1, \ldots, x_4)$ with $x_1, x_2 \geq 0$,
hence does not contain $-\tau_h(z)$. As $C_\fa$ extends to an
$\Inn_\g(\fh)$-invariant pointed generating invariant cone $C_\fq \subeq \fq$
(cf.\ Theorem~\ref{thm:3.6} below),
this cone is not invariant under $-\tau_h$.
\end{example}

\subsection{Global assumptions on the groups} 
\mlabel{subsec:globass}

Throughout this paper $G$ denotes a connected reductive Lie group.
We shall use the following assumptions which mainly concern the center of $G$: 
\begin{itemize}
\item[\rm(GP)] $G$ has a polar decomposition, i.e., 
there exists an involutive automorphism $\theta$ of $G$, 
such that the subgroup $K := G^\theta$ is connected, and  
for $\fp := \g^{-\theta}$, the polar map 
\[ K \times \fp \to G, \qquad 
(k,x) \mapsto k \exp x \] 
is a diffeomorphism. 
\item[\rm(Eff)] The action of $G$ on $G/K$ is effective, i.e., 
$K$ contains no non-trivial normal subgroup of~$G$.
\end{itemize}

\begin{rem} \mlabel{rem:red-polar} 
(a) If $\g$ is semisimple, then (GP) holds for any Cartan involution 
$\theta$, and $Z(G) \subeq K$ (\cite[Thm.~14.2.8]{HN12}). 
Therefore (Eff) is equivalent to $Z(G) = \{e\}$, which is equivalent to 
$G \cong \Aut(\g)_e = \Inn(\g)$. 

\nin (b) (GP) is also satisfied if $G$ is simply connected. 
Then we extend a Cartan involution $\theta$ from the commutator algebra 
$[\g,\g]$ to  a Cartan involution $\theta$ of $\g$ 
(Definition~\ref{def:2.3}).
As the simple connectedness implies 
$G \cong [G,G] \times Z(G)_e \cong [G,G] \times \fz(\g)$, 
property (GP) follows from the semisimple case. 
\end{rem}

\begin{defn} \mlabel{def:cxplss}
For a reductive symmetric Lie group $(G,\theta)$ 
satisfying (GP), we call 
\[ M^r := G/K \] 
the associated {\it Riemannian symmetric space} (of noncompact type). 
\end{defn}

\begin{ex} \mlabel{ex:gln} (Riemannian symmetric spaces which are also causal) 
Let $\K \in \{\R,\C,\H\}$. 
We consider the connected reductive Lie group 
\[ G := \GL_n(\K)_e/\Gamma, \qquad 
\Gamma := \U_n(\K)_e \cap Z(\GL_n(\K)) = 
\begin{cases}
  \{\pm \1 \} & \text{ for } \K = \R, n \mbox{ even} \\
  \{\1 \} & \text{ for } \K = \R, n \mbox{ odd} \\
  \T \1& \text{ for } \K = \C \\
  \{\pm \1 \} & \text{ for } \K = \H \\
\end{cases}
\] 
Then $\theta(g) := (g^*)^{-1}$ induces an involutive automorphism 
of $G$ with 
\[ K = G^\theta = \U_n(\K)_e/\Gamma\] 
and 
\[ M^r := G/K \cong \GL_n(\K)/\U_n(\K) 
= \{ g^*g \: g \in \GL_n(\K) \} 
\] 
is the open cone 
$\Herm_n(\K)_+$ of positive definite 
hermitian matrices. Here (GP) follows immediately from the polar 
decomposition of $\GL_n(\K)$. As 
$\Gamma$ 
is the largest proper normal subgroup of $\GL_n(\K)_e$ contained in 
$\U_n(\K)_e$, the action of $G$ on $M^r$ is effective. 

The space $G/K$ is Riemannian, but $\fp \cong \Herm_n(\K)$ contains  
the pointed generating invariant cone $\Herm_n(\K)_+$ which is invariant 
under $\Ad(K)$. Therefore $(\g,\theta,C)$ is also non-compactly causal. 
\end{ex}

The following lemma collects the implications of our assumptions. 

\begin{lem}
  \mlabel{lem:2.3.1} 
If {\rm(GP)} and {\rm(Eff)} are satisfied, then the following assertions hold:
\begin{itemize}
\item[\rm(a)] $\theta$ is a Cartan involution on $\g$.
\item[\rm(b)] Write $[G,G]$ for the commutator group of $G$. 
Then the  multiplication map \break $[G, G] \times Z(G)_e \to G$ 
is an isomorphism of Lie groups. 
\item[\rm(c)] $K = G^\theta \subeq [G,G]$. 
\item[\rm(d)] The universal complexification 
$\eta_G \: G \to G_\C$ is injective and 
$G_\C \cong [G,G]_\C \times \fz(\g)_\C$. 
\item[\rm(e)] $G \cong\Inn_\g([\g,\g]) \times \fz(\g)$. 
\item[\rm(f)] $Z(G)$ is a connected vector group.
\end{itemize}
\end{lem}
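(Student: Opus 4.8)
The plan is to establish the six assertions essentially in the order listed, since each builds on the previous ones, with the polar decomposition assumption (GP) doing most of the work and the effectivity assumption (Eff) used to pin down the center.

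\textbf{Step (a).} First I would unwind (GP). By hypothesis $K = G^\theta$ is connected, $\fp = \g^{-\theta}$, and $(k,x)\mapsto k\exp x$ is a diffeomorphism $K\times\fp\to G$. On the Lie algebra level $\g = \fk\oplus\fp$ with $\fk = \g^\theta$, so $\theta$ is an involutive automorphism of the reductive Lie algebra $\g$. To see it is a Cartan involution in the sense of Definition~\ref{def:2.3}, I must check three things: all elements of $\fk$ are elliptic, all elements of $\fp$ are hyperbolic, and $\fk\subeq[\g,\g]$. The first follows because $\exp(\fk)$ generates the connected subgroup $K$, and for $K$ to be a closed subgroup over which the polar map is a diffeomorphism onto $G$ (so $G/K\cong\fp$ is contractible, hence $K$ is a maximal compact-type subgroup up to the vector part of the center), $\oline{e^{\R\ad x}}$ must be compact in $\Aut(\g)$ for $x\in\fk$; more carefully, one argues that $\fk$ cannot contain hyperbolic elements because such an element would give an unbounded one-parameter subgroup of $K$ transverse to $\fp$, contradicting the diffeomorphism property. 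The hyperbolicity of $\fp$ and the inclusion $\fk\subeq[\g,\g]$ come from the classical structure of reductive Lie algebras: writing $\g = \fz(\g)\oplus[\g,\g]$, the restriction $\theta|_{[\g,\g]}$ is a Cartan involution of the semisimple part (its fixed point algebra is a maximal compactly embedded subalgebra), and on $\fz(\g)$ the diffeomorphism property forces $\theta = -\id$ so that $\fz(\g)\subeq\fp$, giving both remaining claims. I expect this to be the main obstacle, because it requires carefully extracting the Cartan-involution structure from the purely topological/geometric polar hypothesis; the cleanest route may be to cite \cite[Thm.~14.2.8]{HN12} or the analogous structure theorem and then reduce to the semisimple case already handled there.

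\textbf{Steps (b), (c), (e).} Once (a) is in hand, decompose $\g = \fz(\g)\oplus[\g,\g]$. Let $[G,G]$ be the integral subgroup with Lie algebra $[\g,\g]$; it is closed because $[\g,\g]$ is semisimple. The multiplication map $[G,G]\times Z(G)_e\to G$ is a surjective homomorphism with discrete kernel $[G,G]\cap Z(G)_e$, and this intersection is contained in $Z([G,G])$, which lies in the maximal compact-type subgroup $K\cap[G,G]$ of $[G,G]$. To get (b), i.e., that this kernel is trivial, I invoke (Eff): a nontrivial element of $[G,G]\cap Z(G)_e$ would be a nontrivial central element contained in $K$ (by (c), once established), hence generate a nontrivial normal subgroup inside $K$, contradicting effectivity. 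So (b) and (c) are intertwined; I would prove $K\subeq[G,G]$ first, noting $\fk\subeq[\g,\g]$ from (a) and that $K$ is connected, then deduce the direct product decomposition, then observe that effectivity forces $Z([G,G])$ to be trivial (again because $Z([G,G])\subeq K$), so $[G,G]\cong\Inn_\g([\g,\g])$ is centerfree by Remark~\ref{rem:red-polar}(a). Combined with (b) this gives (e): $G\cong\Inn_\g([\g,\g])\times\fz(\g)$, using that $Z(G)_e$ has Lie algebra $\fz(\g)$ and, by (f) below, equals the vector group $\fz(\g)$.

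\textbf{Steps (d) and (f).} For (f), the center $Z(G)$ of the connected reductive group $G$: under (b) and (c), $Z(G)\subeq Z([G,G])\times Z(G)_e = \{e\}\times Z(G)_e$ (using centerfreeness of $[G,G]$), so $Z(G) = Z(G)_e$ is connected; and $Z(G)_e$ is the image of the abelian Lie algebra $\fz(\g)$ under $\exp$, which is a vector group because it has no nontrivial compact subgroup — any compact subgroup of $Z(G)_e$ would be a compact normal subgroup, and by (a), (c) it would sit inside $K\cap Z(G)$, hence inside $[G,G]\cap Z(G)=\{e\}$; alternatively, $\fz(\g)\subeq\fp$ from (a) means $\exp|_{\fz(\g)}$ is part of the polar diffeomorphism, hence injective, so $Z(G)_e\cong(\fz(\g),+)$. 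For (d), the universal complexification: $[G,G]$ is centerfree semisimple, so its complexification $[G,G]_\C$ is the adjoint complex group and $\eta_{[G,G]}$ is injective (this is classical for centerfree semisimple groups — they embed in their complexifications as the identity component of the real points); the vector group $Z(G)_e\cong\fz(\g)$ complexifies to $\fz(\g)_\C$ with injective $\eta$; and since universal complexification is compatible with direct products, $G_\C\cong[G,G]_\C\times\fz(\g)_\C$ with $\eta_G$ injective. This completes all six parts. Throughout, the one genuinely delicate point is (a); everything else is bookkeeping with the direct-product decomposition and the observation that effectivity kills the center of the semisimple part.
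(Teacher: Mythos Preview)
Your outline for parts (b)--(f) is essentially the paper's argument, but part (a) has a concrete error. You claim that ``on $\fz(\g)$ the diffeomorphism property forces $\theta=-\id$''. This is false: take $G=\R\times G_1$ with $G_1$ semisimple centerfree, and let $\theta$ act as the identity on the $\R$-factor and as a Cartan involution on $G_1$. Then $K=\R\times K_1$, $\fp=\fp_1$, and the polar map $(z,k_1,x)\mapsto zk_1\exp x$ is still a diffeomorphism, so (GP) holds while $\fz(\g)\subeq\fk$. What actually forces $\fz(\g)^\theta=\{0\}$ is (Eff): the subgroup $\exp(\fz(\g)^\theta)$ is normal in $G$ and contained in $K$, hence trivial. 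You use (Eff) later for the semisimple center, but you need it already here.

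Your treatment of $\theta|_{[\g,\g]}$ being Cartan is also too vague; saying ``its fixed point algebra is a maximal compactly embedded subalgebra'' is the conclusion, not the argument. The paper's proof is sharper: pick a Cartan involution $\tilde\theta$ of $[\g,\g]$ commuting with $\theta$, and observe that since the polar map is a diffeomorphism, $\exp$ is regular on $\fp$, so $\fp\cap[\g,\g]$ contains no nonzero elliptic elements. Hence $\fp_1^{\tilde\theta}=\{0\}$ (elements fixed by a Cartan involution are elliptic), which forces $\tilde\theta=\theta$ on the ideal generated by $\fp_1$; effectivity then rules out any complementary ideal inside $\fk_1$. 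This is the missing mechanism behind your informal ``$\fk$ cannot contain hyperbolic elements'' and ``$\fp$ is hyperbolic'' claims.
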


Condition {\rm(d)} implies in particular 
that the global type of $G$ is determined by {\rm(GP)} and {\rm(Eff)} 
and that the involutions $\tau$, $\theta$ and $\tau_h$ on $\g$ all integrate to the group~$G$.

\begin{prf} (a) $\fz(\g)^\theta = \{0\}$ follows from the fact that 
$\exp(\fz(\g)^\theta)$ is a normal subgroup contained in~$K$.

To see that $\theta$ restricts to a Cartan involution on $\g$, 
suppose that this is not the case. Then there exists a 
Cartan involution $\tilde\theta$ commuting with $\theta$. 
Accordingly $\fp_1 := [\g,\g]^{-\theta}$ decomposes 
under $\tilde\theta$ into eigenspaces $\fp_1^{\pm\tilde\theta}$. 
As the exponential function is supposed to be regular on $\fp$, 
this subspace contains no elliptic elements, and thus $\fp_1^{\tilde\theta} 
= \{0\}$. Therefore $\tilde\theta$ coincides with $\theta$ on 
the ideal $\fp_1 + [\fp_1,\fp_1]$ of~$\g$. As $\fk_1 := \fk \cap [\g,\g]$ 
is reductive and contains no non-zero ideal, we must have 
$\fk_1 = [\fp_1, \fp_1]$. This shows that $\tilde\theta = \theta\res_{[\g,\g]}$ 
is a Cartan involution. 

\nin (b) follows from (a) and the polar decomposition. 

\nin (c) The polar decomposition (GP) and 
$\theta(k\exp x) = k \exp(-x)$ 
for $k \in K$ and $x \in \fp$ imply that 
$G^\theta = K$. As $G$ is connected, (GP) further shows that 
$K$ is connected. The remaining assertion now 
follows from $\fz(\fg)^\theta = \{0\}$, which is (a). 

\nin (d) The first assertion follows from (b) and 
$Z(G)_e \cong (\fz(\g),+)$. For the second, 
we first observe that $\ker \eta_G = \ker \eta_{[G,G]}$ 
is a discrete normal subgroup 
of $G$, hence central. It is contained in $[G,G]$ because 
the inclusion $\fz(\g) \into \fz(\g)_\C$ is injective. 
As $\fk = \g^\theta$ is maximal compactly embedded 
in $[\g,\g]$ by (a), we have $Z([G,G]) \subeq K$ 
(\cite[Thm.~14.2.8]{HN12}), so that (Eff) implies 
that $Z([G,G])$ is trivial and thus $\eta_G$ is injective. 

\nin (e) We have already seen in (d) that $Z([G,G])$ is trivial. 
Therefore $\Ad \: [G,G] \to \Inn_\g([\g,\g])$ is injective, and this 
proves~(e).

\nin (f) As the center of $\Inn_\g([\g,\g) \cong \Inn([\g,\g])$ is trivial, 
this follows from (e).
\end{prf}

\subsection{Irreducible spaces} 
\mlabel{subsec:2.2}

In this section we recall that, for an irreducible causal symmetric 
Lie algebras $(\g,\tau)$, either 
\begin{itemize}
\item $\g$ is simple non-complex, 
\item isomorphic to $(\fh \oplus \fh, \tau_{\rm flip})$, 
$\fh$ simple non-complex, or to
\item  $(\fh_\C, \tau_{\rm conj})$, where $\fh$ is simple non-complex 
\end{itemize}
(\cite[Rem.~3.1.9]{HO97}).
 The latter two types are represented 
on the global level by the symmetric spaces 
$(H\times H)/\Delta_H$, where $\Delta_H = \{(h, h) \: h\in H\}$, and 
by $H_\C/H$. For general information about irreducible  causal symmetric spaces see \cite{HO97}. 

\begin{prop} \mlabel{prop:types}
For an  irreducible ncc symmetric spaces 
$(\g,\tau,C)$, the Lie algebra $\g$ is simple and 
we have the following possibilities: 
\begin{itemize}
\item[\rm(C)] {\rm(Complex type)}
If $\g$ is a complex Lie algebra and $\tau$ is antilinear,
then $\g \cong \fh_\C$, where $\fh$ is simple hermitian  
and $i C$ is a pointed generating invariant cone in $\fh$. 
An Euler element $h \in \fh$ exists if and only if $\fh$ is of tube type. 
\item[\rm(S)] {\rm(Simple type)} If $\g$ 
is a real simple Lie algebra which is not complex, then 
$\fq$ contains an Euler element $h_c$ and there exists a Cartan involution 
$\theta$  with $\tau = \theta \tau_{h_c}$.   
An Euler element $h \in \fh$ exists if and only if $\g^c$ is of tube type. 
\end{itemize}
\end{prop}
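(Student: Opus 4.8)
The plan is to establish Proposition~\ref{prop:types} by combining the general structure theory of irreducible causal symmetric Lie algebras recalled just above the statement with the finer root-theoretic results on invariant cones. First I would dispose of the reducibility issue: by the trichotomy from \cite[Rem.~3.1.9]{HO97} an irreducible causal symmetric Lie algebra $(\g,\tau)$ is either simple non-complex, of group type $(\fh\oplus\fh,\tauf)$, or of complex type $(\fh_\C,\tau_{\mathrm{conj}})$. In the group-type case the cone $C\subseteq\fq\cong\fh$ would have to be an $\Inn_\g(\fh)$-invariant pointed generating cone that is hyperbolic; but under the flip, $\fq\cong\fh$ as an $\fh$-module and the only simple Lie algebras carrying a pointed generating invariant cone are the hermitian ones, on which every invariant cone is elliptic (its interior consists of elliptic elements, by the characterization of invariant cones in hermitian Lie algebras). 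Hence the group-type case cannot be ncc, which leaves only the simple non-complex case (S) and the complex case (C). This shows $\g$ is simple.

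For case (C), with $\g\cong\fh_\C$ and $\tau$ antilinear, $\fq = i\fh$ and $C\subseteq i\fh$ is pointed generating and $\Inn_\g(\fh)$-invariant; equivalently $iC\subseteq\fh$ is a pointed generating $\Inn(\fh)$-invariant cone, which forces $\fh$ to be simple hermitian. The ncc condition means the interior of $C$ consists of hyperbolic elements of $\g=\fh_\C$; since multiplication by $i$ interchanges elliptic and hyperbolic elements inside $\fh_\C$ (an element $x\in\fh$ is elliptic in $\fh_\C$ iff $ix$ is hyperbolic), this is automatic once $iC$ is an invariant cone in the hermitian algebra $\fh$, where such cones are elliptic. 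For the existence of an Euler element $h\in\fh$: Euler elements in a simple hermitian Lie algebra correspond to the grading element of a $3$-grading, and a hermitian $\fh$ admits such an element precisely when it is of tube type — this is the standard dictionary between $3$-gradings of hermitian Lie algebras, the Cayley transform, and tube-type Jordan algebras, which I would cite (e.g.\ via \cite{MN21} and the classification).

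For case (S), with $\g$ simple non-complex, the key input is that an irreducible ncc symmetric Lie algebra arises, up to conjugation, from a pair $(\theta,h_c)$ where $\theta$ is a Cartan involution and $h_c\in\fq$ is an Euler element with $\tau=\theta\tau_{h_c}$; this is exactly the content cited as \cite[Thm.~4.5]{MNO22a} in the excerpt (and is also visible in the discussion of causal Euler elements preceding Section~\ref{sec:3}: the hyperbolicity of $C^\circ$ is precisely what makes the relevant conjugacy class of Euler elements lie in $\fq_\fp$). So $\fq$ contains an Euler element $h_c$ and $\tau=\theta\tau_{h_c}$. For the last assertion, an Euler element $h\in\fh=\g^\tau$ exists iff the $c$-dual $\g^c=\fh+i\fq$ — which is again simple, and is in fact hermitian because $\theta\res_{\fh}$ is a Cartan involution of $\fh$ and $i\fq_\fp$ sits inside a compactly embedded subalgebra — admits a compatible $3$-grading, and again the Euler element exists exactly when $\g^c$ is of tube type. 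I would make this precise by transporting $h$ through the $c$-duality $\g\rightsquigarrow\g^c$, which exchanges $\fh$ and $\fh$ but turns the $\tau$-fixed Euler element into the grading element exhibiting $\g^c$ as tube type.

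The main obstacle I expect is the clean bookkeeping of the "tube type" criterion in both cases and its equivalence with existence of an Euler element in the correct fixed-point subalgebra: one must be careful about which real form one is grading (in case (C) it is $\fh$ itself; in case (S) it is the $c$-dual $\g^c$, not $\g$), and about the fact that for hermitian Lie algebras the presence of an Euler element, the existence of a non-trivial $3$-grading, and being of tube type are all equivalent — a fact that ultimately rests on the classification but can also be argued structurally via the Cayley element. Everything else (ruling out the group case, the simplicity of $\g$, the shape of $C$) follows quickly from the cited structural results on invariant cones in hermitian and causal symmetric Lie algebras.
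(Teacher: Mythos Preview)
Your proposal is correct and follows essentially the same route as the paper: rule out the group-type case via the ellipticity of invariant cones in simple hermitian Lie algebras, handle case (C) by identifying $iC$ as an invariant cone in $\fh$ (forcing $\fh$ hermitian) and invoking the tube-type/Euler-element equivalence from \cite{MN21}, and handle case (S) by citing \cite[Thm.~4.4]{MNO22a} for the Euler element $h_c\in\fq$ with $\tau=\theta\tau_{h_c}$, then observing that $h\in\fh\subseteq\g^c$ is an Euler element of $\g$ iff of $\g^c$ and applying the tube-type criterion to the hermitian algebra $\g^c$ (the paper cites \cite[Prop.~2.7]{NO21} for this last step). Your phrasing of the $c$-duality argument in (S) is slightly garbled (``exchanges $\fh$ and $\fh$''), but the underlying observation---that $\fh$ sits in both $\g$ and $\g^c$, so an Euler element of $\g$ lying in $\fh$ is automatically one of $\g^c$---is exactly what the paper uses.
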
 

\begin{prf} To see that $\g$ is simple, suppose that this is not the case. 
Then $(\g,\tau)$ is isomorphic to 
$(\fh \oplus \fh, \tau_{\rm flip})$ and $C \subeq \fq \cong \fh$ corresponds 
to  an invariant cone in the Lie algebra $\fh$, hence must be elliptic, 
contradicting the assumption that $(\g,\tau)$ is ncc.

\nin (C) If $(\g,\tau) \cong (\fh_\C, \tau_{\rm conj})$ is of complex type, 
where $\fh$ is a simple real Lie algebra, then $C \subeq \fq \cong i \fh$ 
corresponds to an invariant cone in $\fh$ and thus $\fh$ is simple hermitian. 
We recall that $\fh$  contains an Euler element if and only if it is 
 of tube type (cf.\ \cite[Prop.~3.11(b)]{MN21}). 

\nin (S) The existence of an Euler element $h_c \in \fq$ and a 
Cartan involution with $\tau = \theta \tau_{h_c}$ and $\theta(h)= -h$ 
follows from \cite[Thm.~4.4]{MNO22a}. 
The existence of an Euler element in $\fh$ 
is equivalent to the existence of an Euler elements 
in hermitian Lie algebra $\g^c$ which is contained in $\fh$, 
and by \cite[Prop.~2.7]{NO21} this is equivalent to $\g^c$ being of tube type. 
\end{prf}

\begin{prop} \mlabel{prop:decomp} 
Suppose that $(\g , \tau , C)$ is a reductive 
ncc symmetric Lie algebra. Assume
furthermore   that $\fh$ contains no non-zero ideal 
and that $\theta$ is a Cartan involution commuting with $\tau$.  
Then the symmetric Lie algebra $(\g,\tau)$ decomposes as 
\[ (\g,\tau) 
\cong (\g_0, \tau_0) \oplus \bigoplus_{j = 1}^N (\g_j, \tau_j), \]
where $\tau_0$ is a Cartan involution and each  simple symmetric Lie algebra 
$(\g_j,\tau_j)$, $j \geq 1$,  is irreducible ncc, hence in particular 
of complex or simple type. 
\end{prop}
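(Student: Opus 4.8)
The plan is to prove the decomposition in two stages. First I would isolate a maximal central-type piece $\g_0$, and second I would show that what remains splits into irreducible ncc pieces, invoking Proposition~\ref{prop:types} to identify each as complex or simple type.

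\textbf{Step 1: Extracting the Cartan-type summand.}
Since $\g$ is reductive and $\theta$ is a Cartan involution commuting with $\tau$, write $\g = \fz(\g) \oplus [\g,\g]$. On the semisimple part, $\theta$ and $\tau$ commute, so $[\g,\g]$ decomposes into $\theta$- and $\tau$-stable simple ideals, which group together into $\theta\tau$-isotypic blocks. Among the simple ideals, distinguish those on which $\tau$ acts as a Cartan involution (equivalently, on which the cone $C$ projects to $\{0\}$, since an elliptic-or-trivial projection forces this for an ncc cone) from the rest. Collect the former, together with $\fz(\g)$, into $\g_0$; set $\tau_0 := \tau|_{\g_0}$. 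The point is that $\fz(\g) \subeq \fq$ need not hold a priori, but since $\fh$ contains no nonzero ideal, any ideal of $\g$ contained in $\fh$ is zero; in particular $\fz(\g) \cap \fh = \{0\}$, so $\fz(\g) \subeq \fq$, and $\tau_0|_{\fz(\g)} = -\id$, which is consistent with $\tau_0$ being a Cartan involution in the sense of Definition~\ref{def:2.3}. On the simple ideals collected in $\g_0$, $\tau$ is by construction a Cartan involution, so $\tau_0$ is a Cartan involution on $\g_0$.

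\textbf{Step 2: The remaining summands are irreducible ncc.}
What is left is a $\tau$-stable semisimple ideal $\g' := \bigoplus_{j=1}^N \g_j$, where the $\g_j$ are the $\tau\theta$-isotypic blocks on which $\tau$ is \emph{not} a Cartan involution. Each $\g_j$ is either a single simple ideal stable under $\tau$, or a sum $\fh \oplus \tau(\fh)$ with $\tau$ swapping the two factors (the ``group case''); in the latter case $(\g_j,\tau_j) \cong (\fh\oplus\fh, \tau_{\rm flip})$ and $\fq_j \cong \fh$. Now I restrict the cone: by \cite[Rem.~3.1.9]{HO97} and the irreducibility structure of invariant cones, the projection $C_j$ of $C$ to $\fq_j$ is a pointed generating $\Inn_{\g_j}(\fh_j)$-invariant cone (it is nonzero precisely because $\g_j$ was not collected into $\g_0$), so $(\g_j, \tau_j, C_j)$ is a causal symmetric Lie algebra, and it is ncc because $C$ is hyperbolic and hyperbolicity passes to the projection (Remark after Definition~\ref{def:ssp}). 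Here is the key elimination: in the group case $\fq_j \cong \fh$ carries an invariant cone, which for a simple Lie algebra must be \emph{elliptic}, contradicting ncc; hence the group case cannot occur among the $\g_j$, and each $\g_j$ is simple. Finally, $\fh_j = \g_j \cap \fh$ contains no nonzero ideal of $\g_j$ (an ideal of the simple $\g_j$ inside $\fh$ would be all of $\g_j \subeq \fh$, impossible), so $(\g_j,\tau_j,C_j)$ is irreducible ncc, and Proposition~\ref{prop:types} identifies it as complex or simple type.

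\textbf{Main obstacle.}
The delicate point is bookkeeping with the cone $C$ across the direct-sum decomposition: one must argue that $C$ really is (essentially) the direct sum of its projections $C_j$ onto the $\fq_j$, so that no ``mixing'' cone spreads across two nonabelian summands. This follows from the structure theory of invariant cones in reductive Lie algebras (a pointed generating invariant cone in $\fq$ decomposes compatibly with the decomposition of $\fq$ into $\fh$-isotypic/ideal components, as in \cite{Ol91}), but it is the step that requires genuine care rather than formal manipulation. The other mild subtlety is ensuring that the grouping into $\tau\theta$-isotypic blocks is forced — i.e.\ that $\tau$ cannot permute two non-isomorphic simple ideals — which is standard but should be stated explicitly.
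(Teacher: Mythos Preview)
Your overall strategy matches the paper's: split off a maximal Cartan-type piece $\g_0$, then show the remaining irreducible factors carry ncc structures and invoke Proposition~\ref{prop:types}. Two points need correction.

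First, the parenthetical equivalence in Step~1 is false. ``$\tau|_{\g_j}$ is a Cartan involution'' is \emph{not} equivalent to ``$C$ projects to zero on $\fq_j$'': in Example~\ref{exs:3dim}(a) the semisimple factor $\fs$ carries $\tau = \theta$, yet the Lorentzian cone projects onto all of $\fs^{-\theta}$. This is harmless for the argument---just define $\g_0$ directly as $\fz(\g)$ together with the simple ideals on which $\tau$ is Cartan---but the claimed equivalence should be dropped.

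Second, and more seriously, you assert without proof that each projected cone $C_j$ ($j\geq 1$) is \emph{pointed}, and your ``Main obstacle'' paragraph mislocates the difficulty. Whether $C$ equals the direct sum of its projections is irrelevant here: the proposition concerns the decomposition of $(\g,\tau)$, not of $C$. What you actually need is that each individual $C_j$ is pointed, so that $(\g_j,\tau_j,C_j)$ is a genuine (hence ncc) causal symmetric Lie algebra. The paper supplies this via the edge $W_j := C_j \cap(-C_j)$: since $W_j\subeq\fq_j$ is $\Inn_{\g_j}(\fh_j)$-invariant, the subspace $[W_j,W_j]\oplus W_j$ is a $\tau$-invariant ideal of the irreducible pair $(\g_j,\tau_j)$, hence $\{0\}$ or $\g_j$. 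In the latter case $W_j=\fq_j$, so $C_j=\fq_j$ and its (relative) interior---which consists of hyperbolic elements, being $p_j(C^\circ)$---is all of $\fq_j$; then $\fq_{j,\fk}=\{0\}$ forces $\tau|_{\g_j}$ to be Cartan, contradicting $j\geq 1$. Hence $W_j=\{0\}$ and $C_j$ is pointed. With that gap filled, your elimination of the group case and appeal to Proposition~\ref{prop:types} finish the proof exactly as the paper does.
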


\begin{prf} As $\g$ is reductive, it decomposes into $\tau$-invariant ideals. Let
$\fg_0$ be a maximal  $\tau$-stable ideal in $\fg$ such that 
$\tau\res_{\g_0}$ is a Cartan involution. 
Then 
\[\fg = \fg_0 \oplus \bigoplus_{j=1}^N \fg_j \]
where each $\fg_j$ is a non-compact ideal, 
so that $(\fg_j,\tau)$ is irreducible, i.e., $\fg_j$ does not
contain any non-trivial $\tau$-invariant ideals, and 
$\tau\res_{\fg_j}$ is not Cartan. 
As $\fh$ contains no non-trivial ideal, the center $\fz(\g)$ 
is contained in $\fq$, hence in $\g_0$.

For $k > 0$, let  $p_k :\fg\to \fg_k$ be
the projection with kernel $\bigoplus_{j\not= k} \fg_j$ 
and recall from Proposition~\ref{prop:types}  that $\g_k$ is simple.
As $C$ is generating, $C_k:=\oline{p_k (C)}\not=\{0\}$. 
Let $H_k = \la \exp \fh_j \ra \subeq G_j$ be the integral  subgroup 
with Lie algebra $\fh_j$. Then $C_k$ and $W_k = C_k\cap -C_k$ are $\Ad (H_k)$
invariant. The Jacobi identity then implies that $[W_k,W_k]\oplus W_k$ is
a $\tau$ invariant ideal in $\fg_k$ and hence equals either $\fg_k$ or~$\{0\}$. 
As $C$ is hyperbolic, the first case can only occur 
if $\tau\res_{\fg_k}$ is a Cartan involution,
contradicting our assumption on $\fg_k$.  Hence
$C_k$ is an $\Ad (H_j)$-invariant pointed generating hyperbolic cone and it
follows that $(\fg_k,\tau_k)$ is ncc, hence in particular simple by 
Proposition~\ref{prop:types}. 
\end{prf}
 
\subsection{Compatible Cartan involutions}
\mlabel{subsec:3.5}

In this section $\fg$ is always assumed to be reductive and $h$ will always stands for
an Euler element in $\fg$. We introduce the notion of a 
compatible Cartan involution and prove that
a compatible involution always exists. We start with some simple observations
about invariant cones in $\fq$ and compatible Cartan involutions.

\begin{defn}
For a reductive Lie algebra $\g$, a 
Cartan involution $\theta$ of $\g$  
is said to be {\it compatible} with $(\g,\tau,C,h)$ if 
\[ \theta\tau = \tau \theta\quad \mbox{ and }\quad \theta(h) = -h.\]
We then put 
\begin{equation}
  \label{eq:theta-tau-dec}
 \fk = \g^\theta, \quad \fp = \g^{-\theta}, \quad 
\fh_\fk := \fh \cap\fk, \quad 
\fh_\fp := \fh \cap\fp, \quad 
\fq_\fk := \fq \cap\fk, \quad 
\fq_\fp := \fq \cap\fp.
\end{equation}
\end{defn}

\begin{lem} \mlabel{lem:compcartan} 
Suppose that  $(\g,\tau, C,h)$ is a reductive 
modular non-compactly causal symmetric 
Lie algebra. Then compatible Cartan involutions for $(\g,\tau,C,h)$ exist. 
\end{lem}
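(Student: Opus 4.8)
The plan is to build $\theta$ piece by piece using the structural decompositions already available. First I would reduce to the case where $\fh$ contains no non-zero ideal: if $\fh$ has a non-trivial ideal $\fa$, then $\fa$ is an ideal of $\g$ on which $\tau$ acts trivially, so $\fa$ is a compact or abelian summand, and on such a summand any Cartan involution fixing $h$ works (note $h$ lies in $\fh$, and on the part of $\fa$ not meeting $h$'s support we are free); more carefully, split off the $\tau$-fixed ideals and handle them separately, so that we may assume $\fh$ contains no non-zero ideal and in particular $\fz(\g) \subeq \fq$ by the argument in Proposition~\ref{prop:decomp}. Then apply Proposition~\ref{prop:decomp} to write $(\g,\tau) \cong (\g_0,\tau_0) \oplus \bigoplus_{j=1}^N (\g_j,\tau_j)$, where $\tau_0$ is a Cartan involution and each $(\g_j,\tau_j)$ for $j \ge 1$ is simple irreducible ncc. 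On the summand $\g_0$ we simply take $\theta_0 := \tau_0$; it commutes with $\tau_0$ trivially, and since $h \in \fh$ projects to a $\tau_0$-fixed element of $\g_0$, which must be $0$ because $\tau_0$ is Cartan (so has no non-zero fixed hyperbolic elements, while $h$ is hyperbolic), we get $\theta_0(h_0) = -h_0$ automatically as $h_0 = 0$. So the work is concentrated on each simple ncc summand $\g_j$.

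On a simple irreducible ncc summand, write $h_j$ for the component of $h$ (it may be zero, in which case any Cartan involution commuting with $\tau_j$ suffices, and such exist by the standard theory of semisimple symmetric Lie algebras). When $h_j \ne 0$, I would invoke Proposition~\ref{prop:types}(S): there is a Cartan involution $\theta'$ with $\tau_j = \theta' \tau_{h_c}$ for a causal Euler element $h_c \in \fq_j$. The issue is that $\theta'$ need not satisfy $\theta'(h_j) = -h_j$. To fix this, I would conjugate: the point is that all Euler elements in $\g_j$ lying in $\fh_j$ are $\Inn_{\g_j}(\fh_j)$-conjugate (this is the kind of statement established in \cite{MN21, MNO22a} — the relevant conjugacy of Euler elements within a fixed eigenspace configuration), so after replacing $\theta'$ by a conjugate $\varphi \theta' \varphi^{-1}$ with $\varphi \in \Inn_{\g_j}(\fh_j)$ we may assume $\theta'$ has been arranged to anticommute with $h_j$ in the sense $\theta'(h_j) = -h_j$; here one uses that $\Inn_{\g_j}(\fh_j)$ commutes with $\tau_j$ so the relation $\tau_j = \theta' \tau_{h_c}$ (with $h_c$ correspondingly moved) is preserved. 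The compatibility $\theta' \tau_j = \tau_j \theta'$ then follows from $\theta'(h_c) = \pm h_c$ together with $\tau_j = \theta'\tau_{h_c}$, since $\theta'$ and $\tau_{h_c}$ then commute.

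Finally, assembling $\theta := \theta_0 \oplus \bigoplus_j \theta_j$ gives an involutive automorphism of $\g$; it is a Cartan involution because it restricts to one on $[\g,\g]$ (each summand contributes a Cartan involution on its semisimple part) and sends $\fz(\g) \subeq \g_0 \cap \fq$ to its negative; it commutes with $\tau$ by construction on each summand; and $\theta(h) = -h$ since this holds summand-by-summand. The main obstacle I anticipate is the middle step: verifying that one can simultaneously arrange $\theta'$ to commute with $\tau_j$ \emph{and} to invert $h_j$, i.e.\ that the conjugation needed to move $\theta'(h_j)$ to $-h_j$ can be taken inside $\Inn_{\g_j}(\fh_j)$ rather than a larger group — this requires carefully tracking which conjugations preserve the equation $\tau_j = \theta'\tau_{h_c}$, and is exactly the place where the results of \cite{MN21} and \cite{MNO22a} on Euler elements, $\tau$-compatibility, and the relation $\tau = \theta\tau_{h_c}$ do the heavy lifting.
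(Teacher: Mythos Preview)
Your argument takes a much longer route than necessary, and the key step in the middle contains a genuine error. You claim that ``all Euler elements in $\g_j$ lying in $\fh_j$ are $\Inn_{\g_j}(\fh_j)$-conjugate.'' This is false: Example~\ref{ex:2.24} exhibits, in $\g = \sp_{2n}(\R)$ with $\fh = \gl_n(\R)$, a whole family $h_p$ of Euler elements of $\g$ contained in $\fh$ that lie in pairwise distinct $\Inn(\fh)$-orbits (their $\ad$-eigenspace dimensions differ). So the conjugacy you invoke does not hold, and your proposed mechanism for forcing $\theta'(h_j) = -h_j$ breaks down. What you \emph{actually} need is the weaker statement that the specific hyperbolic element $h_j \in \fh_j$ is $\Inn_{\g_j}(\fh_j)$-conjugate to something in $\fh_{j,\fp} = \fh_j^{-\theta'}$; this has nothing to do with Euler-element classification and is a general fact about hyperbolic elements in reductive symmetric Lie algebras.

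The paper's proof exploits exactly this general fact and thereby avoids the entire decomposition apparatus. It proceeds in two lines: first take any Cartan involution $\theta$ commuting with $\tau$ (the standard existence result, \cite[Prop.~I.5(iii)]{KN96}, extended by $-\id$ on $\fz(\g)$); then observe that $h \in \fh$ is hyperbolic, so by \cite[Cor.~II.9]{KN96} applied to the symmetric pair $(\fh,\theta|_\fh)$ it is $\Inn_\g(\fh)$-conjugate to an element of $\fh_\fp$. Conjugating $\theta$ by the same inner automorphism (which commutes with $\tau$) yields a compatible Cartan involution. No passage to irreducible summands, no Proposition~\ref{prop:decomp}, no Proposition~\ref{prop:types}, and no Euler-element classification are needed. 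Your reduction to the case $\fz(\g) \subeq \fq$ and the argument that $h_0 = 0$ on the Riemannian part are fine but simply unnecessary once one uses the hyperbolic-conjugacy fact directly on~$\g$.
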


\begin{prf}  Applying \cite[Prop.~I.5(iii)]{KN96} 
to the semisimple commutator algebra 
and extending by $-\id_{\fz(\g)}$, 
we obtain a Cartan involution $\theta$ of $\g$ commuting with $\tau$. 
Then $h \in \fh$ is a hyperbolic element of $\g$, hence 
conjugate under $\Inn_\g(\fh)$ to an element of $\fh_\fp = \fh^{-\theta}$ 
(apply \cite[Cor.~II.9]{KN96} to $(\fh \oplus \fh, \tau_{\rm flip})$). 
Conjugating $\theta$ accordingly, we thus obtain a compatible 
Cartan involution. 
\end{prf}

From now on $\theta$ always denotes a compatible Cartan involution. 
We will need the following Extension and Restriction Theorems 
(\cite[Thm.~X.7]{KN96}, \cite[Thm. 4.5.8]{HO97}): 

\begin{thm}{\rm(Cone Extension Theorem)} 
\mlabel{thm:extend} 
Suppose that  $\fh$ contains no non-zero ideals 
of $\g$. Then every 
$\Inn_\g(\fh)$-invariant closed convex pointed generating 
{\bf elliptic} invariant cone $C \subeq \fq$ 
can be extended to an $\Ad(G)$-invariant, 
$-\tau$-invariant 
pointed generating closed convex cone $C_\fg \subeq \g$. It satisfies 
\[ p_\fq (C_\fg) =   C_\fg \cap \fq = C. \] 
\end{thm}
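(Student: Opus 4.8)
The plan is to obtain $C_\fg$ as the \emph{smallest} $\Ad(G)$-invariant cone containing $C$ and to control it through its trace on a compactly embedded Cartan subalgebra of $\g$. First I would fix a compatible Cartan involution $\theta$ and record the consequences of the ellipticity of $C$. Since $\fh$ contains no non-zero ideal of $\g$, the center satisfies $\fz(\g)\subeq\fq$, and every non-abelian simple ideal of $\g$ is hermitian; moreover the cc structure provides a $\tau$-invariant maximal abelian subspace $\ft\subeq\fq_\fk=\fq\cap\fk$ which is a compactly embedded Cartan subalgebra of $\g$ (so in particular $\ft\subeq\fq$), and ellipticity gives $C=\oline{\Inn_\g(\fh).(C\cap\ft)}$. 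Thus the cone is completely encoded by its trace $C_0:=C\cap\ft\subeq\ft$. These facts I would import from the root-theoretic structure theory of cc symmetric Lie algebras in \cite{Ol91, KN96, HO97}.

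The technical heart is the claim that $C_0$ is a pointed generating closed convex cone in $\ft$ which is invariant under the full Weyl group $\cW_\fk=W(\fk,\ft)$ of the maximal compactly embedded subalgebra, and which is \emph{positive} in the sense required by cone correspondence: there is a $\theta$-adapted positive system $\Delta^+$ of roots of $(\g_\C,\ft_\C)$, split into compact roots $\Delta_k$ and non-compact roots $\Delta_p$, such that $C_0$ contains the coroot vectors $i\check\alpha$ for all $\alpha\in\Delta_p^+$; equivalently $C_\fq^{\rm min}\subeq C\subeq C_\fq^{\rm max}$ and $C_0$ lies between the two extreme $\cW_\fk$-invariant positive cones in $\ft$. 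The point is that an $\Inn_\g(\fh)$-invariant \emph{elliptic} cone, once restricted to $\ft$, is automatically invariant under all of $\cW_\fk$ (and not merely under the smaller Weyl group $W(\fh_\fk,\ft)$), because its interior must meet each direction prescribed by the hermitian structure on the simple ideals. This is where ellipticity is used essentially, and it rests on the detailed analysis of $C_\fq^{\rm min}\subeq C_\fq^{\rm max}$ from \cite{OO91, Ol91, KN96, HO97}; I expect this step to be the main obstacle, everything afterwards being formal.

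Granting the claim, set $C_\fg:=\oline{\conv}\bigl(\Ad(G).C\bigr)=\oline{\conv}\bigl(\Ad(G).C_0\bigr)$. By construction this is a closed convex $\Ad(G)$-invariant cone containing $C$, and the cone correspondence of \cite[Thm.~X.7]{KN96} identifies it with the minimal invariant cone $W_\g^{\rm min}(C_0)$ of $\g$ attached to the positive $\cW_\fk$-invariant cone $C_0$; in particular $C_\fg$ is pointed (it is contained in the maximal invariant cone $W_\g^{\rm max}$, which under our hypotheses is pointed), and $C_\fg\cap\ft=C_0$. It is generating because $C_\fg-C_\fg$ is an $\Ad(G)$-invariant subspace, i.e.\ an ideal of $\g$, containing $C-C=\fq$, and the smallest ideal containing $\fq$ is $\fq+[\fq,\fq]$, which equals $\g$ since $\fh$ contains no non-zero ideal. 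Finally $C_\fg$ is $-\tau$-invariant: the linear map $-\tau$ satisfies $(-\tau)\circ\Ad(g)=\Ad(\tau(g))\circ(-\tau)$ for $g\in G$, so it normalizes $\Ad(G)=\Inn(\g)$, and it fixes $C$ pointwise since $C\subeq\fq$; hence $(-\tau)\bigl(\Ad(G).C\bigr)=\Ad(G).C$ and therefore $(-\tau)(C_\fg)=C_\fg$.

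It remains to compute the trace on $\fq$. Since $C_\fg$ is $-\tau$-invariant, the Restriction Theorem (the companion of the Extension Theorem, see \cite[Thm.~X.7]{KN96}) identifies $C_\fg\cap\fq$ as the $\Inn_\g(\fh)$-invariant elliptic cone in $\fq$ with trace $C_\fg\cap\ft=C_0=C\cap\ft$ on $\ft$, namely $C_\fg\cap\fq=\oline{\Inn_\g(\fh).C_0}=C$. The equality $p_\fq(C_\fg)=C$ then follows formally from $-\tau$-invariance: for $x\in C_\fg$ we have $p_\fq(x)=\tfrac12\bigl(x-\tau(x)\bigr)=\tfrac12\bigl(x+(-\tau)(x)\bigr)\in\tfrac12(C_\fg+C_\fg)=C_\fg$, so $p_\fq(C_\fg)\subeq C_\fg\cap\fq=C$, while the reverse inclusion is immediate since $p_\fq$ restricts to the identity on $\fq\supeq C$. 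This exhibits $C_\fg$ with all the asserted properties.
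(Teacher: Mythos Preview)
The paper does not prove this theorem at all: it is stated as a known result and attributed to \cite[Thm.~X.7]{KN96} and \cite[Thm.~4.5.8]{HO97}, with no argument given. Your proposal is therefore not being compared against a proof in the paper but against those external references, and what you have written is a faithful outline of the argument one finds there. In particular, your identification of the crucial step---that the trace $C_0=C\cap\ft$ on a compactly embedded Cartan subalgebra $\ft\subeq\fq_\fk$ is automatically invariant under the \emph{full} Weyl group $\cW_\fk$, not just the smaller group coming from $\fh_\fk$---is exactly the point where ellipticity enters and where the real work in \cite{KN96,HO97} lies. You correctly flag this as the main obstacle and defer it to the literature.

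One small remark: your pointedness argument for $C_\fg$ appeals to containment in a maximal invariant cone $W_\g^{\rm max}$ ``which under our hypotheses is pointed''. This is true, but it already presupposes that $\g$ is quasihermitian with $\fz(\g)\subeq\fq$, which you have asserted but not derived; in the reductive setting the pointedness of $W_\g^{\rm max}$ is not automatic and needs the absence of compact or non-hermitian simple ideals intersecting $\fq$ trivially. This is again available in the cited sources, so it is not a gap so much as another imported fact. Otherwise the formal steps---$(-\tau)$-invariance via normalization of $\Inn(\g)$, generation via the ideal $\fq+[\fq,\fq]=\g$, and the computation $p_\fq(C_\fg)=C_\fg\cap\fq=C$ from $(-\tau)$-invariance---are clean and correct.
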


For the non-compactly causal spaces, we obtain by duality: 
\begin{cor} \mlabel{cor:ext-ncc}
Let $(\g,\tau,C)$ be a reductive non-compactly causal 
symmetric Lie algebra for which $\fh$ contains no non-zero ideals  
of $\g$. Then there exists a closed convex pointed generating cone 
$C_{\g^c} \subeq \g^c$, invariant under $-\tau^c$ and $\Inn_\g(\g^c)$, 
such that 
\[ p_{i\fq} (C_{\fg^c}) =   C_{\fg^c} \cap i\fq = iC. \] 
\end{cor}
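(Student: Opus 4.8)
The plan is to obtain Corollary~\ref{cor:ext-ncc} from the Cone Extension Theorem~\ref{thm:extend} by passing to the Cartan dual $(\g^c, \tau^c)$ and exploiting the elementary fact that the cc/ncc dichotomy is interchanged under $c$-duality. First I would record the key structural input: for a reductive symmetric Lie algebra $(\g,\tau)$ with $\fz(\g) \subeq \fq$ (forced here, since $\fh$ contains no non-zero ideal of $\g$), the $c$-dual $(\g^c,\tau^c)$ with $\g^c = \fh + i\fq$ is again reductive, and $\fh^c = \fh$ while $(\fq)^c = i\fq$. The adjoint action of $\fh$ on $i\fq$ in $\g^c$ is, via the identification $x \mapsto ix$, conjugate to the action of $\fh$ on $\fq$ in $\g$ (because $\ad$ is $\C$-linear on $\g_\C$). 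Hence the map $C \mapsto iC$ is a bijection between $\Inn_\g(\fh)$-invariant pointed generating closed convex cones in $\fq$ and in $i\fq = (\g^c)^{-\tau^c}$. Moreover an element $x \in \fq$ is hyperbolic in $\g$ (i.e.\ $\ad_\g x$ has real spectrum) if and only if $ix \in \g^c$ is \emph{elliptic} in $\g^c$ (i.e.\ $\ad_{\g^c}(ix)$ has purely imaginary spectrum), because both equal $i \cdot (\text{real spectrum of } \ad x)$ as operators on $\g_\C$. Therefore $(\g,\tau,C)$ being ncc is equivalent to $(\g^c,\tau^c, iC)$ being cc.

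Next I would check that the hypothesis ``$\fh$ contains no non-zero ideal of $\g$'' transfers to ``$\fh = \fh^c$ contains no non-zero ideal of $\g^c$.'' An ideal of $\g^c$ contained in $\fh$ would be a subspace $\fa \subeq \fh$ with $[\fa,\fh] + [\fa, i\fq] \subeq \fa$; multiplying the second bracket by $-i$ gives $[\fa,\fq]\subeq \fa$ (using $\C$-bilinearity of the bracket on $\g_\C$ and that $\fa \subeq \g \cap \g^c$), so $\fa$ is an ideal of $\g$ contained in $\fh$, hence zero. With this in hand, Theorem~\ref{thm:extend} applies to the cc symmetric Lie algebra $(\g^c, \tau^c)$ with the elliptic cone $iC \subeq (\g^c)^{-\tau^c}$: it produces an $\Ad(G^c)$-invariant (in particular $\Inn_{\g^c}(\g^c) = \Inn(\g^c)$-invariant), $-\tau^c$-invariant, pointed generating closed convex cone $C_{\g^c} \subeq \g^c$ with $p_{i\fq}(C_{\g^c}) = C_{\g^c} \cap i\fq = iC$, where $p_{i\fq}$ is the projection along $\fh$. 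Since $\g^c$ and $\g$ have the same complexification and $\Inn_{\g^c}(\g^c)$ is simply $\Inn(\g_\C)$ restricted to preserve $\g^c$, one has that $C_{\g^c}$ is invariant under $\Inn_\g(\g^c)$ as well (the notation $\Inn_\g(\g^c)$ in the statement just means the inner automorphisms generated by $\ad(\g^c) \subeq \der(\g_\C)$ acting on $\g^c$), so no further work is needed there; I would just remark that $\Inn_{\g^c}(\g^c) = \Inn_\g(\g^c)$.

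Finally, I would note that the relative-interior and pointedness/generating conditions need no adjustment: multiplication by $i$ is a real-linear isomorphism $\fq \to i\fq$ carrying $C^\circ$ to $(iC)^\circ$, and likewise the decomposition $\g^c = \fh \oplus i\fq$ matches $\g = \fh \oplus \fq$ under it, so the projection statement $p_{i\fq}(C_{\g^c}) = iC$ is literally the image under this isomorphism of the projection statement in Theorem~\ref{thm:extend}. I expect the only genuine point requiring care — the ``main obstacle,'' though it is minor — is the bookkeeping around which group of automorphisms ($\Ad(G^c)$, $\Inn(\g^c)$, $\Inn_\g(\g^c)$) the extended cone must be invariant under, and verifying that the invariance coming out of Theorem~\ref{thm:extend} for $\g^c$ matches exactly the invariance demanded in the statement of the Corollary; all the rest is the formal ``multiply by $i$'' dictionary between a causal symmetric Lie algebra and its Cartan dual.
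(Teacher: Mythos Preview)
Your proposal is correct and follows exactly the approach the paper intends: the paper simply states the corollary with the words ``For the non-compactly causal spaces, we obtain by duality,'' and your argument is precisely the spelling-out of that duality (pass to the cc Lie algebra $(\g^c,\tau^c,iC)$, check the ideal hypothesis transfers, and apply Theorem~\ref{thm:extend}). The only minor imprecision is the phrase ``multiplying the second bracket by $-i$ gives $[\fa,\fq]\subeq \fa$'': what you actually get is $i[\fa,\fq] \subeq \fa \cap i\fq = \{0\}$, hence $[\fa,\fq]=0$, which is even stronger and yields the same conclusion.
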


In this section we put $H_K :=\Inn_\fg (\fh_\fk)$. 
As this group is compact, 
averaging with the Haar measure on $H_K$ 
leads for every $x \in C^\circ$ to an $\Ad(H_K)$-fixed point 
\[x_0= \int_{H_K} \Ad(k)x\, d\mu(k)\in C^\circ \cap \fq^{H_K}\]  
(see \cite[Lem 1.3.5]{HO97} and the proof of Lemma 2.1.15 in \cite{HO97}). 

The following lemma recalls crucial structural information concerning 
properties of elements in $C^\circ \cap \fq^{H_K}$.

\begin{lem}\mlabel{lem:thetaCom}
 Let $(\fg,\tau ,C)$ be a reductive causal symmetric Lie algebra 
and $x_0 \in C^\circ \cap \fq^{H_K}$ be an $\Ad(H_K)$-fixed point.
Then the following assertions hold: 
\begin{itemize}
\item[\rm (i)] If $x_0 $ is hyperbolic, then $x_0 \in \fq_\fp$
  and $x_0 \in \fz(\fh_\fk + \fq_\fp)$. 
\item[\rm (ii)] If $x_0$ is elliptic, then $x_0 \in \fz(\fk) \cap \fq$.
\end{itemize}
\end{lem}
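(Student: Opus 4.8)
The plan is to analyze the element $x_0 \in C^\circ \cap \fq^{H_K}$ using its adjoint action together with the compatibility of $\theta$. First I would record the key consequence of $H_K$-invariance: since $H_K = \Inn_\g(\fh_\fk)$, the relation $\Ad(k)x_0 = x_0$ for all $k \in H_K$ differentiates to $[\fh_\fk, x_0] = 0$, i.e.\ $x_0$ centralizes $\fh_\fk$. In particular $\ad x_0$ commutes with $\ad y$ for every $y \in \fh_\fk$, so $\ad x_0$ preserves the $\theta$-eigenspace decomposition $\g = \fk \oplus \fp$ (because $\theta$ is compatible, hence commutes with $\tau$, and $\fk$, $\fp$ are built from $\theta$). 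Also $x_0 \in \fq$, so $\ad x_0$ swaps $\fh$ and $\fq$ and hence swaps $\fh_\fk \leftrightarrow \fq_\fk$ and $\fh_\fp \leftrightarrow \fq_\fp$.

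For part (i), assume $x_0$ is hyperbolic, so $\ad x_0$ is diagonalizable over $\R$. Write $x_0 = x_\fk + x_\fp$ with $x_\fk \in \fq_\fk$, $x_\fp \in \fq_\fp$ (both lie in $\fq$ since $x_0 \in \fq$ and $\theta$ preserves $\fq$). Because $\theta$ is a Cartan involution, every element of $\fk$ is elliptic; thus $\ad x_\fk$ is elliptic while $\ad x_\fp$ is hyperbolic, and I would like to conclude $x_\fk = 0$. The cleanest route: consider $\theta(x_0) = -x_\fk + x_\fp$. Since $\theta$ commutes with $\tau$ and with $\Ad(H_K)$ (as $\theta$ fixes $\fh_\fk$ pointwise), $\theta(x_0)$ also lies in $C'^\circ \cap \fq^{H_K}$ where $C' := \theta(C)$; but the standard structure theory (and the hyperbolicity of $C^\circ$) forces the $\fq_\fk$-component of a hyperbolic $H_K$-fixed point in $\fq$ to vanish — this is exactly where I expect to invoke, or reprove, the argument behind \cite[Lem 1.3.5]{HO97} / the discussion around it: a hyperbolic element cannot have a nonzero elliptic summand once it is $H_K$-fixed, because $[x_\fk, x_\fp] \in \fh_\fk$ must be zero (it is $\ad x_\fk$-elliptic and $\ad x_\fp$-hyperbolic simultaneously, being in the centralizer), so $x_\fk$ and $x_\fp$ commute, and then $\ad x_\fk = \ad x_0 - \ad x_\fp$ would be both elliptic and (a difference of commuting) hyperbolic, forcing $\ad x_\fk$ nilpotent, hence zero, so $x_\fk \in \fz(\g) \subseteq \fq_\fp$ by our standing assumption — either way $x_0 \in \fq_\fp$. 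Once $x_0 \in \fq_\fp$, centrality $x_0 \in \fz(\fh_\fk + \fq_\fp)$ follows by showing $[x_0, \fq_\fp] = 0$: indeed $[x_0, \fq_\fp] \subseteq [\fq_\fp,\fq_\fp] \subseteq \fh_\fk$, and for $y \in \fq_\fp$ the bracket $[x_0,y]$ lies in $\fh_\fk$ and is killed by $\ad x_0$ (since $\ad x_0$ kills $\fh_\fk$), so $\ad x_0$ has a nonzero kernel-vector pattern that, combined with diagonalizability, forces $[x_0,y]=0$; more directly, $x_0,y$ are both hyperbolic in $\fq_\fp$ and $\fq_\fp$ together with its brackets behaves like a Riemannian symmetric pair where $\fq_\fp$ need not be abelian, so I would instead argue via $\ad x_0|_{\fh_\fk} = 0$ and the Jacobi identity that $\ad x_0$ annihilates $[\fq_\fp,\fq_\fp] \oplus \fq_\fp = \fh_\fk + \fq_\fp$ after checking the generating brackets.

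For part (ii), assume $x_0$ is elliptic. Now $\ad x_0$ is semisimple with purely imaginary spectrum, and again $x_0 \in \fq$ with $[\fh_\fk, x_0] = 0$. Decompose $x_0 = x_\fk + x_\fp$ as before; I want $x_\fp = 0$, i.e.\ $x_0 \in \fq_\fk$, and then $x_0 \in \fz(\fk)$. Here the roles reverse: $\ad x_\fp$ is hyperbolic, $\ad x_\fk$ elliptic, $[x_\fk, x_\fp] \in \fh_\fk$ is annihilated by $\ad x_0$ hence (by the same commuting-semisimple argument) is zero, so $x_\fk, x_\fp$ commute; then $\ad x_\fp = \ad x_0 - \ad x_\fk$ is a difference of commuting elliptic operators, hence elliptic, but it is also hyperbolic, so $\ad x_\fp = 0$, giving $x_\fp \in \fz(\g) \cap \fp$. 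Since $\fz(\g) \subseteq \fp$ is excluded — under (GP)/(Eff), Lemma~\ref{lem:2.3.1}(a) gives $\fz(\g) \subseteq \g^{-\theta} = \fp$, but in the ncc setting of this subsection we are told $\fz(\g) \subseteq \fq$; actually $x_\fp \in \fp \cap \fz(\g)$ combined with $x_0 \in \fq$ and $\fz(\g) \subseteq \fq$ forces $x_\fp \in \fq \cap \fp \cap \fz(\g)$, and I need to pin down that this is $0$ from the effectivity hypothesis — so $x_0 = x_\fk \in \fq_\fk \subseteq \fk \cap \fq$. Finally, $x_0$ centralizes $\fh_\fk$ by construction, and it centralizes $\fq_\fk$ too: $[x_0, \fq_\fk] \subseteq [\fq_\fk,\fq_\fk] \subseteq \fh_\fk$ is $\ad x_0$-annihilated hence zero; since $\fk = \fh_\fk \oplus \fq_\fk$, we get $x_0 \in \fz(\fk)$, and $x_0 \in \fq$ was given, so $x_0 \in \fz(\fk) \cap \fq$.

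**Main obstacle.** The delicate point is the repeated dichotomy argument "an $H_K$-fixed element of $\fq$ cannot mix a hyperbolic and an elliptic piece": making rigorous that $[x_\fk, x_\fp] = 0$ requires knowing that this bracket lies in the centralizer $\fz_\g(x_0)$ (clear) and that within $\fz_\g(x_0)$ the semisimple/nilpotent bookkeeping forces it to vanish — this uses that $\ad x_0$ restricted to $\fh_\fk$ is zero, so $[x_\fk,x_\fp] \in \ker(\ad x_0|_{\fh})$, combined with $[x_\fk,x_\fp]$ being simultaneously $\ad x_\fk$-elliptic-image and $\ad x_\fp$-hyperbolic-image. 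I expect the cleanest formulation is to pass to the (reductive) centralizer $\fl := \fz_\g(x_0)$, which is $\theta$- and $\tau$-stable, note $\fh_\fk \subseteq \fl$ and $x_0 \in \fz(\fl)$, and then decompose $x_0$ inside $\fz(\fl)$ where the Jordan decomposition of $\ad x_0$ acts trivially — at that point both summands $x_\fk, x_\fp$ lie in $\fz(\fl)$, hence commute, and the elliptic-vs-hyperbolic clash is immediate. This is essentially the content the authors extract from \cite[Lem 1.3.5]{HO97} and \cite[Lem.~2.1.15]{HO97}, so I would lean on those; the only genuinely new bit is threading through the center $\fz(\g)$ correctly under the standing assumption $\fz(\g) \subseteq \fq$.
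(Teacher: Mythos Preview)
Your approach has a genuine gap in the key step: showing that $x_0 \in \fq_\fp$ (resp.\ $\fq_\fk$). You write $x_0 = x_\fk + x_\fp$ and try to prove $[x_\fk,x_\fp]=0$, but the argument breaks down. First, the bracket $[x_\fk,x_\fp]$ lies in $[\fq_\fk,\fq_\fp] \subseteq \fh \cap \fp = \fh_\fp$, not in $\fh_\fk$ as you assert twice; so the fact that $\ad x_0$ kills $\fh_\fk$ gives no information here. Second, your fallback via the centralizer $\fl = \fz_\g(x_0)$ fails for the same circularity: you claim $\fl$ is $\theta$-stable, but $\theta(\fl) = \fz_\g(\theta(x_0))$, and $\theta(x_0) = x_\fk - x_\fp$ has the same centralizer as $x_0$ only if $\theta(x_0) = \pm x_0$, which is precisely what you are trying to prove. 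Without $\theta$-stability of $\fl$ you cannot conclude $x_\fk, x_\fp \in \fz(\fl)$.

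The paper's route avoids this entirely. It never decomposes $x_0$ along $\theta$. Instead it exploits the hypothesis $x_0 \in C^\circ$ (which you do not really use): since $C$ is a pointed $\Ad(H)$-invariant cone, the stabilizer $H^{x_0}$ in $H = \Inn_\g(\fh)$ is compact, hence equal to $H_K$ by maximality. One then conjugates $x_0$ into $\fq_\fp$ by some $g \in H$ (possible because $x_0$ is hyperbolic), observes that the new stabilizer $H^{g.x_0}$ is $\theta$-invariant and compact, hence again $H_K$; thus $g$ normalizes $H_K$, forcing $g \in H_K$ and $x_0 = g^{-1}.(g.x_0) \in \fq_\fp$. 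Part~(ii) is then obtained by $c$-duality, applying~(i) to $ix_0$ in $\g^c$, where $\fq_\fp^c = i\fq_\fk$.

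Your centrality argument for~(i), once $x_0 \in \fq_\fp$ is granted, is fine and arguably cleaner than the paper's Killing-form computation: from $[x_0,\fh_\fk]=0$ and $[x_0,\fq_\fp]\subseteq\fh_\fk$ one gets $(\ad x_0)^2\vert_{\fh_\fk+\fq_\fp}=0$, and semisimplicity of $\ad x_0$ on the invariant subalgebra $\g^{\tau\theta}$ yields $\ad x_0\vert_{\fh_\fk+\fq_\fp}=0$. But this does not rescue the earlier step.
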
 

\begin{proof} The group $H := \Inn_\g(\fh)$ is the identity component
  of the group $\Aut([\g,\g])^\tau$, hence closed. It has the polar decomposition
  $H = H_K e^{\ad \fh_\fp}$. 
  (\cite[Prop.~13.1.5]{HN12}).
  The non-compact ideals of $\fh$ all come from irreducible ncc symmetric Lie
  subalgebras of $(\g,\tau)$, hence are contained in $[\fq,\fq]$.
  Therefore the representation 
  $\Ad_\fq$ of $H$ on $\fq$ is faithful and its image in $\GL(\fq)$ is closed with
  maximal compact subgroup $\Ad_\fq(H_K)$.

  \nin   (i)  The compactness of the stabilizer of $x_0$ in $\Ad_\fq(H)$ 
  (\cite[Prop.~V.5.11]{Ne00}) implies that $H^{x_0} = H_K$.
  As $x_0 \in C^\circ$ is hyperbolic, there exists $g\in \Inn_\fg (\fh)$
such that $x := g.x_0\in  \fq_\fp$ (\cite[Cor.~II.9]{KN96}). 
Then $\theta(x) = - x$ implies that the stabilizer $H^x$ of $x$ in $H$
is $\theta$-invariant. 
So $g_1 = h_1 e^{\ad z} \in H^x$ with $h_1 \in H_K$
 and $z \in \fh_\fp$ implies $\theta(g_1) g_1 = e^{2 \ad z} \in H^x$. The compactness 
 of $H^x = g H^{x_0} g^{-1}  = g H_K g^{-1}$ now entails $z = 0$. Hence
  $H^x \subeq H_K$ and thus $H^x = H_K$ as $H^x$  is maximally compact.
  We conclude that $g H_K g^{-1} = H_K$ which further implies $g \in H_K$,
  the stabilizer group of the base point in the Riemannian symmetric space~$H/H_K$.
  This shows that $x = x_0 \in \fq_\fp$.

Further $[x_0,\fh_\fk + \fq_\fp]=\{0\}$ follows from \cite[Lem. 1.3.5]{HO97}.
For the sake of completeness, we recall the simple argument: 
Let $\kappa$ denote the Cartan--Killing form, which is definite 
on $\fh_\fk$ and $\fq_\fp$. Then 
\[  \kappa([x_0, \fq_\fp], \fh_\fk) = -\kappa(\fq_\fp, [x_0,\fh_\fk]) = \{0\} 
\quad \mbox{ and }\quad [x_0,\fq_\fp] \subeq \fh_\fk \] 
implies that $[x_0, \fq_\fp] =\{0\}$.

\nin (ii) If $x_0\in\fq$ is elliptic, then $ix_0\in i\fq \subset
\fg^c=\fh\oplus i\fq$ is hyperbolic. By (i) we have 
$ix_0 \in \fq^c_\fp = i \fq_\fk$, so that
$x_0 \in \fq_\fk$. Moreover, $ix_0$ is central in $\fh_\fk + \fq^c_\fp = \fh_\fk +
i \fq_\fk$, and therefore $x_0 \in \fz(\fk) = \fz(\fh_\fk + \fq_\fk)$.
\end{proof}

We refer to Theorem \ref{thm:3.6} for further discussion related to the following 
theorem.

\begin{thm} \mlabel{thm:2.20} {\rm (Cone Restriction Theorem)} 
Assume that $(\fg,\tau, C)$ is a 
reductive causal symmetric Lie algebra and that $\theta $ 
is a Cartan involution as in {\rm Lemma \ref{lem:thetaCom}}
with $\theta (x_0) = \mp x_0$ depending on $x_0$ being hyperbolic or elliptic.
\begin{itemize}
\item[\rm (i)] If $C$ is hyperbolic and $\fa\subset \fq_\fp$ 
is maximal abelian, then
$C^\circ = \Inn_\fg (\fh)(C^\circ\cap \fa)$ and $\theta C = -C$. 
\item[\rm (ii)] If $-\theta C =C$ and $(\g,\tau)$ is irreducible, then $C$ is hyperbolic.
\item[\rm (iii)] If $C$ is elliptic and $\ft\subset \fq_\fk$ 
is maximal abelian, then $C^\circ = \Inn_\fg (\fh) (C^\circ\cap \ft)$ and
$\theta C = C$.
\item[\rm (iv)] If $\theta C=C$ and $(\g,\tau)$ is irreducible,
 then $C$ is elliptic.
\end{itemize}
\end{thm}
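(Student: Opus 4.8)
Let me think through how to establish the four parts, which link the geometric properties of $\theta$ (whether $-\theta C = C$ or $\theta C = C$) to the algebraic type of the cone (hyperbolic versus elliptic).

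The plan is to treat the four assertions as two complementary pairs, (i)–(ii) for the hyperbolic case and (iii)–(iv) for the elliptic case, and to deduce the second pair from the first by passing to the $c$-dual $(\fg^c, \tau^c)$, where a hyperbolic cone $C \subeq \fq$ corresponds to the elliptic cone $iC \subeq i\fq$ and the Cartan involution transforms accordingly. So the real work is (i) and (ii).

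For (i): Since $C$ is hyperbolic, every $x \in C^\circ$ is a hyperbolic element, and by averaging over the compact group $H_K$ we produce an $\Ad(H_K)$-fixed point $x_0 \in C^\circ \cap \fq^{H_K}$; Lemma~\ref{lem:thetaCom}(i) then gives $x_0 \in \fq_\fp$ (after adjusting $\theta$ within its $\Inn_\g(\fh)$-conjugacy class, which is exactly the hypothesis $\theta(x_0) = -x_0$). Next I would invoke Lemma~\ref{lem:thetaCom}(i) again to get $x_0 \in \fz(\fh_\fk + \fq_\fp)$, so $\ad x_0$ commutes with $\ad \fh_\fk$ and $\ad \fq_\fp$; combined with the fact that $\ad x_0$ is semisimple with real eigenvalues one sees $\fq_\fp$ sits inside a sum of eigenspaces on which matters are controlled. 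The conjugacy statement $C^\circ = \Inn_\g(\fh)(C^\circ \cap \fa)$ is then the standard fact that any maximal abelian $\fa \subeq \fq_\fp$ meets every $\Inn_\g(\fh)$-orbit in $C^\circ$: apply \cite[Cor.~II.9]{KN96} to move an arbitrary $y \in C^\circ$ into $\fq_\fp$, then use that all maximal abelian subspaces of $\fq_\fp$ are $H_K$-conjugate. Finally $\theta C = -C$: for $y \in C^\circ$ write $y = g.a$ with $g \in \Inn_\g(\fh)$, $a \in \fa$; then $\theta(y) = \theta(g)\theta(a) = \theta(g)(-a) = -\theta(g).a \in -\Inn_\g(\fh)(C^\circ\cap\fa) = -C^\circ$, using that $\theta$ normalizes $\Inn_\g(\fh)$ (as $\theta$ commutes with $\tau$) and acts as $-\id$ on $\fa \subeq \fq_\fp$. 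Closure gives $\theta C = -C$.

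For (ii): Suppose $-\theta C = C$ and $(\g,\tau)$ is irreducible. Form $x_0 \in C^\circ \cap \fq^{H_K}$ as above. The hypothesis $-\theta(x_0)$ lies again in $C^\circ$, and averaging we may assume $x_0$ itself satisfies $\theta(x_0) = -x_0$, i.e.\ $x_0 \in \fq_\fp$; but a priori $x_0$ could be neither purely hyperbolic nor purely elliptic. The key dichotomy comes from Lemma~\ref{lem:thetaCom}: decompose $x_0$ into its hyperbolic and elliptic Jordan components (which lie in $\fq$ since $x_0$ does and $\tau$ commutes with the decomposition), and the condition $x_0 \in \fq_\fp$ forces the elliptic part into $\fq_\fp$ as well — but elliptic elements of $\fg$ lie in $\fk$-type subspaces, so $\fq_\fp$ contains no nonzero elliptic element, hence $x_0$ is hyperbolic. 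Then part (i)'s argument applies to show $\theta C = -C$, consistent with the hypothesis; more to the point, irreducibility and Proposition~\ref{prop:types}/Proposition~\ref{prop:decomp} tell us that $(\g,\tau,C)$ must be of one of the listed types, and the type is pinned down by whether $C$ is hyperbolic or elliptic — with $-\theta C = C$ ruling out the elliptic (compactly causal) case because there one has $\theta C = C$ by (iii). So the only consistent possibility is that $C$ is hyperbolic.

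The main obstacle I anticipate is part (ii): one has to argue that no "mixed" invariant cone (one whose interior contains elements that are neither hyperbolic nor elliptic) can be $-\theta$-invariant in the irreducible case. The cleanest route is likely the Jordan-decomposition argument sketched above — show that $-\theta$-invariance plus the $\Ad(H_K)$-averaging forces the generators into $\fq_\fp$, and that $\fq_\fp$ consists of hyperbolic elements (this last point uses that $\theta$ is a Cartan involution, so $\ad$ of any element of $\fp$ is symmetric for the Cartan–Killing-based inner product, hence diagonalizable over $\R$). Once every element of $C^\circ \cap \fq_\fp$ is hyperbolic and $C^\circ = \Inn_\g(\fh)(C^\circ \cap \fa)$ with $\fa \subeq \fq_\fp$, hyperbolicity of all of $C^\circ$ is immediate. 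Parts (iii) and (iv) should then follow formally: apply (i) and (ii) to $(\fg^c, \tau^c, iC)$ with the Cartan involution $\theta^c$ that equals $\theta$ on $\fh$ and $-\theta$ on $i\fq$ — wait, one must be careful that $\theta^c$ so defined is a Cartan involution of $\fg^c$ and is compatible; this is routine but should be stated — noting that $iC$ elliptic $\Leftrightarrow$ $C$ hyperbolic (Lemma~\ref{lem:thetaCom}(ii)) and that $\ft \subeq \fq_\fk$ maximal abelian corresponds to $i\ft \subeq \fq^c_\fp$ maximal abelian, and that $\theta C = C$ corresponds to $-\theta^c (iC) = iC$.
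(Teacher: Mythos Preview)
Your treatment of (i) and of the duality reduction (iii)--(iv) matches the paper's argument essentially verbatim: cite the conjugacy result $C^\circ = \Inn_\g(\fh)(C^\circ\cap\fa)$ (this is \cite[Thm.~4.4.11]{HO97} in the paper), use $\theta\vert_\fa = -\id_\fa$ and $\theta(\fh)=\fh$ to get $\theta C = -C$, then pass to $\g^c$.

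The gap is in (ii). You correctly produce an $H_K$-fixed $x_0\in C^\circ$ and, using $-\theta C = C$, replace it by $\tfrac12(x_0-\theta x_0)\in C^\circ\cap\fq_\fp$; since $\fq_\fp\subeq\fp$ consists of hyperbolic elements, this single element $x_0$ is hyperbolic. But both of your attempts to promote this to ``$C$ is hyperbolic'' are circular. Invoking part (i) requires $C$ hyperbolic as a hypothesis, so you cannot use $C^\circ = \Inn_\g(\fh)(C^\circ\cap\fa)$ yet. And the dichotomy argument fails because Propositions~\ref{prop:types} and~\ref{prop:decomp} \emph{assume} ncc rather than proving that an irreducible causal symmetric Lie algebra must be either cc or ncc; there is no a priori reason the interior of a pointed generating $\Inn_\g(\fh)$-invariant cone consists entirely of semisimple elements of one type.

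What the paper does instead is a sandwiching argument you are missing: from $x_0\in C^\circ\cap\fq_\fp$ and irreducibility, the closed $\Inn_\g(\fh)$-invariant cone generated by $\Inn_\g(\fh)x_0$ is the minimal invariant cone $C_{\rm min}$ (\cite[Prop.~3.1.3]{HO97}), so $C_{\rm min}\subeq C\subeq C_{\rm min}^\star = C_{\rm max}$. The decisive external input is then that $C_{\rm max}$ is hyperbolic (\cite[Thms.~VI.6, IX.9]{KN96}), which forces $C$ hyperbolic. Without this classification-of-invariant-cones ingredient, knowing that one interior point is hyperbolic does not suffice.
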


\begin{proof} (i) The first part of (i) is \cite[Thm.~4.4.11]{HO97}. 
As $\theta|_{\fa} = -\id_\fa$ and $\theta(\fh) = \fh$, 
it follows that 
$\theta(C^\circ) 
= \Inn_\fg (\fh)(\theta(C^\circ\cap \fa)) 
= -\Inn_\fg (\fh)(C^\circ\cap \fa) = - C^\circ$, and by taking closures we get 
$\theta (C ) = -C$.

\nin
(ii)  Assume that $-\theta C = C$. Then 
$C^\circ \cap \fq^{\fh_\fk}$ is invariant under $-\theta$. 
Replacing $x_0$ by $y_0 := \frac{1}{2} (x_0 - \theta x_0) \in C^\circ\cap \fp$, 
we may therefore assume that $x_0 \in \fq_\fp$. 
The $\Inn_\fg(\fh)$-invariant closed convex cone $C_m$  
generated by the orbit $\Inn_\fg (\fh)x_0$ is contained in $C$. 
As $(\g,\tau)$ is irreducible, $C_m$ is
the minimal $\Inn_\g(\fh)$-invariant 
cone $C_{\rm min}(x_0)$ in $\fq$ containing $x_0$ (\cite[Prop.~3.1.3]{HO97},
\cite[\S 3.5.2]{MNO22a}) 
and it follows that
\[C_{\rm min} \subset C_m \subeq C\subset C_{\rm min}^\star = C_{\rm max}.\]
This implies that $C$ is hyperbolic because $C_{\rm max}$ is hyperbolic 
by \cite[Thms.~VI.6, IX.9]{KN96}.

\nin 
(iii) and (iv) follow 
from (i) and (ii) by replacing $\fg$ by $\fg^c = \fh \oplus i\fq$ and $C$ by $iC$,
\end{proof}

The following remark clarifies the necessity of the 
irreducibility in the preceding theorem. 

\begin{rem} The irreducibility assumption in 
Theorem~\ref{thm:2.20}(ii),(iv) is needed. In fact, consider 
a direct sum 
\[ (\g,\tau) = (\g_1, \tau_1) \oplus (\g_2, \tau_2),\] 
where $(\g_1, \tau_1)$ is non-compactly causal and 
$\g_2$ is a compact Lie algebra. So 
$\theta = \theta_1 \oplus \id_{\g_2}$ and 
\[ \fq = \fq_{1,\fp} \oplus \underbrace{\fq_{1,\fk} \oplus \fq_2}_{\fq_\fk}.\] 

Let $C_1 \subeq \fq_1$ be a pointed generating hyperbolic 
cone which is invariant under 
$\Inn_\g(\fh_1)$ and $-\theta_1$. 
Then there exists a closed convex 
subset $D \subeq C_1^\circ$ with interior points 
invariant under $\Inn_\g(\fh_1)$ and $-\theta_1$.\begin{footnote}
{As in the proof of Lemma~\ref{lem:closedorbit-gen} below, this 
follows from the $\Inn_\g(\fh_1)$-invariance of the characteristic 
function of $C_1$ by taking $D = \phi^{-1}([1,\infty))$.}\end{footnote}
Let  
$B \subeq \fq_2$ be a symmetric convex $0$-neighborhood invariant 
under the compact group $\Inn_\g(\fh_2)$. Then 
\[ C := \oline{\R_+(D + B)} \subeq \fq \] 
is a pointed generating $\Inn_\g(\fh)$-invariant cone but 
$C$ is not hyperbolic because the $\fq_2$-components of elements in $\fq$ 
are elliptic. We refer to \cite[Lemma~B.2]{MNO22a} for a more general 
result of this type.

By duality, the irreducibility assumption in (iv) is also needed.   
\end{rem}

We now turn to the structural implications of the existence of
an Euler element. Suppose that $C \subeq \fq$ is a closed convex $\Inn_\g(\fh)$-invariant 
cone also invariant under $-\tau_h$.  Since $\ad h$ preserves 
$\fq$, the space $\fq$  decomposes as 
\[ \fq = \fq_{-1}(h) \oplus \fq_0(h) \oplus \fq_{+1}(h).\]

Note that $\tau\tau_h = \tau_h\tau$ so that $\tau^{ct} :=\tau_h\tau$ is an involution and 
\begin{equation}
  \label{eq:p10}
\fg^{\tau^{ct}} = \fh^{\tau_h} \oplus \fq^{-\tau_h},\quad \fg^{-\tau^{ct}} = \fh^{-\tau_h}\oplus \fq^{\tau_h}
\quad\text{and}\quad \fq^{-\tau_h}= \fq_{+1}(h) \oplus \fq_{-1} (h).
\end{equation}

\begin{prop}  \mlabel{prop:2.18} 
Let $H := \Inn_\g(\fh)$. 
Then the cones 
\[ C_\pm := (\pm  C)\cap \fq_{\pm 1}(h) 
\quad \mbox{ and } \quad C^{\rm ct} := C_+-C_-\] 
are pointed, generating and invariant under 
the centralizer subgroup $H^h = Z_H(h)$ in 
$\fq_{\pm 1}(h)$ and~$\fq^{-\tau_h}$, respectively, 
and they can also be obtained as projections: 
\begin{equation}
  \label{eq:coneproj}
p_{\fq_{\pm 1}(h)}(C) = \pm C_\pm \quad \mbox{ and } \quad 
p_{\fq^{-\tau_h} } (C ) = C^{-\tau_h} = C_+-C_-.
\end{equation}
Furthermore the following holds:
 \begin{itemize}
\item[\rm (i)]   $(\fg^{ct}, \tau |_{\fg^{ct}}, C^{ct} ,h)$ is a modular
causal symmetric Lie algebra of
Cayley type. 
\item[\rm (ii)] If   $-\theta C = C$, then $\theta C_+ = C_-$.  
\item[\rm (iii)] If $C$ is elliptic and $\theta C = C$, then $\theta C_+ 
= -C_-$ and
$ C_+ + C_-$ is elliptic.
\end{itemize}
\end{prop}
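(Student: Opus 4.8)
The plan is to first establish the projection formulas \eqref{eq:coneproj} and the invariance/pointedness/generating claims, and then deduce (i)--(iii). For the projections, the key point is that $\ad h$ is semisimple with eigenvalues $-1,0,1$ on $\fq$ and that $C$ is invariant under the one-parameter group $e^{t\ad h}$, since $\tau_h = e^{\pi i\ad h}$-invariance together with $\Inn_\g(\fh)$-invariance (and $h\in\fh$) gives invariance under $e^{t\ad h}$ for all $t\in\R$ (the closure of $\{e^{t\ad h}\}$ normalizes $C$). Averaging a point $x\in C^\circ$ over the compact torus $\{e^{t\ad h}\}$ is not available since the group is $\R$, so instead I would use a standard Fejér/limit argument: for $x=x_{+1}+x_0+x_{-1}\in C$ the spectral projections $p_{\fq_j(h)}(x)$ lie in the closed cone $C$, because $p_{\fq_{+1}(h)} = \lim_{T\to\infty}\frac1T\int_0^T e^{-t}e^{t\ad h}\,dt$ acting on $\fq$ (and similarly for the other eigenvalues), and each such averaging operator maps $C$ into $C$ as an integral of cone-preserving maps followed by a limit. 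This gives $p_{\fq_{\pm1}(h)}(C)\subseteq C\cap\fq_{\pm1}(h) = \pm C_\pm$; the reverse inclusion is trivial. The same averaging applied to $\tau_h = e^{\pi i\ad h}$ gives $p_{\fq^{-\tau_h}}(C) = C_+ - C_-$ once one notes $\fq^{-\tau_h} = \fq_{+1}(h)\oplus\fq_{-1}(h)$ from \eqref{eq:p10}.

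Pointedness of $C_\pm$ is immediate since $C_\pm\subseteq\pm C$ and $C$ is pointed. For ``generating'' I would argue that $C^\circ\cap\fq_{+1}(h)\neq\emptyset$: take $x\in C^\circ$, apply $e^{t\ad h}$ and let $t\to+\infty$ after rescaling by $e^{-t}$; the limit is $p_{\fq_{+1}(h)}(x)$, which lies in the closure of the open set $C^\circ$, but one needs it in the relative interior of $C_+$ inside $\fq_{+1}(h)$. Here I would instead invoke that $\tau_h(C) = -C$ is part of the modular hypothesis: this forces $p_{\fq_0(h)}(C) = \{0\}$ (since on $\fq_0(h)$ one has $\tau_h = \id$, so $C\cap\fq_0(h)$ is both a cone and its negative, hence $\{0\}$, and then $\fq_0(h)$-components of points of $C$ must vanish by applying the projection and using $\tau_h(C)=-C$ again more carefully), so that $C\subseteq\fq_{+1}(h)\oplus\fq_{-1}(h) = \fq^{-\tau_h}$ and $C = C_+ - C_-$ with $C_\pm$ generating in $\fq_{\pm1}(h)$ because $C$ generates $\fq^{-\tau_h}$ and the projections are surjective onto the summands. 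The $H^h$-invariance is clear: $Z_H(h)$ commutes with $\ad h$, hence preserves each $\fq_j(h)$ and preserves $C$, so it preserves the intersections $C_\pm$ and the combination $C^{\rm ct}$.

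For (i): $\fg^{ct} = \fg^{\tau^{ct}}\oplus\fg^{-\tau^{ct}}$ is a symmetric Lie algebra with involution $\tau|_{\fg^{ct}}$, and since $\fq^{ct} = \fg^{-\tau^{ct}}\cap\fg^{ct}$ contains $\fq^{-\tau_h} = \fq_{+1}(h)\oplus\fq_{-1}(h)$, the element $h$ acts on $\fq^{ct}$ with eigenvalues $\pm1$ only, i.e. $\tau|_{\fg^{ct}} = \tau_h|_{\fg^{ct}}$, which is exactly the Cayley-type condition; that $C^{ct} = C_+ - C_-$ is pointed generating and $\Inn(\fh^{ct})$-invariant and $-\tau_h$-invariant follows from the first part. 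For (ii): if $-\theta C = C$, then since $\theta(h) = -h$ (compatibility), $\theta$ maps $\fq_{+1}(h)$ to $\fq_{-1}(h)$, so $\theta C_+ = \theta(C\cap\fq_{+1}(h)) = (-C)\cap\fq_{-1}(h) = C_-$. For (iii): if $C$ is elliptic and $\theta C = C$, the same eigenspace swap gives $\theta C_+ = C\cap\fq_{-1}(h) = -C_-$; ellipticity of $C_+ + C_-$ follows because $C_+ + C_- \subseteq C$ (as $C_- \subseteq -C$... here one must be careful) — more precisely one passes to $\fg^c$ where $C$ becomes hyperbolic and applies Theorem~\ref{thm:2.20}, or observes directly that $C_+ + C_-$ is contained in an elliptic invariant cone. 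The main obstacle I anticipate is the ``generating'' claim and the vanishing $p_{\fq_0(h)}(C) = \{0\}$: getting the projections to land in the \emph{relative interior} rather than merely the closure requires care with the $\tau_h(C) = -C$ hypothesis, and this is where the modularity condition is doing real work; everything else is bookkeeping with eigenspaces of $\ad h$ and the commutation $\theta\tau_h = \tau_h\theta$.
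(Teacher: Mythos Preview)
Your projection argument is essentially the paper's: for $x = x_+ + x_0 + x_- \in C$ one uses $e^{\R\ad h}C \subseteq C$ (from $h\in\fh$ and $\Inn_\g(\fh)$-invariance) and takes $\lim_{t\to\infty} e^{-t}e^{t\ad h}x = x_+ \in C$ directly; the integral averaging is unnecessary. Pointedness, $H^h$-invariance, and your arguments for (i) and (ii) are fine.

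The genuine gap is your treatment of ``generating''. The claim $p_{\fq_0(h)}(C) = \{0\}$ is false. From $\tau_h(C) = -C$ and $\tau_h|_{\fq_0(h)} = \id$ you correctly obtain $C \cap \fq_0(h) = \{0\}$, and the further step you sketch gives at most $p_{\fq_0(h)}(C) = -p_{\fq_0(h)}(C)$, hence a linear subspace---but since $C$ generates $\fq$, that subspace is all of $\fq_0(h)$, not $\{0\}$. Example~\ref{ex:gl2} exhibits a modular ncc Lie algebra with $\fq_0(h) = \R\1 \neq \{0\}$ and Lorentzian cones $C^m \subseteq \fq$ whose $\fq_0(h)$-projection is $\R\1$; in particular $C \not\subseteq \fq^{-\tau_h}$, and the identity $C = C_+ - C_-$ holds only in the Cayley type case (Corollary~\ref{cor:Cayley type}). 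The correct argument is one line and is what the paper does: $C$ generating in $\fq$ and the surjectivity of the linear map $p_{\fq_{\pm1}(h)}$ immediately give that $\pm C_\pm = p_{\fq_{\pm1}(h)}(C)$ generates $\fq_{\pm1}(h)$; no interior-point or vanishing argument is needed. For (iii) your proposed inclusion $C_+ + C_- \subseteq C$ also fails (only $C_+ - C_- \subseteq C$ holds, since $C_+\subseteq C$ and $-C_-\subseteq C$); the paper's device is the Cayley transform $\kappa_h = e^{-\frac{\pi i}{2}\ad h}\in\Aut(\g_\C)$, which acts by $\mp i$ on $\fq_{\pm1}(h)$ and hence carries $C_+ - C_-$ to a scalar multiple of $i(C_+ + C_-)$, transferring the spectral property of $\ad$ from one cone to the other.
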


\begin{proof} Clearly,  $C_\pm$ are closed convex cones 
and $C_\pm \subseteq p_{\fq^{-\tau_h}}(C)$. 
Let $p_\pm \: \fq \to \fq_{\pm 1}(h)$ 
denote the projections. Clearly $C_\pm  \subseteq
 p_{ \pm } (\pm C)$. To show equality, let $x\in C$ and write
$x=x_+ + x_0 + x_-\in C$ with $x_\pm \in \fq_{\pm 1}(h)$ and 
$x_0 \in \fq_0(h)$. Then, as $e^{\R \ad h}C \subseteq C$,
\[\lim_{t\to \infty}e^{-t}e^{\ad h}x = x_+\in C\cap \fq_{1}(h)\quad\text{and}\quad 
\lim_{t\to -\infty}e^{t}e^{\ad h}x = x_-\in C\cap \fq_{-1}(h).\]
It follows that $p_{ \pm }(\pm C) = C_\pm$ and hence 
$p_{\fq^{-\tau_h}}(C)= C^{-\tau_h} =C_+- C_-$. 

Since $C$ is generating, the cones $C_\pm$ are generating in 
$\fq_{\pm 1}(h)$ and $C_+- C_-$ is generating in $\fq^{-\tau_h}$. 
Further, the invariance of $C$ under $H = \Inn_\g(\fh)$ 
and the fact that $H^h$ commutes with $\ad h$, 
hence leaves $\fq_{\pm1}(h)$ invariant,  
show that the cones $C_\pm$ are invariant under $-\tau_h$ and $H^h$.  
The relation $\tau (h) = h$ entails $h\in \fh^{\tau_h}$ and, by definition, 
$\tau|_{\fg^{ct}} = \tau_h|_{\fg^{ct}}$. 

\nin (i) The preceding discussion implies in particular that 
$(\fg^{\rm ct},\tau,C_+-C_-,h)$ is of Cayley type. 

\nin
(ii) As $\theta (h) = -h$ it follows that $\theta \fq_{\pm 1}(h) = \fq_{\mp 1}(h)$. Hence, if $-\theta C = C$ then
$\theta C\cap \fq_{\pm 1}(h) = -C \cap \fq_{\mp 1}(h)$. Thus $\theta C_+ = C_-$. 

\nin
(iii) As above, $\theta(C) = C$ implies 
$\theta(C_+) =- C_-$. Let $x_\pm \in \fq_{\pm 1}(h)$. Then
\begin{equation}
  \label{eq:kapparelx}
e^{\frac{\pi i }{2}\ad h} x_\pm = e^{\pm \frac{\pi i}{2}}x_\pm = \pm i x_\pm 
\quad \mbox{ implies } \quad 
\kappa_h (x_+ - x_-) = i(x_+ +   x_-).   
\end{equation}
We conclude that $-i\kappa (C_+-C_-) = C_+ + C_-$. 
Since $C_+ - C_- \subeq C$ is hyperbolic, it follows that 
$C_+ + C_-$ is elliptic.   
\end{proof}

As some of the relevant structure of $(\g,\tau,C,h)$ is 
represented by the  Cayley type subalgebra generated by $C^{\rm ct}$, 
some arguments concerning wedge domains reduce to 
the case of Cayley type spaces. 

 \begin{cor}\mlabel{cor:Cayley type} 
If $(\fg,\tau_h , C)$ is of Cayley type, then 
\begin{equation}\label{eq:Cpm}
C= C_+ -C_-,\end{equation}
and the following assertions hold: 
\begin{itemize}
\item[\rm(i)] The automorphism $\kappa_h = e^{\frac{\pi i }{2}\ad h}$ of $\g_\C$ 
satisfies $-i \kappa_h(C) = C_+ + C_-.$  \item[\rm(ii)] The cone $C$ is hyperbolic if and only if 
$C^c := C_++ C_-$ is elliptic,
\end{itemize}
\end{cor}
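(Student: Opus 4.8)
The plan is to exploit that in the Cayley type case the hypothesis $\tau=\tau_h$ forces $\fq$ to coincide with $\fq_{+1}(h)\oplus\fq_{-1}(h)$, and then to track the action of $\kappa_h=e^{\frac{\pi i}{2}\ad h}$, which multiplies $\fq_{\pm1}(h)$ by $\pm i$.

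I would first dispose of \eqref{eq:Cpm}. Since $\tau_h=e^{\pi i\ad h}$ acts on $\fg_j(h)$ by $(-1)^j$, in Cayley type $\fh=\fg^{\tau_h}=\fg_0(h)$ and $\fq=\fg^{-\tau_h}=\fq_{+1}(h)\oplus\fq_{-1}(h)$; in particular $\fq^{-\tau_h}=\fq$, so $C=C_+-C_-$ is just the projection identity $p_{\fq^{-\tau_h}}(C)=C_+-C_-$ of Proposition~\ref{prop:2.18}. (Directly: for $x\in C$ with $\fq_{\pm1}(h)$-components $x_\pm':=p_{\fq_{\pm1}(h)}(x)$, the limits $\lim_{t\to\pm\infty}e^{\mp t}e^{t\ad h}x=x_\pm'$ lie in the closed cone $C$, whence $x_+'\in C_+$ and $-x_-'\in C_-$, so $x=x_+'-(-x_-')\in C_+-C_-$; the reverse inclusion is convexity.) For part (i) I keep this notation: $\kappa_h(x)=i x_+'-i x_-'$, so $-i\kappa_h(x)=x_+'+(-x_-')\in C_++C_-$, and conversely $-i\kappa_h(a-b)=a+b$ for $a\in C_+$, $b\in C_-$ with $a-b\in C$; hence $-i\kappa_h(C)=C_++C_-=C^c$. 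This is exactly \eqref{eq:kapparelx} with its ellipticity hypothesis dropped.

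For part (ii) I would introduce the linear automorphism $\Psi$ of $\fq$ with $\Psi|_{\fq_{\pm1}(h)}=\pm\id$. Then $\Psi(x)=-i\kappa_h(x)$ for $x\in\fq$, so by (i) $\Psi(C)=C^c$; as $\Psi$ is a homeomorphism and $C$ (hence $C^c$) is generating in $\fq$, $\Psi$ carries the relative interior $C^\circ$ onto $(C^c)^\circ$. Because $\kappa_h\in\Aut(\fg_\C)$, for every $x\in\fq$ the operator $\ad_{\fg_\C}(\Psi(x))=-i\,\kappa_h\,\ad_{\fg_\C}(x)\,\kappa_h^{-1}$ is conjugate to $-i\,\ad_{\fg_\C}(x)$. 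From this I read off the chain of equivalences: $x$ is hyperbolic $\Leftrightarrow$ $\ad_{\fg_\C}(x)$ is diagonalizable with real spectrum $\Leftrightarrow$ $-i\,\ad_{\fg_\C}(x)$ is diagonalizable with purely imaginary spectrum $\Leftrightarrow$ $\ad_{\fg_\C}(\Psi(x))$ is diagonalizable with purely imaginary spectrum $\Leftrightarrow$ $\Psi(x)$ is elliptic. Applying this to every $x\in C^\circ$ and using $\Psi(C^\circ)=(C^c)^\circ$ gives that $C$ is hyperbolic if and only if $C^c$ is elliptic.

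I do not expect a genuine obstacle here; the points to be careful about are the routine passage between ``hyperbolic/elliptic for a real element of $\fg$'' and diagonalizability statements for its complexified adjoint (a real operator is $\R$-diagonalizable precisely when it is $\C$-diagonalizable with real spectrum, and semisimplicity over $\R$ is diagonalizability over $\C$), the fact that an equivalence is being proved so that one must not tacitly assume $C$ is hyperbolic or elliptic, and the verification that the linear homeomorphism $\Psi$ matches up the relative interiors. An alternative route to (ii) in the irreducible case combines Theorem~\ref{thm:2.20} with Proposition~\ref{prop:2.18}(ii),(iii), but the spectral argument above is uniform and does not need irreducibility.
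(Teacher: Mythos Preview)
Your proposal is correct and follows essentially the same approach as the paper: \eqref{eq:Cpm} comes from Proposition~\ref{prop:2.18} using $\fq_0(h)=\{0\}$, part (i) is the computation \eqref{eq:kapparelx}, and part (ii) is the spectral argument via $\kappa_h\in\Aut(\g_\C)$ conjugating $\ad(x_+-x_-)$ to $i\ad(x_++x_-)$. The paper compresses (ii) into a single sentence, whereas you spell out the diagonalizability and spectrum translation explicitly; this added care (including the interior correspondence $\Psi(C^\circ)=(C^c)^\circ$) is appropriate and does not deviate from the paper's line of reasoning.
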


\begin{proof}  First we observe that 
\eqref{eq:Cpm} follows from \eqref{eq:coneproj} in 
Proposition~\ref{prop:2.18} because 
$\fq_0(h) = \{0\}$.

\nin
(i) follows from \eqref{eq:kapparelx}.  

\nin (ii) Let $x=x_+-x_-\in C^\circ$. 
Then (i) implies that $x$ is hyperbolic if and 
only if $x_+ + x_-$ is elliptic. 
\end{proof} 

\begin{ex} \mlabel{ex:gl2} (A non-Cayley type space which is cc and ncc) 
We consider the reductive Lie algebra 
$\g = \gl_2(\R)$ and the Euler element 
$h =  \diag(1/2,-1/2)$. 
We define the involution $\tau$ on $\g$ 
by extending $\tau_h= e^{\pi i \ad h}$ on $\fsl_2(\R)$ by 
$\tau(\1) = - \1$, so that $\fq = \g^{-\tau}$ is $3$-dimensional. 
Concretely, we have 
\begin{equation}\label{eq:tauGL}
\tau\begin{pmatrix} a & b \\ c & d\end{pmatrix}
= \begin{pmatrix} -d & -b \\ -c & -a\end{pmatrix}, 
\qquad 
\fh = \R h\quad\text{and}\quad \fq 
= \left\{\begin{pmatrix} z & x \\ y & z \end{pmatrix}\: x,y,z\in \R\right\}.
\end{equation}
All three eigenspaces 
$\fq_{-1}(h)$, $\fq_{1}(h)$ and $\fq_0(h) = \R \1$ are non-zero. 
Let 
\[ \be_0 := \1, \quad \be_1 := \pmat{0 & 1 \\ 0 & 0} \quad \mbox{ and } \quad 
\be_{-1} := \pmat{0 & 0 \\ 1 & 0} \]  
be corresponding eigenvectors. Then 
the hyperbolic element $h_c = \frac{1}{2}(\be_1 + \be_{-1})$ 
generates the cone 
\[ C_+ - C_- \subeq \fq_1(h) + \fq_{-1}(h) 
\quad \mbox{ with } \quad C_\pm = \pm \R_+ \be_{\pm 1}.\] 
The involution $-\tau_h$ acts on $\fq$ by the hyperplane reflection 
\[ -\tau_h(\be_{\pm 1}) =  \be_{\pm 1}, \quad  -\tau_h(\be_0) = -\be_0.\] 
It is now easy to describe all pointed generating closed convex cones 
$C \subeq \fq$ which are invariant under $-\tau_h$ and 
$\Inn_\g(\fh) = e^{\R \ad h}$. 
According to \cite[Ex.~3.1(c)]{NOO21}, 
all these cones are Lorentzian of the form 
\[ C^m = \{ x_0 \be_0 + x_1 \be_1 + x_{-1} \be_{-1} \: 
x_1 x_{-1} - m x_0^2 \geq 0, x_{\pm 1} \geq 0\}
\quad \mbox{ for some } \quad m > 0. \] 

We conclude that the symmetric spaces 
$M := \GL_2(\R)/H$ for $H = \exp(\R h)$ is 
$3$-dimensional Lorentzian but not of Cayley type. 
However, replacing $C_-$ by $-C_-$, we obtain not only non-compactly causal 
structures but also compactly causal ones. 
\end{ex}

\begin{ex} (A modular non-Cayley type space) 
We consider the Lie algebra $\g = \fsl_n(\R)$ with the 
Cartan involution $\theta(x) = - x^\top$ 
and write $n = p + q$ with $p,q> 0$. Then 
\begin{equation}
  \label{eq:euler-pq}
 h_p := \frac{1}{n}\pmat{ q \1_p & 0 \\ 0 & -p\1_q} 
\end{equation}
is an Euler element and 
$\tau := \tau_h\theta$ leads to a non-compactly causal symmetric 
Lie algebra $(\g,\tau,C)$, where 
\[ \fh = \so_{p,q}(\R) \quad \mbox{ and } \quad 
\fq = \Big\{\pmat{a & b \\ -b^\top & d} \: a^\top = a, d^\top = d, 
\tr(a) + \tr(d) =0\Big\}\] 

Now let $h \in\fh = \so_{p,q}(\R)$ be an Euler element of $\g$. 
Then $h$ has two eigenspaces
because it is conjugate to an element of the form 
\eqref{eq:euler-pq} for a possibly different partition of $n = r+s$. 
These eigenspaces must be isotropic and in duality,
hence of the same dimension. This is only possible 
if $p = q$ and if $h$ is conjugate to $h_p$. 
We conclude that $\fh$ contains an Euler element if and only if 
$p = q$, so that $n$ must be even. In this case 
\[ h := \frac{1}{2} \pmat{ 0 & \1 \\ \1 & 0} \in \so_{p,p}(\R) \] 
is an Euler element which is not central in $\so_{p,p}(\R)$ for $p > 1$. 
We have
\[ \fq_{\pm 1}(h) = \Big\{\pmat{a & \mp a \\ \pm a & -a} \: a \in \Sym_p(\R)\Big\}
\cong \Sym_p(\R), \quad 
\fq_0(h) = \Big\{\pmat{0& b \\ b & 0} \: b \in \gl_p(\R)\Big\}.\] 
It is easy to see that the Lie algebra $\g^{\rm ct}$ generated by 
$\fq_{\pm 1}(h)$ is isomorphic to $\sp_{2p}(\R)$ 
(see Example~\ref{ex:2.24}). 
\end{ex}

\begin{ex} \mlabel{ex:2.24} (An example with many modular structures) 
We consider the Cayley type symmetric Lie algebra 
$\g = \sp_{2n}(\R)$ with $\fh = \gl_n(\R)$ and 
\[ h = \frac{1}{2} \pmat{ \1_n & 0 \\ 0 & -\1_n}.\] 
Then $\fh$ contains several Euler elements 
\[ h_p := \frac{1}{2} \pmat{ 
\1_p & && \\ 
& -\1_q & && \\ 
& & -\1_p & & \\ 
& && \1_q }, \quad 1 \leq p \leq n, p+q= n\]
and all these commute with $h = h_n$. Although 
all Euler elements in $\g$ are conjugate under $\Inn(\g)$ 
(cf.\ \cite[Thm.~3.10]{MN21}), 
the Euler elements $h_p$ represent different $\Inn(\fh)$-orbits 
of Euler elements in $\fh\cong \gl_n(\R)$, as the dimensions of their
 eigenspaces are different. We have 
\[ \fq = \g_1(h) \oplus \g_{-1}(h) \cong \Sym_n(\R) \oplus \Sym_n(\R),\] 
and for $p < n$ both summands split into $h_p$-eigenspaces:  
\[ \fq_1(h_p) \cong\Sym_p(\R)^{\oplus 2}, \quad 
\fq_{-1}(h_p) \cong\Sym_q(\R)^{\oplus 2}, \quad 
\fq_0(h_p) \cong M_{p,q}(\R)^{\oplus 2}.\] 
\end{ex}

\begin{lem} \mlabel{lem:Cayley} 
Suppose that  $(\g,\tau, C,h)$ is a reductive 
modular non-compactly causal symmetric 
Lie algebra. Then the subalgebra generated by $\fq^{-\tau_h}=\fq_{+1}(h)\oplus \fq_{-1}(h)$ 
is a direct sum of simple symmetric Lie algebras of Cayley type.  
\end{lem}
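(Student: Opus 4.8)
The plan is to identify the subalgebra $\fn$ generated by $V := \fq^{-\tau_h} = \fq_{+1}(h)\oplus\fq_{-1}(h)$ explicitly, show that it is semisimple, and then read off the Cayley-type structure from an Euler element lying inside $\fn$. First I would note that, since $h$ is an Euler element, $\g_{\pm 2}(h)=\{0\}$, so $[\fq_{\pm1}(h),\fq_{\pm1}(h)]\subseteq\g_{\pm2}(h)=\{0\}$, while $\fl:=[\fq_{+1}(h),\fq_{-1}(h)]\subseteq[\fq,\fq]\cap\g_0(h)\subseteq\fh\cap\g_0(h)$ and $[\fl,\fq_{\pm1}(h)]\subseteq\fq_{\pm1}(h)$; hence $\fl\oplus\fq_{+1}(h)\oplus\fq_{-1}(h)$ is already a subalgebra and equals $\fn$. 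Fixing a compatible Cartan involution $\theta$ (it exists by Lemma~\ref{lem:compcartan}), the relation $\theta(h)=-h$ gives $\theta\fq_{\pm1}(h)=\fq_{\mp1}(h)$ and $\theta\fl=\fl$, so $\fn$ is $\theta$-invariant, hence reductive; it is also $\tau$-invariant with $\fn\cap\fh=\fl$ and $\fn\cap\fq=V$, and $\tau$ and $\tau_h$ agree on $\fn$ (both act as $+1$ on $\fl\subseteq\fh\cap\g_0(h)$ and as $-1$ on $\fq_{\pm1}(h)\subseteq\fq\cap\g_{\pm1}(h)$). Finally $\fn\subseteq[\g,\g]$, since $\fq_{\pm1}(h)=[h,\fq_{\pm1}(h)]\subseteq[\g,\g]$.

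The main work — and the one point I expect to require genuine argument — is to prove that $\fn$ is semisimple, i.e.\ $\fz(\fn)=\{0\}$. As $\fz(\fn)$ is $\theta$-invariant, I would rule out nonzero $\theta$-eigenvectors $z\in\fz(\fn)$, say $\theta z=\eps z$ with $\eps\in\{1,-1\}$. Writing $z=z_0+z_1+z_{-1}$ along the $\ad h$-eigenspaces ($z_0\in\fl$, $z_{\pm1}\in\fq_{\pm1}(h)$), the eigenvector condition forces $z_{-1}=\eps\,\theta z_1$; from $\theta z_1\in\fq_{-1}(h)\subseteq\fn$ and centrality of $z$ one gets $[z,\theta z_1]=0$, and comparing $\ad h$-components (the $\g_{-2}(h)$-part vanishing) yields $[z_1,\theta z_1]=0$. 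But for any $x\in\g_1(h)$ one has, by invariance of the Killing form and $\theta x\in\g_{-1}(h)$, $\kappa([x,\theta x],h)=\kappa(x,[\theta x,h])=\kappa(x,\theta x)=-\langle x,x\rangle_\theta$, where $\langle a,b\rangle_\theta:=-\kappa(a,\theta b)$ is positive definite on $[\g,\g]$; since $z_1\in\fq_{+1}(h)\subseteq[\g,\g]$, this forces $z_1=0$, hence $z_{-1}=0$. Thus $z=z_0\in\fl$; writing $z_0=\sum_i[a_i,b_i]$ with $a_i\in\fq_{+1}(h)$, $b_i\in\fq_{-1}(h)$, centrality gives $\kappa(z_0,z_0)=\sum_i\kappa([z_0,a_i],b_i)=0$, and since $z_0\in[\g,\g]$ is a $\theta$-eigenvector, positive definiteness of $\langle\cdot,\cdot\rangle_\theta$ forces $z_0=0$. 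This yields $\fz(\fn)=\{0\}$.

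For the last step, $\fn$ is semisimple and $\ad h|_\fn$ is the grading derivation of the $3$-grading $\fn=\fl\oplus\fq_{+1}(h)\oplus\fq_{-1}(h)$; being a derivation of a semisimple Lie algebra it is inner, say $\ad h|_\fn=\ad_\fn h'$ with $h'\in\fl$, and then $h'$ is an Euler element of $\fn$ with $\tau_{h'}=e^{\pi i\,\ad_\fn h'}=e^{\pi i\,\ad h}|_\fn=\tau_h|_\fn=\tau|_\fn$, so $(\fn,\tau|_\fn)$ is of Cayley type (Definition~\ref{def:ssp}). Decomposing $\fn=\bigoplus_k\fn_k$ into simple ideals, $\tau|_\fn=\tau_{h'}$ preserves each $\fn_k$ (as $\ad_\fn h'$ does), so each $(\fn_k,\tau|_{\fn_k})$ is a simple symmetric Lie algebra with $\tau|_{\fn_k}=\tau_{h'_k}$ for $h'_k$ the $\fn_k$-component of $h'$; moreover $h'_k\neq 0$, since otherwise $\fn_k=\fn_0(h'_k)$ would be a summand of $\fl=\fn_0(h')=\bigoplus_l[\fn_{+1}(h'_l),\fn_{-1}(h'_l)]$ with zero $\fn_k$-component. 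Hence each $(\fn_k,\tau|_{\fn_k})$ is a simple symmetric Lie algebra of Cayley type, which proves the lemma. (Note that the cone $C$ is used only via Lemma~\ref{lem:compcartan}; the rest of the argument uses only that $\g$ is reductive, $h$ is an Euler element and $\tau(h)=h$.)
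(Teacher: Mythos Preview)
Your proof is correct and follows essentially the same route as the paper: identify the generated subalgebra as $[\fq^{-\tau_h},\fq^{-\tau_h}]\oplus\fq^{-\tau_h}$, observe it carries the $3$-grading induced by $h$, produce an Euler element inside it realizing $\tau$ as a $\tau_{h'}$, and then decompose into simple ideals, each of which is Cayley type because its $h'$-component is nonzero.

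The one substantive difference is that you supply an explicit Killing-form argument for $\fz(\fn)=\{0\}$, whereas the paper simply asserts that $\g^{\rm ct}$ is semisimple. Your argument is clean and self-contained, and in fact fills a gap in the paper's presentation. Likewise, your construction of $h'$ via ``all derivations of a semisimple Lie algebra are inner'' is more precise than the paper's phrase ``the projection of $h$ onto $\g^{\rm ct}$,'' which leaves the complement unspecified; of course both yield the same element since it is uniquely determined by $\ad_\fn h'=\ad h|_\fn$ and $\fz(\fn)=0$.
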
 

\begin{prf} The Jacobi identity implies that 
\[[h, [\fq^{-\tau_h},\fq^{-\tau_h}]] = [h, [\fq_{+1}(h),\fq_{-1}(h)]]=\{0\}.\]
Therefore 
\[ \fg^{\rm ct} = [\fq^{-\tau_h}, \fq^{-\tau_h}]+ (\fq_{+1}(h)+\fq_{-1}(h))\]
is a symmetric Lie subalgebra, 
generated by $\fq^{-\tau_h}$. With  $\tau^{\rm ct} =\tau|_{ \fg^{\rm ct}}$ 
we  obtain a semisimple causal symmetric Lie algebra 
 $(\fg^{\rm ct},\tau^{\rm ct},   C_+-C_-)$ of Cayley type. Write 
\[(\fg^{\rm ct},\tau^{\rm ct} )= \bigoplus_{j = 1}^N (\fg_j,\tau_j)\]
with each $ (\fg_j,\tau_j)$ irreducible.
We have $\tau^{\rm ct} = \tau_{h^\prime}$ where $h^\prime$ is the projection of
$h$ onto $\fg^{\rm ct}$. Write $h^\prime = \sum_j h_j$ with $h_j\in \fg_j$.
As $\ad h^\prime$ does act injectively on $\fq^{\rm ct} =\fq_{+1}(h)+\fq_{-1}(h)$
and $\fg^{\rm ct}$ is generated by $\fq_{+1}(h)+\fq_{-1}(h)$ it follows that there is no simple ideal with
trivial intersection with $\fq^{\rm ct}$. It follows that $h_j\not= 0$ for all $j$ and
$\tau_j = \tau_{h_j}$. Thus $(\fg_j,\tau_j)$ is of Cayley type. 
\end{prf}
 
\section{Root decomposition for 
non-compactly causal spaces} 
\mlabel{sec:3}

In this section we turn to the fine structure
of causal symmetric Lie algebras,
which is encoded in the structure of restricted roots
of a non-compactly causal reductive symmetric Lie algebra. It turns out that
the set of roots decomposes nicely into so-called 
compact and non-compact roots, and positive systems of
non-compact roots determine 
minimal and maximal $\Ad(H)$-invariant cones
$C_\fq^{\rm min} \subeq C_\fq^{\rm max}$ in $\fq$. 
This information goes back to \cite{OO91} but was systematized 
in \cite{Ol91} and developed further in \cite{HO97,KN96,HN93}.
In view of Proposition \ref{prop:decomp}, our
discussion includes also the root space decomposition of 
Riemannian and  semisimple ncc symmetric
spaces. The semisimple case was first discussed in \cite{Ol91} and the 
non-reductive case was studied in \cite{KN96}.
As a key technical tool in the structure theory, we use
in Section~\ref{subsec:olaf}  maximal $\tau$-invariant
sets of long strongly orthogonal roots. 
In Subsection~\ref{sect:Complex} we introduce the important method
of embedding ncc symmetric spaces into spaces of complex type,
which are of the form $G_\C/G$.

\nin {\bf Setting:} Let $(\g,\tau, C)$ be a reductive non-compactly causal 
symmetric Lie algebra with $\fz(\g) \subeq \fq$ 
and fix a Cartan involution $\theta$ commuting with $\tau$, 
so that $\fz(\g) \subeq \fq_\fp$.  

For further references we state here also the following simple facts that are well known for the semisimple case 
(\cite[Cor.~II.9]{KN96}, \cite[Lemma~1.2]{Ol91}, and  \cite[Lemma~1.3.5]{HO97}):

\begin{lem} \mlabel{lem:3.1}
Let $(\fg,C,\tau)$ be a 
reductive non-compactly causal symmetric Lie algebra. 
Then the following holds: 
\begin{itemize}
\item[\rm  (a)] $(\g,\tau)$ is quasihermitian, 
i.e., 
the center 
\[ \fz := \fz(\fq_\fp) = \{ x \in \fq_\fp \: [x,\fq_\fp] = \{0\}\} \] 
of $\fq_\fp$ 
satisfies $\fz_\fq(\fz) := \{ x \in \fq \: [x,\fz] = \{0\}\} = \fq_\fp$ . 
\item[\rm (b)] If $\fa \subeq \fq$ is maximal abelian hyperbolic 
subspace,\begin{footnote}{A subspace is called {\it hyperbolic} if it consists 
of hyperbolic elements.}\end{footnote}
then any other maximal abelian hyperbolic subspace of $\fq$ 
is conjugate to $\fa$ under $\Inn_\g(\fh)$. In particular, 
every $\Inn_\g(\fh)$-orbit in $C^\circ$ intersects~$\fa$, i.e., $C^\circ = \Inn_\fg (\fh).(C^\circ\cap \fa)$.
\end{itemize}
\end{lem}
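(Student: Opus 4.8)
The plan is to handle (a) and (b) separately, in each case extracting the statement from structural results already in the excerpt plus one short extra argument; I would prove (a) first.

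\textbf{Part (a).} The inclusion $\fq_\fp\subseteq\fz_\fq(\fz)$ is trivial, since every element of $\fq_\fp$ commutes with $\fz=\fz(\fq_\fp)$ by definition of the centre; so everything reduces to $\fz_\fq(\fz)\subseteq\fq_\fp$. First I would pick a convenient element of $\fz$: averaging over the compact group $H_K=\Inn_\g(\fh_\fk)$ produces an $\Ad(H_K)$-fixed point $x_0\in C^\circ\cap\fq^{H_K}$, which is hyperbolic because $(\g,\tau,C)$ is ncc, so Lemma~\ref{lem:thetaCom}(i) gives $x_0\in\fq_\fp$ and $x_0\in\fz(\fh_\fk+\fq_\fp)\subseteq\fz$. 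Since $\fz_\fq(\fz)\subseteq\fz_\fq(x_0)$, it then suffices to show $\fz_\fq(x_0)\subseteq\fq_\fp$. Given $y\in\fz_\fq(x_0)$, decompose $y=y_\fk+y_\fp$ along $\g=\fk\oplus\fp$; using $\theta(x_0)=-x_0$ one sees that $[y_\fk,x_0]=[y_\fp,x_0]=0$ separately, and as $y_\fp\in\fq_\fp$ the claim becomes $y_\fk=0$. The key point is that $x_0\in C^\circ$ is an \emph{interior} point, so $x_0+\eps y_\fk\in C^\circ$ for all small $\eps$, hence is hyperbolic; but $\ad x_0$ (real, semisimple) and $\ad y_\fk$ (purely imaginary, semisimple, since $y_\fk$ lies in the compactly embedded subalgebra $\fk\subseteq[\g,\g]$) commute, so the spectrum of $\ad(x_0+\eps y_\fk)$ lies in $\bigcup\{\lambda+i\eps\R\}$; requiring it to be real for all small $\eps\neq 0$ forces $\ad y_\fk=0$, i.e.\ $y_\fk\in\fz(\g)\subseteq\fq_\fp$, and therefore $y_\fk\in\fq_\fk\cap\fq_\fp=\{0\}$.

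\textbf{Part (b).} For the conjugacy of maximal abelian hyperbolic subspaces I would reduce to the semisimple situation: $\fz(\g)$ consists of hyperbolic elements and is central, hence lies in every maximal abelian hyperbolic subspace of $\fq$ and is fixed pointwise by $\Inn_\g(\fh)$, so the statement follows from its semisimple counterpart for $[\g,\g]$ (\cite[Cor.~II.9]{KN96}, \cite[Lemma~1.2]{Ol91}). In that semisimple case the standard route is: given a maximal abelian hyperbolic subspace, first conjugate a non-central element $x$ of it into $\fq_\fp$ — choose a Cartan involution $\theta'$ with $\theta'(x)=-x$ commuting with $\tau$ and move it to $\theta$ by an element of $\Inn_\g(\fh)$ (conjugacy of compatible Cartan involutions, cf.\ the proof of Lemma~\ref{lem:compcartan}) — then pass to the smaller reductive algebra $\fz_\g(x)$, on which $\theta$ restricts to a Cartan involution, and induct on dimension; this moves the whole subspace into $\fq_\fp$. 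Applying this to both $\fa$ and another maximal abelian hyperbolic subspace $\fb$, the two conjugated copies are maximal abelian subspaces of $\fq_\fp$, hence $\Inn_\g(\fh_\fk)$-conjugate by the classical theory of Riemannian symmetric pairs (note that $\fh_\fk\oplus\fq_\fp$ is a subalgebra), so $\fa$ and $\fb$ are $\Inn_\g(\fh)$-conjugate. The final clause is then immediate: for $x\in C^\circ$, which is hyperbolic, extend $\R x$ to a maximal abelian hyperbolic subspace $\fb$, take $g\in\Inn_\g(\fh)$ with $g.\fb=\fa$, and observe that $g.x\in\fa\cap C^\circ$ by $G$-invariance of $C$; alternatively this clause is just Theorem~\ref{thm:2.20}(i) after conjugating $\fa$ into $\fq_\fp$.

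The step I expect to need the most care is the reduction in (b) that moves a maximal abelian hyperbolic subspace into $\fq_\fp$: one has to ensure the conjugating elements stay inside $\Inn_\g(\fh)$ rather than a larger automorphism group, deal with the degenerate case $\fq=\fz(\g)$ (where there is no non-central element to start the induction), and invoke the cited semisimple references at the right granularity. Part (a), by contrast, is short once the perturbation trick inside $C^\circ$ is set up and Lemma~\ref{lem:thetaCom} is available.
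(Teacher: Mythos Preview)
The paper does not actually prove this lemma: it is stated as a known fact with citations to \cite[Cor.~II.9]{KN96}, \cite[Lemma~1.2]{Ol91}, and \cite[Lemma~1.3.5]{HO97}, and no argument is given. Your proposal therefore supplies what the paper omits, and both parts are correct.

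Your argument for (a) is a clean self-contained route: once Lemma~\ref{lem:thetaCom}(i) produces a hyperbolic $x_0\in C^\circ\cap\fz(\fq_\fp)$, the perturbation $x_0+\eps y_\fk\in C^\circ$ combined with the commuting semisimple decomposition of $\ad x_0$ (real spectrum) and $\ad y_\fk$ (purely imaginary spectrum, since $\fk\subseteq[\g,\g]$ by Definition~\ref{def:2.3}) forces $\ad y_\fk=0$, hence $y_\fk\in\fz(\g)\cap\fk=\{0\}$. This is slightly more direct than the usual proofs in the cited references, which typically argue via the root decomposition. For (b) your outline is essentially a reconstruction of \cite[Cor.~II.9]{KN96}: the reduction to $[\g,\g]$ via $\fz(\g)$ is sound, and the inductive step works because for $x\in\fq_\fp$ the centralizer $\fz_\g(x)$ is a reductive $\tau$- and $\theta$-invariant subalgebra and $\Inn_{\fz_\g(x)}(\fh\cap\fz_\g(x))\subseteq\Inn_\g(\fh)$ fixes~$x$. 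Your caution about staying inside $\Inn_\g(\fh)$ is well placed but the step goes through; the degenerate case $\fq=\fz(\g)$ is vacuous since then $\fq$ itself is the unique maximal abelian hyperbolic subspace. One minor point worth making explicit in (b): a maximal abelian hyperbolic subspace of $\fq$ that has been moved into $\fq_\fp$ is automatically maximal abelian \emph{in} $\fq_\fp$ (any larger abelian subspace of $\fq_\fp\subseteq\fp$ would still be hyperbolic, contradicting maximality), so the Riemannian conjugacy result for $(\fh_\fk+\fq_\fp,\theta)$ applies.
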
 
\subsection{Compact and non-compact roots}
\mlabel{subsec:3.1} 

We now discuss the set of {\it compact} and {\it non-compact} roots and their impact on the structure of ncc spaces.

\begin{defn}
 We pick a maximal abelian subspace $\fa \subeq \fq$ consisting of hyperbolic 
elements and write 
\[ \g^\alpha = \{ y \in \g \: (\forall x \in \fa)\ 
[x,y] = \alpha(x) y \} \] 
for the $\fa$-weight spaces in $\g$ (also called root spaces) and 
\[ \Delta := \Delta(\g,\fa) := \{ \alpha \in \fa^* \setminus \{0\} \: 
\g^\alpha \not=\{0\} \} \] 
for the corresponding set of {\it restricted roots}.  
\end{defn}

\begin{defn} \mlabel{def:3.1} 
(a) A root $\alpha \in \Delta(\g,\fa)$ is called {\it compact} 
if $\g^\alpha \subeq \fh_\fk + \fq_\fp = \g^{\tau\theta}$, and 
{\it non-compact} otherwise. We write $\Delta_k$ for the subset 
of compact roots and $\Delta_p$ for the subset of non-compact roots. 

\nin (b) The (compact) 
{\it Weyl group} $\cW _k \subeq \GL(\fa)$ is the subgroup generated by 
the reflections 
\[ s_\alpha(x) := x - \alpha(x) \alpha^\vee, \quad \alpha \in \Delta_k,\]
where $\alpha^\vee \in \fa$ is the corresponding coroot, i.e., 
$\alpha(\alpha^\vee) = 2$ and 
$\alpha^\vee \in \R [x_\alpha, \theta(x_\alpha)]$ for 
some $x_\alpha \in \g^\alpha$. 
\end{defn}

\begin{rem} \mlabel{rem:3.4} (a) As $(\g,\tau)$ is non-compactly causal, 
$\fq_\fp$ contain an Euler element $h_c$ 
  and we may assume that $h_c\in\fa$ \cite[Thm.~4.4]{MNO22a}.   
Then either $\fg^\alpha \subset \fg^{\tau_h}=\fz_\fg (h_c) =
\fh_\fk\oplus \fq_\fp$ or $\fg^\alpha \subset \g^{-\tau_{h_c}}= \h_\fp\oplus 
\fq_\fk$ (\cite[Prop.~V.9]{KN96}). 
Furthermore,
if $\fa\subset \fq_\fp$ is maximal abelian in $\fq_\fp$ containing $h_c$,  
then  $\fa$ is maximal abelian in $\fq$ and $\fp$ because 
$\g^{\tau_{h_c}} \cap \fq = \g^{\tau_{h_c}} \cap \fp = \fq_\fp$. 

\nin (b) We note that $\Delta_k=\Delta (\fg^{\tau_h},\fa)$ is the 
system of  restricted roots of the 
Riemannian symmetric 
Lie algebra $(\fg^{\tau_{h_c}}, \theta)$. 
Thus facts about the restricted root systems of 
Riemannian symmetric spaces apply in particular to $\Delta_k$.

\nin (c)  The Weyl group $\cW_k$ is the Weyl group associated to the 
root system $\Delta_k$, so that \break $\cW_k\Delta_k = \Delta_k$. Let 
\[\Delta_p^\pm = \{\alpha \in \Delta \: \alpha (h_c)=\pm 1\}.\]
Then we also have $\cW_k\Delta_p^\pm = \Delta_p^\pm$.
Any positive system $\Delta^+$ with $\Delta^+ \cap \Delta_p = \Delta_p^+$ 
is {\it $\fp$-adapted} in the sense that 
$\Delta_p^+$ is $\cW _k$-invariant (\cite[Prop.~V.10]{KN96}). 
A root $\alpha$ is compact if and only if it vanishes on $\fz({\fq_\fp}) \subeq \fa$ 
(\cite[Prop.~V.9]{KN96}).
\end{rem}

The Weyl group orbits in $\fa$ classify the $\Inn_\g(\fh)$-orbits 
of hyperbolic elements in $\fq$: 
\begin{lem} \mlabel{lem:3.2}  
{\rm(\cite[Thm.~III.10, Prop.~V.2]{KN96})} 
Any $\Inn_\g(\fh)$-orbit of a hyperbolic element in $\fq$ intersects 
$\fa$, and, for every $x \in \fa$, we have 
\[ \Inn_\g(\fh)x \cap \fa = \cW_k x\] 
\end{lem}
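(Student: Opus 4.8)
The plan is to derive both statements from the conjugacy of maximal abelian hyperbolic subspaces of $\fq$ recorded in Lemma~\ref{lem:3.1}(b), together with an analysis of the normalizer of $\fa$ in $H := \Inn_\g(\fh)$. For the first assertion, given a hyperbolic element $y \in \fq$, I would note that $\R y$ is an abelian subspace of $\fq$ consisting of hyperbolic elements and extend it, via a routine Zorn argument (the union of a chain of abelian hyperbolic subspaces is again one), to a maximal abelian hyperbolic subspace $\fb \subeq \fq$ with $y \in \fb$. By Lemma~\ref{lem:3.1}(b) there is $g \in H$ with $g\fb = \fa$, and then $gy \in \fa$, so the orbit $Hy$ meets $\fa$. (This also follows directly from \cite[Thm.~III.10]{KN96}.)

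For the inclusion $\cW_k x \subeq Hx \cap \fa$, I would invoke Remark~\ref{rem:3.4}(b): $\Delta_k = \Delta(\g^{\tau_{h_c}},\fa)$ is the restricted root system of the Riemannian symmetric Lie algebra $(\g^{\tau_{h_c}},\theta)$, with Cartan decomposition $\g^{\tau_{h_c}} = \fh_\fk \oplus \fq_\fp$ and $\fa \subeq \fq_\fp$ maximal abelian. Its restricted Weyl group, which is exactly $\cW_k$, is realized by the normalizer of $\fa$ in $\Inn_\g(\fh_\fk)$, and $\Inn_\g(\fh_\fk) \subeq H$; hence $\cW_k x \subeq Hx$, while $\cW_k x \subeq \fa$ is immediate.

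The substance of the lemma is the reverse inclusion $Hx \cap \fa \subeq \cW_k x$. Given $g \in H$ with $y := gx \in \fa$, I would observe that, since $H$ preserves $\fq$ and automorphisms preserve hyperbolicity, both $\fa$ and $g\fa$ are maximal abelian hyperbolic subspaces of $\fq$ containing $y$, and hence also maximal abelian hyperbolic subspaces of $\fz_\fq(y) := \fz_\g(y) \cap \fq$ (a strictly larger abelian hyperbolic subspace of $\fz_\fq(y)$ containing $g\fa$ would, after applying $g^{-1}$, contradict the maximality of $\fa$ in $\fq$). As $\theta(y) = -y = \tau(y)$, the centralizer $\fz_\g(y)$ is a reductive symmetric Lie subalgebra with $(-\tau)$-eigenspace $\fz_\fq(y)$, so the conjugacy theorem for maximal abelian hyperbolic subspaces applied inside $\fz_\g(y)$ (the reductive analogue of Lemma~\ref{lem:3.1}(b); cf.\ \cite[Thm.~III.10, Cor.~II.9]{KN96}) furnishes $g_1 \in \Inn_{\fz_\g(y)}(\fz_\g(y) \cap \fh) \subeq (Z_H(y))_e$ with $g_1(g\fa) = \fa$. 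Then $n := g_1 g$ normalizes $\fa$ and satisfies $nx = g_1(gx) = g_1 y = y$, so $y \in N_H(\fa)x$; and since the image of $N_H(\fa)$ in $\GL(\fa)$ is exactly $\cW_k$ by \cite[Prop.~V.2]{KN96}, this yields $y \in \cW_k x$.

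I expect the genuine obstacle to lie in this last input — that $N_H(\fa)$ induces precisely $\cW_k$ on $\fa$ rather than some larger reflection group. Because $\fa$ is simultaneously maximal abelian in $\fp$, a priori the full restricted Weyl group $W(\g,\fa)$ could be in play; excluding the extra elements uses the polar decomposition $H = \Inn_\g(\fh_\fk)\,e^{\ad\fh_\fp}$ together with the fact that $\Ad(H)$ respects the compact/non-compact splitting of the roots, which forces $N_H(\fa)$ into $\Inn_\g(\fh_\fk)$ modulo $Z_H(\fa)$. This is the one non-formal ingredient and is what \cite[\S V]{KN96} supplies; everything else is bookkeeping with centralizers and the conjugacy of maximal abelian hyperbolic subspaces.
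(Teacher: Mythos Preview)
The paper does not give its own proof of this lemma: it is stated with a citation to \cite[Thm.~III.10, Prop.~V.2]{KN96} and no proof environment follows. So there is nothing to compare against on the paper's side.

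Your sketch is a faithful reconstruction of the standard argument behind those references and is correct in outline. The first assertion is exactly the content of \cite[Thm.~III.10]{KN96} (conjugacy of maximal abelian hyperbolic subspaces of $\fq$), and your Zorn-plus-conjugacy reduction is the usual way to phrase it. For the equality $\Inn_\g(\fh)x \cap \fa = \cW_k x$, the easy inclusion via realizing $\cW_k$ inside $N_{\Inn_\g(\fh_\fk)}(\fa)$ is fine, and your centralizer argument for the hard inclusion correctly reduces the question to the identification of $N_H(\fa)/Z_H(\fa)$ with $\cW_k$, which is precisely \cite[Prop.~V.2]{KN96}. You also identify correctly that this last step is where the real content sits: ruling out that $N_H(\fa)$ might realize the larger Weyl group $\cW(\g,\fa)$ requires the structure theory (polar decomposition of $H$ and the compact/non-compact splitting of roots) developed in \cite[\S V]{KN96}. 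One small remark: your Zorn argument for extending $\R y$ to a maximal abelian hyperbolic subspace is unnecessary in finite dimensions, and in any case one must check that a sum of commuting hyperbolic elements is again hyperbolic (true here, but worth stating).
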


\begin{ex} \mlabel{ex:cc1} (Compact roots in the complex case) 
For a symmetric space $(\g_\C,\tau)$ of complex type 
with $\tau(x + i y) = x- iy$, 
we have $\fq = i\fg$. If $\fa$ is maximal abelian in $\fq$, 
then $\fa = i \ft$, where $\ft \subeq \g$ is a 
compactly embedded Cartan subalgebra. 
Then $\Delta(\g_\C,\fa) = \Delta(\g_\C,\ft) \subeq \fa^* \cong i \ft^*$, and 
the root spaces for $\fa$ and $\ft$ in $\g_\C$ coincide. 
If $\fk \supeq \ft$ is the unique maximal compactly embedded subalgebra 
containing $\ft$, then $\fq_\fp = i\fk$ is a maximal 
hyperbolic Lie triple system in $\fq=i\fg$ 
(\cite[Cor.~III.8]{KN96}). This implies that a 
root $\alpha \in \Delta(\g_\C,\fa)$ is compact in the sense 
of Definition~\ref{def:3.1} if and only if 
$\g_\C^\alpha \subeq \fk_\C$.
\end{ex}

\subsection{Classification of invariant cones} 
\mlabel{subsec:cone-classif}

For a non-compactly causal symmetric Lie algebra $(\g,\tau,C)$, 
there may be many different pointed generating $\Inn_\g(\fh)$-invariant 
cones, but there is a rather explicit classification of all these 
cones in terms of intersections with $\fa$.
For the following classification theorem, we refer to 
\cite[Thm.~VI.6]{KN96},   \cite[Sec.~4.4]{HO97} and \cite[Sec.~7]{Ol91}.
If $\fa\subset \fq_\fp$ is maximal abelian and $C$ is a hyperbolic cone, then we set $C_\fa = C\cap \fa$.

\begin{thm} \mlabel{thm:3.6} 
{\rm(Classification of invariant cones)} 
Let $(\g,\tau, C)$ be a non-compactly causal symmetric 
Lie algebra and $\fa \subeq \fq_\fp$ be maximal abelian. 
Then there exists a $\fp$-adapted positive system 
$\Delta^+ \subeq \Delta(\g,\fa)$ such that 
\begin{equation}
  \label{eq:a-cones}
 C^{\rm min}_\fa \subeq C \cap \fa \subeq C^{\rm max}_\fa,
\end{equation}
where 
\begin{equation}
  \label{eq:minmaxconesina}
C^{\rm min}_\fa = \cone(\{\alpha^\vee \: \alpha \in \Delta_p^+\}) 
\quad \mbox{ and } \quad C^{\rm max}_\fa = (\Delta_p^+)^\star
= \{ X \in \fa \: (\forall \alpha \in \Delta_p^+)\ \alpha(X) \geq 0\}.
\end{equation}
Moreover, $C\cap \fa$ is $\cW _k$-invariant and 
$C^\circ \cap \fz({\fq_\fp}) \not=\eset$. 

Conversely, for every pointed generating $\cW _k$-invariant closed convex cone 
$C_\fa \subeq \fa$ satisfying 
\[ C^{\rm min}_\fa \subeq C_\fa\subeq C^{\rm max}_\fa, \]
there exists a uniquely determined $\Inn_\g(\fh)$-invariant 
pointed generating closed convex cone $C_\fq \subeq \fq$ 
with $C_\fq \cap \fa = C_\fa$. 
\end{thm}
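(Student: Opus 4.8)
The plan is to reduce the statement to the known semisimple classification, for which I would invoke \cite[Thm.~VI.6]{KN96} (see also \cite[\S 7]{Ol91} and \cite[\S 4.4]{HO97}), and then do the bookkeeping needed to pass to the reductive case. As a preliminary reduction, let $\fn$ be the largest ideal of $\g$ contained in $\fh$. Since $[\fn,\fq]\subeq\fn\cap\fq=\{0\}$, the ideal $\fn$ centralizes $\fq$, so replacing $\g$ by a complementary ideal changes neither $\fq$, nor $C$, nor $\fa$, nor $\Delta(\g,\fa)$, nor the action of $\Inn_\g(\fh)$ on $\fq$; hence I may assume $\fh$ contains no non-zero ideal of $\g$. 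Then Proposition~\ref{prop:decomp} applies and gives
\[ (\g,\tau)=(\g_0,\tau_0)\oplus\bigoplus_{j=1}^N(\g_j,\tau_j),\]
with $\tau_0$ a Cartan involution and each $(\g_j,\tau_j)$, $j\ge1$, simple non-compactly causal. Correspondingly $\fa=\fa_0\oplus\bigoplus_{j\ge1}\fa_j$ with $\fa_j=\fa\cap\g_j$ and $\fz(\g)\subeq\fa_0$, $\Delta(\g,\fa)=\bigsqcup_j\Delta(\g_j,\fa_j)$, and $\cW_k=\prod_j\cW_{k,j}$. As two commuting Cartan involutions of a semisimple Lie algebra coincide, $\tau_0=\theta|_{\g_0}$, so $\tau\theta|_{\g_0}=\id_{\g_0}$, every root in $\Delta(\g_0,\fa_0)$ is compact, $\Delta_p=\bigsqcup_{j\ge1}\Delta_{p,j}$, and $C^{\rm min}_{\fa_0}=\{0\}$, $C^{\rm max}_{\fa_0}=\fa_0$, while for $j\ge1$ the cones $C^{\rm min}_{\fa_j}\subeq C^{\rm max}_{\fa_j}$ are those of the simple factor.

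\textbf{Direct implication.} The two ``soft'' assertions come first. The $\cW_k$-invariance of $C\cap\fa$ is immediate: for $x\in C\cap\fa$ and $w\in\cW_k$ one has $wx\in\Inn_\g(\fh)x\cap\fa=\cW_k x$ by Lemma~\ref{lem:3.2}, and $\Inn_\g(\fh)x\subeq C$. For $C^\circ\cap\fz(\fq_\fp)\neq\eset$, pick $x\in C^\circ\cap\fa$ (non-empty by Lemma~\ref{lem:3.1}(b)) and average it over the compact group $H_K:=\Inn_\g(\fh_\fk)$ to obtain $x_0\in C^\circ\cap\fq^{H_K}$; since $C^\circ$ consists of hyperbolic elements, Lemma~\ref{lem:thetaCom}(i) yields $x_0\in\fq_\fp$ and $[x_0,\fq_\fp]=\{0\}$, i.e.\ $x_0\in\fz(\fq_\fp)\cap C^\circ$. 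The bounds $C^{\rm min}_\fa\subeq C\cap\fa\subeq C^{\rm max}_\fa$ are the substantive part: one fixes the causal Euler element $h_c\in\fa$ (Remark~\ref{rem:3.4}) and the $\fp$-adapted positive system determined by $\Delta_p^+=\{\alpha:\alpha(h_c)=1\}$, orienting each simple factor so that the component of the central element $x_0$ in $\fz(\fq_{j,\fp})=\R h_{c,j}$ is a positive multiple of $h_{c,j}$. With this choice one obtains $C^{\rm min}_\fq\subeq C\subeq C^{\rm max}_\fq$ for the corresponding minimal and maximal $\Inn_\g(\fh)$-invariant cones in $\fq$, and intersecting with $\fa$ gives the claim. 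For $\g$ simple this is exactly \cite[Thm.~VI.6]{KN96}; for the lower bound in general one uses that $\Inn_\g(\fh)x_0=\prod_j\Inn_{\g_j}(\fh_j)x_{0,j}$, so $C\supeq\oline{\conv(\Inn_\g(\fh)x_0)}\supeq\sum_{j\ge1}C^{\rm min}_{\fq_j}=C^{\rm min}_\fq$; for the upper bound one again argues factorwise, the factor $\g_0$ imposing no constraint since every $\Ad(K_0)$-invariant pointed generating cone in $\fp_0=\fq_0$ is admissible (cf.\ Example~\ref{ex:gln}).

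\textbf{Converse.} Here uniqueness forces the construction. If $C_\fq$ is any pointed generating closed convex $\Inn_\g(\fh)$-invariant cone with $C_\fq\cap\fa=C_\fa$, then Lemma~\ref{lem:3.1}(b) gives $C_\fq^\circ=\Inn_\g(\fh)(C_\fq^\circ\cap\fa)\subeq\Inn_\g(\fh)C_\fa$, whence $C_\fq=\oline{C_\fq^\circ}\subeq\oline{\conv(\Inn_\g(\fh)C_\fa)}\subeq C_\fq$, so necessarily $C_\fq=\oline{\conv(\Inn_\g(\fh)C_\fa)}$. I would then take this as the \emph{definition} of $C_\fq$ and verify: it is a closed convex $\Inn_\g(\fh)$-invariant cone by construction; it is generating since $C_\fa\supeq C^{\rm min}_\fa$ forces $C_\fq\supeq C^{\rm min}_\fq$; it is pointed since $C_\fa\subeq C^{\rm max}_\fa$ forces $C_\fq\subeq C^{\rm max}_\fq$, which is pointed on the simple factors (\cite[Thms.~VI.6, IX.9]{KN96}) and on $\g_0$ is handled by the classical Riemannian picture (cf.\ Example~\ref{ex:gln}); finally the identity $C_\fq\cap\fa=C_\fa$ (the inclusion ``$\supeq$'' being clear) is, for $\g$ simple, \cite[Thm.~VI.6]{KN96}, and rests on a Kostant-type convexity theorem for the compact Weyl group $\cW_k$ — legitimate here because $\Delta_k=\Delta(\g^{\tau_{h_c}},\fa)$ is the restricted root system of the Riemannian symmetric Lie algebra $(\g^{\tau_{h_c}},\theta)$ (Remark~\ref{rem:3.4}(b)) — and in general follows factorwise via $\cW_k=\prod_j\cW_{k,j}$.

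\textbf{Main obstacle.} The hard step is the sandwich $C^{\rm min}_\fq\subeq C\subeq C^{\rm max}_\fq$ in the direct implication together with the ``$\subeq$'' in the converse. Although these are the content of the cited semisimple classification, some care is required to fix the orientation of the $\fp$-adapted positive system (so that $C$, rather than $-C$, lands between the two cones), to propagate the statement through the decomposition of Proposition~\ref{prop:decomp} — in particular across the Riemannian factor $\g_0$, where $C^{\rm min}_{\fq_0}$ degenerates to $\{0\}$ and the classification of invariant cones has a different flavour — and to supply the $\cW_k$-convexity input that makes $\oline{\conv(\Inn_\g(\fh)C_\fa)}\cap\fa$ collapse back to $C_\fa$.
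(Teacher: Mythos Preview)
The paper does not prove Theorem~\ref{thm:3.6} itself; the sentence immediately preceding it reads ``For the following classification theorem, we refer to \cite[Thm.~VI.6]{KN96}, \cite[Sec.~4.4]{HO97} and \cite[Sec.~7]{Ol91},'' and no further argument is given. Your proposal correctly singles out exactly these references as the source, so at the level of ``what the paper does'' you are in complete agreement.

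Where you go beyond the paper is in trying to reconstruct the proof by decomposing via Proposition~\ref{prop:decomp} and reducing to the simple case. This extra step is not really needed: as the paper remarks at the beginning of Section~\ref{sec:2}, \cite{KN96} (unlike \cite{HO97} and \cite{Ol91}) already treats general symmetric Lie algebras, not only semisimple ones, so \cite[Thm.~VI.6]{KN96} gives the full statement directly. Your reduction is nonetheless a reasonable alternative route, with one genuine soft spot: in the converse, the cone $C_\fa\subeq\fa$ need not decompose as a product $\bigoplus_j C_{\fa_j}$ over the ideals, so the ``factorwise via $\cW_k=\prod_j\cW_{k,j}$'' verification of $C_\fq\cap\fa=C_\fa$ does not go through as stated. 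The argument in \cite{KN96} instead uses a Kostant--Neeb type convexity theorem for the projection $p_\fa$ applied globally, which yields $p_\fa(\Ad(H)x)\subeq\conv(\cW_k x)+C^{\rm min}_\fa$ for $x\in\fa$ and hence $C_\fq\cap\fa\subeq C_\fa$ without ever splitting $C_\fa$. The same remark applies to your pointedness argument: the inclusion $C_\fq\subeq C^{\rm max}_\fq$ does not give pointedness on the Riemannian factor (where $C^{\rm max}_{\fq_0}=\fq_0$), so one again needs the global convexity input rather than a factorwise bound. You flag this convexity step as the ``main obstacle,'' which is accurate; just be aware that the version needed is for the non-compact group $\Inn_\g(\fh)$, not merely for the compact group underlying $\cW_k$.
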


In the following we write
  \[C_\fq^{\rm min} \subeq C_\fq^{\rm max} \]
  for the $\Inn_\g(\fh)$-invariant pointed generating
  cones in $\fq$ which are uniquely determined by the intersection 
  $C_\fq^{\rm min} \cap \fa = C_\fa^{\rm min}$ and 
  $C_\fq^{\rm max} \cap \fa = C_\fa^{\rm max}$.

\begin{rem}
If $(\g,\tau)$ is semisimple {\bf without Riemannian factors}, 
then the classification theorem (Theorem~\ref{thm:3.6}) implies for every 
Euler element $h_c \in \fz(\fq_\fp)$ and any $\fp$-adapted positive system 
$\Delta^+$ with 
\[ \Delta_p^+ = \{ \alpha \in \Delta \: \alpha(h_c) = 1\}  \] 
the existence of pointed generating invariant cones 
\[ C^{\rm min}_\fq(h_c)  \subeq  C^{\rm max}_\fq(h_c) \subeq \fq \] 
with 
\[ C^{\rm min}_\fq(h_c)  \cap \fa = C_\fa^{\rm min} 
\quad \mbox{ and } \quad 
C^{\rm max}_\fq(h_c)  \cap \fa = C_\fa^{\rm max}.\] 
Both are adapted to the decomposition of $(\g,\tau)$ 
into irreducible summands, resp., the decomposition of $\g$ into simple ideals. 
\end{rem}

We record the following interesting consequence which in some 
cases implies that the cone~$C$ is adapted to the decomposition
into irreducible subspaces. 

\begin{cor} \mlabel{cor:adapt}
Let $(\g,\tau)$ be a semisimple ncc symmetric Lie algebra 
and $(\g,\tau) = \oplus_{j = 1}^n (\g_j, \tau_j)$ its decomposition into 
irreducible summands. 
If the minimal and the maximal $\Inn_{\g_j}(\fh_j)$-invariant cones 
$C_{\fq_j}^{\rm min}$ and $C_{\fq_j}^{\rm max}$ coincide, then every 
$\Inn_\g(\fh)$-invariant cone $C \subeq \fq$ is adapted to the decomposition into 
simple ideals.   
\end{cor}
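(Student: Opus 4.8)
The plan is to derive everything from the classification of invariant cones in Theorem~\ref{thm:3.6}, applied both to $\g$ and to each irreducible summand $\g_j$, together with the observation that the extremal cones $C^{\rm min}_\fa$ and $C^{\rm max}_\fa$ respect the direct sum decomposition. First I would fix a Cartan involution $\theta$ of $\g$ commuting with $\tau$. Since each $(\g_j,\tau_j)$ is irreducible ncc (which is implicit in the hypothesis, as this is what makes the cones $C^{\rm min}_{\fq_j}\subeq C^{\rm max}_{\fq_j}$ of Theorem~\ref{thm:3.6} meaningful), I would choose for each $j$ a maximal abelian subspace $\fa_j\subeq\fq_j\cap\fp$; then $\fa:=\bigoplus_{j=1}^n\fa_j$ is a maximal abelian subspace of $\fq_\fp$. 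Because the $\g_j$ are ideals, an element of $\fa_k$ acts as $0$ on $\g_j$ for $j\ne k$, so the restricted root system splits, $\Delta(\g,\fa)=\bigsqcup_{j=1}^n\Delta(\g_j,\fa_j)$, every root in $\Delta(\g_j,\fa_j)$ vanishes on $\fa_k$ for $k\ne j$, the corresponding coroots lie in $\fa_j$, and $\cW_k=\prod_{j=1}^n\cW_{k,j}$. Picking the $\fp$-adapted positive system of Theorem~\ref{thm:3.6} compatibly with this splitting (i.e.\ as a union of $\fp$-adapted positive systems for the $\g_j$) and reading off \eqref{eq:minmaxconesina}, I obtain
\[ C^{\rm min}_\fa=\bigoplus_{j=1}^n C^{\rm min}_{\fa_j}\qquad\text{and}\qquad C^{\rm max}_\fa=\bigoplus_{j=1}^n C^{\rm max}_{\fa_j}, \]
where $C^{\rm min}_{\fa_j}=C^{\rm min}_{\fq_j}\cap\fa_j$ and $C^{\rm max}_{\fa_j}=C^{\rm max}_{\fq_j}\cap\fa_j$.

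Next I would feed in the hypothesis: $C^{\rm min}_{\fq_j}=C^{\rm max}_{\fq_j}$ forces $C^{\rm min}_{\fa_j}=C^{\rm max}_{\fa_j}$ for every $j$, and hence $C^{\rm min}_\fa=C^{\rm max}_\fa$. Let $C\subeq\fq$ be a pointed generating $\Inn_\g(\fh)$-invariant cone for which $(\g,\tau,C)$ is non-compactly causal. By Theorem~\ref{thm:3.6}, $C^{\rm min}_\fa\subeq C\cap\fa\subeq C^{\rm max}_\fa$, and since the two ends of this chain agree, $C\cap\fa=C^{\rm min}_\fa=\bigoplus_{j=1}^n C^{\rm min}_{\fa_j}$.

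Finally I would invoke the uniqueness clause of Theorem~\ref{thm:3.6}. Since $\fh=\bigoplus_j\fh_j$ with $[\fh_j,\g_k]=\{0\}$ for $j\ne k$, the group $\Inn_\g(\fh)$ acts block-diagonally, $\Inn_\g(\fh)=\prod_{j=1}^n\Inn_{\g_j}(\fh_j)$, so $C':=\bigoplus_{j=1}^n C^{\rm min}_{\fq_j}$ is a pointed generating closed convex $\Inn_\g(\fh)$-invariant cone in $\fq$ with $C'\cap\fa=\bigoplus_j C^{\rm min}_{\fa_j}=C\cap\fa$. The uniqueness in Theorem~\ref{thm:3.6} then yields $C=C'=\bigoplus_{j=1}^n C^{\rm min}_{\fq_j}$, so that $C\cap\fq_j=C^{\rm min}_{\fq_j}$ for every $j$ and $C=\bigoplus_{j=1}^n(C\cap\fq_j)$ is adapted to the decomposition of $\g$ into simple ideals.

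The only step that is not completely routine is the decomposition $C^{\rm min}_\fa=\bigoplus_j C^{\rm min}_{\fa_j}$ and the analogous statement for the maximal cones: it rests on the fact that restricted roots of $\g$ are supported on a single ideal, so that the $\fp$-adapted positive system of Theorem~\ref{thm:3.6} can be taken compatible with the direct sum, and on identifying $C^{\rm min}_{\fq_j}\cap\fa_j$ (resp.\ $C^{\rm max}_{\fq_j}\cap\fa_j$) with the extremal cone in $\fa_j$ computed inside $\g_j$. Everything else is just the squeeze $C^{\rm min}_\fa\subeq C\cap\fa\subeq C^{\rm max}_\fa$ combined with the uniqueness part of Theorem~\ref{thm:3.6}.
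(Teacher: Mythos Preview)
Your proof is correct and follows essentially the same approach as the paper: use the sandwich $C_\fq^{\rm min}\subeq C\subeq C_\fq^{\rm max}$ together with the hypothesis to force $C=C_\fq^{\rm min}=C_\fq^{\rm max}$, which is adapted. The paper's proof is terser, working directly at the level of $\fq$ and leaving implicit the step you spell out (that the global extremal cones decompose as $\bigoplus_j C^{\rm min}_{\fq_j}$ and $\bigoplus_j C^{\rm max}_{\fq_j}$, so equality in each summand gives equality globally); your detour through $\fa$ and the explicit invocation of uniqueness makes this transparent.
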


\begin{prf} As $C_\fq^{\rm min} \subeq C \subeq C_\fq^{\rm max}$ 
 and $C_\fq^{\rm min} = C_\fq^{\rm max}$ by assumption, 
$C = C_\fq^{\rm min} = C_\fq^{\rm max}$, and these cones are adapted to the decomposition 
into irreducible summands. 
\end{prf}

The following proposition has the interesting consequence that 
the cones $C_\pm$ remain the same when we replace $C$ by the 
minimal/maximal cone $C^{\rm min}_\fq$ or $C^{\rm max}_\fq$ determined 
by~$C$ via~\eqref{eq:a-cones}.

\begin{prop} \mlabel{prop:reductiontominmax}
Let $(\g,\tau,C,h)$ be a reductive modular non-compactly causal 
 symmetric Lie algebra and choose a $\fp$-adapted positive 
system of roots in $\Delta(\g,\fa)$ with 
\[ C^{\rm min}_\fq \subeq C \subeq C^{\rm max}_\fq.\] 
Then 
\begin{equation}
  \label{eq:skew-equal}
   C \cap \fq^{-\tau_h} 
= C^{\rm min}_\fq \cap \fq^{-\tau_h} 
= C^{\rm max}_\fq \cap \fq^{-\tau_h}.
\end{equation}
\end{prop}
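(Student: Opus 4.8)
The plan is to prove Proposition~\ref{prop:reductiontominmax} by combining the projection formula for $C_\pm$ from Proposition~\ref{prop:2.18} with the classification of invariant cones (Theorem~\ref{thm:3.6}). The key observation is that, by \eqref{eq:coneproj}, the cone $C \cap \fq^{-\tau_h} = C_+ - C_-$ is already recovered from the projections $p_{\fq_{\pm 1}(h)}(C) = \pm C_\pm$, so it suffices to show that these projections depend only on the chosen $\fp$-adapted positive system, not on which cone $C$ between $C^{\rm min}_\fq$ and $C^{\rm max}_\fq$ we take.

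First I would reduce to the intersection with $\fa$. Since $h \in \fh_\fp$ can be taken inside $\fa$ (Remark~\ref{rem:3.4}(a): $h$ is hyperbolic in $\fh$, conjugate into $\fh_\fp \subeq \fa$; more carefully one conjugates the whole data so that $h \in \fa$), the eigenspace decomposition $\fq = \fq_{-1}(h) \oplus \fq_0(h) \oplus \fq_{+1}(h)$ is compatible with the root space decomposition: each root space $\g^\alpha$ lies in a single $h$-eigenspace according to the value $\alpha(h) \in \{-1,0,1\}$, and $\fa \subeq \fq_0(h)$. Thus $\fq^{-\tau_h} = \fq_{+1}(h) \oplus \fq_{-1}(h)$ is spanned by root spaces $\g^\alpha$ with $\alpha(h) = \pm 1$ together with the relevant part of... actually $\fa \subeq \fq_0(h)$, so $\fq^{-\tau_h}$ contains no part of $\fa$; hence $C \cap \fq^{-\tau_h}$ is determined by the cone $C_\fq$, which is in turn determined by $C_\fa = C \cap \fa$ via the unique extension in Theorem~\ref{thm:3.6}.

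The main step is then to show that the limit formula
\[ p_{\fq_{\pm 1}(h)}(C) = \lim_{t \to \pm\infty} e^{-|t|}e^{t \ad h} C = C_\pm \]
from the proof of Proposition~\ref{prop:2.18}, applied to $C = C^{\rm min}_\fq$, $C^{\rm max}_\fq$, and any intermediate cone, yields the same answer. For this I would argue that the operation $C \mapsto p_{\fq_{1}(h)}(C)$ is monotone in $C$ (it preserves inclusions), so from $C^{\rm min}_\fq \subeq C \subeq C^{\rm max}_\fq$ we immediately get
\[ (C^{\rm min}_\fq)_+ \subeq C_+ \subeq (C^{\rm max}_\fq)_+ \]
and similarly for $C_-$. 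Hence it suffices to prove the two extreme cases agree: $(C^{\rm min}_\fq)_\pm = (C^{\rm max}_\fq)_\pm$. This is where the root-theoretic description bites: $(C^{\rm min}_\fq)_+$ is the $H^h$-invariant cone in $\fq_{+1}(h)$ generated by $\{\alpha^\vee : \alpha \in \Delta_p^+,\ \alpha(h) = 1\}$ pushed forward, while $(C^{\rm max}_\fq)_+$ is cut out by the dual inequalities; one shows these coincide because on $\fq_{+1}(h)$ the relevant $H^h$-orbits and the root data match up — concretely, $\fq_{+1}(h)$ together with the $H^h$-action forms (the $\fq$-part of) a Cayley-type symmetric Lie algebra $\g^{\rm ct}$ by Proposition~\ref{prop:2.18}(i) and Lemma~\ref{lem:Cayley}, and in the Cayley-type situation the minimal and maximal cones in $\fq_{\pm1}(h)$ coincide (they are the single $H^h$-orbit-generated cone; see the proof of \cite[Thm.~5.6]{Ol91} and Remark after it, where $\fq_{\pm1}(h)$ is an irreducible $\fh$-module so there is only one invariant cone up to sign).

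I expect the main obstacle to be the last point: verifying that in $\g^{\rm ct}$ the minimal and maximal $\Ad(H^h)$-invariant cones in $\fq_{+1}(h)$ genuinely coincide. This requires knowing that $\fq_{+1}(h)$ is an irreducible $\fh^{ct} = [\fq^{-\tau_h},\fq^{-\tau_h}]$-module (equivalently an irreducible $\fh_\fk^{ct}$-module, so that invariant pointed generating cones are unique), which for Cayley type spaces follows from the cited structure theory; the bookkeeping of reducing the general reductive modular ncc case to a sum of irreducible Cayley-type pieces via Lemma~\ref{lem:Cayley}, and checking that the minimal/maximal cones of $C_\fq$ restrict correctly to each summand, is the routine-but-delicate part. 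Once that is in place, the chain
\[ (C^{\rm min}_\fq)_\pm = (C^{\rm max}_\fq)_\pm \implies C_\pm = (C^{\rm min}_\fq)_\pm = (C^{\rm max}_\fq)_\pm, \]
combined with $C \cap \fq^{-\tau_h} = C_+ - C_-$ from \eqref{eq:coneproj}, gives \eqref{eq:skew-equal} immediately.
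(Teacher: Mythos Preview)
Your overall strategy---use \eqref{eq:coneproj} to reduce to showing $(C^{\rm min}_\fq)_\pm = (C^{\rm max}_\fq)_\pm$, then invoke uniqueness of the invariant cone in each simple Cayley-type piece of $\g^{\rm ct}$---is a legitimate alternative to the paper's argument, but it differs substantially and has two issues worth flagging. First, the claim that $h$ can be taken in $\fa$ is wrong: $\fa \subeq \fq_\fp$ while $h \in \fh$, so $h \notin \fa$, and the root spaces $\g^\alpha$ for $\Delta(\g,\fa)$ need not be $\ad h$-eigenspaces. You effectively abandon this line anyway, so it does no damage, but it should be removed. Second, and more importantly, the step ``$\fq_{+1}(h)$ is irreducible, hence the invariant cone is unique'' is not self-evident: irreducibility of a module does not by itself force uniqueness of invariant pointed cones. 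What you actually need is that in a simple euclidean Jordan algebra $E = \g_1(h)$ the symmetric cone $E_+$ is self-dual and is the unique $\Aut(E)_e$-invariant pointed generating cone; this is true but requires an argument (or a reference). You also need to check that $(C^{\rm min}_\fq)_+$ and $(C^{\rm max}_\fq)_+$ are genuinely cones of this type inside the Cayley subalgebra, which takes a bit of work since $\g^{\rm ct}$ is not an ideal of $\g$.

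The paper avoids these bookkeeping issues by taking a different route: it passes to the $c$-dual $(\g^c,\tau^c,iC,h)$, uses the Cone Extension Theorem~\ref{thm:extend} to extend $iC$ to an $\Inn(\g^c)$-invariant cone $C_{\g^c} \subeq \g^c$, and then invokes \cite[Lemma~3.2]{Oeh20} for simple hermitian Lie algebras to conclude $C_{\g^c}^{\rm min} \cap \g_{\pm 1}(h) = C_{\g^c}^{\rm max} \cap \g_{\pm 1}(h)$; intersecting back with $\fq$ via the identities $C^{\rm min}_\fq = -iC^{\rm min}_{\g^c} \cap \fq$ (and likewise for max) then gives \eqref{eq:skew-equal} directly. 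This approach buys you a clean black-box reference for the hard step, at the cost of invoking the Extension Theorem; your approach stays inside $\fq$ and is more self-contained in spirit, but you must supply the uniqueness argument for the Jordan cone yourself.
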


\begin{prf} (a) If $\g$ is a simple hermitian Lie algebra, 
$h \in \cE(\g)$ is an Euler element and 
$C_\fg \subeq C$ a pointed generating invariant cone invariant under 
$-\tau_h$, then there are uniquely determined 
minimal, resp., maximal pointed generating cones $C_\g^{\rm min}$, 
resp., $C_\g^{\rm max}$ which are $\Ad(G)$-invariant and 
$-\tau_h$-invariant, such that 
\[ C_\g^{\rm min} \subeq C_\g \subeq C_\g^{\rm max}.\] 
Then \cite[Lemma~3.2]{Oeh20} implies that 
\[ C_\g^{\rm min} \cap \g_{\pm 1}(h) = C_\g \cap \g_{\pm 1}(h)
= C_\g^{\rm max} \cap \g_{\pm 1}(h).\] 
  
\nin (b ) 
Now let $\g$ be a reductive quasihermitian Lie algebra 
with an Euler element $h$ and a pointed generating invariant cone 
$C_\fg \subeq C$ invariant under $-\tau_h$. 
Write $\g = \g_c \oplus \g_{nc}$, where 
$\g_c$ is the direct sum of the center and the compact simple ideals and 
$\g_{nc}$ is the direct sum of the hermitian simple ideals. 
Then there exist uniquely determined pointed generating minimal and maximal 
invariant cones $C^{\rm min}_{\g_{nc}}$ and $C^{\rm max}_{\g_{nc}}$ in $\g_{nc}$ 
with 
\[ C_{\g_{nc}}^{\rm min} \subeq C_\g \cap \g_{nc} \subeq C_{\g_{nc}}^{\rm max}.\] 
Both cones $C^{\rm min}_{\g_{nc}}$ and $C^{\rm max}_{\g_{nc}}$ are adapted 
to the decomposition of $\g_{nc}$ into  simple ideals and 
invariant under $-\tau$, so that (a) implies that 
\[ C_{\g_{nc}}^{\rm min} \cap \g_{\pm 1}(h) =  C_{\g_{nc}}^{\rm max} \cap \g_{\pm 1}(h).\] 
We further have 
\[ C_\g^{\rm min} = C_{\g_{nc}}^{\rm min} \quad \mbox{ and } \quad 
 C_\g^{\rm max} = \g_c \oplus C_{\g_{nc}}^{\rm min}.\] 
As $[h,\g] = [h,\g_{nc}] \subeq \g_{nc}$, we 
thus obtain 
\[ C_{\g}^{\rm min} \cap \g_{\pm 1}(h) = C_\g \cap \g_{\pm 1}(h) = 
  C_{\g}^{\rm max} \cap \g_{\pm 1}(h).\] 

\nin (c) Now we turn to the proof of the proposition. 
We consider the $c$-dual modular compactly causal symmetric Lie algebra
$(\g^c, \tau^c, i C, h)$ and use the Extension Theorem~\ref{thm:extend}  
to extend $-i C \subeq i\fq \subeq \g^c$ to a pointed generating 
invariant cone $C_{\g^c} \subeq \g^c$. As 
\[ C_{\fq}^{\rm min} = -i C_{\g^c}^{\rm min} \cap \fq \quad \mbox{ and } \quad 
 C_{\fq}^{\rm max} = -i C_{\g^c}^{\rm max} \cap \fq, \] 
we then have 
\[ C_{\fq}^{\rm min} \subeq C \subeq  C_{\fq}^{\rm max},\] 
and (b) implies that 
\[ C_{\fq}^{\rm min} \cap \g_{\pm 1}(h) = C  \cap \g_{\pm 1}(h) 
=  C_{\fq}^{\rm max}  \cap \g_{\pm 1}(h).\] 
As any $\Inn_\g(\fh)$-invariant closed convex 
cone $D \subeq \fq$ satisfies 
\[ D \cap \fq^{-\tau_h} = D_+ - D_- 
\quad \mbox{ with } \quad 
D_\pm := D \cap \g_{\pm 1}(h)\] 
by Proposition~\ref{prop:2.18}, the assertion follows. 
\end{prf}

\subsection{Strongly orthogonal roots} 
\mlabel{subsec:olaf}

Let $(\g,\tau)$ be an irreducible non-compactly  
causal symmetric Lie algebra and recall that this implies that $\g$ is simple 
(\cite[Rem.~3.1.9]{HO97}, Section~\ref{subsec:2.2}). 
We fix a causal Euler element $h_c \in \fq$ 
(cf.\ Remark~\ref{thm:3.6}) 
and the corresponding 
Cartan involution $\theta = \tau\tau_{h_c}$ 
(cf.~\cite[Thm.~4.4]{MNO22a}, 
Lemma~\ref{lem:thetaCom}); then 
\[ \g^{h_c} = \ker(\ad h_c)= \fh_\fk + \fq_\fp \quad \mbox{ and } \quad 
\fz(\g^{h_c}) \cap \fq_\fp = \R h_c.\] 
In particular, any maximal abelian subspace $\fa \subeq \fq_\fp$ contains $h_c$ 
and is also maximal abelian in $\fq$ and $\fp$ 
(Remark~\ref{thm:3.6}(a)).  
Let $\fc \subeq \g$ be a Cartan subalgebra containing $\fa$. Then 
$\fc_\fk := \fc \cap \fk \subeq \fh_\fk$ 
and $\fc$ is invariant under $\tau$ and $\theta$, 
which coincide on $\fc$. 

For the root decomposition of $\g_\C$ with respect to~$\fc_\C$, we then have 
\[ \Delta =  \Delta(\g_\C,\fc_\C) \subeq i \fc_\fk^* \oplus \fa^* \] 
and $h$ induces a 
$3$-grading of the root system 
\[ \Delta = \Delta_p^- \dot\cup \Delta_k \dot\cup \Delta_p^+ \] 
with 
\[ \Delta_k = \{ \alpha \in \Delta \: \alpha(h_c) = 0\} \quad 
\mbox{ and } \quad 
 \Delta_p^\pm = \{ \alpha \in \Delta \: \alpha(h_c) = \pm 1 \}\] 
(Remark~\ref{thm:3.6}(c)).  
The $\tau$-invariance of $\fc$ implies that $\tau$ acts on 
$\Delta$, and since $\tau(h_c) =- h_c$, we have 
\[ \tau(\Delta_k) = \Delta_k \quad \mbox{ and }\quad 
\tau(\Delta_p^+) = \Delta_p^-.\] 
Note that $\Delta_k = \Delta_k(\fg^h, \fc)$  is the root system of 
the subalgebra~$\g^h$.

According to \cite[Thm.~3.4]{Ol91}, there exists a maximal 
subset $\Gamma = \{ \gamma_1, \ldots, \gamma_r \} 
\subeq \Delta_p^+$ of strongly orthogonal roots, i.e., 
$\gamma_j \pm \gamma_k \not\in\Delta$ for $j \not=k$, such that 
$-\tau(\Gamma) = \Gamma$. Then $\Gamma = \Gamma_0 \dot\cup \Gamma_1$, 
where $\Gamma_0 := \{ \gamma \in \Gamma \: - \tau\gamma = \gamma\}$, and 
$-\tau$ acts without fixed points on $\Gamma_1$, so that this set has 
an even number of elements. 
For $r_0 := |\Gamma_0|$ and $r_1 := |\Gamma_1|/2$, we obviously 
have 
\[ r = r_0 + 2r_1 \quad \mbox{  and put } \quad 
s := r_0 + r_1.\] 
By \cite[Lemma~4.3]{Ol91}, 
\begin{equation}
  \label{eq:ranks}
\rk_\R(\fh) = s.  
\end{equation}

The set $\Gamma$ of strongly orthogonal roots specifies 
a subalgebra 
\begin{equation}
  \label{eq:subalgroot}
\fs_\C := \sum_{\gamma \in \Gamma} \g_\C^\gamma + \g_\C^{-\gamma} + \C \gamma^\vee 
\cong \fsl_2(\C)^r 
\end{equation}
of $\g_\C$ invariant under~$\tau$, for which 
$\g^c \cap \fs_\C \cong \fsl_2(\R)^r$. 
The involution 
$\tau$ leaves all ideals in $\fs_\C$ corresponding to roots in $\Gamma_0$ 
invariant and induces flip involutions on the ideals corresponding 
to $(-\tau)$-orbits in $\Gamma_1$. 
We then have  
\begin{equation}
  \label{eq:fs-decomp}
\fs  \cong \fsl_2(\R)^{r_0} \oplus \fsl_2(\C)^{r_1}
\quad \mbox{ and } \quad 
 \fs^{\tau}  
\cong \so_{1,1}(\R)^{r_0} \oplus \su_{1,1}(\C)^{r_1}.
\end{equation}

The Cayley transform $\kappa_{h_c} := e^{\frac{\pi i}{2} \ad h_c}$  
induces a complex structure on $\fh_\fp + i \fq_\fk$ for which $\tau$ 
acts as an antilinear involution. Therefore $\kappa_h$ 
maps $\fh_\fp$ bijectively to $i \fq_\fk$, and thus 
\eqref{eq:ranks} implies that the maximal abelian subspaces 
of $\fq_\fk$ are also of dimension~$s$. Note also that 
$\ad h_c$ defines a bijection $\fh_\fp \to \fq_\fk$. 

In view of \eqref{eq:ranks}, 
$\fs \cap \fq_\fk$ contains a maximal abelian subspace $\ft_\fq$ of 
$\fq_\fk$. With respect to the decomposition of $\fs$ in \eqref{eq:fs-decomp}, 
we may choose 
\[ \ft_\fq = \so_2(\R)^{r_0 + r_1} = \so_2(\R)^s.\]

From \cite[Prop.~5.2]{MNO22a} and the subsequent remark, 
we obtain by specialization to the case where 
$\fh$ contains an Euler element 
(which is equivalent to $\g^c$ being of tube type):

\begin{prop} \mlabel{prop:testing}
Let $(\g,\tau,C)$ be a simple ncc symmetric Lie algebra 
for which $\fh = \g^\tau$ contains an Euler element. Pick a 
causal Euler element $h_c \in C^\circ$ and 
$\ft_\fq \subeq \fq_\fk$ maximal abelian. 
Then $s := \dim \ft_\fq$ can be decomposed as $s = r_0 + r_1$, 
such that the 
Lie algebra $\fs_0$ generated by $h_c$ and $\ft_\fq$ 
is isomorphic to $\fsl_2(\R)^s$. 
It is $\tau$-invariant and $\fs_0^\tau \cong \so_{1,1}(\R)^s$.  
\end{prop}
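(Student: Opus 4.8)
The plan is to deduce the assertion as a specialization of the general structural result \cite[Prop.~5.2]{MNO22a} together with the remark following it. First I would recall that, for a simple ncc symmetric Lie algebra $(\g,\tau,C)$, the hypothesis that $\fh=\g^\tau$ contains an Euler element is equivalent to $\g^c$ being of tube type (\cite[Prop.~2.7]{NO21}, cf.\ the proof of Proposition~\ref{prop:types}(S)). I would keep fixed the causal Euler element $h_c\in C^\circ\cap\fa$, the compatible Cartan involution $\theta=\tau\tau_{h_c}$, the $\tau$-invariant maximal set $\Gamma=\Gamma_0\,\dot\cup\,\Gamma_1$ of strongly orthogonal roots in $\Delta_p^+$ with its associated $\fsl_2$-subalgebra $\fs$ of \eqref{eq:subalgroot} and the decomposition \eqref{eq:fs-decomp}, as well as the maximal abelian subspace $\ft_\fq\subeq\fq_\fk\cap\fs$ selected above, so that $\ft_\fq=\so_2(\R)^{r_0+r_1}=\so_2(\R)^s$ and $\dim\ft_\fq=s=r_0+r_1$ by \eqref{eq:ranks}.

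The core step is to identify $\fs_0:=\langle h_c,\ft_\fq\rangle$ and analyze it one simple ideal of $\fs$ at a time; here one uses that $h_c\in\fs$, which is part of the construction underlying \cite[Prop.~5.2]{MNO22a}. Since $\theta(h_c)=-h_c$ and $\theta(\ft_\fq)=\ft_\fq$ (because $\ft_\fq\subeq\fk$), the subalgebra $\fs_0$ is $\theta$-stable, and it is $\tau$-stable since $h_c,\ft_\fq\subeq\fq$. In each of the $r_0$ ideals $\fsl_2(\R)$ of $\fs$ indexed by $\Gamma_0$, the component of $h_c$ is an Euler element in the $\fq_\fp$-part, the component of $\ft_\fq$ is a line in the $\fq_\fk$-part, and their bracket is a hyperbolic line in the $\fh_\fp$-part; these three lines generate the whole ideal, contributing $\fsl_2(\R)$ with $\tau$-fixed part the $\fh_\fp$-line $\cong\so_{1,1}(\R)$. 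In each of the $r_1$ ideals $\fsl_2(\C)$ of $\fs$ indexed by a $(-\tau)$-orbit in $\Gamma_1$, the tube-type hypothesis forces the component of $h_c$ and the corresponding $\so_2(\R)\subeq\ft_\fq$ to generate only the split real form $\fsl_2(\R)$ --- and, by the remark following \cite[Prop.~5.2]{MNO22a}, this is precisely the situation in which that happens. Summing over the $r_0+r_1=s$ ideals then gives $\fs_0\cong\fsl_2(\R)^s$ and $\fs_0^\tau=\fs_0\cap\fh\cong\so_{1,1}(\R)^s$.

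The main obstacle is exactly the claim about the $\Gamma_1$-ideals: one must see that the tube-type condition on $\g^c$ prevents $h_c$ together with $\so_2(\R)$ from generating all of the six-dimensional $\fsl_2(\C)$. The cleanest route is to quote \cite[Prop.~5.2]{MNO22a} and its remark directly. A self-contained argument would instead use that $\kappa_{h_c}=e^{-\frac{\pi i}{2}\ad h_c}$ is an automorphism of $\g_\C$ which restricts on $\fh_\fp\oplus\fq_\fk$ to $-i\,\ad h_c$ (this subspace lying in $\g_1(h_c)\oplus\g_{-1}(h_c)$, where $(\ad h_c)^2=\id$), so that $\kappa_{h_c}(\ft_\fq)=-i\,[h_c,\ft_\fq]$ is abelian; since $\ad h_c$ is injective on $\ft_\fq$ and $\fh_\fp$ has maximal abelian subspaces of dimension $\rk_\R\fh=s$ by \eqref{eq:ranks}, the space $\fa_\fh:=[h_c,\ft_\fq]\subeq\fh_\fp$ is maximal abelian of dimension $s$ and $[h_c,\fa_\fh]=\ft_\fq$. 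Combined with the $\fsl_2$-triples spanned in each ideal by $h_c$, a line of $\ft_\fq$ and a line of $\fa_\fh$, this pins down $\dim(\fs_0\cap\fq_\fp)=\dim(\fs_0\cap\fq_\fk)=\dim(\fs_0\cap\fh_\fp)=s$ and $\fs_0\cap\fh_\fk=\{0\}$, forcing $\fs_0\cong\fsl_2(\R)^s$ with $\fs_0^\tau\cong\so_{1,1}(\R)^s$.
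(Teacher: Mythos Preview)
Your proposal is correct and takes essentially the same approach as the paper: both deduce the result directly from \cite[Prop.~5.2]{MNO22a} and the remark following it, by specializing to the case where $\fh$ contains an Euler element (equivalently, $\g^c$ is of tube type). The paper in fact gives no argument beyond this citation, whereas you supply additional detail on how the specialization works ideal-by-ideal in $\fs$ and sketch an alternative self-contained route via the Cayley transform~$\kappa_{h_c}$; but the core logic is identical.
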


\subsection{Embedding into spaces of complex type}
\mlabel{sect:Complex}

We now discuss  the causal embedding of a non-compactly causal symmetric 
space $G/H$ into the symmetric space $G_\C/G^c$ of complex type, 
where $G_\C$ is the universal complexification of $G$ and 
 $G^c $ is a Lie subgroup with Lie algebra $\fg^c = \fh \oplus i\fq$. 

Let $(\g,\tau,C)$ be a reductive non-compactly causal symmetric Lie algebra, 
$G$ a connected Lie group with Lie algebra $\g$ 
and $\theta$ a Cartan involution on $G$ satisfying  (GP) and (Eff). 
By Lemma~\ref{lem:2.3.1}(d) 
the universal complexification $\eta_G \: G \to G_\C$ is injective, so that 
we may consider $G$ as a subgroup of $G_\C$. We write $\sigma_G$ for the 
unique antiholomorphic automorphism of $G_\C$ with 
$\sigma_G \circ \eta_G = \eta_G$ and note that 
$G \cong (G_\C)^{\sigma_G}_e$. 

By the universal property of the universal complexification 
$\eta_G : G \to G_\C$, there exist a unique holomorphic 
involutions, also denoted $\theta$ and $\tau$, on $G_\C$ satisfying 
$\tau  \circ \eta_G = \eta_G \circ \tau$ and 
$\theta \circ \eta_G = \eta_G \circ \theta$, respectively.  
The anti-holomorphic extension of $\tau $ is
given by $\oline \tau =\sigma_G\circ \tau$. 
Let 
\[ G^c \subseteq (G_\C)^{\otau}\]
be an open subgroup satisfying $\sigma_G(G^c) = G^c$ and put 
\begin{equation}
  \label{eq:Hintersec}
 H := G \cap G^c.
\end{equation}
Then $H_e$ is the identity component of $G \cap G^c$ and
we obtain a $G$-equivariant embedding 
\begin{equation}
  \label{eq:embcplx1}
 M :=G/H\hookrightarrow G_\C/G^c
\end{equation}
of symmetric spaces. As $\sigma_G(G^c)= G^c$, 
the involution $\sigma_G$ defines an involution 
\[ \sigma_M : G_\C/G^c\to G_\C/G^c, \quad gG^c \mapsto \sigma_G(g)G^c,\] 
 and
\begin{equation}\label{eq:FixM}
M \cong (G_\C/G^c)^{\sigma_M}_{eG^c}
\end{equation}
is the connected component of the base point $eG^c$ 
in the fixed point space of $\sigma_M$.

Let $K_\C \subeq G_\C$ be an open $\sigma_G$-invariant 
subgroup of $(G_\C)^\theta$. 
Then $K = G^\theta = \eta_G^{-1}(K_\C)$ follows from 
$K \subeq K_\C \subeq (G_\C)^\theta$. 
As $K$ is connected but $K_\C$ need not be 
connected, the inclusion $K \into K_\C$ may {\bf not} be the universal complexification 
of $K$. However, the inclusion $K \into (K_\C)_e$ is the universal complexification 
of~$K$.
\begin{footnote}
{We want to keep some flexibility in choosing $K_\C$ and hence 
the complexification $M^r_\C$ because the crown domain, 
which is simply connected, can be realized in many ``complexifications''.}
\end{footnote}

For the symmetric space $M^r$, we  define the complexification 
by 
\[ M^r_\C := G_\C/K_\C  \] 
(see also Section~\ref{subsec:4.1a}). 
Then $M_\C^r$ is a complex symmetric space 
and the $G$-orbit of the base point in 
$M^r_\C$ is $G$-equivariantly diffeomorphic to $G/K$. 
We thus obtain an embedding $M^r \into M^r_\C$. 

\begin{ex} Consider the group 
$G := \GL_n(\R)/\Gamma$ with 
$\Gamma = \{\pm \1\}$ from Example~\ref{ex:gln}. 
It acts faithfully on the space $\Sym_n(\R)$ of symmetric matrices 
by $g.A = gAg^\top$. Then 
\[ M^r = G.\1 = \{ g g^\top \: g \in G \} \] 
is the space of positive definite matrices. The 
$G$-action on $\Sym_n(\R)$ extends naturally to a holomorphic action 
of $G_\C = \GL_n(\C)/\Gamma$ on $\Sym_n(\C)$, given by the same formula. 
The Cartan involution on $G$ is given by 
$\theta(g\Gamma) = (g^\top)^{-1}\Gamma$ on $G$.
The holomorphic extension to $G_\C$ is given by the same
formula whereas the anti-holomorphic extension is $\oline\theta (g\Gamma) = (g^*)^{-1}$ which
is a Cartan involution on $G_\C$ with $G_\C^{\oline\theta} = \UU_n(\C) \Gamma$. 

An element $g\Gamma \in G$ is $\theta$-fixed if and only if 
$g.\1 = gg^\top \in \Gamma$, which by the positive definiteness of $gg^\top$ 
implies $gg^\top = \1$, i.e., $g \in \OO_n(\R)$. However, over the 
complex numbers, 
$gg^\top = - \1$ is possible, 
as the matrix $g = i \1$ shows. 
Therefore 
\[(G_\C)^\theta = \OO_n(\C)\Gamma \cup i\OO_n(\C)\Gamma \] 
is not connected, but $(G_\C)^\theta \cap G= G^\theta = K$. 
\end{ex}

By Corollary~\ref{cor:ext-ncc}, 
there
exists a $G^c$ invariant cone $C_{\fg^c}\subset \fg^c$ 
such that 
$C_{\fg^c}\cap i \fq = i C$. 
The following lemma is now clear:  

\begin{lemma}\mlabel{lem:CausaEmb} The symmetric 
Lie algebra $(\g_\C, \otau, i C_{\fg^c})$ 
is non-compactly causal and 
\[ (\fg,\tau, C)\hookrightarrow (\fg_\C, \otau, iC_{\fg^c}) \] 
is an embedding of non-compactly causal symmetric Lie algebras.
If $h$ is an Euler element of $\fg$ contained
in $\fh$, then $h$ is also an Euler element in $\fg_\C$, 
so that we even obtain an embedding of modular non-compactly causal 
symmetric Lie algebras.
\end{lemma}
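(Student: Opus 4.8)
The plan is to push the entire structure through the $\C$-linear automorphism $z \mapsto iz$ of $\g_\C$, which carries the real form $\g^c = \g_\C^{\otau}$ onto $\g_\C^{-\otau}$. First I would record the $\otau$-eigenspaces: by the excerpt $\g_\C^{\otau} = \fh \oplus i\fq = \g^c$, and since $\otau(iw) = \overline{i}\,\otau(w) = -iw$ for $w \in \g^c$ while $\g_\C = \g^c \oplus i\g^c$, the $(-\otau)$-eigenspace is $\fQ := \g_\C^{-\otau} = i\g^c$. Thus, for the symmetric Lie algebra $(\g_\C,\otau)$, the r\^ole of ``$\fh$'' is played by $\g^c$ and that of ``$\fq$'' by $\fQ$, and the candidate cone $iC_{\fg^c}$, with $C_{\fg^c} \subeq \g^c$ the cone supplied by Corollary~\ref{cor:ext-ncc}, indeed lies in $\fQ$. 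Note that $\g_\C$ is reductive because $\g$ is, that $\g \cap \g^c = \fh$ and $\g \cap \fQ = \fq$, and that $\otau|_\g = \tau$, so the inclusion $\g \hookrightarrow \g_\C$ is already a morphism of symmetric Lie algebras.

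Next I would verify that $(\g_\C,\otau,iC_{\fg^c})$ is causal and non-compactly causal. Since $z \mapsto iz$ is an $\R$-linear isomorphism $\g^c \to \fQ$ and $C_{\fg^c}$ is a pointed, generating, closed, convex cone in $\g^c$ (Corollary~\ref{cor:ext-ncc}), so is $iC_{\fg^c}$ in $\fQ$. For $x \in \g^c$ the operator $\ad_{\g_\C}x$ is $\C$-linear, hence $e^{\ad x}(iw) = i\,e^{\ad x}(w)$; together with the $\Inn(\g^c)$-invariance of $C_{\fg^c}$ this yields $\Inn_{\g_\C}(\g^c)$-invariance of $iC_{\fg^c}$, so $(\g_\C,\otau,iC_{\fg^c})$ is a causal symmetric Lie algebra. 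For hyperbolicity I would invoke the standard fact that a pointed generating invariant cone in a reductive Lie algebra has elliptic interior (the ambient algebra being automatically quasihermitian; see e.g.\ \cite{Ne00}): if $X \in (C_{\fg^c})^\circ$, then $\ad_{\g_\C}X$ is semisimple with purely imaginary spectrum, so $\ad_{\g_\C}(iX) = i\,\ad_{\g_\C}X$ is semisimple with real spectrum, i.e.\ $iX$ is hyperbolic in $\g_\C$. As $(iC_{\fg^c})^\circ = i\,(C_{\fg^c})^\circ$, the cone $iC_{\fg^c}$ is hyperbolic. Finally, the slice relation of Corollary~\ref{cor:ext-ncc} gives $iC_{\fg^c} \cap \fq = C$, so the inclusion $\g \hookrightarrow \g_\C$ carries $C$ into $iC_{\fg^c}$ and is therefore an embedding of ncc symmetric Lie algebras.

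For the modular part, suppose $h \in \fh$ is an Euler element of $\g$. Then $h \in \fh \subeq \g^c = \g_\C^{\otau}$, $\otau(h) = \tau(h) = h$, and $\ad_{\g_\C}h$ is the complexification of the diagonalizable operator $\ad_\g h$, hence diagonalizable with the same spectrum $\subeq \{-1,0,1\}$; so $h$ is an Euler element of $\g_\C$ lying in the ``$\fh$-part'' of $(\g_\C,\otau)$. It remains to check $\tau_h(iC_{\fg^c}) = -iC_{\fg^c}$. Here I would arrange for $C_{\fg^c}$ to be, in addition, invariant under $-\tau_h$: the minimal and maximal $\Inn(\g^c)$-invariant, $(-\tau^c)$-invariant pointed generating extensions of the given slice cone are canonical (cf.\ the invariant cones in Proposition~\ref{prop:reductiontominmax}), and $-\tau_h$ preserves all these defining properties and fixes the slice cone, since $\tau_h$ is $\C$-linear on $\g_\C$ and $\tau_h(C) = -C$; hence it fixes, say, the maximal extension. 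Then $\tau_h(iC_{\fg^c}) = i\,\tau_h(C_{\fg^c}) = i\,(-C_{\fg^c}) = -iC_{\fg^c}$, so $(\g_\C,\otau,iC_{\fg^c},h)$ is a modular ncc symmetric Lie algebra and $\g \hookrightarrow \g_\C$ is an embedding of such.

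The only genuinely non-formal inputs are the ellipticity of the interior of the invariant cone $C_{\fg^c}$ in the reductive algebra $\g^c$, which I would quote from the theory of invariant cones rather than reprove, and the possibility of choosing $C_{\fg^c}$ to be $(-\tau_h)$-invariant, which rests on the canonicity of the extremal extensions. I expect the first of these to be the main point to make precise; the remaining steps --- identifying the $\otau$-eigenspaces, transporting the cone properties through multiplication by $i$, the invariance computations, and the stability of the eigenvalues of $\ad h$ under complexification --- are routine.
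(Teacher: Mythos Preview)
The paper gives no proof of this lemma at all: it is preceded by the sentence ``The following lemma is now clear'' and then stated without argument. Your proposal is therefore not competing with a proof in the paper but supplying the details the paper suppresses, and your approach---checking the $\otau$-eigenspace decomposition, transporting the cone properties through multiplication by~$i$, and invoking ellipticity of the interior of the invariant cone $C_{\g^c}$---is exactly the natural and correct one.

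Two small remarks. First, your claim $iC_{\g^c}\cap\fq = C$ has a sign slip: from $C_{\g^c}\cap i\fq = iC$ one gets that $v\in\fq$ lies in $iC_{\g^c}$ iff $-iv\in C_{\g^c}\cap i\fq = iC$, i.e.\ iff $v\in -C$; so literally $iC_{\g^c}\cap\fq = -C$. This is harmless---one absorbs it by replacing $C_{\g^c}$ with $-C_{\g^c}$, and the paper's own formulation (and equation~\eqref{eq:CgFix}) carries the same convention---but it is worth tracking. Second, your observation that for the \emph{modular} statement one must choose $C_{\g^c}$ to be $(-\tau_h)$-invariant is a genuine point the paper leaves implicit; your argument via the canonical extremal extensions is the right way to secure it.
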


\begin{lemma} Let $S_{G^c}= G^c\exp(i C_{\fg^c})\subset G_\C$ be the complex Olshanskii semigroup corresponding to the
cone $C_{\fg^c}$ and let $S_H = H\exp C$ be the real Olshanski group. Then $S_{G^c}$ is 
invariant under the involutive antiautomorphism 
$g^* :=\otau (g)^{-1}$  and 
\[(S_{G^c}^{\sigma_G})_e\subset S_H = S_{G^c}\cap G \subset S_{G^c}^{\sigma_G}\]
and
\[(S_{G^c}^\circ)^{\sigma_G}_e\subset S_H^\circ = S_{G^c}^\circ\cap G \subset (S_{G^c}^\circ)^{\sigma_G}\]
\end{lemma}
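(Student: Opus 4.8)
The statement concerns the Olshanski semigroups $S_{G^c} = G^c \exp(i C_{\g^c})$ and $S_H = H\exp C$, and I want to prove the invariance of $S_{G^c}$ under $g^* := \oline\tau(g)^{-1}$, together with the two chains of inclusions relating $S_H$ (resp.\ $S_H^\circ$) to the $\sigma_G$-fixed points of $S_{G^c}$ (resp.\ $S_{G^c}^\circ$). The plan is to treat the $*$-invariance first, then each inclusion of the two chains separately, noting that the argument for the interiors is a verbatim copy of the one for the closed semigroups.

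First I would verify the $*$-invariance. Recall that $\oline\tau = \sigma_G\circ\tau$ is the antiholomorphic extension of $\tau$, hence $g\mapsto \oline\tau(g)^{-1}$ is an antiholomorphic involutive antiautomorphism of $G_\C$. Since $G^c \subeq (G_\C)^{\oline\tau}$, we have $\oline\tau(g)^{-1} = g^{-1}\in G^c$ for $g\in G^c$, so $(G^c)^* = G^c$. On the Lie algebra level, $\oline\tau$ fixes $\g^c = \fh + i\fq$ pointwise, so its differential fixes $i C_{\g^c}$, and the antiautomorphism property gives $(\exp iX)^* = \exp(\oline\tau(iX)) = \exp(iX)$ for $X\in C_{\g^c}$. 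Hence $\big(G^c\exp(iC_{\g^c})\big)^* = \exp(iC_{\g^c})^*\,(G^c)^* = \exp(iC_{\g^c})\,G^c$, and since $S_{G^c}$ is a semigroup this equals $S_{G^c}$ (using that both orderings of the product describe the same semigroup, a standard fact for Olshanski semigroups, since $G^c$ normalizes the cone $C_{\g^c}$).

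Next the chain $(S_{G^c}^{\sigma_G})_e \subeq S_H = S_{G^c}\cap G \subeq S_{G^c}^{\sigma_G}$. For the middle equality: $S_H = H\exp C$ and $H = G\cap G^c$ by \eqref{eq:Hintersec}, $C\subeq iC_{\g^c}\cap\g$ after multiplying by $i$ — more precisely $C = -i(C_{\g^c}\cap i\fq)$, so $\exp C = G\cap\exp(iC_{\g^c})$ by injectivity of $\exp$ on the hyperbolic cone and the polar-type decomposition of the Olshanski semigroup; combining, $S_H = S_{G^c}\cap G$, using the uniqueness of the polar decomposition $s = g\exp(iX)$ in $S_{G^c}$ (standard, e.g.\ via the polar decomposition of $G_\C$ adapted to $\oline\theta$). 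The right inclusion $S_{G^c}\cap G \subeq S_{G^c}^{\sigma_G}$ is immediate: $G = (G_\C)_e^{\sigma_G}$ consists of $\sigma_G$-fixed points, and $S_{G^c}$ is $\sigma_G$-invariant because $\sigma_G(G^c) = G^c$ and $\sigma_G$ fixes $C_{\g^c}$ (as $C_{\g^c}$ is $\sigma_G$-invariant by Corollary~\ref{cor:ext-ncc}). The left inclusion $(S_{G^c}^{\sigma_G})_e \subeq S_H$ is the only point requiring a small argument: one takes $s$ in the identity component of the fixed-point set, writes $s = g\exp(iX)$ in polar form; applying $\sigma_G$ and uniqueness of the polar decomposition forces $\sigma_G(g) = g$ and $\sigma_G(X) = X$, whence $g\in G$ and $X\in\g$, so $s\in G\cap S_{G^c} = S_H$. (Strictly one only needs $s$ in the component containing $e$ for the polar-decomposition bookkeeping to land in $H_e\exp C\subeq S_H$.)

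The chain for the interiors $(S_{G^c}^\circ)^{\sigma_G}_e \subeq S_H^\circ = S_{G^c}^\circ\cap G \subeq (S_{G^c}^\circ)^{\sigma_G}$ is proved by exactly the same three arguments, replacing $C$ and $C_{\g^c}$ by their relative interiors $C^\circ$ and $C_{\g^c}^\circ$ and $H$ by $H_e$ (the relevant polar-decomposition statement $S_{G^c}^\circ = G^c\exp(iC_{\g^c}^\circ)$ with the interior being open in $G_\C$ is again standard Olshanski-semigroup theory). The main obstacle — such as it is — is purely bookkeeping: keeping the polar decomposition $s = g\exp(iX)$ and its uniqueness straight across the three maps $\sigma_G$, $*$, and the inclusion $G\into G_\C$, and being careful about connected components so that the ``identity component'' qualifier on the left-hand inclusions is used exactly where it is needed (the fixed-point set $S_{G^c}^{\sigma_G}$ may a priori have extra components not meeting $S_H$, which is precisely why the reverse inclusion is only asserted for the identity component).
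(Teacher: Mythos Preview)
Your approach is essentially the same as the paper's: both hinge on the uniqueness of the polar decomposition $s = g\exp(iX)$ in the Olshanski semigroup $S_{G^c}$, applied after hitting $s$ with $\sigma_G$. The paper's proof is in fact shorter than yours---it only writes out the $*$-invariance (via $s^* = g^{-1}\exp(\Ad(g)x)$) and the middle equality $S_H = S_{G^c}\cap G$, leaving the outer inclusions and the interior chain implicit---so your extra bookkeeping on the identity components is not wasted.

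One small slip: you claim $\sigma_G$ fixes $C_{\g^c}$, but on $\g^c$ the conjugation $\sigma_\g$ coincides with $\tau^c$, and Corollary~\ref{cor:ext-ncc} gives $\tau^c(C_{\g^c}) = -C_{\g^c}$, not $C_{\g^c}$. What you actually need, however, is $\sigma_\g(iC_{\g^c}) = -i(-C_{\g^c}) = iC_{\g^c}$, which follows immediately, so the argument is unaffected.
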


\begin{prf} For $s=g\exp x\in S_{G^c}$ we have 
$s^* = g^{-1} \exp(\Ad(g)x) \in S_{G^c}$, so that $S_{G^c}$ is $*$-invariant. 

The inclusion $S_H \subeq G \cap S_{G^c}$ follows from $H \subeq G^c$. 
For the converse, let $s = g \exp x\in S_{G^c} \cap G$. Then 
\[ g \exp x = s = \sigma_G (s) = \sigma_G(g) \exp (\sigma_\fg (x))\]
where $\sigma_\fg$ is the conjugation on $\fg_\C$ with respect to $\fg$. By the uniqueness of the factors in
the polar decomposition of $s$ it follows that $g=\sigma_G(g) $ and $\sigma_\fg (x) = x$. As $(i\fg^c)\cap \fg =\fq$, 
it follows that $x\in C$. 
We conclude that $g = s\exp (-x) \in G\cap G^c = H$.
\end{prf}

\subsection{The domain $C^\pi$} 
\mlabel{subsec:3.3}

To discuss the connection between the wedge domains in $M$ and $M_\C$ we need to 
make all choices compatible. We have already fixed $\fa\subset \fq_\fp$ maximal abelian. Recall that 
$\fz_\g(\fa ) =\fz_{\fh_\fk}(\fa) \oplus \fa$. We extend
$i\fa$ to a Cartan subalgebra $\ft^c = \ft_\fh \oplus i\fa$ of $\fk^c = \fh_\fk 
\oplus i\fq_\fp$ and $\fg^c$, where 
$\ft_\fh = \ft\cap \fh_\fk$. 
Then $\fa^c = i\ft_\fh$ is maximal abelian in $\fg_{\C,\fp}^{-\otau} 
= i\fh_\fk \oplus \fq_\fp$
and $\fg_{\C, p} = (\fh_\fp \oplus i\fq_\fk) \oplus ( i\fh_\fk \oplus \fq_\fp )$.
We then have the root system
\[\Delta (\fg_\C ,\ft_\C) =\Delta (\fg_\C,\fa^c)\]
and then
\[\Delta (\fg ,\fa) = \{\alpha|_\fa \:  \alpha\in \Delta (\fg_\C ,\ft_\C) \text{ and } \alpha |_\fa \not= 0\}.\]

\begin{defn} \mlabel{def:3.5} (The domain $C^\pi$) As the function 
  \begin{equation}
    \label{eq:sa}
s_\fa : \fa \to \R, \quad 
s_\fa(x) := \max \{ |\alpha(x)|, 2|\beta(x)| \: \alpha \in \Delta_p^+, 
\beta \in \Delta_k \}
  \end{equation}
is invariant under $\cW_k$ and every $\Inn_\g(\fh)$-orbit in $C^\circ$ 
intersects $\fa$ in a $\cW_k$-orbit 
(Lemma~\ref{lem:3.2}), the function $s_\fa$ extends 
to a uniquely determined $\Inn_\g(\fh)$-invariant function 
\begin{equation}
  \label{eq:s}
 s \: C^\circ \to \R
\end{equation}
on the $\cW_k$-invariant open subset $C^\circ$ of~$\fq$. 
We define the $\Inn_\g(\fh)$-invariant open subset 
\begin{equation}
  \label{eq:cpi}
C^\pi := \{ x \in C^\circ \: s(x) < \pi\}.
\end{equation}
Note that, although the function $s_\fa$ is convex, this is not the 
case for $s$ if there exist non-compact roots, i.e., 
if $\tau$ is not a Cartan involution (Remark~\ref{rem:3.4}). 
In particular, $C^\pi$ is not convex if not every 
root is compact, but the intersection $C_{\rm max}^\pi \cap \fa$ is 
the interior of a convex polyhedron. 
\end{defn} 

We have $\Delta (\fg,\fa)_p = \Delta (\fg_\C,\ft_\C)_p|_\fa$ and
$\Delta (\fg,\fa)_k =\Delta (\fg_\C,\ft_\C)_k|_{\fa}\setminus \{0\}$. 
Thus, for $x\in \fa$, we have
that $s(x)$ can also be calculated as $s_{\fg_\C}(x)$ via 
the root decomposition of $\fg_\C$. It follows that
\begin{equation}\label{eq:CgFix}
(i C_{\fg^c}^\pi)^{\sigma} = C^\pi.
\end{equation}

\section{Polar maps, the crown and tube domains} 
\mlabel{sec:4}

In this section we first introduce 
a special classes of Euler elements $h_c$, the 
{\it causal Euler elements} for $(\g,\tau , C)$,
which are those contained in $C^\circ \cap \fq_\fp$.
The associated {\it causal Riemannian element} is $x_r := \frac{\pi}{2} h_c
\in C^\pi \cap \fa$. It plays an important role in connecting the tube domain 
$\cT_M$ of the causal symmetric space $M$ 
with the crown domain of a the Riemannian symmetric space $M^r := G/K$ 
(\cite{AG90, GK02}).
We shall also use them in our analysis of the different types 
of wedge domains in $M = G/H$.

Concretely, we show in Theorem~\ref{thm:4.7} that,
for $m := \Exp_{eH}(i x_r) \in \cT_M$, the orbit
$G.m$ is isomorphic to $M^r$ and that we thus obtain an identification
of the tube domain $\cT_M$ with the crown domain
\[ \cT_{M^r} = G.\Exp_m(i\Omega_\fp), 
\quad \mbox{ where } \quad 
\Omega_\fp 
= \Big\{ x \in \fp \: \rho(\ad x)  < \frac{\pi}{2}\Big\}.\]
This result is prepared in several steps.
In Subsection~\ref{subsec:4.1a}, we first show that
the polar decomposition of the crown domain
$\cT_{M^r}$ of the Riemannian symmetric space $M^r$
is a diffeomorphism (cf.\ \cite{AG90}). 
We also obtain a new
characterization of real crown domains as a submanifold of the real tube domain
$\cT_{C_\fq} = \fh + C^\circ \subeq \g$ of 
the cone $C$ (Theorem~\ref{thm:crownchar-gen}):
For a causal Euler element $h_c \in C \cap \fq_\fp$,
the connected component of $h_c$ in the intersection 
$\cO_{h_c} \cap \cT_{C_\fq}$ is the domain 
\[ \cT_{M_H} = \Ad(H) e^{\ad \Omega_{\fq_\fk}} h_c, \quad \mbox{ where } \quad 
\Omega_{\fq_\fk} = \Big\{ x \in \fq_\fk \: \rho(\ad x) < \frac{\pi}{2}\Big\}.\]

\subsection{Causal Euler and Riemann elements} 
\mlabel{subsec:3.4} 

In this section we introduce {\it causal Riemann elements} 
as a tool to translate between crowns of Riemannian symmetric spaces and 
non-compactly causal spaces in their boundary.

\begin{defn} \mlabel{def:causal-euler}
An Euler element $h_c \in \fa\subeq \fq_\fp$ 
is called a {\it causal Euler element for $(\g,\tau,C)$} if 
\[ h_c \in C^\circ   \quad \mbox{ and } \quad 
\{ \alpha \in \Delta \: \alpha(h_c) = 0\} = \Delta_k. \]
As no non-compact root vanishes in $C^\circ$ 
(Theorem~\ref{thm:3.6}), the latter condition is equivalent 
to ${h_c \in \fz(\fq_\fp)}$. 
If $\Delta^+$ is a $\fp$-adapted positive system with 
$C \subeq (\Delta_p^+)^\star$, it follows from the Euler property of $h_c$ 
that 
\[ \Delta_p^+ = \{ \alpha \in \Delta \: \alpha(h_c) = 1\}.\] 
For more details on the classification of ncc symmetric spaces 
in terms of causal Euler elements, we refer to \cite[\S 4]{MNO22a}. 
\end{defn}

\begin{rem} If $h_c$ and $h_c' \in C^\circ$ are two causal Euler elements, 
then all roots vanish on $h_c - h_c'$, so that $h_c - h_c' \in 
\fa_\fz := \fz(\g) \cap \fa$. 
Conversely, if $h_c \in C^\circ$ is a causal Euler element and 
$z \in \fa_\fz$ with $h_c + z \in C^\circ$, then $h_c + z$ 
is a causal Euler element. 

If $\fa_\fz = \{0\}$, then causal Euler elements 
are uniquely determined by the cone~$C$, 
resp., the set $\Delta_p^+$ of positive non-compact 
roots with $C_\fa \subeq (\Delta_p^+)^\star$, see \cite[Thm. 3.1.5]{HO97} 
and \cite[Thm.~4.4]{MNO22a}. 
\end{rem}

\begin{defn} \mlabel{def:riemtype}
 An element $x_r \in C_\fa^\pi$ is called a {\it causal 
Riemann element} if $h_c := \frac{2}{\pi} x_r$ is a causal Euler element. 
This means that all compact roots vanish on $x_r$ and that 
the positive non-compact roots $\beta \in \Delta_p^+$ all satisfy 
$\beta(x_r) = \frac{\pi}{2}$. In particular, $x_r \in C_\fa^\pi$ 
(cf.\ Definition~\ref{def:3.5}). 
\end{defn}

\begin{lem} \mlabel{lem:4.8} {\rm(Causal Riemann elements)} 
Suppose that $(\g, \tau, C)$ is a reductive 
ncc symmetric Lie algebra with $\fz(\g) \subeq \fq$, that $\fa \subeq \fq_\fp$ 
is maximal abelian, 
and that $\Delta^+$ is a $\fp$-adapted positive system 
with $C \cap \fa \subeq (\Delta_p^+)^\star$. 
\begin{itemize}
\item[\rm(a)] Then $C_\fa^\pi = C^\pi \cap \fa$
  contains a causal Riemann element $x_r$.  \item[\rm(b)] For  $\tau_{h_c}
 = e^{\pi i\ad h_c}$, 
the automorphism $\theta := \tau \tau_{h_c}$ 
 is a Cartan involution  
\item[\rm(c)] If, in addition, $h \in \fh$ is an Euler element with 
$\tau_h(C) = - C$, then 
$z_r := \frac{1}{2}(x_r - \tau_h(x_r))$ 
is a causal Riemann element satisfying $\tau_h(z_r) = -z_r$. 
\end{itemize}
\end{lem}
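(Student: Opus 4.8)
The plan is to derive all three statements from the existence of a \emph{causal Euler element}, handling the parts in the order (a), (b), (c).

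For (a), the only substantive point is that $C^\circ \cap \fa$ contains a causal Euler element $h_c$. I would obtain it from the classification of ncc symmetric Lie algebras in terms of causal Euler elements (Definition~\ref{def:causal-euler}, \cite{MNO22a}): since $\fa$ is maximal abelian in $\fq_\fp$ we have $\fz(\fq_\fp) \subeq \fa$, and $C^\circ \cap \fz(\fq_\fp) \neq \eset$ by Theorem~\ref{thm:3.6}; as $\ad$ annihilates $\fz(\g)$ and $C^\circ$ is a cone, one rescales the contributions of the (non-Riemannian) simple ideals to an element of $C^\circ \cap \fz(\fq_\fp)$ so that its adjoint action has spectrum contained in $\{-1,0,1\}$, giving a causal Euler element $h_c$. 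Put $x_r := \tfrac{\pi}{2} h_c$. Every compact root vanishes on $x_r$ (equivalently on $h_c$, because $h_c \in \fz(\fq_\fp)$), while $\beta(h_c) = 1$ for $\beta \in \Delta_p^+$ since the positive system is $\fp$-adapted with $C \cap \fa \subeq (\Delta_p^+)^\star$ (Definition~\ref{def:causal-euler}); thus $\beta(x_r) = \tfrac{\pi}{2}$ and $x_r \in C^\circ \cap \fa$. Finally
\[ s_\fa(x_r) = \max\{|\alpha(x_r)|,\ 2|\beta(x_r)| \: \alpha \in \Delta_p^+,\ \beta \in \Delta_k\} = \tfrac{\pi}{2} < \pi, \]
so $x_r \in C^\pi \cap \fa = C_\fa^\pi$ is a causal Riemann element.

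For (b), note first that $\tau(h_c) = -h_c$ together with the integrality of the spectrum of $\ad h_c$ gives $\tau\tau_{h_c}\tau = e^{-\pi i\,\ad h_c} = e^{\pi i\,\ad h_c} = \tau_{h_c}$, so $\tau$ and $\tau_{h_c}$ commute and $\theta := \tau\tau_{h_c}$ is an involution. Since $h_c$ is a causal Euler element, Remark~\ref{rem:3.4}(a) shows each root space lies in $\fh_\fk \oplus \fq_\fp$ or in $\fh_\fp \oplus \fq_\fk$ according as the root is compact or not; combined with $\fz_\g(\fa) = \fz_{\fh_\fk}(\fa) \oplus \fa \subeq \fh_\fk \oplus \fq_\fp$ this yields $\g^{\tau_{h_c}} = \ker(\ad h_c) = \fh_\fk \oplus \fq_\fp$ and $\g^{-\tau_{h_c}} = \fg_1(h_c) \oplus \fg_{-1}(h_c) = \fh_\fp \oplus \fq_\fk$. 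Evaluating $\theta = \tau\tau_{h_c}$ on the four blocks $\fh_\fk, \fq_\fp, \fh_\fp, \fq_\fk$ gives the signs $+,-,-,+$, i.e.\ $\theta$ is $+\id$ on $\fk = \fh_\fk \oplus \fq_\fk$ and $-\id$ on $\fp = \fh_\fp \oplus \fq_\fp$. Hence $\theta$ is precisely the compatible Cartan involution fixed in the Setting, and in particular it is a Cartan involution.

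For (c), the relation $\tau_h(z_r) = \tfrac12(\tau_h(x_r) - x_r) = -z_r$ is immediate from $\tau_h^2 = \id$. To see that $z_r$ is a causal Riemann element I would coordinate the choices: replace $\theta$ by a Cartan involution compatible with $(\g,\tau,C,h)$, so that $\theta(h) = -h$ (Lemma~\ref{lem:compcartan}); then $\tau_h$ commutes with $\tau$ and $\theta$, hence preserves $\fq_\fp$ and $\fz(\g)$, and $\tau_h(C^\circ) = -C^\circ$. Now part (b) gives $\tau = \theta\tau_{h_c}$ for the causal Euler element $h_c$ of part (a); together with $\tau(h) = h$ and $\theta(h) = -h$ this forces $\tau_{h_c}(h) = -h$, and then \cite{MN21} (Thm.~3.13; cf.\ the discussion in Subsection~\ref{subsec:2.1}) yields $\tau_h(h_c) = -h_c$. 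Consequently $\tau_h(x_r) = -x_r$ and $z_r = x_r$, which is a causal Riemann element by (a). (Equivalently, without invoking \cite{MN21}: once $\fa$ is chosen $\tau_h$-stable, $-\tau_h(h_c)$ is a second causal Euler element in $C^\circ \cap \fa$ differing from $h_c$ by a central element by the Remark after Definition~\ref{def:causal-euler}, and the average $\tfrac12(h_c - \tau_h(h_c)) = \tfrac{2}{\pi}z_r$ then has the same adjoint action as $h_c$, lies in $C^\circ \cap \fz(\fq_\fp) \cap \fa$, and is thus a causal Euler element.) The genuine obstacle here is precisely this coordination of data: arranging a Cartan involution compatible with both $h_c$ and $h$, so that $\fa$ and $h_c$ can be chosen with $\tau_h(h_c) = -h_c$ and $z_r$ lands inside $\fa$ rather than merely in $\fq_\fp$; parts (a) and (b) are routine once the existence of a causal Euler element in $C^\circ$ is granted (from \cite{MNO22a}).
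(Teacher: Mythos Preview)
Your proposal is correct and follows essentially the same line as the paper. For (a) you invoke the existence of a causal Euler element in $C^\circ\cap\fz(\fq_\fp)$ and rescale, whereas the paper builds it ideal by ideal from \cite[Prop.~V.6]{KN96}; the content is the same. For (b) you give a more explicit argument than the paper (which just says ``follows from (a)''): you show directly that $\tau\tau_{h_c}$ has $\fk$ and $\fp$ as its $\pm1$-eigenspaces and therefore coincides with the ambient Cartan involution from the Setting. For (c) your second approach is precisely the paper's: after passing to a compatible $\theta$ (so that $\tau_h$ preserves $\fz(\fq_\fp)\subeq\fa$), one checks that $-\tau_h(x_r)$ is again a causal Riemann element and averages. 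Your first approach via \cite[Thm.~3.13]{MN21} is a legitimate and slightly sharper variant that the paper does not use here (though it uses the same reasoning elsewhere, e.g.\ in Example~2.8 and in \S\ref{subsec:8.2}): once $\theta$ is compatible you get $\tau_h(h_c)=-h_c$ outright, so in fact $z_r=x_r$. The only delicate point, which you correctly flag, is the coordination of $\theta$ with $h$ so that $\tau_h$ preserves $\fz(\fq_\fp)$; both you and the paper handle this by replacing $\theta$ with a compatible Cartan involution (Lemma~\ref{lem:compcartan}).
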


\begin{prf} (a) 
Write $(\g,\tau) = (\g_0, \tau_0) \oplus (\g_1, \tau_1)$, 
where $(\g_0, \tau_0)$ is Riemannian and 
$\g_1$ is semisimple such that $\g_1$ is a sum of 
irreducible Cayley type spaces (Proposition~\ref{prop:decomp}). We then have 
\[ \fz(\fq_\fp) = \fz(\g) \oplus \fz(\fq_{1,\fp}), \] 
and the subspace generated by the 
pointed cone $C_\fa^{\rm min}$ contains $\fz(\fq_{1,\fp})$ 
(cf.~\cite[Thm.~VI.6]{KN96}). 
Any causal Riemann element $y_r \in C_\fa^{\rm min}$ is uniquely determined 
and contained in $C$. As 
\[ C_\fa^{\rm min} \subeq C_\fa := 
C \cap \fa \subeq C_\fa^{\rm max} = (\Delta_p^+)^\star \] 
and $C^\circ \cap \fz(\fq_\fp) \not=\eset$, we can add a suitable element 
$z \in \fz(\g)$ to obtain an element in $C_\fa^\circ$, hence 
in~$C_\fa^\pi$. 

To see that causal Riemann elements in $\g_1$ always exist, we use 
the decomposition of $(\g_1, \tau)$ into 
irreducible Cayley type space 
$(\g_j, \tau_j)$ (Proposition~\ref{prop:decomp}) 
Since $(\g_j, \tau_j)$ is irreducible, 
\cite[Prop.~V.6(ii)]{KN96} provides an Euler 
element $\tilde h_j \in \fz(\fq_\fp) \cap \g_j$ of $\g_j$, 
contained in the interior of $C_\fa^{\rm min}$, satisfying 
\[ \fz(\fq_\fp) \cap \g_j = \R \tilde h_j \quad \mbox{ and } \quad 
\fz_\fq(\tilde h_j) \cap \g_j = \fq_\fp \cap \g_j, \] 
and such that $\tau \tau_{\tilde h_j}$ is a Cartan involution 
(see \cite[Prop.~V.6(iii)]{KN96}). 
Then ${x_{r,j}} := \frac{\pi}{2} \tilde h_j$ is a causal Riemann element 
contained in $C_{\rm min}$. Since each simple ideal 
of $\g_1$ contains a causal Riemann element 
contained in $C_{\rm min}$, so does $\g_1$. 

\nin (b) follows from the construction in (a). 

\nin (c) Now we assume that $h \in \fh$ 
is an Euler element with $\tau_h(C) = - C$. 
Let $\theta$ be a compatible Cartan involution of $[\g,\g]$ 
(Lemma~\ref{lem:compcartan})  and extend it by $-\id$ on $\fz(\g)$. 
Then $\theta$ commutes with $\tau$ and $\tau_h$. 
We may then assume that $\fq_\fp = \fq^{-\theta} = \g^{-\tau,-\theta}$, 
and then $\fz(\fq_\fp)$ is also invariant under $\tau_h$. 
Now $\tau_h(C) = - C$ implies that, for  any 
causal Riemann element $y_r$ in $\fz(\fq_\fp)$,  
the element $-\tau_h(y_r)$ is also a causal Riemann  
element. The remaining assertions are clear. 
\end{prf}

 \begin{ex} A concrete example is $\g = \fsl_n(\R)$ with $n = p + q$, 
   \[ \fh = \so_{p,q}(\R), \qquad \tau(x) = I_{p,q} (-x^\top) I_{p,q}
     \quad \mbox{ and } \quad I_{p,q} = \diag(\1_p, - \1_q).\] 
For $p \not=q$, we then have $\cE(\g) \cap \fh = \eset$ and, for
$G := \PSL_n(\R) = \SL_n(\R)/\{\pm\1\}$, the space  
\[M = G/H \cong \{ g I_{p,q} g^\top \: g \in \SL_n(\R) \} \] 
can be identified with a space of quadratic forms 
of signature $(p,q)$ on $\R^n$. 
Here
\[ \fq = \{ x \in \fsl_n(\R) \: x^\top = I_{p,q} x I_{p,q} \} \] 
contains the one-dimensional subspace 
$\fq_\fp^{\fh_\fk} = \R h_c$  for 
\[ h_c := \frac{1}{p+q}\diag(q \1_p, - p \1_q)  \] 
as fixed points for the subgroup $\SO_p(\R) \times \SO_q(\R)$
with Lie algebra~$\fh_\fk$. Now $h_c \in \fq$ is a causal Euler element for the 
$\Ad(H)$-invariant cone $C$ it generates. 
\end{ex}

\subsection{The polar map of a crown domain} 
\mlabel{subsec:4.1a} 
We start this subsection by introducing the crown of a Riemannian symmetric
space $M^r = G/K$ which in our case is
allowed to be reductive. We
then state and prove Proposition \ref{prop:4.9} which for semisimple symmetric spaces
is due to Akhiezer and  Gindikin (\cite{AG90}). 
Here we derive those domains not
from the Riemannian symmetric space $M^r$ realized
inside the crown, but from the perspective of the
ncc symmetric space $M$. In fact, the ncc symmetric spaces can 
(up to coverings) 
all be realized as a $G$-orbit in the boundary of 
the crown (cf.~\cite{GK02}). 
 Suppose that $(G,\theta)$ satisfies (GP) and (Eff) 
from Subsection~\ref{subsec:globass} 
and recall that (GP) implies that  the Riemannian exponential function 
$\Exp \: \fp \to M^r$ is a diffeomorphism. 

We consider the 
{\it crown of the associated Riemannian symmetric space $M^r$}
(cf.\ Definition~\ref{def:cxplss}): 
\[ \cT_{M^r} = G.\Exp_{eK}(i\Omega_\fp) \subeq M^r_\C := G_\C/K_\C,
\quad \mbox{ where } \quad 
\Omega_\fp 
= \Big\{ x \in \fp \: \rho(\ad x)  < \frac{\pi}{2}\Big\}.\]

\begin{rem} \mlabel{rem:4.2.1} 
(a) Lemma~\ref{lem:2.3.1}(d) implies that 
the exponential function of $G_\C$ is injective on~$\fz(\g_\C)$. 

\nin (b) The product $\oline\theta := \theta \sigma_G$ of 
$\theta$ with the complex conjugation $\sigma_G$ 
of $G_\C$ with respect to $G$ 
is an antiholomorphic extension of $\theta$. As this involution 
preserves the subgroup 
$G_\C^\theta$, 
it induces an antiholomorphic involution, also denoted $\oline\theta$,  
on $M_\C^r = G_\C/K_\C$. 
\end{rem}

\begin{prop} \mlabel{prop:4.9} {\rm(\cite[Prop.~4]{AG90})} 
If the connected reductive group 
$G$ satisfies {\rm(GP)} and {\rm(Eff)}, then the polar map of the crown 
\[  \Phi \: G \times_K \Omega_\fp \to \cT_{M^r} \subeq M^r_\C, \quad 
[g,y] \mapsto g.\Exp_{eK}(iy) \] 
is a diffeomorphism. 
\end{prop}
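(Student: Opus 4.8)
The plan is to prove that the polar map $\Phi \colon G \times_K \Omega_\fp \to \cT_{M^r}$, $[g,y]\mapsto g.\Exp_{eK}(iy)$, is a diffeomorphism by establishing three things: that $\Phi$ is well-defined and smooth, that it is everywhere a local diffeomorphism (submersion/immersion between equidimensional manifolds), and that it is injective. Surjectivity is built into the definition of $\cT_{M^r}$ as $G.\Exp_{eK}(i\Omega_\fp)$, so the content is really local regularity plus global injectivity.

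First I would check that $\Phi$ is well-defined: if $[g,y] = [g',y']$ in $G\times_K \Omega_\fp$, then $g' = gk$ and $y' = \Ad(k)^{-1}y$ for some $k\in K$, and since $K$ normalizes neither issue arises because $\Ad(k)$ preserves $\rho(\ad \cdot)$ (as $\Ad(k)$ is an automorphism), so $\Omega_\fp$ is $\Ad(K)$-invariant, and $g'.\Exp_{eK}(iy') = gk.\Exp_{eK}(i\Ad(k)^{-1}y) = g.\Exp_{eK}(iy)$ using $k.\Exp_{eK}(iw) = \Exp_{eK}(i\Ad(k)w)$. Smoothness is clear. For the local diffeomorphism property, I would compute the differential of $\Phi$ at a point $[g,y]$. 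By $G$-equivariance it suffices to work at $[e,y]$. The tangent space of $G\times_K\Omega_\fp$ at $[e,y]$ is $(\fg/\Ad(\text{Stab}))\oplus \fp$, or more simply $\fk \oplus \fp$ modulo the stabilizer action; unwinding, $d\Phi_{[e,y]}$ sends $(X,w)\in \fp\oplus \fp$ (after a choice of complement) to a combination involving $X$ (the $G$-direction) and the derivative of $\Exp$ in the fibre direction. The key computation is that this differential is invertible precisely when $\rho(\ad y) < \pi/2$: the obstruction is the usual factor $\frac{\sin(\ad(iy))}{\ad(iy)} = \frac{\sinh(\ad y)}{\ad y}$ appearing in the derivative of the exponential map on the symmetric space, together with the $G$-translation part; invertibility of the relevant operator reduces, after diagonalizing $\ad y$ on $\fp_\C$, to the scalars $\frac{\sin t}{t}$ (or $\cos t$, depending on the precise normal form) being nonzero for the eigenvalues $t$ of $\ad y$, and $|t| < \pi/2 \cdot(\text{something})$ guarantees this. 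This is exactly the classical computation behind Akhiezer–Gindikin; I would cite \cite{AG90} for the core linear-algebra lemma and supply the reduction to it.

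For injectivity, suppose $g_1.\Exp_{eK}(iy_1) = g_2.\Exp_{eK}(iy_2)$ with $y_1,y_2\in\Omega_\fp$. Setting $g := g_2^{-1}g_1$, we must show $g\in K$ and $y_2 = \Ad(g)y_1$. Here I would use the antiholomorphic involution $\oline\theta = \theta\sigma_G$ on $M^r_\C$ from Remark~\ref{rem:4.2.1}(b): the points $\Exp_{eK}(iy)$ with $y\in\fp$ are fixed by $\oline\theta$ composed with the geodesic inversion, so applying $\oline\theta$ to the equation $g.\Exp_{eK}(iy_1) = \Exp_{eK}(iy_2)$ and comparing yields $\theta(g)^{-1}g \in$ a subgroup that must act trivially, forcing (via the regularity on $\Omega_\fp$ and the polar decomposition $G = K\exp\fp$ of Lemma~\ref{lem:2.3.1}) $g\in K$. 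Concretely: write $g = k\exp(X)$ with $k\in K$, $X\in\fp$; then the equation becomes $\exp(X).\Exp_{eK}(i\Ad(k)y_1) = \Exp_{eK}(iy_2)$, and one shows $X = 0$ by an argument on the distance function or by analyticity, reducing to the statement that $\Exp_{eK}\colon \Omega_\fp \to M^r_\C$ composed with the $\exp\fp$-action is injective — which again follows from the local diffeomorphism property plus a connectedness/convexity argument on $\Omega_\fp$ (it is a convex $\Ad(K)$-invariant neighbourhood of $0$).

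The main obstacle I expect is the injectivity step, specifically controlling the $G$-translation ambiguity: two different $(g,y)$ could a priori give the same point, and ruling this out requires either the antiholomorphic involution argument done carefully or an appeal to the known structure of the crown domain (e.g. that $\cT_{M^r}$ is Stein and the fibration is the Akhiezer–Gindikin one). Since \cite{AG90} is cited for the statement, I would lean on their proof for the hard kernel and present the reductive generalization as: (i) reduce to the semisimple case by splitting off $\fz(\g)$, where everything is linear (the centre contributes a factor $Z(G)_e \times i\fz(\g) \cong \fz(\g)_\C$ on which $\Exp$ is injective by Remark~\ref{rem:4.2.1}(a) and $\Omega_\fp$ restricted to the centre is all of $\fz(\g)$ since $\ad$ vanishes there); (ii) on the semisimple part invoke \cite[Prop.~4]{AG90} directly, since $G$ satisfying (GP) and (Eff) means $[G,G]$ is (by Lemma~\ref{lem:2.3.1}) exactly of the type covered there. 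The write-up would thus be short, with the genuine work being the bookkeeping of the reductive splitting and the verification that our hypotheses match the hypotheses of the cited result.
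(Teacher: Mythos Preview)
Your fallback strategy---split $G\cong [G,G]\times Z(G)_e$ via Lemma~\ref{lem:2.3.1}, handle the central factor directly using Remark~\ref{rem:4.2.1}(a), and invoke \cite[Prop.~4]{AG90} on the semisimple adjoint group---is correct and would work. The paper takes a different route: it gives a self-contained direct proof valid in the full reductive setting, without reducing to the semisimple case or citing \cite{AG90} for the argument itself. The local regularity is dispatched in one line by appealing to Lemma~\ref{lem:2.3}(b) (the appendix lemma on polar maps), and the real work is a three-step injectivity argument.

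Your sketch of the direct injectivity argument correctly identifies the key tool (the antiholomorphic involution $\oline\theta$ and the polar decomposition $g = k\exp z$) but has a genuine gap at the step ``one shows $X=0$ by an argument on the distance function or by analyticity'' and the claim that injectivity ``follows from the local diffeomorphism property plus a connectedness/convexity argument.'' Neither of these works: there is no natural Riemannian distance on $M^r_\C$ to invoke, and a local diffeomorphism from a convex domain need not be injective. The paper's argument is concrete and algebraic: from $\oline\theta$-fixedness one deduces that $\exp(2z)$ fixes $m_1=\Exp_{eK}(iy_1)$; applying the quadratic representation gives $\exp(2z)\exp(2iy_1)\exp(2z)=\exp(2iy_1)$, i.e.\ $\exp(e^{2i\ad y_1}2z)=\exp(-2z)$. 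Since both exponents have real $\ad$-spectrum, $\exp$-regularity (\cite[Lemma~9.2.31]{HN12}) forces them to commute and gives $e^{2i\ad y_1}z+z\in\fz(\g_\C)$; taking imaginary parts yields $\sin(2\ad y_1)z=0$, and the spectral bound $\rho(2\ad y_1)<\pi$ then gives $[y_1,z]=0$, whence $\exp(4z)=e$ and $z=0$ by injectivity of $\exp$ on~$\fp$. A parallel (simpler) commutator argument handles Step~1, the injectivity of $\Exp_{eK}$ on $i\Omega_\fp$ itself, which your outline does not address separately.

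In short: your reduction-to-AG90 plan is a valid alternative that trades a short write-up for a dependency on matching hypotheses; the paper's direct proof is longer but self-contained, and its substance is the quadratic-representation/commutator computation above, which your sketch of the direct route does not supply.
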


\begin{prf} Lemma~\ref{lem:2.3}(b), applied to 
the complex conjugation $\sigma$ of $\g_\C$ with respect to $\g$ 
implies that all tangent maps of $\Phi$ are invertible. Therefore 
it remains to show that $\Phi$ is injective, i.e., that 
\begin{equation}
  \label{eq:injrel1}
g_1.\Exp_{eK}(iy_1) = g_2.\Exp_{eK}(iy_2)  \quad \Rarrow \quad 
g_2^{-1}g_1 \in K, \quad 
y_2 = \Ad(g_2^{-1}g_1)y_1.
\end{equation}

\nin {\bf Step 1:} First we show that $\Exp_{eK} \: i\Omega_\fp \to M^r_\C$ is injective. 
If $\Exp_{eK}(iy_1) = \Exp_{eK}(iy_2)$, then applying the quadratic representation 
implies that $\exp(2 iy_1) = \exp(2 iy_2)$. 
As $iy_1$ and $i y_2$ are both $\exp$-regular, 
\cite[Lemma~9.2.31]{HN12} implies that 
\[ [y_1, y_2]= 0 \quad \mbox{ and } \quad \exp(2iy_1 - 2 iy_2)= e.\]
We conclude that $e^{2 i\ad(y_1-y_2)} = \id_\g$, and since 
the spectral radius of $2i \ad(y_1 - y_2)$ is less than 
$2 \pi$, it follows that $\ad(y_1 - y_2) = 0$, i.e., $y_1 - y_2 \in \fz(\g)$. 
As the exponential function of $G_\C$ is injective on 
$\fz(\g_\C) \cap \fp_\C$ by Remark~\ref{rem:4.2.1}(a), 
we obtain $y_1 = y_2$. 

\nin {\bf Step 2:} $g.\Exp_{eK}(iy_1) = \Exp_{eK}(iy_2)$ with 
$g \in G$ and $y_1, y_2 \in \Omega_\fp$ implies $g \in K$. 
The antiholomorphic involution $\oline\theta$ on $M_\C$ preserves 
$\cT_{M^r}$ because it satisfies 
\[ \oline\theta(g.\Exp_m(iy)) = \theta(g).\Exp_m(iy) \quad \mbox{ for } \quad 
 g\in G, y \in \Omega_\fp.\] 
The relation 
$g.\Exp_{eK}(iy_1) = \Exp_{eK}(iy_2)$ implies that 
$g.\Exp_{eK}(iy_1)$ is a fixed point of $\oline\theta$, 
so that 
\[ g.\Exp_{eK}(iy_1) = \theta(g).\Exp_{eK}(iy_1) \] 
entails that $\theta(g)^{-1}g$ fixes $m_1 :=\Exp_{eK}(iy_1)$. 

We now write $g = k \exp z$ in terms of the polar decomposition of $G$ and 
obtain 
\[ \theta(g)^{-1} g = \exp(2z) \in G_{m_1}.\] 
Applying the quadratic representation of $M_\C$, we thus obtain 
\begin{equation}
  \label{eq:commrel1}
 \exp(2z) \exp(2i y_1) \exp(2z) = \exp(2i y_1), 
\end{equation}
which can be rewritten as 
\[ \exp(e^{2i \ad y_1} 2z) = \exp(-2z).\] 
Since $\ad z$ has real spectrum, so has $e^{2i \ad y_1} z$. 
Therefore the same arguments as in Step 1 above imply that 
\[ [z,e^{2i \ad y_1} z] = 0, \quad 
\exp(2 e^{2i \ad y_1} z + 2z) = e,\] 
and $e^{2i \ad y_1} z + z \in \fz(\g_\C)$. 
The imaginary part $\sin(2 \ad y_1) z$ 
of this element is central, but also contained in the commutator algebra, 
hence trivial. Therefore $\sin(2 \ad y_1) z= 0$, and since 
${\rho(2\ad y_1) < \pi}$, it follows that $[y_1, z] =0$. 
Now \eqref{eq:commrel1} leads to $\exp(4z) = e$, and further to 
$z = 0$, because the exponential function on $\fp$ is injective. 
This proves that $g = k \in K$. 

\nin {\bf Step 3:} From \eqref{eq:injrel1} we derive 
\[ g_2^{-1}g_1.\Exp_{eK}(iy_1) = \Exp_{eK}(iy_2),\] 
so that Step 2 shows that 
$k := g_2^{-1}g_1 \in K$. We thus obtain 
\[ \Exp_{eK}(iy_2) = k.\Exp_{eK}(iy_1) = \Exp_{eK}(i \Ad(k) y_1),\] 
and since $\Ad(k) y_1 \in \Omega_\fp$, we infer from Step 1 that 
$\Ad(k) y_1 = y_2$. This completes the proof. 
\end{prf}

\subsection{Properness of the $H$-action on the real tube} 
\mlabel{subsec:4.2} 

In this section we prove the following 
lemma which is needed in the proof of Theorem~\ref{thm:crownchar-gen}   
in Section~\ref{subsec:4.4}. 
Here we write
\begin{equation}
  \label{eq:tc}
 \cT_C := \fh + C^\circ \subeq \fh \oplus \fq = \g  
\end{equation}
for the open real tube domain in $\g$
corresponding to a generating convex cone $C \subeq \fq$.

\begin{lem} \mlabel{lem:closedorbit-gen}

  Let$H \subeq G^\tau$ be an open subgroup with $\Ad(H)C_\fq =~C_\fq$
  and let $D \subeq \cT_{C_\fq} = \fh + C_\fq^\circ$ be compact.
  Then 
$\Ad(H)D$ is closed in $\g$.   
Moreover, there exists a smooth $\Ad(H)$-invariant function 
\[ \psi \: \cT_{C_\fq}\to (0,\infty) \] 
 such that $z_n \to z_0 \in \partial \cT_{C_\fq}$ 
for $z_n \in \cT_{C_\fq}$ implies $\psi(z_n) \to \infty$.
\end{lem}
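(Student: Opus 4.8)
The plan is to build $\psi$ from the characteristic function of the cone $C_\fq$ and to deduce the closedness of $\Ad(H)D$ from properness of the $\Ad(H)$-action on the tube, which in turn rests on the classical properness of the automorphism group of a proper open convex cone acting on the interior of that cone.

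First I would recall the characteristic function. Fix a Lebesgue measure on $\fq^*$ and write $C_\fq^\star\subeq\fq^*$ for the dual cone, which is pointed and generating since $C_\fq$ is. Put
\[ \varphi \: C_\fq^\circ\to(0,\infty),\qquad \varphi(x):=\int_{C_\fq^\star}e^{-\la x,\xi\ra}\,d\xi .\]
This is a real-analytic function with $\varphi(x_n)\to\infty$ whenever $x_n\to x_0\in\partial C_\fq$ (a standard fact: for $x_0\ne 0$ the set $\{\xi\in C_\fq^\star\:\la x_0,\xi\ra\le 1\}$ is unbounded, so the integrals diverge, and for $x_0=0$ one uses the homogeneity $\varphi(tx)=t^{-\dim\fq}\varphi(x)$). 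A change of variables $\eta=g^\top\xi$ shows $\varphi\circ g=|\det_\fq g|^{-1}\varphi$ for every $g$ in $\Aut(C_\fq):=\{g\in\GL(\fq)\:gC_\fq=C_\fq\}$. Since $H\subeq G^\tau$, each $\Ad(h)$ commutes with $\tau$ and hence preserves the splitting $\g=\fh\oplus\fq$; writing $\Ad_\fq(h),\Ad_\fh(h)$ for the restrictions we have $\Ad_\fq(H)\subeq\Aut(C_\fq)$, and $|\det_\fq\Ad_\fq(h)|=1$: indeed $\det_\g\Ad(h)=1$ because $\g$ is reductive, so $\det_\fq\Ad_\fq(h)=\det_\fh\Ad_\fh(h)^{-1}$, and $h\mapsto\det_\fh\Ad_\fh(h)$ has finite image in $\R^\times$ because $\Ad_\fh(H_e)=\Inn_\g(\fh)$ has determinant $1$ on the reductive algebra $\fh$ while $H/H_e$ is finite under {\rm(GP)}, {\rm(Eff)}; a finite subgroup of $\R^\times$ consists of elements of modulus $1$. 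Hence $\varphi$ is $\Ad_\fq(H)$-invariant, and since $p_\fq\circ\Ad(h)=\Ad_\fq(h)\circ p_\fq$, the function
\[ \psi\:\cT_{C_\fq}\to(0,\infty),\qquad \psi(z):=\varphi(p_\fq(z)) \]
is smooth and $\Ad(H)$-invariant. For the blow-up property I would note $\oline{\cT_{C_\fq}}=\fh+C_\fq$ and $\partial\cT_{C_\fq}=\fh+\partial C_\fq$ (a point of $\fh+C_\fq$ lies in $\fh+C_\fq^\circ$ precisely when its $\fq$-component does); so if $z_n\to z_0\in\partial\cT_{C_\fq}$ with $z_n\in\cT_{C_\fq}$, then $p_\fq(z_n)\in C_\fq^\circ$ converges to $p_\fq(z_0)\in\partial C_\fq$ and $\psi(z_n)=\varphi(p_\fq(z_n))\to\infty$.

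To prove that $\Ad(H)D$ is closed, take $w_n=\Ad(h_n)d_n\to w_0$ with $d_n\in D$, $h_n\in H$, and pass to a subsequence with $d_n\to d_0\in D$. Since $\oline{\cT_{C_\fq}}$ is closed, $w_0\in\oline{\cT_{C_\fq}}$; and $\psi(w_n)=\psi(d_n)\to\psi(d_0)<\infty$ by invariance and continuity of $\psi$, so the blow-up property excludes $w_0\in\partial\cT_{C_\fq}$, i.e.\ $w_0\in\cT_{C_\fq}$. Now $p_\fq(d_n)\to p_\fq(d_0)$ and $\Ad_\fq(h_n)p_\fq(d_n)=p_\fq(w_n)\to p_\fq(w_0)$ lie in compact subsets of the proper open convex cone $C_\fq^\circ$. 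As $\Aut(C_\fq)$ acts properly on $C_\fq^\circ$ (classical; compare the compactness of stabilizers via \cite[Prop.~V.5.11]{Ne00} used in the proof of Lemma~\ref{lem:thetaCom}), the set $\{\Ad_\fq(h_n)\}$ is relatively compact in $\Aut(C_\fq)$; since $\Ad_\fq$ is a proper map on $H$ (its kernel consists of the compact ideal factors of $\fh$, which act trivially on $\fq$, by the faithfulness argument in the proof of Lemma~\ref{lem:thetaCom}), the sequence $\{h_n\}$ is relatively compact in $H$. Passing to a further subsequence $h_n\to h_0\in H$, whence $w_0=\lim\Ad(h_n)d_n=\Ad(h_0)d_0\in\Ad(H)D$.

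The main obstacle is precisely the properness of the $\Aut(C_\fq)$-action on $C_\fq^\circ$, together with the bookkeeping needed to transfer it along $\Ad_\fq\:H\to\Aut(C_\fq)$ for a general open $H\subeq G^\tau$ (rather than $H=\Inn_\g(\fh)$); once these structural inputs and the function $\psi$ are in hand, the remaining arguments are soft limiting arguments.
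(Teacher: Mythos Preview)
Your construction of $\psi$ via the characteristic function of $C_\fq$ and the argument that any limit $w_0$ of $\Ad(h_n)d_n$ must lie in $\cT_{C_\fq}$ are exactly what the paper does. The difference comes in the final step, extracting a convergent subsequence of $(h_n)$.

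The paper does not invoke properness of $\Aut(C_\fq)$ on $C_\fq^\circ$. Instead it first reduces to $H$ connected, then uses the diffeomorphism $\Phi\:\fh_\fp\times C_\fp^\circ\to C_\fq^\circ,\ (x,y)\mapsto e^{\ad x}y$ from \cite[Lemma~1.3]{Ne99}. Writing $h_n=\exp(p_n)k_n$ with $p_n\in\fh_\fp$, $k_n\in H_K$, convergence of $\Ad(h_n)b_n=e^{\ad p_n}\Ad(k_n)b_n$ forces $p_n\to p$ by the diffeomorphism, and then compactness of $H_K$ finishes. So the paper's argument is a concrete polar decomposition, while yours is the more conceptual ``cone automorphisms act properly'' route.

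Your route is valid in principle, but the step ``$\Ad_\fq$ is a proper map on $H$ (its kernel consists of the compact ideal factors of $\fh$)'' is a genuine gap. A compact or even trivial kernel does \emph{not} imply properness of a Lie group homomorphism; you also need the image to be closed (think of an irrational line in a torus). Under the standing assumptions the kernel is in fact trivial: $\Ad(h)\res_\fq=\id$ forces $\Ad(h)\res_\fh=\id$ via $\fh=[\fq,\fq]$ (Lemma~\ref{lem:qbrackh}(a)), hence $h\in Z(G)\cap H=\{e\}$ by $\fz(\g)\subeq\fq$ and Lemma~\ref{lem:2.3.1}(f). What you still owe is closedness of $\Ad_\fq(H)$ in $\GL(\fq)$. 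One clean way: the map $\Ad_\fq(h)\mapsto\Ad(h)$ is continuous because $\Ad(h)\res_\fh$ is determined by $\Ad_\fq(h)$ through $[\fq,\fq]=\fh$; hence $\Ad_\fq(h_n)\to A$ in $\GL(\fq)$ implies $\Ad(h_n)\to B$ in $\GL(\g)$, and since $H\cong\Ad(H)$ is closed in $\GL(\g)$ (Lemma~\ref{lem:2.3.1}(e) identifies $G$ with $\Inn_\g([\g,\g])\times\fz(\g)$ and $H\subeq G^\tau\subeq\Inn_\g([\g,\g])$ because $\fz(\g)\subeq\fq$), this yields $h_n\to h_0\in H$. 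With this added, your argument goes through and gives a pleasant alternative to the paper's explicit polar computation.
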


\begin{prf} As $H$ has only finitely many connected component by 
\cite[Thm.~IV.3.4]{Lo69}, 
it suffices to show that $\Ad(H_e)D$ is closed. We may therefore assume that 
$H$ is connected, hence equal to~$G^\tau_e$. 
As~$\g$ and the subalgebra $\fh = \g^\tau$ are reductive, 
we have $\det\big(\Ad_\fq(h)\big) = 1$ for $h \in H$. We put $C := C_\fq$. 
Then  the characteristic function 
\[ \phi \: C^\circ \to (0,\infty),\quad 
\phi(x) = \int_{C^\star} e^{-\alpha(x)}\, d\alpha \] 
is smooth and $\Ad(H)$-invariant with the property that 
$x_n \to x_0$ with $x_n \in C^\circ$ and $x_0 \in \partial C$ 
implies $\phi(x_n) \to \infty$ (\cite[Thm.~V.5.4]{Ne00}). 

We write elements $z \in \cT_C$ as $z = x + y$ with $x \in \fh$ and $y \in C^\circ$. 
Then $\psi(z) := \phi(y)$ is an $\Ad(H)$-invariant smooth function on 
$\cT_C$. In particular, it is bounded on the compact subset~$D$. 
Suppose that the sequence  
$\Ad(h_n)d_n \in \Ad(H)D$ with $h_n \in H, d_n \in D$, 
 converges to some element $w \in \g$. 
If $w \not\in \cT_C$, then $w = a + b$ with $a \in \fh$ and $b \in \partial C$, 
and then $\psi(d_n) = \psi(\Ad(h_n)d_n) \to \infty$ contradicts 
the boundedness of $\psi$ on $D$. This shows that 
$w \in \cT_C$. Write $d_n = a_n + b_n$ with $a_n \in \fh$ and $b_n \in C^\circ$. 
Then $\Ad(h_n)b_n$ converges to $b$.

Let $C_\fp := C \cap \fq_\fp$. Then \cite[Lemma~1.3]{Ne99}  
implies that the map 
\[ \Phi \: \fh_\fp \times C_\fp^\circ \to C_\fq^\circ, \quad 
(x,y) \mapsto e^{\ad x} y \] 
is a diffeomorphism. Replacing $G$ by $\Ad(G)$, 
we may assume that $Z(G) = \{e\}$, so that $G \cong \Ad(G)$. 
Writing $h_n = \exp(p_n) k_n$ with 
$p_n \in \fh_\fp$ and $k_n \in H_K$, the convergence of 
$\Ad(h_n) b_n =e^{\ad p_n} \Ad(k_n) b_n$ implies the existence of 
$p \in \fh_\fp$ with $p_n \to p$. Since $H_K$ is compact, 
we may further assume that the sequence $k_n$ converges to some 
$k \in H_K$. Then $h_n = \exp(p_n) k_n \to \exp(p)k =:h$  and 
thus $d_n = \Ad(h_n)^{-1}(\Ad(h_n)d_n) \to \Ad(h)^{-1}w \in D$ 
entails that $w \in \Ad(H)D$.   
\end{prf}

\subsection{Real crown domains and real tubes} 
\mlabel{subsec:4.4} 
 
 Crown domains of Riemannian symmetric spaces are very useful tools in 
 harmonic analysis and representation theory. 
In the following we relate those domains to our context. 
 
\begin{defn}
Let $h_c \in \fz(\fq_\fp) \cap C_\fq^\circ$ be a causal Euler element 
(Lemma~\ref{lem:4.8}) and suppose that 
 $(G^\tau)_e \subseteq H\subseteq (G^\tau)_eK^h =: H_{\rm max}$. 
Then\[\cO_{h_c}^\fq := \Ad(H)h_c  = e^{\ad \fh_\fp}h_c
  \cong H/H_K =: M_H \] 
realizes the non-compact Riemannian symmetric space $M_H$ 
associated to~$(\fh,\theta)$. Moreover, 
\[ \cO_{h_c} := \Ad(G)h_c  \] 
is a (parahermitian) symmetric space containing
$\cO_{h_c}^\fq$. 
\end{defn}

\begin{thm} \mlabel{thm:crownchar-gen}  
Let $(\g,\tau,C_\fq,h)$ is a modular ncc symmetric Lie algebra 
which is semisimple and for which all $\tau$-invariant ideals 
are non-Riemannian. Fix a causal Euler element 
$h_c \in \fz(\fq_\fp) \cap C_\fq^\circ$. 
Then the connected component of $h_c$ in the open subset 
$\cO_{h_c} \cap \cT_{C_\fq}$ of $\cO_{h_c}$ 
is the domain 
\[ \cT_{M_H} = \Ad(H) e^{\ad \Omega_{\fq_\fk}} h_c, \quad \mbox{ where } \quad 
\Omega_{\fq_\fk} = \Big\{ x \in \fq_\fk \: \rho(\ad x) < \frac{\pi}{2}\Big\}.\] 
\end{thm}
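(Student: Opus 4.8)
The plan is to identify $\cO_{h_c}\cap\cT_{C_\fq}$ near $h_c$ by transporting the problem into the Riemannian setting where Proposition~\ref{prop:4.9} (the Akhiezer--Gindikin polar decomposition of the crown) applies, and then pull back through the Cayley transform $\kappa_{h_c}$. First I would recall that, with $\theta=\tau\tau_{h_c}$ the compatible Cartan involution of Lemma~\ref{lem:4.8}(b), the stabilizer of $h_c$ in $\Ad(G)$ contains $K^h=K\cap G^{h_c}$ (since $h_c\in\fz(\fq_\fp)=\fz(\g^{h_c})\cap\fq_\fp$ and $\g^{h_c}=\fh_\fk+\fq_\fp$), so that $\cO_{h_c}=\Ad(G)h_c$ is the parahermitian symmetric space $G/G^{h_c}$ and $\cO_{h_c}^\fq=\Ad(H)h_c=e^{\ad\fh_\fp}h_c\cong H/H_K$ realizes the Riemannian symmetric space $M_H$ of $(\fh,\theta)$. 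The key observation is that $\ad h_c$ maps $\fh_\fp$ bijectively onto $\fq_\fk$ and $\fq_\fk$ bijectively onto $\fh_\fp$ (this is used in Subsection~\ref{subsec:olaf}), and more precisely $\kappa_{h_c}=e^{\frac{\pi i}{2}\ad h_c}$ intertwines the real form $\fh_\fp\oplus i\fq_\fk$ with $i\fq_\fk\oplus\fh_\fp$ up to the factor $i$; so the map $e^{\ad x}h_c\mapsto$ (a point of the crown of $M_H$) should convert $e^{\ad\Omega_{\fq_\fk}}h_c$ into $\Exp(i\Omega_{\fh_\fp})$ in the Riemannian picture.

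Concretely, I would argue as follows. Applying the quadratic representation $Q$ to $\cO_{h_c}$ identifies $\Ad(g)h_c$ (for the exponentiated adjoint action, $e^{\ad y}h_c$) with group elements of the form $\exp(2y)\exp(\ldots)$; combined with the identity $\zeta_{h_c}=e^{-\ad h_c}$ and $\tau_{h_c}=\zeta_{h_c}^2$, an element $w=\Ad(h)e^{\ad x}h_c$ with $h\in H$, $x\in\fq_\fk$ lies in $\cT_{C_\fq}=\fh+C_\fq^\circ$ precisely when the $\fq$-component $p_\fq(w)$ lies in $C_\fq^\circ$. Writing $w=\Ad(h)(\cosh(\ad x)h_c+\sinh(\ad x)h_c)$ and noting $\sinh(\ad x)h_c\in\fh_\fp$, $\cosh(\ad x)h_c\in\fq_\fp$, the condition becomes $\Ad(h)\cosh(\ad x)h_c\in C_\fq^\circ$. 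Since $h_c\in\fz(\fq_\fp)$ and $x\in\fq_\fk$, restricting to a maximal abelian $\ft_\fq\subseteq\fq_\fk$ (cf.\ Proposition~\ref{prop:testing}) reduces this, root by root, to an $\fsl_2$-computation as in Appendix~\ref{subsec:sl2b}: on each $\fsl_2(\R)$-block the relevant element is $\cosh(t\,\ad x)h_c$, which stays in the interior of the block's cone exactly for $|t|<\pi/2$ (measured by the relevant root), i.e.\ precisely the condition $\rho(\ad x)<\pi/2$ defining $\Omega_{\fq_\fk}$. This shows $\cT_{M_H}=\Ad(H)e^{\ad\Omega_{\fq_\fk}}h_c\subseteq\cO_{h_c}\cap\cT_{C_\fq}$, and that it is open (its tangent map is invertible by the non-degeneracy of the relevant Cartan--Killing pairings, analogous to the argument in Proposition~\ref{prop:4.9}).

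For the reverse inclusion and to pin down the \emph{connected component}, I would use Lemma~\ref{lem:closedorbit-gen}: the $H$-action on $\cT_{C_\fq}$ is proper, so $\Ad(H)D$ is closed for $D$ compact, and hence the orbit map $H\times_{H_K}\overline{\Omega_{\fq_\fk}}\to\cO_{h_c}\cap\cT_{C_\fq}$ is proper; combined with openness, this forces $\cT_{M_H}$ to be a union of connected components of $\cO_{h_c}\cap\cT_{C_\fq}$, and connectedness of $\Omega_{\fq_\fk}$ and of $H/H_K$ (each factor $H$ connected, $H_K$ connected) gives that it is a single component, the one containing $h_c=e^{\ad 0}h_c$. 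Conversely, if a point $\Ad(g)h_c\in\cT_{C_\fq}$ lies in the component of $h_c$, applying the antiholomorphic involution $\oline\theta$ (which preserves both $\cO_{h_c}$ and, via the polar structure, this component, since $\theta(h_c)=-h_c$ and $h_c\in C_\fq^\circ$ forces the $\theta$-fixed data to match) and running the Step~2/Step~3 argument of Proposition~\ref{prop:4.9} verbatim — replacing $\fp$ by $\fh_\fp$, $\Omega_\fp$ by $\Omega_{\fq_\fk}$, and using $\rho(\ad x)<\pi$ resp. $<\pi/2$ bounds to kill the offending central/imaginary parts — yields $g\in H_K\cdot e^{\ad\Omega_{\fq_\fk}}$-translates, i.e.\ the point lies in $\cT_{M_H}$.

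\textbf{Main obstacle.} The delicate point is the $\fsl_2$-reduction controlling exactly \emph{when} $\Ad(h)\cosh(\ad x)h_c$ (equivalently the full $\fq$-projection of the orbit point) re-enters the cone $C_\fq^\circ$: one must show that the relevant scalar condition on each strongly-orthogonal-root block is governed uniformly by the single spectral-radius bound $\rho(\ad x)<\pi/2$ rather than by a root-dependent polytope, and that no ``wrap-around'' occurs outside that range that would reconnect to $h_c$ — this is where the restriction to \emph{non-Riemannian} $\tau$-invariant ideals and the properness Lemma~\ref{lem:closedorbit-gen} are essential. The bookkeeping of the Cayley transform $\kappa_{h_c}$ sending $\Omega_{\fq_\fk}$ to the ``imaginary directions'' of the crown of $M_H$, so that the statement matches the Akhiezer--Gindikin picture with $\fp$ there replaced by $\fh_\fp\cong\fq_\fk$, also requires care with the factors of $i$ and $2$ but is ultimately routine.
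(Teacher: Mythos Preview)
Your overall architecture (show $\cT_{M_H}$ is open, contained in $\cO_{h_c}\cap\cT_{C_\fq}$, and relatively closed there, hence a connected component) matches the paper, and your forward inclusion via the $\fsl_2$-reduction and Proposition~\ref{prop:testing} is essentially what the paper does. The use of Lemma~\ref{lem:closedorbit-gen} for the closedness step is also the right tool.

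However, there is a genuine gap in your converse argument. You propose to ``run the Step~2/Step~3 argument of Proposition~\ref{prop:4.9} verbatim'' using the antiholomorphic involution~$\oline\theta$. This does not transfer: Proposition~\ref{prop:4.9} is about a \emph{complex} manifold $M^r_\C$, and its Steps~2--3 exploit the antiholomorphic involution together with injectivity of $\exp$ on $\fz(\g_\C)$ and the spectral bound $\rho(2i\ad y)<2\pi$ to force commutation relations. Here $\cO_{h_c}\subeq\g$ is a \emph{real} adjoint orbit with no complex structure, and $\oline\theta$ does not act on it; even the real involution $\theta$ sends $h_c$ to $-h_c$, so it need not preserve the component you care about in a way that mimics the crown argument. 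The Cayley transform $\kappa_{h_c}$ you invoke lives on $\g_\C$ and does not identify $\cT_{M_H}\subeq\cO_{h_c}$ with a crown of $H/H_K$ in any space where Proposition~\ref{prop:4.9} applies.

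The paper replaces this with a different device: the sandwich $C_\fq^{\rm min}\subeq C_\fq\subeq C_\fq^{\rm max}$. One proves (b) $\cT_{M_H}\subeq\cT_{C_\fq^{\rm min}}$ by the $\fsl_2$-computation (Corollary~\ref{cor:adapt} guarantees the cone $C_\fs$ is adapted to the simple factors, so $\sum_j\cos(x_j)h_j\in C_\fs^\circ$ when all $|x_j|<\pi/2$), and (c) $\cT_{M_H}$ is relatively closed in $\cO_{h_c}\cap\cT_{C_\fq^{\rm max}}$. For (c), suppose $\Ad(g_n)e^{\ad x_n}h_c$ converges in $\cT_{C_\fq^{\rm max}}$ with $x_n\to y\in\partial\Omega_{\ft_\fq}$; then some coordinate satisfies $|y_j|=\pi/2$ (using \cite[Lemma~5.1]{MNO22a} to match the spectral radius on $\fs$ with that on $\g$), so $\cos(y_j)=0$ and $p_\fq(e^{\ad y}h_c)\in\partial C_\fq^{\rm max}$ because the strongly orthogonal root $\gamma_j\in\Delta_p^+$ vanishes on it. The $H$-invariant exhaustion function $\psi$ from Lemma~\ref{lem:closedorbit-gen} then blows up along the sequence, contradicting convergence. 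This min/max trick is what handles an arbitrary cone $C_\fq$ and replaces your missing converse; without it, your $\fsl_2$-computation only controls the boundary of $C_\fs$, not of $C_\fq$ itself.
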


\begin{prf} If $(\g,\tau)$ is irreducible ncc, then there exist 
minimal and maximal $\Inn_\g(\fh)$-invariant pointed generating 
invariant cones $C^{\rm min}_\fq \subeq C_\fq \subeq C^{\rm max}_\fq$ 
(Theorem~\ref{thm:3.6} ). Our assumptions on $(\g,\tau)$ imply in particular  
that all simple $\tau$-invariant ideals are of this type, so that 
we can define pointed generating $\Inn_\g(\fh)$-invariant cones 
$C^{\rm min}_\fq, C^{\rm max}_\fq \subeq \fq$ in such a way that they 
are adapted to the decomposition into irreducible 
summands (Proposition~\ref{prop:decomp} and 
Subsection~\ref{subsec:cone-classif}). 
Then the classification of $\Inn_\g(\fh)$-invariant cones 
in $\fq$ (Theorem~\ref{thm:3.6}) implies that 
\begin{equation}
  \label{eq:sandwich}
C^{\rm min}_\fq \subeq C_\fq \subeq  C^{\rm max}_\fq.
\end{equation}

We will derive the theorem from the following three claims: 
\begin{itemize}
\item[\rm(a)] $\cT_{M_H}$ is connected and open in $\cO_{h_c}$. 
\item[\rm(b)] $\cT_{M_H} \subeq \cT_{C^{\rm min}_\fq}$, and 
\item[\rm(c)] $\cT_{M_H}$ is relatively closed in $\cO_{h_c}^\g \cap 
\cT_{C^{\rm max}_\fq}$.
\end{itemize}

In view of \eqref{eq:sandwich}, (a), (b) and (c) imply that 
$\cT_{M_H} \subeq \cT_{C_\fq}$, and that 
$\cT_{M_H}$ is relatively closed in $\cO_{h_c}^\g \cap \cT_{C_\fq}$. 
Therefore $\cT_{M_H}$ is the connected component of $h_c$ in 
the open subset $\cO_{h_c}^\g \cap \cT_{C_\fq}$. 

Since both sides of (a), (b) and (c) decompose according 
to the decomposition of $(\g,\tau)$ into irreducible 
summands (Proposition~\ref{prop:decomp}), 
it suffices to prove (a), (b)  and (c) in the irreducible case. 

\nin (a) To see that $\cT_{M_H}$ is connected, we use the polar decomposition 
$H = H_K \exp(\fh_\fp)$. Then $\Ad(H_K)\Omega_{\fq_\fk} = \Omega_{\fq_\fk}$ 
and $\Ad(H_K)h = \{h\}$ (cf.\  \cite[\S 3.5.2]{MNO22a}). 
This implies that 
\[ \cT_{M_H} 
= \Ad(H) e^{\ad \Omega_{\fq_\fk}} h 
= e^{\ad \fh_\fk} e^{\ad \Omega_{\fq_\fk}} \Ad(H_K) h 
= e^{\ad \fh_\fk} e^{\ad \Omega_{\fq_\fk}}h,\] 
which is obviously connected. 

Next we use Lemma~\ref{lem:2.3}(a) to 
see that the exponential map 
\[ \Exp \: \fq_\fk \to \cO_h, \quad x \mapsto e^{\ad x} h \] 
is regular in $x\in \Omega_{\fq_\fk}$ because $\rho(\ad x) < \pi/2 < \pi$. 
Now Lemma~\ref{lem:2.3}(b) implies that the map 
\[ \Phi \: H \times \Omega_{\fq_\fk} \to \cO_h, \quad 
(g,x) \mapsto \Ad(g) e^{\ad x} h \] 
is regular in $(g,x)$ because $\Spec(\ad x) \subeq (-\pi/2,\pi/2)i$ 
does not intersect $\big(\frac{\pi}{2} + \Z \pi\big)i$. 
This implies that the differential of $\Phi$ is surjective in each point 
of $H \times \Omega_{\fq_\fk}$; hence its image is open.

\nin (b) We first observe that 
both sides of (b) are $\Ad(H)$-invariant, and 
\[ \cT_{M_H} 
= \Ad(H) e^{\ad \Omega_{\fq_\fk}} h_c
= \Ad(H) e^{\ad \Omega_{\ft_\fq}} h_c\]  
for a maximal abelian subspace $\ft_\fq \subeq \fq_\fk$ and 
$\Omega_{\ft_\fq} := \Omega_{\fq_\fk} \cap \ft_\fq$. Therefore it suffices to verify 
$e^{\ad x} h_c \in \cT_{C^{\rm min}_\fq}$ for 
\[ x \in \Omega_{\ft_\fq} = \Big\{ y \in \ft_\fq \: 
\rho(\ad y)  < \frac{\pi}{2}\Big\}.\] 

From Proposition~\ref{prop:testing} we infer that 
the causal Euler element $h_c$ and $\ft_\fq$ generate a 
$\tau$-invariant subalgebra 
\[ \fs \cong 
\fsl_2(\R)^{r_0} \oplus \fsl_2(\C)^{r_1} \quad \mbox{ with  } \quad 
\fs^\tau \cong \so_{1,1}(\R)^{r_0} \oplus \su_{1,1}(\C)^{r_1}\]
in which $h_c$ is a causal Euler element, 
contained in the interior of the 
pointed generating cone 
\[ C_\fs := C^{\rm min}_\fq \cap \fs\subeq \fs_\fq := \fs^{-\tau}, \] 
which is invariant under $\Inn(\fs^\tau)$. 
Let $s := r_0 + r_1$. 

For elements of $\fsl_2(\R)$, we fix the notation 
\begin{equation}
  \label{eq:sl2a-sep}
h^0 = \frac{1}{2} \pmat{1 & 0 \\ 0 & -1},\quad 
e^0 =  \pmat{0 & 1 \\ 0 & 0}, \quad 
f^0 =  \pmat{0 & 0 \\ 1 & 0}. 
\end{equation}
Choosing a suitable isomorphism 
$\fs \to \fsl_2(\R)^{r_0} \oplus \fsl_2(\C)^{r_1}$, 
we obtain
\[ h_c =  \sum_{j = 1}^s h_j 
\quad \mbox{ and } \quad 
\ft_\fq = \so_2(\R)^s,\]
where we write $h_j$ for the Euler elements $h^0$ in the $j$th summand,  
For 
\[ x 
= \sum_{j = 1}^s x_j \frac{e_j -f_j}{2} \in \ft_\fq,\]  we then obtain with \eqref{eq:turn} 
in Appendix~\ref{subsec:sl2b}
\begin{equation}
  \label{eq:40b}
 p_\fq(e^{\ad x} h) 
= \cosh(\ad x)(h) 
= \sum_{j=1}^s \cos(x_j) h_j.
\end{equation} 

As the cone $C_\fs$ is adapted to the decomposition of $\fs$ into simple 
ideals by Corollary~\ref{cor:adapt}, 
this element is contained in the open cone 
$C_\fs^\circ$ if $|x_j| < \pi/2$ for each~$j$.
Since $\rho(\ad x) < \frac{\pi}{2}$ for $x \in \Omega_{\ft_\fq}$ 
and $i x_j \in \Spec(\ad x)$, this is the case. 
This proves~(b).

\nin (c) We have to show that 
$\cT_{M_H}$ is relatively closed in $\cO_{h_c}\cap \cT_{C^{\rm max}_\fq}$. 
First we observe that, 
by Lemma~\ref{lem:closedorbit-gen}, if $D \subeq \Omega_{\ft_\fq}$ is compact, 
then $\Ad(H) e^{\ad D} h_c$ is closed in $\g$, 
and by (b) it is contained in $\cT_{C^{\rm min}_\fq}\subeq \cT_{C^{\rm max}_\fq}$. 

Now suppose that the sequence 
$\Ad(g_n) e^{\ad x_n} h_c$, 
$g_n \in H$, $x_n \in \Omega_{\ft_\fq}$, converges to some 
element in $\cT_{C^{\rm max}_\fq}$ which is not contained in $\cT_{M_H}$. 
As we may assume that the bounded sequence $x_n \in \Omega_{\ft_\fq}$ converges 
in $\ft_\fq$, it converges by the preceding paragraph to a boundary point 
$y \in \partial \Omega_{\ft_\fq}$. Writing 
$y = \sum_{j = 1}^s \frac{y_j}{2}(e_j - f_j)$, 
we claim that there exists a $j$ with $|y_j| = \frac{\pi}{2}$. 
As 
\[ \rho(\ad y\res_{\fs}) = \max \{ |y_j| \: j =1,\ldots,s \},\] 
this follows from $\rho(\ad y\res_{\fs}) = \rho(\ad y)$ 
(\cite[Lemma~5.1]{MNO22a}).  
Now \eqref{eq:40b} shows that 
\[ e^{\ad y} h_c \in 
\partial \cT_{C_\fs} \subeq \partial \cT_{C_\fq^{\rm max}}. \]
For the $H$-invariant function 
$\psi$ from Lemma~\ref{lem:closedorbit-gen}, this leads to 
\[ \psi(\Ad(g_n) e^{\ad x_n} h_c) =\psi(e^{\ad x_n} h_c) \to \infty, \]
contradicting the convergences of the sequence 
$\Ad(g_n) e^{\ad x_n} h_c$ in $\cT_{C_\fq^{\rm max}}$. 
\end{prf}

\begin{rem} In the special case where 
$(\g,\tau)$ is Riemannian,  $\fq = \fq_\fp$ and 
$\fh = \fh_\fk$, the real crown domain $\cT_{M_H}$ reduces to a point. 
Hence there is no interesting analog of the preceding theorem 
in the Riemannian case, and we therefore assume that $(\g,\tau)$ contains 
no $\tau$-invariant Riemannian ideals.
\end{rem}

\section{The complex tube domain $\cT_M$} 
\mlabel{sec:5}

In this section we introduce the tube domain $\cT_M$, associated 
to a non-compactly causal symmetric space $M = G/H$, 
specified by the causal symmetric Lie algebra $(\g,\tau,C)$ 
with 
\begin{equation}
  \label{eq:z-ass-tau}
\Ad(H)C = C \quad \mbox{ and } \quad  \fz(\g) \subeq \fq, 
\end{equation}
and a connected symmetric Lie group $(G,\tau)$ 
satisfying (GP) and (Eff). 
Then $\tau$ induces a holomorphic involution on the universal 
complexification $G_\C$, also denoted~$\tau$. 

\nin {\bf Assumptions:} 
We fix an open $\sigma_G$-invariant subgroup $H_\C \subeq (G_\C)^\tau$ 
with $H_\C \cap G = H$ (here we use that $G \subeq G_\C$ by 
Lemma~\ref{lem:2.3.1}(d)) and 
define a {\it complexification of $M$} by 
\[ M = G/H  \into~M_\C := G_\C/H_\C.\]
We show in Theorem~\ref{thm:4.7}
  that, for $C = C_\fq^{\rm max}$, the tube domain
  $\cT_M$ is diffeomorphic to the crown of the Riemannian
   symmetric space $G/K$. 
\subsection{Algebraic preliminaries}

From Lemma~\ref{lem:4.8}
we know that there exists a causal 
Riemann element $x_r \in C_\fa^\pi$. 
Then $h_c := \frac{2}{\pi} x_r$ is a causal Euler element in $\g$ and 
$\tau_{h_c}$ 
is an involution. As $\tau \zeta_{x_r} \tau = \zeta_{x_r}^{-1}$ 
and $\zeta_{x_r}^2 = \tau_{h_c}$, 
we obtain the Cartan involution (see Lemma~\ref{lem:4.8}(b)) 
\begin{equation}
  \label{eq:thetafromtau}
\theta := 
\tau \tau_{h_c} = \tau \zeta_{x_r}^2 = \zeta_{x_r}^{-1} \tau \zeta_{x_r}, 
\end{equation}
so that the complex linear involutions $\theta$ and $\tau$ on $\g_\C$ 
are conjugate under $\Inn(\g_\C)$. 

\begin{lem} \mlabel{lem:eff} 
The subalgebra $\fh = \g^\tau$ contains no non-zero ideal of $\g$  
if and only if $\fk = \g^\theta$ contains no non-zero ideal of  $\g$, 
which follows from {\rm(Eff)}. 
\end{lem}

\begin{prf} Suppose that $\fj \trile\g$ is an ideal. If 
\begin{equation}
  \label{eq:jrel1}
\fj \subeq \g^\theta  \subeq \g^\theta_\C = \zeta_{x_r}^{-1}(\g_\C^\tau), 
\end{equation}
then 
$\fj_\C = \zeta_{x_r}(\fj_\C) \subeq \g_\C^\tau$ 
implies $\fj \subeq \g^\tau = \fh$. 
Conversely, $\fj \subeq \fh$ leads to 
\[\fj_\C = \zeta_{x_r}^{-1}(\fj_\C) \subeq \zeta_{x_r}^{-1}(\g_\C^\tau) = \g^\theta_\C,
\] 
and thus $\fj \subeq \g^\theta$. 
This shows that $\g^\theta$ contains no non-zero ideal of $\g$ 
is and only if $\g^\tau$ has this property.
\end{prf}

\begin{lem} \mlabel{lem:qbrackh} If {\rm(GP)} and {\rm(Eff)} are satisfied, 
then 
\begin{itemize}
\item[\rm(a)] $\fh = [\fq,\fq]$. 
\item[\rm(b)] If $\fh$ contains no non-zero ideal of $\g$, 
then $G$ acts effectively on $G/H$.
\end{itemize}
\end{lem}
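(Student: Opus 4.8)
Both statements hinge on the structure theory collected in Lemma~\ref{lem:2.3.1} together with the identification of $\fh$ as a centralizer-type subalgebra in the $\tau$-picture. For (a), the plan is to reduce to the semisimple case and then to each irreducible summand. Writing $\g = \fz(\g) \oplus [\g,\g]$ with $\fz(\g) \subeq \fq$, I first observe that $[\fq,\fq] \subeq \fh$ is automatic since $\tau$ is an involution. For the reverse inclusion, decompose $(\g,\tau)$ into irreducible summands as in Proposition~\ref{prop:decomp}; by Proposition~\ref{prop:types} each non-Riemannian summand is simple of complex or simple type, and there $\fh$ is semisimple and equals $[\fq,\fq]$ because $\fq$ generates $\g$ as a Lie algebra (the bracket $[\fq,\fq]$ is a $\tau$-invariant ideal of $\fh$ whose sum with $\fq$ is all of $\g$, hence by simplicity it is $\fh$). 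For a Riemannian summand $(\g_0,\tau_0)$, we have $\fq_0 = \fq_{0,\fp}$, and $[\fq_{0,\fp},\fq_{0,\fp}] = \fk_0 = \fh_0$ is exactly the content of (GP)/(Eff) via Lemma~\ref{lem:2.3.1}(c), since $\fk$ contains no non-zero ideal and is therefore generated by $[\fp,\fp]$ inside $[\g,\g]$; the center $\fz(\g)$ sits in $\fq$ and contributes nothing to $\fh$. Assembling the summands gives $\fh = [\fq,\fq]$.

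\textbf{For part (b),} the plan is to show that the kernel of the $G$-action on $G/H$ is the largest normal subgroup of $G$ contained in $H$, and then to argue this is trivial under the stated hypothesis. Let $N := \bigcap_{g \in G} gHg^{-1}$ be the (closed, normal) kernel of the action on $G/H$. Then $\Lie(N) =: \fn$ is a $\tau$-invariant ideal of $\g$ contained in $\fh = \g^\tau$, hence $\fn = \{0\}$ by hypothesis, so $N$ is discrete, hence central in the connected group $G$. Thus $N \subeq Z(G) \cap H$. Now I would invoke Lemma~\ref{lem:2.3.1}(f), which says $Z(G)$ is a connected vector group: a discrete subgroup of a vector group that is also normal --- here automatically central --- need not vanish in general, so the genuinely needed input is that $N \cap Z(G)$ must in fact be trivial. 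The cleanest route: $Z(G) \cong (\fz(\g),+)$ is torsion-free and $N$ is discrete; but more importantly $N$ being discrete and contained in the connected vector group $Z(G)$, while $N$ itself is the intersection of conjugates of $H$ and $G$ is connected --- I would use that $G \cong \fz(\g) \times \Inn([\g,\g])$ from Lemma~\ref{lem:2.3.1}(e), with $Z(G) = \fz(\g)$ the first factor, and that $H$ meets this factor in a subgroup whose projection to $\fz(\g)$ is all of $\fz(\g)$ (since $\fz(\g) \subeq \fq$ forces, after exponentiating, $\exp(\fz(\g)) \subeq$ the relevant part... no: $\fz(\g) \subeq \fq$ means $\exp(\fz(\g)) \not\subeq H$). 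The correct conclusion is that $N \subeq Z(G)$ with $\Lie(N)=0$ forces $N$ discrete, and a discrete normal subgroup of $G$ contained in the vector group $Z(G)$ inside $G \cong \fz(\g)\times\Inn([\g,\g])$ must be trivial because $\fz(\g)$ is torsion-free and any nonzero discrete subgroup would not be contained in $H$ (whose intersection with $Z(G)=\fz(\g)$ is trivial, as $\fz(\g)\subeq\fq$ implies $H_e \cap \exp(\fz(\g)) = \{e\}$ and $H$ is connected-by-the-reductive-structure).

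\textbf{Expected main obstacle.} The delicate point is part (b): pinning down exactly why $N = Z(G)\cap H$ (a priori possibly nontrivial) must vanish, using $\fz(\g) \subeq \fq$. The key is that $\fz(\g)\subeq\fq$ means no central direction lies in $\fh$, so $H_e = (G^\tau)_e$ has trivial intersection with $Z(G)_e = \exp(\fz(\g))$; combined with the product decomposition $G \cong \fz(\g)\times\Inn([\g,\g])$ where $Z(G)$ is precisely the (torsion-free) factor $\fz(\g)$, any normal subgroup of $G$ inside $H$ projects trivially to $\fz(\g)$ and hence lies in $\Inn([\g,\g])\cap H$, which has trivial center-intersection since $\Inn([\g,\g])$ is centerfree; so $N$ is both discrete and contained in a centerfree connected group's intersection with $H$, forcing $N = \{e\}$. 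I would write this reduction carefully, as it is where the two global assumptions (GP), (Eff) and the hypothesis $\fz(\g)\subeq\fq$ all get used simultaneously; everything else is a routine ideal-chasing argument.
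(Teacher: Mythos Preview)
Your proof is correct in substance, but both parts can be streamlined considerably, and in part~(b) your exposition wanders through several false starts before arriving at an argument that works.

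\textbf{Part (a).} Your decomposition-into-irreducibles approach is valid, but the paper's argument is more direct and avoids case analysis entirely. Once you know (via Lemma~\ref{lem:eff}, using (Eff) and $\fz(\g)\subeq\fq$) that $\fh$ contains no non-zero ideal of $\g$, you simply observe that $[\fq,\fq]$ is an ideal of the reductive Lie algebra $\fh$, hence is complemented by an ideal $\fj\trile\fh$. This $\fj$ centralizes $\fq$ (for instance, using an invariant form: $B([\fj,\fq],\fq)=B(\fj,[\fq,\fq])=0$), so $\fj$ is an ideal of $\g$ contained in $\fh$, hence $\fj=\{0\}$. No decomposition, no Riemannian/non-Riemannian split needed. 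Your route works but requires you to first invoke Lemma~\ref{lem:eff} anyway (to apply Proposition~\ref{prop:decomp}), so you gain nothing by decomposing.

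\textbf{Part (b).} Your argument is ultimately right, but the clean observation you are circling around and never quite state is this: since $H\subeq G^\tau$ (not just $H_e$) and $Z(G)=\exp(\fz(\g))$ is a vector group on which $\tau$ acts by inversion (because $\fz(\g)\subeq\fq=\g^{-\tau}$), we have $Z(G)\cap H\subeq Z(G)\cap G^\tau=\{e\}$. That single sentence replaces your entire final paragraph. There is no need to project onto factors, worry about torsion-freeness, or invoke centerfreeness of $\Inn([\g,\g])$ --- the injectivity of $\exp$ on $\fz(\g)$ (Lemma~\ref{lem:2.3.1}) plus $\tau\res_{\fz(\g)}=-\id$ does everything. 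Your detour through the product decomposition $G\cong\fz(\g)\times\Inn([\g,\g])$ does reach the same conclusion, but obscures the one-line reason.
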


\begin{prf} (a) From $\fz(\g) \subeq \fq$, (Eff), and the preceding lemma, 
we derive that $\fh$ contains no non-zero ideal of~$\g$. 
This implies the assertion because the ideal $[\fq,\fq]$ of 
$\fh$ is complemented by an ideal $\fj \trile \fh$ centralizing $\fq$, 
hence by an ideal of $\g$. Therefore $\fj$ is trivial, and this entails 
$\fh = [\fq,\fq]$. 

\nin (b) Let $N \trile G$ be the effectivity kernel of the 
$G$-action on $G/H$, i.e., the largest normal subgroup contained in~$H$. 
Its Lie algebra $\fn$ is
an ideal of $\g$ contained in $\fh$ and hence trivial. It follows that $N$ 
is discrete, hence central in $G$ because $G$ is connected. 
From Lemma~\ref{lem:2.3.1} we know that 
$\exp \: \fz(\g) \to Z(G)$ is a diffeomorphism, so that $\fz(\g) \subeq \fq$ 
further entails that $Z(G) \cap H = \{e\}$. This shows that $N = \{e\}$,
i.e., that $G$ acts effectively on $G/H$.
\end{prf}

\begin{rem} Let $G_1$ be a simple group and $\tau_1$ an involution commuting with the Cartan involution~$\theta_1$.
Let $H_1= (G_1^{\tau_1})_e$ and $K_1= G^{\theta_1}$. If the center
of $G_1$ is trivial, then 
(GP) and (Eff) are satisfied. Let $V$ b a finite dimensional real vector spaces and 
$G=G_1\times V$. Extend $\tau_1$ to an involution $\tau $ on $G$ by $\tau ((a,v)) =(\tau_1(a), v)$ and extend $\theta_1$ to
$G$ by $\theta (a,v) = (\theta_1(a),-v)$. Then $K= G^\theta = K_1\times \{0\}$ and $G/K
=G_1/K_1\times V$, so (GP) and (Eff) are still satisfied. 
But with $H = G^\tau_e = H_1\times V$ we have
$G/H \cong G_1/H_1$ and $G$ does not act effectively on~$G/H$.
\end{rem}

\subsection{The complex crown domain} 
\mlabel{subsec:5.1} 

Recall the set $C^{\pi}$ from Section~\ref{subsec:3.3}. 
In the following theorem we see how the causal Riemann elements 
can be used to connect between the tube domain and the crown domain.
In this subsection we  do not need the assumption that
  $(\g,\tau,C)$ is modular, i.e., that
  $\fh$ contains an Euler element.

\begin{thm} 
\mlabel{thm:4.7} {\rm(The tube domain as a crown domain)}
Suppose that $(\g,\tau, C)$ is a reductive ncc symmetric Lie algebra and
that {\rm(GP)} and {\rm(Eff)} are satisfied   for a corresponding connected
Lie group~$G$. 
If $x_r \in C_\fa^\pi$ is a causal Riemann element and 
\[ m := \Exp_{eH_\C}(ix_r) = \exp(ix_r)H_\C,\]  
then the following assertions hold: 
\begin{itemize}
\item[\rm(a)]  $M^r := G.m  \subeq M_\C$ is a Riemannian symmetric space. 
\item[\rm(b)] If $C = C_\fq^{\rm max}$ holds for a $\fp$-adapted positive system, 
then the crown domain $\cT_{M^r}$ coincides with the tube domain 
\[ \cT_M := G.\Exp_{eH}(i C^\pi) = G.\Exp_{eH}(i C^\pi_\fa) \subeq M_\C \] 
associated to $(\g,\tau,C)$ and~$M$.
\item[\rm(c)] $\cT_M$ is open in $M_\C$.
\end{itemize}
\end{thm}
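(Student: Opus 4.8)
The plan is to reduce Theorem~\ref{thm:4.7} to the diffeomorphism statement for the crown domain of the Riemannian symmetric space $M^r = G/K$ established in Proposition~\ref{prop:4.9}, by carefully matching the two group-orbit pictures. First I would treat part~(a): since $x_r \in \fa \subeq \fq_\fp$ is a causal Riemann element with $h_c = \frac{2}{\pi}x_r$ a causal Euler element, the element $m = \exp(ix_r)H_\C \in M_\C$ has stabilizer in $G$ that I want to identify with $K = G^\theta$. The key computation is via the quadratic representation $Q(gH_\C) = g\tau(g)^{-1}$: the stabilizer condition $g.m = m$ becomes $g\exp(2ix_r)\tau(g)^{-1} = \exp(2ix_r)$, i.e. $\exp(2i x_r)^{-1}g\exp(2i x_r) = \tau(g)$, which at the Lie algebra level says $\zeta_{x_r}^{-2}(X) = \tau(X)$ for $X$ in the stabilizer algebra, i.e. $\tau\zeta_{x_r}^2(X) = X$. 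By \eqref{eq:thetafromtau}, $\tau\zeta_{x_r}^2 = \tau\tau_{h_c} = \theta$, so the stabilizer algebra is $\g^\theta = \fk$. A connectedness/covering argument (using that $K$ is connected by (GP), and that $Z(G)\cap H_\C$ acts appropriately) then gives $G_m = K$ as a subgroup of $G$, so $G.m \cong G/K = M^r$ as a $G$-homogeneous space; the Riemannian structure is the standard one transported from $M_\C$. I should also record that $G.m$ sits in $M_\C$ and that $m = \Exp_{eH_\C}(ix_r)$ is the image of the base point under the natural complex-symmetric-space exponential.

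Next, for part~(b), I would set up the identification of the tube domain $\cT_M = G.\Exp_{eH}(iC^\pi)$ with the crown $\cT_{M^r} = G.\Exp_m(i\Omega_\fp)$. The point is that $\cT_M = G.\Exp_{eH_\C}(iC^\pi_\fa)$ by the Cone Restriction Theorem (Theorem~\ref{thm:2.20}(i)), $C^\circ = \Inn_\g(\fh)(C^\circ\cap\fa)$, so all $\Inn_\g(\fh)$-conjugates are absorbed into the $G$-action. Now for $y \in C^\pi_\fa \subeq \fa \subeq \fq_\fp$, write $y = x_r + (y - x_r)$; since $\theta|_\fa = -\id_\fa$ and $x_r$ is a causal Riemann element, the key is that $\Exp_{eH_\C}(iy)$ lies in $G.\Exp_m(i\fp)$ by moving from the base point to $m$ and then by an element of $\fp$. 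Concretely, for $C = C^{\rm max}_\fq$ the set $C^\pi_\fa$ is described in Definition~\ref{def:3.5} as the interior of a convex polyhedron cut out by $|\alpha(x)| < \pi$ for $\alpha \in \Delta_p^+$ and $|\beta(x)| < \pi/2$ for $\beta \in \Delta_k$. Since the non-compact roots all take value $\pi/2$ on $x_r$, the shifted condition becomes exactly $\rho(\ad(y - x_r)\res_\fa) < \pi/2$ on the $\fa$-part, i.e. $y - x_r$ lies (after absorbing the $\fa$-direction into the $K$-orbit $\Omega_\fp \cap \fa$) in $\Omega_\fp$. More precisely, using that $\zeta_{x_r}$ conjugates $\tau$ to $\theta$, the map $Z \mapsto \zeta_{x_r}^{-1}$ applied at the level of $M_\C$ carries the tube-domain picture for $(\g,\tau,C)$ onto the crown picture for $(\g,\theta,\fp)$, and under this the inequality defining $C^\pi$ goes to the inequality $\rho(\ad\,\cdot\,) < \pi/2$ defining $\Omega_\fp$; combined with $\Delta_p = \Delta(\g_\C,\fc_\C)_p|_\fa$ and the fact that $\Delta_k$ is the root system of $\g^{h_c}$, one gets $\zeta_{x_r}(iC^\pi_\fa) = i\Omega_{\fp}\cap(\text{the relevant Cartan})$, hence $\cT_M = \cT_{M^r}$ as subsets of $M_\C$ (identifying $M_\C$ with $M^r_\C = G_\C/K_\C$ via $\zeta_{x_r}$ as in \eqref{eq:thetafromtau}, which is legitimate since $\theta$ and $\tau$ are $\Inn(\g_\C)$-conjugate and both integrate to $G$ by Lemma~\ref{lem:2.3.1}).

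Finally, part~(c) is immediate once (b) is proved: Proposition~\ref{prop:4.9} asserts that the polar map $G\times_K\Omega_\fp \to \cT_{M^r}$ is a diffeomorphism, and its differential is everywhere invertible (this uses Lemma~\ref{lem:2.3}(b) applied to the complex conjugation $\sigma$ of $\g_\C$ with respect to $\g$, together with $\rho(\ad y) < \pi/2 < \pi$ for $y \in \Omega_\fp$, so that $\ad y$ has no eigenvalue in $i(\frac{\pi}{2}+\Z\pi)$ and hence $\cosh(\ad y)$ and the relevant Jacobians are invertible), so $\cT_{M^r}$ is open in $M^r_\C$; transporting through the identification $M_\C \cong M^r_\C$ gives that $\cT_M$ is open in $M_\C$. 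The main obstacle I anticipate is part~(b): making the bookkeeping between the two ``complexifications'' $M_\C = G_\C/H_\C$ and $M^r_\C = G_\C/K_\C$ precise — in particular checking that the intertwining automorphism $\zeta_{x_r}$ of $G_\C$ maps $H_\C$ into a conjugate of $K_\C$ compatibly with the chosen open subgroups, and that the inequality $s(x) < \pi$ defining $C^\pi$ translates exactly (with the factor-of-$2$ on compact roots) into the spectral-radius condition $\rho(\ad\,\cdot\,)<\pi/2$ defining $\Omega_\fp$ after the shift by $x_r$. The auxiliary fact \eqref{eq:CgFix} and the relation $\Delta_k = \Delta(\g^{h_c},\fc)$ are exactly what is needed to pin down this translation, and once the $\fa$-level identity is verified the rest follows by $\Inn_\g(\fh)$- resp.\ $K$-equivariance.
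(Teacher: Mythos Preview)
Your argument for (a) and the core computation in (b) match the paper's approach: identify the stabilizer of $m$ via the conjugation relation $\theta = \zeta_{x_r}^{-1}\tau\zeta_{x_r}$, then shift $C_\fa^\pi$ by $x_r$ and check that the root inequalities defining $C_\fa^\pi$ for $C = C_\fq^{\rm max}$ become exactly $|\alpha(y)| < \pi/2$ for all $\alpha \in \Delta$, i.e.\ $\Omega_\fa$.

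However, the detour you anticipate as ``the main obstacle'' --- identifying $M_\C = G_\C/H_\C$ with $M^r_\C = G_\C/K_\C$ via $\zeta_{x_r}$ and matching open subgroups --- is unnecessary and is not what the paper does. Everything takes place inside the single space $M_\C$: since $m = \exp(ix_r)H_\C \in M_\C$ and $G_m = K$, the orbit $G.m \cong G/K$ is already your Riemannian symmetric space, and its crown is realized directly in $M_\C$. The only link you need is the elementary relation $\Exp_m(iy) = \Exp_{eH_\C}(i(x_r + y))$ for $y \in \fa$ (because $[x_r,y]=0$), which converts $G.\Exp_m(i\Omega_\fa)$ into $G.\Exp_{eH}(iC_\fa^\pi)$ on the nose. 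No intertwining of complexifications is required.

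There is a genuine gap in your treatment of (c). Part (b) is only asserted for $C = C_\fq^{\rm max}$, whereas (c) is claimed for \emph{arbitrary} $C$; so ``(c) is immediate once (b) is proved'' covers only the maximal case. For general $C$ one has $\Omega_\fa^C := C_\fa^\pi - x_r \subsetneq \Omega_\fa$, and $\cT_M = G.\Exp_m(i\,\Ad(K)\Omega_\fa^C)$ via the polar diffeomorphism of Proposition~\ref{prop:4.9}. Openness of $\cT_M$ then reduces to showing that $\Ad(K)\Omega_\fa^C$ is open in $\fp$. This is not automatic: $\Omega_\fa^C$ is open and $\cW_k$-invariant, but need not be $\cW$-invariant, so one first passes to $\cW\Omega_\fa^C$ (a finite union of open sets, hence open and $\cW$-invariant) and then invokes the fact that $\Ad(K)$ of an open $\cW$-invariant subset of $\fa$ is open in $\fp$ (\cite[Lemma~1.4]{Ne99}). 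Your proposal omits this step.
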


\begin{prf} (a) In $G_\C$, we have $\zeta_{x_r}(G_{\C,m}) = H_\C 
\subeq (G_\C)^\tau$, 
where $\zeta_{x_r}$ denotes conjugation with $\exp(-ix_r)$ on~$G_\C$. 
This implies 
\begin{equation}
  \label{eq:conj-stab}
G_{\C,m} 
= \zeta_{x_r}^{-1}(H_\C) 
\subeq  (G_\C)^{\zeta_{x_r}^{-1}\tau \zeta_{x_r}}  
= (G_\C)^{\tau \zeta_{x_r}^2}  
= (G_\C)^{\tau\tau_{h_c}} = (G_\C)^\theta,
\end{equation}
where $\theta = \tau \tau_{h_c}$ 
is the complex linear extension of a
Cartan involution of $\g$ with $\fz(\g) \subeq \g^{-\theta}$ 
(Lemma~\ref{lem:4.8}(b)). 
As all such Cartan involutions of $\g$ are conjugate 
under inner automorphisms, (GP) implies that 
$G^\theta$ is connected, so that $G^\theta = G_m$. 
We conclude that $M^r = G.m \cong G/G_m = G/G^\theta$. 

\nin (b) Next we observe that 
\[ \g^\theta = \g^{\tau\tau_{h_c}} 
= \fh_0({x_r})  \oplus \fq^{-\tau_{h_c}} 
= \fh^0 
\oplus (\1 + \tau) \bigoplus_{\alpha \in \Delta_k} \g^\alpha 
\oplus (\1 - \tau) \bigoplus_{\alpha \in \Delta_p} \g^\alpha\] 
and 
\[\g^{-\theta} =  \g^{-\tau\tau_{h_c}}  
= \fa \oplus (\1 + \tau) \bigoplus_{\alpha \in \Delta_p} \g^\alpha 
\oplus (\1 - \tau) \bigoplus_{\alpha \in \Delta_k} \g^\alpha.\] 
In particular, $\fq$ and $\g^{-\theta}$ share the same 
maximal abelian subspace $\fa$.  
From Lemma~\ref{lem:2.7}(g), we further derive that 
\[ \exp(-ix_r).T_m(M^r) 
= \cos(\ad {x_r})\fq + i [{x_r},\fh]
= \fq^{\tau_{h_c}} + i \fq^{-\tau_{h_c}} \] 
is a real form of $\fq_\C$, so that $M^r$ is totally real in $M_\C$. 

For $y \in \fa$, we have 
\begin{equation}
  \label{eq:exp-shift}
\Exp_m(iy) = \Exp_{eH}(i({x_r} + y)).
\end{equation}
If $C$ is maximal in the sense that 
\[ C \cap \fa = C_\fa^{\rm max} = (\Delta_p^+)^\star, \] 
(cf.~Theorem~\ref{thm:3.6}), we obtain 
\begin{align*}
 C_\fa^\pi - {x_r} 
&= \{ y \in \fa  \: {x_r} + y \in C^\circ, s_\fa({x_r}+y) < \pi \} \\
&= \{ y \in \fa \: (\forall \alpha \in \Delta_p^+) \ 
0 < \alpha({x_r} + y) < \pi, (\forall \beta \in \Delta_k)\ 
2|\beta({x_r}+y)| < \pi \} \\
&= \{ y \in \fa \: (\forall \alpha \in \Delta_p^+) \ 
-\frac{\pi}{2} < \alpha(y) < \frac{\pi}{2}, (\forall \beta \in \Delta_k)\ 
2|\beta(y)| < \pi \} \\
&= \Big\{ y \in \fa \: (\forall \alpha \in \Delta)\ |\alpha(y)| < \frac{\pi}{2}\Big\} =: \Omega_\fa.
\end{align*}
Therefore the crown domain of the Riemannian symmetric space $M^r = G.m$ 
in $M_\C$ coincides with 
\[ G.\Exp_m(i \Omega_\fa) 
= G.\Exp_{eH}(i ({x_r}+ \Omega_\fa)) 
= G.\Exp_{eH}(i C_\fa^\pi) = \cT_M.\]

\nin (c) If $C$ is not maximal, then there 
exists a $\fp$-adapted positive system 
with $C \subeq C_\fq^{\rm max}$ and 
$C_\fa := C \cap \fa \subeq C_\fa^{\rm max}$. Then 
$\Omega_\fa^C := C_\fa^\pi - x_r \subeq \Omega_\fa$ 
is an open $\cW_k$-invariant subset. We thus obtain as in~(b) 
\[ \cT_M 
= G.\Exp_m(i\Omega_\fa^C) 
= G.\Exp_m(i\Ad(K)\Omega_\fa^C).\] 
To see that $\cT_M$ is open, in view of Proposition~\ref{prop:4.9}, 
it remains to show that 
$\Omega_\fp^C = \Ad(K)\Omega_\fa^C$ 
is an open subset. To this end, we first note that 
$\Omega_\fa' := \cW\Omega_\fa^C$ is a finite union of open subsets of $\fa$, 
hence open in $\fa$, and also $\cW$-invariant. 
This implies that 
$\Omega_\fp^C = \Ad(K)\Omega_\fa^C = \Ad(K) \Omega_\fa'$ 
is open in~$\fp$ (\cite[Lemma~1.4]{Ne99}).
\end{prf}

\begin{prop}
For every element $p \in \Exp_{eH}(iC^\pi)$,  the stabilizer Lie algebra 
$\g_p$ is compactly embedded in $\g$. 
\end{prop}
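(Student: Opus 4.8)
The plan is to reduce to a point where $p$ lies in the orbit $M^r = G.m$ of a suitable Riemann element, and then invoke the identification of $\cT_M$ with the crown domain $\cT_{M^r}$, whose stabilizers are known to be compact. Concretely, write $p = g.\Exp_{eH}(iy)$ with $y \in C^\pi$; using the $\Inn_\g(\fh)$-invariance of $C^\pi$ and Lemma~\ref{lem:3.1}(b), after translating by an element of $H$ we may assume $y \in C^\pi_\fa = C^\pi \cap \fa$. By Theorem~\ref{thm:4.7}, $\Exp_{eH}(i y)$ lies in $\cT_M$, which (possibly after enlarging $C$ to $C^{\max}_\fq$, which only shrinks the stabilizer constraints we need) is the crown domain $\cT_{M^r}$ of the Riemannian symmetric space $M^r = G/G^\theta$.

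The heart of the argument is then the following: for a point in the crown domain $\cT_{M^r} = G.\Exp_m(i\Omega_\fp)$, the stabilizer in $G$ is conjugate to the stabilizer of a point of the form $\Exp_m(iy_0)$ with $y_0 \in \Omega_\fp$, and by Proposition~\ref{prop:4.9} (injectivity of the polar map $G \times_K \Omega_\fp \to \cT_{M^r}$) this stabilizer is contained in the stabilizer of $m$ together with the fixed-point condition $\Ad(k)y_0 = y_0$, $k \in K$. Hence $G_p$ is conjugate to $K^{y_0} := \{ k \in K \: \Ad(k)y_0 = y_0\}$, which is a closed subgroup of the compact group $K$; in particular it is compactly embedded in $G$. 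I would first recall that the exponential map $\Exp \: \fp \to M^r$ is a diffeomorphism (from (GP)), so $m$ itself can be taken as the base point $eK$ of $M^r \cong G/K$, and then $G_m = K$ is compact and $G_{\Exp_m(iy_0)} = K^{y_0}$ follows directly from the injectivity statement in Proposition~\ref{prop:4.9}.

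The main obstacle is the bookkeeping needed to pass from the point $\Exp_{eH}(iy)$, which naturally lives over the base point $eH_\C$ of $M_\C$, to a point of the form $\Exp_m(iy_0)$ over the shifted base point $m = \Exp_{eH_\C}(ix_r)$ of the crown realization; this is exactly the content of the shift identity \eqref{eq:exp-shift}, $\Exp_m(iy) = \Exp_{eH}(i(x_r + y))$, valid for $y \in \fa$, so one sets $y_0 := y - x_r \in \Omega_\fa \subseteq \Omega_\fp$ and uses that $\Ad(K)\Omega_\fa$ is the full polar region. A secondary point requiring care is that the hypothesis only gives $\Ad(H)C = C$ rather than $C = C^{\max}_\fq$; but since we are only asserting compact embeddedness of $\g_p$ and $\cT_M \subseteq \cT_{M^r}$ for the associated maximal cone (Theorem~\ref{thm:4.7}(b)–(c) applied to $C^{\max}_\fq$), the stabilizer of $p$ inside $G$ is unchanged whether we regard $p$ as a point of $\cT_M$ or of the larger $\cT_{M^r}$, so the compactness conclusion transfers verbatim.

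\begin{prf}
Write $p = g.\Exp_{eH_\C}(iy)$ with $g \in G$ and $y \in C^\pi$. By Lemma~\ref{lem:3.1}(b) the $\Inn_\g(\fh)$-orbit of $y$ meets $\fa$, and since $C^\pi$ is $\Inn_\g(\fh)$-invariant we may, after replacing $g$ by $gh$ for a suitable $h \in H$, assume $y \in C^\pi_\fa = C^\pi \cap \fa$. As $\g_p = \Ad(g)\g_{p_0}$ for $p_0 := \Exp_{eH_\C}(iy)$, it suffices to show that $\g_{p_0}$ is compactly embedded in $\g$.

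Fix a causal Riemann element $x_r \in C_\fa^\pi$ (Lemma~\ref{lem:4.8}) and set $m := \Exp_{eH_\C}(ix_r)$. By Theorem~\ref{thm:4.7}(a), $M^r := G.m \subeq M_\C$ is the Riemannian symmetric space $G/G^\theta$ for the Cartan involution $\theta = \tau\tau_{h_c}$, $h_c = \tfrac{2}{\pi}x_r$; in particular $G_m = G^\theta = K^r$ is compact. Passing to the maximal cone $C_\fq^{\rm max}$ containing $C$, Theorem~\ref{thm:4.7}(b) identifies $\cT_{M^r}$ with $G.\Exp_m(i\Omega_\fa^{\rm max})$ where $\Omega_\fa^{\rm max} = \{ y' \in \fa \: (\forall \alpha \in \Delta)\ |\alpha(y')| < \tfrac{\pi}{2}\}$. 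Since $C^\pi_\fa \subeq (C_\fq^{\rm max})^\pi_\fa$, by the shift identity \eqref{eq:exp-shift} we have $p_0 = \Exp_m(iy_0)$ with $y_0 := y - x_r \in \Omega_\fa^{\rm max}$.

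By the diffeomorphism property of the polar map of the crown (Proposition~\ref{prop:4.9}), the relation $g'.\Exp_m(iy_0) = \Exp_m(iy_0)$ for $g' \in G$ forces $g' \in G_m = K^r$ and $\Ad(g')y_0 = y_0$. Hence
\[ G_{p_0} = \{ k \in K^r \: \Ad(k)y_0 = y_0 \},\]
which is a closed subgroup of the compact group $K^r$. Therefore $\g_{p_0}$, hence $\g_p = \Ad(g)\g_{p_0}$, is compactly embedded in $\g$. Finally, the stabilizer of $p$ in $G$ is the same whether $p$ is viewed as a point of $\cT_M$ or of the larger domain $\cT_{M^r}$, so the conclusion holds for the original data $(\g,\tau,C)$.
\end{prf}
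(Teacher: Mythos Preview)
Your proof is correct but takes a genuinely different route from the paper's. The paper computes the stabilizer Lie algebra explicitly via Lemma~\ref{lem:2.7}(f): for $p = \Exp_{eH}(ix)$ with $x \in C^\pi$ one has $\g_p = \fh^{\sigma_x} \oplus \fq^{-\sigma_x}$, and then argues directly that each summand is compactly embedded (using that $\fh_0(x)$ stabilizes the interior point $x$ of a pointed cone, and that $i\fh^{-\sigma_x} = [x, i\fq^{-\sigma_x}]$ lands in the compactly embedded ideal $\fh_x$). You instead embed $\cT_M$ in the crown domain $\cT_{M^r}$ via Theorem~\ref{thm:4.7} and invoke the polar diffeomorphism of Proposition~\ref{prop:4.9} to conclude that $G_{p_0}$ sits inside the compact group $K^r = G^\theta$; compact embeddedness of $\g_{p_0}$ then follows immediately. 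Your argument is cleaner as a black-box application of the machinery already built, while the paper's explicit description of $\g_p$ is what makes the subsequent paragraph (showing that the conditions defining $C^\pi$ are sharp) possible. One cosmetic point: the hypothesis is $p \in \Exp_{eH}(iC^\pi)$, so there is no ambient $g \in G$ at the start; your reduction step is really an $H$-translation $p \mapsto h.p = \Exp_{eH}(i\Ad(h)y)$, and the stabilizer changes only by conjugation.
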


\begin{prf}
From Lemma~\ref{lem:2.7}(f), we obtain 
for $p = \Exp_{eH}(ix)$, $x \in C^\pi$, the stabilizer Lie algebra 
\[ \g_p
= \fh^{\sigma_x} \oplus \fq^{-\sigma_x}
= \fh_0(x)  \oplus \fq^{-\sigma_x}
\quad \mbox{ with } \quad 
\zeta_x(\g_p) = \fh_0(x) \oplus i\fh^{-\sigma_x}.\]  
This Lie algebra 
is compactly embedded because 
\begin{itemize}
\item $\fh_0(x) = \L(H_x)$ is compactly embedded since $x \in C^\circ$ 
and $\Ad(H)C = C$ (Lemma~\ref{lem:qbrackh}). 
\item $\fh_x = [\fq_x,\fq_x]$ satisfies $[\fh_0(x), \fh_x] \subeq \fh_x$, and 
\item $i\fh^{-\sigma_x} = [x, i\fq^{-\sigma_x}] \subeq \fh_x$, and $\fh_x$ is compactly 
embedded. \qedhere
\end{itemize}
\end{prf}

We conclude from the preceding proposition 
that the restrictions defining 
the open subset $C^\pi$ ensure that all stabilizer Lie algebras 
of points in $\Exp_{eH}(iC^\pi)$ are compactly embedded. 

For the adjoint group, this condition is equivalent to the compactness 
of the stabilizer group $G_p$ because $\Ad(G)$ is an algebraic group 
and any algebraic subgroup has at most finitely many connected components. 

If, conversely, $x \in C^\circ$ is an element for which there exists a compact 
root $\alpha$ with $\alpha(x) = \frac{\pi}{2}$, then 
$\fq^{-\sigma_x} \supeq (\1 -\tau)\g^\alpha$ consists of hyperbolic elements, 
so that $\g_p$ is not compactly embedded. 
If there exists a non-compact 
root $\alpha$ with $\alpha(x) = \pi$, then 
$\fh^{\sigma_x} \supeq (\1 +\tau)\g^\alpha$ contains hyperbolic elements, 
so that $\g_p$ is also not compactly embedded. 

In this sense the subset $C^\pi \subeq C^\circ$ is ``maximal'': One 
cannot enlarge the polyhedron $C^\pi_\fa$ within the open cone $C^\circ\cap \fa$, 
 without picking up points with non-compact stabilizers.

 \section{The connection between the wedge domains}
 \mlabel{sec:6}

The main result in the section is Theorem \ref{thm:6.1}, 
asserting that 
\[W_M(h) = W_M^{\rm KMS}(h) = \kappa_h^{-1}((\cT_M)^{\oline\tau_h}) . \]
The positivity domain $W_M^+(h)$ will be studied in
  the following section.

\subsection{The tube domain $\cT_M$}
\mlabel{subsec:5.2} 

In this subsection we consider the wedge domains as fixed point sets of involutions acting on related bigger domains. 
We start with the tube domain $\cT_M$.
On the cone $C^\circ$ we consider the function
$s \: C^\circ \to (0,\infty)$ from \eqref{eq:sa}.  Recall that 
\[ \cT_M = G.\Exp_{eH}(i C^\pi)\subeq M_\C, \qquad \text{where}\quad
C^\pi := \{ x \in C^\circ \: s(x) < \pi \}.\] 
We also consider the involution $\oline\tau_h =\sigma\tau_h : G_\C\to G_\C$, where
$\sigma$ is the conjugation on $G_\C$ with respect to $G$. 
\begin{thm} \mlabel{thm:4.5}
If {\rm(GP)} and {\rm(Eff)} are satisfied,  then 
\[ (\cT_M)^{\oline\tau_h} = (G^h)_e.\Exp_{eH}(i (C^\pi)^{-\tau_h}).
\]
In particular $ (\cT_M)^{\oline\tau_h}$ is connected.
\end{thm}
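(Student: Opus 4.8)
The plan is to analyze the fixed point set of $\oline\tau_h = \sigma\tau_h$ on $\cT_M = G.\Exp_{eH}(iC^\pi)$ by first understanding how $\oline\tau_h$ acts on a generic point $g.\Exp_{eH}(ix)$ with $g\in G$ and $x\in C^\pi$. Since $\sigma$ is complex conjugation with respect to $G$ and $x\in\fq\subeq\fg$ is real, we have $\sigma(\exp(ix)H_\C)=\exp(-ix)H_\C$, and $\tau_h$ acts $\C$-linearly; combining these, $\oline\tau_h(g.\Exp_{eH}(ix)) = \tau_h(g).\Exp_{eH}(i\,\tau_h(-x))$. Because $\tau_h(C)=-C$ (the modular condition), $\tau_h(-x)$ again lies in $C^\circ$, and since the function $s$ is built from root data compatible with $h$, one checks $s(\tau_h(-x)) = s(x)$, so $\tau_h(-x)\in C^\pi$ as well. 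Thus $\oline\tau_h$ maps $\cT_M$ into itself (this is implicitly used) and the fixed-point equation becomes a condition on the pair $(g,x)$.

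Next I would use the polar-type structure: by Theorem~\ref{thm:4.7} and Proposition~\ref{prop:4.9} (after identifying $\cT_M$ with a crown domain when $C$ is maximal, and handling the general $C\subeq C_\fq^{\rm max}$ case by the same polar map), every point of $\cT_M$ has a \emph{unique} representation coming from the diffeomorphism $G\times_K\Omega_\fp\to\cT_{M^r}$, or more directly in terms of the stabilizers computed in the proposition at the end of Section~\ref{sec:5} (stabilizer Lie algebra $\fg_p=\fh_0(x)\oplus\fq^{-\sigma_x}$ is compactly embedded, so $G_p$ is a known compact-type subgroup). Uniqueness of the polar decomposition is the engine: from $\oline\tau_h(g.\Exp_{eH}(ix)) = g.\Exp_{eH}(ix)$ and the computation above, uniqueness forces $\tau_h(-x)$ and $x$ to define the same point, hence $\tau_h(-x)=x$, i.e. $x\in\fq^{-\tau_h}$ (equivalently $x\in(C^\pi)^{-\tau_h} = C^\pi\cap\fq^{-\tau_h}$), and simultaneously forces $\tau_h(g)$ to lie in $g\,G_p$. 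The second condition should, after absorbing the stabilizer (using that $\tau_h$ fixes $x$, it normalizes $G_p$ and acts on it, and one can adjust $g$ within its coset), reduce to $g\in G^{\tau_h}$ up to the $G_p$-ambiguity; since $G_p$ is compact and $\tau_h$-stable, a standard averaging/fixed-point argument on the compact group gives a representative $g\in (G^{\tau_h})_e = (G^h)_e$ (the last equality because $h$ is an Euler element, so $G^{\tau_h}$ and $G^h$ have the same identity component — this is part of the $(h,\tau)$-structure from \cite{MN21}). Conversely, any point $(G^h)_e.\Exp_{eH}(ix)$ with $x\in(C^\pi)^{-\tau_h}$ is visibly $\oline\tau_h$-fixed, giving the reverse inclusion.

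Finally, connectedness of $(\cT_M)^{\oline\tau_h}$ follows immediately once the equality is established: $(G^h)_e$ is connected by definition, $(C^\pi)^{-\tau_h} = C^\pi\cap\fq^{-\tau_h}$ is an open subset of the vector space $\fq^{-\tau_h}=\fq_{+1}(h)\oplus\fq_{-1}(h)$, and it is star-shaped (or at least connected) because $C^\pi\cap\fa$ is the interior of a convex polyhedron and the intersection with the $(-\tau_h)$-eigenspace of this $\cW_k$-invariant set is again connected — more robustly, $(C^\pi)^{-\tau_h}$ contains the causal Riemann element's $(-\tau_h)$-part $z_r$ from Lemma~\ref{lem:4.8}(c) and one can connect any point to it within the set. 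Hence the image of the connected set $(G^h)_e\times(C^\pi)^{-\tau_h}$ under the continuous orbit map is connected.

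\textbf{Main obstacle.} The delicate point is the bookkeeping with the stabilizer group $G_p$: going from "$\tau_h(g)\in g\,G_p$" to "$g$ can be chosen in $(G^h)_e$" requires knowing that $\tau_h$ acts on the compact group $G_p$ with the right fixed-point behaviour (so that the twisted coset $gG_p$ meets $G^{\tau_h}$), and that the resulting representative lies in the \emph{identity} component. This is where one must invoke the precise description of $G_p$ (compactly embedded, connected since $G$ satisfies (GP)) together with the structural results on the pair $(h,\tau_h)$ — in particular that on the Lie-algebra level $\fg^{\tau_h}\cap\fg_p$ is "large enough" and $\tau_h$-reductive — rather than any soft argument. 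A secondary technical check is verifying $s\circ\tau_h = s$ on $C^\circ$, which amounts to observing that $\tau_h$ permutes $\Delta_k$ and $\Delta_p^+\leftrightarrow\Delta_p^-$ while $-\tau_h$ preserves $C$, so the max defining $s_\fa$ is unchanged.
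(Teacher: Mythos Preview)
Your overall strategy---use a polar-type parametrization of $\cT_M$ and exploit uniqueness to read off the fixed-point conditions---is the right one, and you correctly identify the crown identification $\cT_M\cong\cT_{M^r}$ as the relevant tool. But the argument you actually sketch works at the wrong base point, and the ``main obstacle'' you flag is a genuine obstruction for that version.

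The gap is this: you parametrize points as $g.\Exp_{eH}(ix)$ with $x\in C^\pi$, where the stabilizer $G_p$ depends on $x$ and is only known to be compactly embedded. The passage from ``$\tau_h(g)\in g\,G_p$'' to ``$g$ may be taken in $(G^h)_e$'' via an averaging argument on $G_p$ is not straightforward, because you would need a cocycle-type vanishing that you have not justified, and moreover the stabilizers $G_p$ for different $x\in C^\pi$ are not the same group. The paper avoids this entirely by shifting the base point: choose a causal Riemann element $x_r$ with $\tau_h(x_r)=-x_r$ (Lemma~\ref{lem:4.8}(c)), set $m:=\Exp_{eH}(ix_r)$, and use the crown parametrization $G\times_K\Omega_\fp\to\cT_{M^r}=\cT_M$ from Proposition~\ref{prop:4.9}. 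Now the stabilizer is the \emph{fixed} compact group $K$, the parametrization is a global diffeomorphism, and since $\oline\tau_h(m)=m$ the action in these coordinates is simply $g.\Exp_m(iy)\mapsto \tau_h(g).\Exp_m(-i\tau_h(y))$. The fixed-point condition becomes $\tau_h(g)\in gK$, and this is resolved not by averaging on $K$ but by the polar decomposition of the \emph{group} $G$ (property~(GP)): writing $g=(\exp z)k$ with $z\in\fp$, $k\in K$, and using $\theta\tau_h=\tau_h\theta$, one gets $\tau_h(z)=z$ directly. This yields $(\cT_{M^r})^{\oline\tau_h}=\exp(\fp^{\tau_h}).\Exp_m(i\Omega_\fp^{-\tau_h})$, which is then translated back to the $eH$-picture via $x_r+\Omega_\fa^{-\tau_h}=(C_\fa^\pi)^{-\tau_h}$ and a maximality check for $\fa^{-\tau_h}$ in $\fp^{-\tau_h}$.

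A second point: your reduction of the general cone $C$ to $C_\fq^{\rm max}$ is correct in spirit, but the crucial fact making it work is Proposition~\ref{prop:reductiontominmax}, namely $(C^\pi)^{-\tau_h}=((C_\fq^{\rm max})^\pi)^{-\tau_h}$; without this the inclusion $\cT_M(C)^{\oline\tau_h}\subeq\cT_M(C_\fq^{\rm max})^{\oline\tau_h}$ would not close up. Your connectedness argument is unnecessary once the equality is written as $\exp(\fp^{\tau_h}).\Exp_m(i\Omega_\fp^{-\tau_h})$, since both factors are convex.
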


\begin{prf} {\bf Case 1:} We assume first that  
$C = C^{\rm max}_\fq$ for a $\fp$-adapted positive system. 
We use Lemma~\ref{lem:4.8} to find 
a causal Riemann element $x_r \in C_\fa^\pi$ with 
$\tau_h(x_r) = - x_r$. 
As $\cT_M = \cT_{M^r}$ by Theorem~\ref{thm:4.7}, 
we first identify the fixed points 
of $\oline\tau_h$ in the crown domain 
\[ \cT_{M^r} = G.\Exp_m(i\Omega_\fp) \quad \mbox{ for } \quad 
m = \Exp_{eH}(ix_r).\] 
The relation $\tau_h(x_r) = -x_r$ entails $\oline\tau_h(m) = m$, 
so that the action of $\oline\tau_h$ on $\cT_{M^r}$ 
is given in polar coordinates on $M^r$ by 
\begin{equation}
  \label{eq:polinv}
\oline\tau_h(g.\Exp_m(iy)) 
= \tau_h(g).\Exp_m(-i \tau_h(y)).
\end{equation} 

Let $g.\Exp_m(iy) \in \cT_{M^r}$ be fixed under $\oline\tau_h$. 
Then Proposition~\ref{prop:4.9} implies that 
$\tau_h (g) \in g K$. On $\fg$ the involution $\tau_h$ commutes with 
$\theta$ because $\theta(h) = - h$ and $\tau_h = \tau_{-h}$. In terms of the polar 
decomposition $g= (\exp z) k $, $k \in K$, $z \in \fp$, we thus have 
\[ \tau_h(g) = \exp(\tau_h(z)) \tau_h(k)  \in g K = \exp(z) K.\] 
The bijectivity of the polar decompositions of $G$ thus shows that 
$\tau_h(z) = z$. Hence $k.\Exp_m(iy) = \Exp_m(i \Ad(k)y)$ is also fixed 
under $\tau_h$, and since $\Exp_m$ is injective on $i\Omega_\fp$, 
we further obtain $\Ad(k)y \in \fp^{-\tau_h}$. This shows that 
\begin{equation}
  \label{eq:crownfixpo}
 (\cT_{M^r})^{\oline\tau_h} 
= \exp(\fp^{\tau_h}).\Exp_m(i\Omega_\fp^{-\tau_h}).
\end{equation}
It follows in particular that the submanifold of $\oline\tau_h$-fixed 
points is connected. 

Next we show that 
\begin{equation}
  \label{eq:conjk}
\Omega_\fp^{-\tau_h} 
\subeq \Inn(\fk^{\tau_h}).(\Omega_\fp^{-\tau_h} \cap \fa^{-\tau_h}) 
=  \Inn(\fk^{\tau_h}).(\Omega_\fa^{-\tau_h}). 
\end{equation}
This follows if $\fa^{-\tau_h}$, which is contained in 
$\fq^{-\tau_h}$, is also maximal abelian in $\fp^{-\tau_h}$. 
As ${x_r} \in \fa^{-\tau_h}$, and $\tau_{h_c}$ commutes with $\theta = \tau\tau_{h_c}$ 
and $\tau_h$, we have 
\[ \fp^{-\tau_h} 
= \underbrace{\fp^{-\tau_h, \tau_{h_c}}}_{\supeq \fa^{-\tau_h}} 
\oplus \fp^{-\tau_h, -\tau_{h_c}}
\quad \mbox{ and } \quad 
 \fz_{\fp^{-\tau_h}}(\fa^{-\tau_h}) \subeq \fz_{\fp^{-\tau_h}}({x_r}) \subeq 
\fp^{\tau_{h_c}}.\]
Therefore $\fa^{-\tau_h}$ is maximal abelian in $\fp^{-\tau_h}$. 
This proves \eqref{eq:conjk}, so that \eqref{eq:crownfixpo} yields 
\begin{align*}
(\cT_M)^{\oline\tau_h} 
&=  (\cT_{M^r})^{\oline\tau_h} 
= (G^h)_e.\Exp_m(i\Omega_\fa^{-\tau_h})
= (G^h)_e.\Exp_{eH}(i( {x_r}+ \Omega_\fa^{-\tau_h}))\\
&= (G^h)_e.\Exp_{eH}(i (C_\fa^\pi)^{-\tau_h})
= (G^h)_e.\Exp_{eH}(i (C^\pi)^{-\tau_h}),
\end{align*}
where we have used that $\fa^{-\tau_h}$ is maximal abelian 
in $\fq^{-\tau_h}$ (which is proved in the same way as for~$\fp$), 
to obtain 
\begin{equation}
  \label{eq:28}
(C^\pi)^{-\tau_h} = \Inn(\fh^{\tau_h}).(C_\fa^\pi)^{-\tau_h}.
\end{equation}

\nin {\bf Case 2:} In the general case 
we use the inclusion $C \subeq C^{\rm max}_\fq$
(Theorem~\ref{thm:3.6}) and 
the fact that 
\[ C^{-\tau_h} = (C^{\rm max}_\fq)^{-\tau_h}\] 
(Proposition~\ref{prop:reductiontominmax}), 
which implies in particular that 
\[ (C^\pi)^{-\tau_h} = ((C^{\rm max}_\fq)^{\pi})^{-\tau_h}.\]  
We write $\cT_M(C) \subeq \cT_M(C_\fq^{\rm max})$ for the 
two complex domains associated to $M$ and the two cones 
$C \subeq C_\fq^{\rm max}$. 
From Case 1 we now derive that 
\begin{align*}
  (G^h)_e.\Exp_{eH}(i (C^\pi)^{-\tau_h}) 
&\subeq \cT_{M}(C)^{\oline\tau_h} 
\subeq  \cT_{M}(C^{\rm max}_\fq)^{\oline\tau_h} 
= (G^h)_e.\Exp_{eH}(i ((C^{\rm max}_\fq)^\pi)^{-\tau_h}) \\
&=  (G^h)_e.\Exp_{eH}(i (C^\pi)^{-\tau_h}).
\end{align*}
This proves the desired equality. 
\end{prf}

\subsection{The wedge domain $W_M^{\rm KMS}(h)$ and $W_M(h)$}
We write  $(\alpha_z)_{z \in \C}$ for  the holomorphic extension 
of the modular flow to the complex symmetric space~$M_\C$.
 Recall the wedge domains
\[ W_M^{\rm KMS}(h) 
:= \{ m \in M \: (\forall z \in \cS_\pi)\ \alpha_z(m) \in \cT_M\} 
= \{ m \in M \: (\forall t\in (0,\pi))\ \alpha_{it}(m) \in \cT_M\},\] 
and
\[ W_M(h) 
:= (G^{\tau_h})_e.\Exp_{eH}( (C_+ + C_-)^\pi) 
= (G^h)_e.\Exp_{eH}( (C_+ + C_-)^\pi).\] 
 The domain $W_M^{\rm KMS} (h)$
is called the {\it KMS-wedge domain} and 
$W_M(h)$ is called the {\it wedge domain of polar type 
in $M$} corresponding to the Euler element $h$. 

\begin{rem}\mlabel{rem:Gh} In \eqref{eq:wmh} 
we used  that the identity component $  (G^h)_e$ of the centralizer 
$G^h$ of $h$,  which we will
from now on denote by $G^h_e$, coincides  with $(G^{\tau_h})_e$, the connected 
subgroup with Lie algebra $\g^{\tau_h} = \g_0(h)$. 
If $G$ is $1$-connected, then $G^{\tau_h}$ is connected 
(\cite[Thm.~IV.3.4]{Lo69}), hence 
coincides with the connected Lie group $G^h_e$ 
with Lie algebra $\g_0(h)$. 
We refer to Example~\ref{ex:b.1} in the appendix for more details. 
\end{rem}

\begin{lem} \mlabel{cor:4.12} 
$W_M^{\rm KMS}(h) \subeq \kappa_h^{-1}((\cT_M)^{\oline\tau_h}) =
 W_M(h).$ 
\end{lem}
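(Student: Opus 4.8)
The statement to prove is that $W_M^{\rm KMS}(h) \subeq \kappa_h^{-1}((\cT_M)^{\oline\tau_h}) = W_M(h)$. The plan is to establish this as two separate facts: first, the inclusion $W_M^{\rm KMS}(h) \subeq \kappa_h^{-1}((\cT_M)^{\oline\tau_h})$, and second, the identity $\kappa_h^{-1}((\cT_M)^{\oline\tau_h}) = W_M(h)$. The second is the one I would attack via the explicit description of the fixed point set already given in Theorem~\ref{thm:4.5}, while the first exploits the definition of the KMS wedge as a holomorphy/orbit condition together with the order-four automorphism $\kappa_h$ intertwining the flow $\alpha_z$ with the translation $z \mapsto z - \tfrac{\pi i}{2}$ (indeed $\kappa_h = e^{-\frac{\pi i}{2}\ad h}$ is precisely $\alpha_{-\pi i/2}$ extended holomorphically).

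\textbf{The identity $\kappa_h^{-1}((\cT_M)^{\oline\tau_h}) = W_M(h)$.} By Theorem~\ref{thm:4.5} we have $(\cT_M)^{\oline\tau_h} = (G^h)_e.\Exp_{eH}(i(C^\pi)^{-\tau_h})$. Since $\kappa_h = e^{-\frac{\pi i}{2}\ad h}$ commutes with $(G^h)_e = (G^{\tau_h})_e$ (it is a function of $\ad h$, which centralizes $\g_0(h)$), applying $\kappa_h^{-1}$ to this set reduces to computing $\kappa_h^{-1}.\Exp_{eH}(i x)$ for $x \in (C^\pi)^{-\tau_h} = C^\pi \cap \fq^{-\tau_h}$. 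Now $\fq^{-\tau_h} = \fq_{+1}(h) \oplus \fq_{-1}(h)$ by \eqref{eq:p10}, so write $x = x_+ - x_-$ with $x_\pm \in \fq_{\pm 1}(h)$; by \eqref{eq:kapparelx}, $\kappa_h(x_+ - x_-) = i(x_+ + x_-)$, equivalently $\kappa_h^{-1}(i(x_+ - x_-)) = x_+ + x_-$. Hence $\kappa_h^{-1}.\Exp_{eH}(i(x_+ - x_-)) = \Exp_{eH}(x_+ + x_-)$, and the condition $s(x_+ - x_-) < \pi$ defining $(C^\pi)^{-\tau_h}$ is exactly the condition $s(x_+ - x_-) < \pi$ appearing in the definition $(C_+ + C_-)^\pi = \{x_+ + x_- \in C_+^\circ + C_-^\circ : s(x_+ - x_-) < \pi\}$. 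The remaining point is that $x \in C^\circ$ (equivalently $x_+ \in C_+^\circ$, $x_- \in C_-^\circ$): this follows from Proposition~\ref{prop:2.18} together with \eqref{eq:coneproj}, which identifies $C \cap \fq^{-\tau_h}$ with $C_+ - C_-$ and shows the interiors match up under the projections. Thus $\kappa_h^{-1}((\cT_M)^{\oline\tau_h}) = (G^h)_e.\Exp_{eH}((C_+ + C_-)^\pi) = W_M(h)$.

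\textbf{The inclusion $W_M^{\rm KMS}(h) \subeq \kappa_h^{-1}((\cT_M)^{\oline\tau_h})$.} Let $m \in W_M^{\rm KMS}(h)$, so $\alpha_z(m) \in \cT_M$ for all $z \in \cS_\pi$, in particular for $z = \tfrac{\pi i}{2}$; thus $\alpha_{\pi i/2}(m) \in \cT_M$. Since $\alpha_{\pi i/2}$ is the holomorphic flow evaluated at $\tfrac{\pi i}{2}$, it is implemented by $\exp(\tfrac{\pi i}{2} h)$ acting on $M_\C$, i.e.\ by $\kappa_h^{-1}$. So it remains to see that $\alpha_{\pi i/2}(m) = \kappa_h^{-1}(m)$ lies in the $\oline\tau_h$-fixed point set of $\cT_M$. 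Here one uses that $m \in M = (M_\C)^{\sigma_M}_{eG^c}$, i.e.\ $\sigma(m) = m$ (the complex conjugation with respect to $G$ fixes real points), combined with the fact that the flow $\alpha_z$ satisfies $\sigma \circ \alpha_z = \alpha_{\bar z} \circ \sigma$ (the generator $h \in \g$ is real) and $\tau_h \circ \alpha_z = \alpha_{-z} \circ \tau_h$. Applying $\oline\tau_h = \sigma\tau_h$ to $\alpha_{\pi i/2}(m)$ gives $\sigma\tau_h \alpha_{\pi i/2}(m) = \sigma \alpha_{-\pi i/2}\tau_h(m) = \alpha_{\pi i/2}\sigma\tau_h(m) = \alpha_{\pi i/2}(\oline\tau_h(m))$, and since $m$ is a real point one checks $\oline\tau_h(m) = m$ using that $\tau_h$ preserves $M$ (as $h \in \fh$, so $\tau_h(H) = H$ and $\tau_h$ descends to $M = G/H$, fixing $eH$ and hence — by connectedness of the relevant orbit together with the structure of $M$ — being compatible with $\sigma$ on $M$). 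Therefore $\kappa_h^{-1}(m) = \alpha_{\pi i/2}(m)$ is a $\oline\tau_h$-fixed point of $\cT_M$, giving $m \in \kappa_h^{-1}((\cT_M)^{\oline\tau_h})$.

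\textbf{Main obstacle.} The delicate step is the bookkeeping in the inclusion part: one must be careful that $\alpha_{\pi i/2}(m)$ really equals $\kappa_h^{-1}$ applied to the base point realization of $m$ in $M_\C$, and that the antiholomorphic involution $\oline\tau_h$ genuinely fixes it — this requires matching the action of $\tau_h$ on $M$ (descending from $\g^\tau \ni h$) with its holomorphic extension to $M_\C$ and verifying $\tau_h(m) = m$ for $m \in M$, which hinges on $\tau_h$ fixing $eH$ and commuting appropriately with $\sigma_M$. Everything else is a direct unwinding of the definitions using \eqref{eq:kapparelx}, Proposition~\ref{prop:2.18}, and Theorem~\ref{thm:4.5}.
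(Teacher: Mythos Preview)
Your overall strategy matches the paper's: show the inclusion directly from the KMS condition at $z=\tfrac{\pi i}{2}$, and derive the identity from Theorem~\ref{thm:4.5} together with the Cayley-type relation \eqref{eq:kapparelx}. The identity part is fine and essentially identical to the paper's last displayed chain.

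The inclusion part, however, contains two incorrect intermediate claims that happen to cancel. First, $\tau_h$ and $\alpha_z$ \emph{commute} (both are generated by $\ad h$), so $\tau_h\circ\alpha_z=\alpha_z\circ\tau_h$, not $\alpha_{-z}\circ\tau_h$ as you write. Second, for a real point $m\in M$ one has $\oline\tau_h(m)=\sigma(\tau_h(m))=\tau_h(m)$, not $m$; the involution $\tau_h$ acts nontrivially on $M$ away from its fixed point set. With the correct relations the computation reads
\[
\oline\tau_h(\alpha_{\pi i/2}(m))
=\alpha_{-\pi i/2}(\oline\tau_h(m))
=\alpha_{-\pi i/2}(\tau_h(m))
=\alpha_{-\pi i/2}(\alpha_{\pi i}(m))
=\alpha_{\pi i/2}(m),
\]
using that $\oline\tau_h$ is antiholomorphic (so conjugates $\alpha_z$ to $\alpha_{\bar z}$) and that $\tau_h=\alpha_{\pi i}$ on $M_\C$ because $e^{\pi i h}\in H_\C$. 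This is exactly what the paper obtains by analytic continuation of the real identity $\alpha^p(\pi i+t)=\tau_h(\alpha^p(t))$ to $\alpha^p(\pi i+\bar z)=\oline\tau_h(\alpha^p(z))$ and then specializing to $z=\tfrac{\pi i}{2}$. So your idea is right, but you should correct the two commutation claims; as written, the argument proves the right thing for the wrong reasons.
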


\begin{prf} Let $p \in W_M^{\rm KMS}(h)$, i.e., 
for every $z \in \cS_\pi$, we have $\alpha_z(p) \in \cT_M$. 
As $h \in \fh$, the base point 
$eH$ in $M = G/H$ is $\alpha$-fixed. Then 
$\tau_h = e^{\pi i \ad h}$ implies that 
\[ \alpha_{\pi i}(p) = \tau_h(p) \quad \mbox{ for } \quad 
p \in M.\] 
Since $\tau_h$ commutes with $\alpha$, this implies that 
\[ \alpha^p(\pi i + t) = \tau_h(\alpha^p(t)) \quad \mbox{ for } \quad 
t\in \R,\] 
and by analytic extension to $\cS_\pi$: 
\[ \alpha^p(\pi i + \oline z) = \oline\tau_h(\alpha^p(z)) 
\quad \mbox{ for } \quad 
z\in \oline{\cS_\pi}.\] 
It follows in particular that 
\begin{equation}
  \label{eq:middlelinefix}
\kappa_h(p) = \alpha^p\Big(\frac{\pi i}{2}\Big) \in (\cT_M)^{\oline\tau_h},
\end{equation}
With Theorem~\ref{thm:4.5} this shows that 
\begin{align*}
W^{\rm KMS}_M(h) 
&\subeq  \kappa_h^{-1}\big((\cT_M)^{\oline\tau_h}\big)
\ {\buildrel \ref{thm:4.5} \over =}\ \kappa_h^{-1}\big((G^{\tau_h})_e.\Exp_{eH}(i (C^\pi)^{-\tau_h})\big) \\
&= (G^{\tau_h})_e.\Exp_{eH}(i \kappa_h^{-1}(C_+ - C_-)^\pi\big) \\
&= (G^{\tau_h})_e.\Exp_{eH}((C_+ + C_-)^\pi)  = W_M(h).
\qedhere\end{align*}
\end{prf}

The following lemma is not obvious because the differential 
of the polar map defining $W_M(h)$ is not everywhere surjective. 

\begin{lem} \mlabel{cor:polwedgeopen}
$W_M(h)$ is an open subset of $M$. 
\end{lem}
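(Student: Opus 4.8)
The plan is to show that $W_M(h)$ is open by exhibiting it as the union of a family of open sets obtained by moving the base-point stratum around with the $G^h_e$-action, exploiting the fact that even though the polar map $(g,x)\mapsto g.\Exp_{eH}(x)$ need not have surjective differential everywhere, it does have surjective differential at points where the relevant $\ad$-operator avoids the bad spectral values. Concretely, I would first use Lemma~\ref{cor:4.12} together with Theorem~\ref{thm:4.5}, which give $W_M(h) = \kappa_h^{-1}\big((\cT_M)^{\oline\tau_h}\big)$ and describe $(\cT_M)^{\oline\tau_h}$ as $(G^h)_e.\Exp_{eH}(i(C^\pi)^{-\tau_h})$. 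Since $\kappa_h$ is an automorphism of $G_\C$ (hence of $M_\C$), it suffices to prove that $(\cT_M)^{\oline\tau_h}$ is open \emph{as a subset of the real submanifold} $(M_\C)^{\oline\tau_h}$ — equivalently that $W_M(h)$ is open in $M$, since $\kappa_h$ carries $M=(M_\C)^{\sigma}$ onto the fixed manifold of $\oline\tau_h = \kappa_h\sigma\kappa_h^{-1}$ appropriately; one must track that $\kappa_h^{-1}$ sends the $\oline\tau_h$-fixed submanifold of $M_\C$ into $M$, which is exactly the content of \eqref{eq:wmh}.

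Then I would work directly with $W_M(h) = G^h_e.\Exp_{eH}\big((C_++C_-)^\pi\big)$. The key point is to compute the differential of the map $\Psi\colon G^h_e \times (C_++C_-)^\pi \to M$, $(g,x)\mapsto g.\Exp_{eH}(x)$, at an arbitrary point $(g,x)$. By $G^h_e$-equivariance it is enough to treat $(e,x)$. Here $x = x_+ + x_-$ with $x_\pm \in C_\pm^\circ$, and the differential of $\Exp_{eH}$ at $x$ along $\fq$ is given (as in Lemma~\ref{lem:2.7}, invoked earlier for the crown domain and for Theorem~\ref{thm:4.7}) by an operator built from $\cosh(\ad x)$ and $\ad x$ applied to $\fq$ and $\fh$ respectively, which is invertible precisely when $\Spec(\ad x)$ avoids $\pi i \Z_{\neq 0}$ on the relevant subspaces. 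The tangent space to $W_M(h)$ at $\Exp_{eH}(x)$ then receives contributions from $\fg_0(h) = \fg^{\tau_h}$ (the $G^h_e$-directions) together with the $\fq^{-\tau_h} = \fq_{+1}(h)\oplus\fq_{-1}(h)$-directions coming from varying $x$. The claim is that these together span all of $T_{\Exp_{eH}(x)}M$: the missing directions $\fq_0(h)$ are recovered because, after conjugating by $\zeta_x$, the element $x = x_+ + x_-$ is (via $\kappa_h$) an elliptic element whose centralizer structure forces $[x,\fq_0(h)]$ plus $\fq^{-\tau_h}$-directions to fill in; one should phrase this using the identity $\kappa_h(C_+-C_-) = i(C_++C_-)$ from \eqref{eq:kapparelx} to reduce to the already-treated hyperbolic situation inside the tube domain $\cT_M$, where surjectivity of the analogous differential is precisely Theorem~\ref{thm:4.7}(c) combined with Proposition~\ref{prop:4.9}.

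The cleanest route, and the one I would actually write, is therefore: $W_M(h) = \kappa_h^{-1}\big((\cT_M)^{\oline\tau_h}\big)$ by Lemma~\ref{cor:4.12}; $(\cT_M)^{\oline\tau_h}$ is the fixed-point set of the antiholomorphic involution $\oline\tau_h$ acting on the complex manifold $\cT_M$, which is open in $M_\C$ by Theorem~\ref{thm:4.7}(c); the fixed-point set of an antiholomorphic involution on a complex manifold is a totally real submanifold of half the real dimension, and it is a closed submanifold of $\cT_M$, hence a (locally closed) submanifold of $M_\C$. Finally $\kappa_h^{-1}$ maps this fixed submanifold diffeomorphically onto a submanifold of $M$ of dimension $\dim_\R \cT_M / 1 = \dim M$ — here one checks the real dimension count: $\dim_\C \cT_M = \dim_\C M_\C = \dim_\R M$, so $\dim_\R (\cT_M)^{\oline\tau_h} = \dim_\R M$, and since this submanifold is carried into $M$ by $\kappa_h^{-1}$ (equation \eqref{eq:wmh}), it is an open submanifold of $M$.

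The main obstacle is the dimension bookkeeping: one must verify carefully that the totally real fixed submanifold $(\cT_M)^{\oline\tau_h}$ has real dimension equal to $\dim_\R M$ (not smaller), which requires knowing that $\oline\tau_h$ acts on $\cT_M$ as a genuine antiholomorphic involution whose fixed set is ``full'' — this is where the explicit description $(\cT_M)^{\oline\tau_h} = (G^h)_e.\Exp_{eH}(i(C^\pi)^{-\tau_h})$ from Theorem~\ref{thm:4.5} is essential, since it lets one read off the dimension as $\dim \fg^{\tau_h} - \dim(\fg^{\tau_h}\cap\fh) + \dim(\fq^{-\tau_h})^\circ$, and then match it against $\dim M = \dim\fq = \dim\fq^{\tau_h} + \dim\fq^{-\tau_h}$ using $\dim\fq^{\tau_h} = \dim\fg^{\tau_h} - \dim\fh^{\tau_h}$. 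Once the dimensions agree, openness is automatic because an injective immersion between manifolds of equal dimension is an open map, and $\Exp_{eH}$ restricted to the relevant locus is an immersion by the spectral estimates ($s(x_+-x_-)<\pi$ keeping $\ad$-eigenvalues in the safe range, exactly as in the proof of Theorem~\ref{thm:4.7}).
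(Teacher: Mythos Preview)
Your ``cleanest route'' is essentially the paper's argument: use $W_M(h)=\kappa_h^{-1}\big((\cT_M)^{\oline\tau_h}\big)$ from Lemma~\ref{cor:4.12}, together with the openness of $\cT_M$ in $M_\C$ (Theorem~\ref{thm:4.7}(c)), and the fact that $\kappa_h$ intertwines the two antiholomorphic involutions.

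However, your final step is more complicated than it needs to be. You already observe (in your first paragraph) that $\oline\tau_h = \kappa_h\,\sigma_G\,\kappa_h^{-1}$ on $G_\C$; the paper simply writes this out as
\[
\kappa_h^{-1}\,\oline\tau_h\,\kappa_h
= \kappa_h^{-1}\,\sigma_G\,\tau_h\,\kappa_h
= \sigma_G\,\kappa_h\,\tau_h\,\kappa_h
= \sigma_G\,\kappa_h^{4}
= \sigma_G,
\]
which immediately gives $\kappa_h^{-1}\big(M_\C^{\oline\tau_h}\big)=M_\C^{\sigma_G}$. Now $(\cT_M)^{\oline\tau_h}=\cT_M\cap M_\C^{\oline\tau_h}$ is open in $M_\C^{\oline\tau_h}$ simply because $\cT_M$ is open in $M_\C$; applying the diffeomorphism $\kappa_h^{-1}$ yields that $W_M(h)$ is open in $M_\C^{\sigma_G}$, hence in its connected component~$M$. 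No dimension bookkeeping, no immersion argument, no spectral estimates for $\Exp_{eH}$ are needed here --- the whole ``main obstacle'' paragraph and the preceding differential computation can be dropped.
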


\begin{prf} On $M_\C$, we consider the antiholomorphic 
involution $\oline\tau_h$. We also note that the 
antiholomorphic involution $\sigma_G$ 
of $G_\C$ leaves $(G_\C)^\tau$ invariant, hence induces 
an antiholomorphic involution on $M_\C$ whose totally real fixed point 
manifold  contains $M$ as a connected component. 
Next we recall that, on $G_\C$, we have  
\[ \kappa_h^{-1} \oline\tau_h \kappa_h
= \kappa_h^{-1} \sigma_G \tau_h \kappa_h
=  \sigma_G \kappa_h\tau_h \kappa_h
=  \sigma_G \kappa_h^4 = \sigma_G.\] 
This implies that 
\begin{equation}
  \label{eq:zeta-fixedrel}
\kappa_h^{-1}(M_\C^{\oline\tau_h}) = M_\C^{\sigma_G}.
\end{equation}
As $M$ is a connected component of the totally real submanifold 
$M_\C^{\sigma_G}$ and 
$\kappa_h(W_M(h)) = (\cT_M)^{\oline\tau_h}$ 
by Theorem~\ref{thm:4.5}, the assertion follows from the fact that 
\[ (\cT_M)^{\oline\tau_h} 
= \cT_M \cap M_\C^{\tau_h} \] 
is open in $M_\C^{\tau_h}$ because $\cT_M$ is open in $M_\C$. 
\end{prf}

\begin{thm} \mlabel{thm:6.1}  
 Let  $(\g, \tau, C)$ be a ncc reductive symmetric
    Lie algebra and let  $G$ be a corresponding
    connected Lie group satisfying 
    {\rm(GP)} and {\rm(Eff)}, and $H = G \cap G^c$.
  For $M = G/ H$, we then have 
\[W_M(h) = W_M^{\rm KMS}(h) = \kappa_h^{-1}((\cT_M)^{\oline\tau_h}) . \]
\end{thm}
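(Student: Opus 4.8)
The plan is to prove the chain of equalities by combining the three ingredients already assembled in this section. From Lemma~\ref{cor:4.12} we already have the inclusions
\[
W_M^{\rm KMS}(h) \subeq \kappa_h^{-1}\big((\cT_M)^{\oline\tau_h}\big) = W_M(h),
\]
where the equality on the right is Theorem~\ref{thm:4.5} (applied after transport by $\kappa_h$, using $\kappa_h^{-1}(C_+-C_-)^\pi = (C_++C_-)^\pi$, which is \eqref{eq:kapparelx} from Proposition~\ref{prop:2.18}). So the only thing left to establish is the reverse inclusion $W_M(h) \subeq W_M^{\rm KMS}(h)$; once that is in hand, all three sets coincide and the theorem follows.

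First I would reduce the problem to a statement about a single point. A generic point of $W_M(h) = (G^h)_e.\Exp_{eH}((C_++C_-)^\pi)$ has the form $g.\Exp_{eH}(\xi)$ with $g \in (G^h)_e$ and $\xi = \xi_+ + \xi_- \in (C_++C_-)^\pi$, i.e. $\xi_\pm \in C_\pm^\circ$ with $s(\xi_+-\xi_-) < \pi$. Since $g$ centralizes $h$, it commutes with the modular flow $\alpha$ and with its holomorphic extension $\alpha_z$, and it maps $\cT_M = G.\Exp_{eH}(iC^\pi)$ into itself; hence it suffices to check $\alpha_{it}\big(\Exp_{eH}(\xi)\big) \in \cT_M$ for all $t \in (0,\pi)$ and all such $\xi$. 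The key computation is that $\alpha_{it}$ acts on $\Exp_{eH}(\xi_++\xi_-)$ by scaling the $\fq_{\pm1}(h)$-components: since $\ad h$ acts as $\pm 1$ on $\fq_{\pm1}(h)$, one gets (formally as for the real flow, extended holomorphically)
\[
\alpha_{it}\big(\Exp_{eH}(\xi_++\xi_-)\big) = \Exp_{eH}\big(e^{it}\xi_+ + e^{-it}\xi_-\big),
\]
and the point is to show that $e^{it}\xi_+ + e^{-it}\xi_-$, as an element of $\fq_\C$, lies in $i C^\pi$ after an appropriate $H$-conjugation, equivalently in $\Exp_{eH}^{-1}(\cT_M)$. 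This is where the Cayley type reduction (Lemma~\ref{lem:Cayley}, Corollary~\ref{cor:Cayley type}) enters: the subalgebra $\fg^{\rm ct}$ generated by $\fq^{-\tau_h}$ is a sum of Cayley type $\fsl_2$'s, and on each such block the relation $-i\kappa_h(C_+-C_-)=C_++C_-$ together with the $\fsl_2(\R)$-computations from Appendix~\ref{subsec:sl2b} let one verify $e^{it}\xi_+ + e^{-it}\xi_- \in (iC^\pi)$-orbit directly; the condition $s(\xi_+-\xi_-)<\pi$ is exactly what guarantees this stays inside $C^\pi$ for all $t\in(0,\pi)$, using the interpolation between $t\to 0^+$ (giving a point near the real boundary $\xi_++\xi_-$) and $t=\pi/2$ (giving $i\kappa_h(\xi_+-\xi_-)$, which lies in $iC^\pi$ precisely by the $s<\pi$ bound and Theorem~\ref{thm:4.5}).

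The main obstacle I anticipate is making the holomorphic-flow computation rigorous at the level of the complex symmetric space $M_\C$ rather than just in the Lie algebra: one must justify the identity $\alpha_{it}(\Exp_{eH}(\xi)) = \Exp_{eH}(e^{it}\xi_+ + e^{-it}\xi_-)$ and then recognize the right-hand side as a point of $\cT_M$, which requires controlling which $H_\C$-orbit it lies in and checking the $s$-function inequality along the whole segment $t\in(0,\pi)$, not merely at the endpoints. I would handle this by first treating the irreducible Cayley type case (where $\fq_0(h)=\{0\}$ and the geometry is that of a tube-type hermitian space, reducing ultimately to $\fsl_2(\R)$ via strong orthogonality, Proposition~\ref{prop:testing}), then passing to general $\fg^{\rm ct}$ by Lemma~\ref{lem:Cayley}, and finally noting that the $\fq_0(h)$-directions are irrelevant since $\alpha_{it}$ fixes them and the defining condition of $W_M(h)$ only constrains the $\fq_{\pm1}(h)$-part. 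The remaining bookkeeping — that the $(G^h)_e$-factor and the identification $\kappa_h^{-1}((\cT_M)^{\oline\tau_h}) = W_M(h)$ are compatible — is already supplied by Theorem~\ref{thm:4.5} and Lemma~\ref{cor:polwedgeopen}, so the argument closes.
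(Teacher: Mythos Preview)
Your proposal is correct and follows essentially the same approach as the paper: reduce via Lemma~\ref{cor:4.12} to showing $W_M(h)\subeq W_M^{\rm KMS}(h)$, use $(G^h)_e$-invariance to reduce to points $\Exp_{eH}(\xi)$ with $\xi\in(C_++C_-)^\pi$, pass to the Cayley type subalgebra $\g^{\rm ct}$, and then reduce via strongly orthogonal roots to $\fsl_2(\R)$. The paper makes the last step concrete by first conjugating $\xi$ under $\Inn(\fh_\fk)$ into the maximal abelian subspace $\ft_\fq\subeq\fq_\fk$, then invoking Proposition~\ref{prop:testing} to land in $\fsl_2(\R)^r$, and finally citing the explicit de Sitter computation Proposition~\ref{prop:desit} rather than manipulating $e^{it}\xi_++e^{-it}\xi_-$ directly---this sidesteps the ``obstacle'' you flagged.
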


\begin{prf} In view of Lemma ~\ref{cor:4.12}, it remains to show that 
$W_M(h)  \subeq  W_M^{\rm KMS}(h)$.
Since both sides are invariant under $(G^h)_e$, 
it suffices to consider elements of the form 
$\Exp_{eH}(x)$, $x \in (C_+ + C_-)^\pi$. 
As $h, x \in \g^{\rm ct} = \fh^{\tau_h} \oplus \fq^{-\tau_h}$, 
this can be verified in the Cayley type subalgebra $\g^{\rm ct}$. 
We may therefore assume that $\tau = \tau_h$. 

The subspace 
$E := \g_1(h) = \fq_1(h)$ carries the structure of a euclidean Jordan algebra 
whose simple ideals correspond to the simple ideals of $\g$ 
(\cite[Ch.~X]{FK94}). Its 
rank $r$ coincides with the dimension of a maximal abelian subspace 
$\fa_\fq \subeq \fq_\fp$. If $E_\fa \subeq E$ is the span of a 
Jordan frame, i.e., a maximal associative subalgebra, then 
\[ \ft_\fq = \{ x + \theta(x) \: x \in E_\fa \} \] 
is a maximal abelian elliptic subspace of $\fq$, contained in $\fq_\fk$. 
Now every element in the elliptic cone 
$C_+^\circ + C_-^\circ \subeq \fq$ 
is conjugate under $\Inn(\fh_\fk)$ to an element of $\ft_\fq$. 
We may therefore assume, in addition, that $x \in \ft_\fq$.  
. 
Next we recall that $h$ and $\ft_\fq$ generate a subalgebra 
$\fs \cong \fsl_2(\R)^r \subeq \g$ 
with $\fs_1(h) = E_\fa$ in which $\ft_\fq$ is a compactly 
embedded Cartan subalgebra (Proposition~\ref{prop:testing}). In particular, we obtain by intersection a 
Cayley type subalgebra
\[ (\fs, \tau_h, C \cap \fs, h) \] 
which is a direct sum of Cayley type subalgebras isomorphic to 
$\fsl_2(\R)$. From the corresponding assertion 
for $\fsl_2(\R)$, which follows from
Proposition~\ref{prop:desit}, specialized to $\dS^2$, we now obtain that 
$\Exp_{eH}(x) \in W_M^{\rm KMS}(h)$. 
\end{prf}

\subsection{The relation to wedge domains in $G_\C/G^c$}

We now show that the domains from Theorem~\ref{thm:6.1} 
are a connected component 
of the fixed point set of the conjugation $\sigma$ on the corresponding
domain in the bigger ncc space $G_\C/G^c$. Here $\sigma \: G_\C\to G_\C$
is the conjugation with $G = (G_\C^\sigma)_e$.
As before, we set $G^c = (G_\C)^{\bar\tau}_{e}$ and $H = H_{\rm max} = G\cap G^c$. 
It is shown in \cite{MNO22a} that this is the maximal choice for $H$ and that
$H =H_{e}K^h$.
 
 \begin{theorem} \mlabel{thm:6.5}
Let $\sigma = \sigma_M$ be the conjugation on $G_\C/G^c$ 
fixing $M$ pointwise. Then 
\[
W_M(h) = (W_{G_\C/G^c}(h))^\sigma _{eH}=
 W_M^{\rm KMS}(h) = ( W_{G_\C/G^c}^{\rm KMS}(h))^\sigma_{eH} .\] 
 \end{theorem}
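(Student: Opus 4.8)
The plan is to deduce Theorem~\ref{thm:6.5} from the results already established for the ambient ncc symmetric space $M_\C^{\rm big} := G_\C/G^c$, namely Theorem~\ref{thm:6.1} applied to it, together with the compatibility of all the relevant structures (the tube domain, the flow $\alpha$, the involutions $\kappa_h$, $\tau_h$, $\sigma$) with the causal embedding $M \hookrightarrow M_\C^{\rm big}$ from Subsection~\ref{sect:Complex}. The key observation is that $(\g_\C,\otau,iC_{\g^c},h)$ is itself a reductive modular ncc symmetric Lie algebra (Lemma~\ref{lem:CausaEmb}), that $G_\C$ satisfies (GP) and (Eff) relative to the Cartan involution $\oline\theta$, and that $G^c = G_\C \cap (G_\C)^{c}$ in the appropriate sense, so that Theorem~\ref{thm:6.1} applies to $M_\C^{\rm big}$ and yields $W_{M_\C^{\rm big}}(h) = W_{M_\C^{\rm big}}^{\rm KMS}(h) = \kappa_h^{-1}((\cT_{M_\C^{\rm big}})^{\oline\tau_h})$.

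\textbf{Main steps.} First I would record that the modular flow $\alpha_z$ on $M_\C = G_\C/H_\C$ restricts, via the inclusion $M_\C \hookrightarrow M_\C^{\rm big}$ coming from $H_\C \subeq G^c$ (up to the identifications made in Subsection~\ref{sect:Complex}), to the modular flow on $M_\C^{\rm big}$; this is immediate since both are generated by left translation by $\exp(th)$ and $h \in \fh \subeq \g^c$. Second, I would use the compatibility relation \eqref{eq:CgFix}, $(iC_{\g^c}^\pi)^\sigma = C^\pi$, established in Subsection~\ref{subsec:3.3}, to show $\cT_M = \cT_{M_\C^{\rm big}} \cap M_\C^{\sigma_M}$, or at least that $\cT_M$ is the connected component of $eH$ in this intersection; here one combines $\cT_M = G.\Exp_{eH}(iC^\pi)$, the analogous formula for $M_\C^{\rm big}$, the fact that $\sigma_G$ fixes $h_c$ hence the causal Riemann element $x_r$, and that $\sigma_M$ fixes $M$ pointwise. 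Third, since $\sigma = \sigma_M$ commutes with both $\alpha_z \mapsto \alpha_{\bar z}$ and with $\oline\tau_h$ and $\kappa_h$ (as $h \in \fh \subeq \g^\sigma$), intersecting the identity \eqref{eq:e1} for $M_\C^{\rm big}$ with the fixed point set of $\sigma_M$ and passing to the connected component of $eH$ gives
\[ (W_{M_\C^{\rm big}}(h))^\sigma_{eH} = (W_{M_\C^{\rm big}}^{\rm KMS}(h))^\sigma_{eH} = \kappa_h^{-1}\big((\cT_{M_\C^{\rm big}})^{\oline\tau_h}\big)^\sigma_{eH}. \]
Finally I would match each of these three $\sigma$-fixed components with the corresponding object in $M$: for the KMS side, $m \in M$ with $\alpha_{it}(m) \in \cT_{M_\C^{\rm big}}$ for $t \in (0,\pi)$ forces $\alpha_{it}(m) \in M_\C^{\sigma_M}$ (since $\alpha_{it}(m) = \alpha_{i t}(\sigma_M m) = \sigma_M(\alpha_{-it}(m))$ and reality of $m$), hence $\alpha_{it}(m) \in \cT_M$; for the polar side, $W_M(h) = (G^h)_e.\Exp_{eH}((C_++C_-)^\pi)$ is exactly the $\sigma$-fixed component because $\sigma$ fixes $(G^h)_e$ and $(C_+ + C_-)^\pi \subeq \fq \subeq \g^{-\sigma\text{-part}}$ in the right way (using $\fz(\g) \subeq \fq$ and the structure of $\g^c$); and then Theorem~\ref{thm:6.1} for $M$ itself closes the loop, giving $W_M(h) = W_M^{\rm KMS}(h) = \kappa_h^{-1}((\cT_M)^{\oline\tau_h})$, which is already known.

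\textbf{Expected obstacle.} The routine part is the bookkeeping of which connected component one lands in after taking fixed points — fixed-point sets of involutions on manifolds are generally disconnected, so at each step one must check that the component containing $eH$ on the ambient side maps to the component containing $eH$ on the $M$-side, and no spurious components are introduced. The genuinely delicate point, I expect, is step two: proving $\cT_M = (\cT_{M_\C^{\rm big}})^{\sigma_M}_{eH}$ rather than merely $\cT_M \subeq (\cT_{M_\C^{\rm big}})^{\sigma_M}$. The inclusion is easy from \eqref{eq:CgFix}; the reverse requires knowing that a $\sigma_M$-fixed point of the ambient crown which lies near $eH$ actually has the form $g.\Exp_{eH}(iy)$ with $g \in G$ and $y \in C^\pi \cap \fq$, which should follow from the polar-map diffeomorphism of Proposition~\ref{prop:4.9} (applied to the crown of $G_\C/K_\C$, into which $\cT_{M_\C^{\rm big}}$ is identified by Theorem~\ref{thm:4.7}) together with the same polar-decomposition argument used in the proof of Theorem~\ref{thm:4.5}, now for the involution $\sigma_G$ in place of $\oline\tau_h$. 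Once that identification of tube domains is in hand, the rest is formal manipulation of commuting involutions.
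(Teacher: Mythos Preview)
Your overall strategy is correct and coincides with the paper's: apply Theorem~\ref{thm:6.1} to both $M$ and the ambient ncc space $G_\C/G^c$, then match the domains across the embedding $M \hookrightarrow G_\C/G^c$. The paper, however, is considerably more economical. Since Theorem~\ref{thm:6.1} already gives the three-fold equality on \emph{each} side separately, one only needs to verify \emph{one} identification between the two, and the paper chooses the KMS one:
\[
W_M^{\rm KMS}(h) = \bigl(W_{G_\C/G^c}^{\rm KMS}(h)\bigr)^\sigma_{eH}.
\]
Your plan to match all three pairs (polar, KMS, and the $\kappa_h$-fixed-point description) is unnecessary work.

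There is a genuine gap in your KMS argument. You claim that for $m \in M$ the relation $\alpha_{it}(m) = \alpha_{it}(\sigma_M m) = \sigma_M(\alpha_{-it}(m))$ yields $\alpha_{it}(m) \in M_\C^{\sigma_M}$. It does not: since $\sigma_M$ is antiholomorphic, $\sigma_M \circ \alpha_z = \alpha_{\bar z}\circ \sigma_M$, so for $z = it$ you get $\sigma_M(\alpha_{it}(m)) = \alpha_{-it}(m)$, which says the two points $\alpha_{\pm it}(m)$ are $\sigma_M$-conjugate, not that $\alpha_{it}(m)$ is fixed. For $t\neq 0$ the point $\alpha_{it}(m)$ is typically \emph{not} $\sigma_M$-fixed, so your route from ``$\alpha_{it}(m)\in\cT_{M_\C^{\rm big}}$'' to ``$\alpha_{it}(m)\in\cT_M$'' via $\sigma_M$-fixed points breaks down. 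The paper's argument for the reverse inclusion is instead: a $\sigma$-fixed point of $W_{G_\C/G^c}^{\rm KMS}(h)$ in the base component lies in $M$ by \eqref{eq:FixM}, and then the KMS condition for $M$ follows directly from the definition together with the inclusion $C^\pi \subeq (iC_{\g^c})^\pi$ (your equation~\eqref{eq:CgFix}). Your ``expected obstacle''---the tube-domain compatibility---is indeed the technical content behind this step, and you identify it correctly; but the natural embedding $M_\C \to (G_\C/G^c)_\C$ and \eqref{eq:CgFix} supply what is needed without the detour through $\sigma_M$-fixed points of $\alpha_{it}(m)$.
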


\begin{prf} By Theorem \ref{thm:6.1}, applied to $M$ and $G_\C/G^c$, it is enough to show
that 
\[W_M^{\rm KMS}(h) = ( W_{G_\C/G^c}^{\rm KMS}(h))^\sigma_{eH}.\]
It is clear by the definition that 
$W_M^{\rm KMS}(h) \subseteq ( W_{G_\C/G^c}^{\rm KMS}(h))^\sigma_{eG^c}$.
On the other hand, if 
\[ m\in ( W_{G_\C/G^c}^{\rm KMS}(h))^\sigma_{eG^c}
\subset (G_\C/G^c)^\sigma, \] it follows
from \eqref{eq:FixM} and Subsection~\ref{sect:Complex}
 that $m\in M$ and hence by the definition of the KMS-wedge that $m\in W_M^{\rm KMS}(h)$.
\end{prf}

The modular one-parameter group acts  on 
$M = G/H$  
by 
\[  \alpha_t(gH) 
= \exp(th) g H  = g.\exp(t \Ad(g)^{-1} h) H. \] 
It is generated by the vector field given by 
\begin{align}
  \label{eq:Xh}
X_h^M(gH) 
&= g.p_\fq(\Ad(g)^{-1}h).
\end{align}
We therefore have
\begin{equation}
  \label{eq:posdomchar}
 W_M^+(h) = \{ m \in G/H \: X_h^M(m) \in V_+(m)\} 
= \{ gH  \: \Ad(g)^{-1}h \in \fh + C_\fq^\circ\}.
\end{equation}
In the same way we define the positivity domain 
$W_{G_\C/G^c}^+(h)$ in $G_\C/G^c$.

Note that $p_\fq(\Ad(g)^{-1}h) \in C_\fq^\circ$ is equivalent to 
\begin{equation}
  \label{eq:tubepos}
\Ad(g)^{-1}h \in \fh + C_\fq^\circ, 
\end{equation}
so that we have to understand the intersection of the adjoint orbit 
$\cO_h = \Ad(G)h$ with the real tube domain $\fh + C_\fq^\circ$.

\begin{lemma} $W^+_M(h)_{eG^c} \subeq W^+_{G_\C/G^c}(h)^\sigma_{eG^c}
  \subeq W_M^+(h)$.
\end{lemma}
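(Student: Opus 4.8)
The plan is to prove the chain of inclusions
\[ W^+_M(h)_{eG^c} \subeq W^+_{G_\C/G^c}(h)^\sigma_{eG^c} \subeq W_M^+(h) \]
by combining the tube-domain characterisation \eqref{eq:posdomchar} of positivity domains with the fact that $M$ is a connected component of the fixed point set $(G_\C/G^c)^\sigma$ via \eqref{eq:FixM}. For the \emph{second} inclusion, suppose $m \in W^+_{G_\C/G^c}(h)^\sigma_{eG^c}$. Being a $\sigma$-fixed point lying in the connected component of $eG^c$, we have $m \in (G_\C/G^c)^\sigma_{eG^c} = M$ by \eqref{eq:FixM} and the discussion in Subsection~\ref{sect:Complex}. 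Write $m = gH$ with $g \in G$. The condition $m \in W^+_{G_\C/G^c}(h)$ means, by the analogue of \eqref{eq:posdomchar} for $G_\C/G^c$, that $\Ad(g)^{-1}h \in \fh_\C + C_{\fg^c}^\circ$ (the complex tube over the extended cone $C_{\fg^c}$ from Corollary~\ref{cor:ext-ncc}); but $\Ad(g)^{-1}h \in \fg$ since $g \in G$ and $h \in \fg$, so intersecting with $\fg$ and using $C_{\fg^c} \cap i\fq = iC$, i.e.\ that the $\fg$-part of the complex tube is exactly $\fh + C^\circ$, we obtain $\Ad(g)^{-1}h \in \fh + C_\fq^\circ$. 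By \eqref{eq:posdomchar} this says precisely $gH \in W_M^+(h)$.

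For the \emph{first} inclusion I would argue as follows. The inclusion $M \hookrightarrow M_\C = G_\C/G^c$ is $G$-equivariant and, comparing \eqref{eq:Xh} with the analogous formula on $M_\C$, the modular vector field $X_h^{M_\C}$ restricts on $M$ to $X_h^M$; moreover the causal cone field $V_+$ on $M_\C$ restricts on $M$ to that of $M$, since $C_{\fg^c} \cap i\fq = iC$. Hence $W_M^+(h) \subeq W^+_{G_\C/G^c}(h)$ as subsets of $M_\C$, and trivially $W_M^+(h) \subeq M = (G_\C/G^c)^\sigma_{eG^c}$, so $W_M^+(h) \subeq W^+_{G_\C/G^c}(h)^\sigma$. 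It remains to pass to connected components anchored at $eG^c$: since $eH = eG^c$ lies in the closure of $W^+_M(h)_{eG^c}$ (this identity component is one whose boundary contains the base point, as recorded in the introduction) and $W^+_M(h)_{eG^c}$ is connected and contained in $W^+_{G_\C/G^c}(h)^\sigma$, it lies inside the connected component of the latter through $eG^c$, i.e.\ $W^+_M(h)_{eG^c} \subeq W^+_{G_\C/G^c}(h)^\sigma_{eG^c}$.

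The step I expect to require the most care is the bookkeeping with the cones and with \emph{which} fixed-point component is meant in $W^+_{G_\C/G^c}(h)^\sigma_{eG^c}$: one must be sure that the $\sigma$ appearing here is $\sigma_M$ (conjugation fixing $M$, not $\bar\tau_h$ or $\kappa_h$-twisted versions), that $(G_\C/G^c)^\sigma_{eG^c} = M$ exactly rather than just up to covering, and that the identification $C_{\fg^c}^\circ \cap \fg = \fh + C^\circ$ holds with the correct interiors — this is where Lemma~\ref{lem:CausaEmb} and Corollary~\ref{cor:ext-ncc} are invoked. Once these identifications are in place, both inclusions are essentially formal consequences of the $G$-equivariant causal embedding $M \hookrightarrow M_\C$ together with \eqref{eq:posdomchar} and \eqref{eq:FixM}; no new hard analysis is needed, in contrast to Theorem~\ref{thm:6.1}.
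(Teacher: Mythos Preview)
Your approach is essentially the same as the paper's: the second inclusion is obtained by intersecting the tube condition with $\fg$ via \eqref{eq:FixM}, and the first from $C^\circ \subeq iC_{\fg^c}^\circ$; the paper is simply terser about the connected-component bookkeeping. One notational slip to fix: the positivity condition on $G_\C/G^c$ reads $\Ad(g)^{-1}h \in \fg^c + iC_{\fg^c}^\circ$ (since for $(\fg_\C,\otau,iC_{\fg^c})$ the ``$\fh$-part'' is $\fg^c$ and the cone sits in $i\fg^c$), not $\fh_\C + C_{\fg^c}^\circ$; then $(\fg^c + iC_{\fg^c}^\circ)\cap\fg = \fh + C^\circ$ is exactly the intersection you want.
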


\begin{prf} The proof follows the proof of Theorem~\ref{thm:6.5} above. 
It is clear by $C^\circ \subeq i C_{\g^c}^\circ$ that 
\[ W^+_M(h)_{eG^c} \subset (W^+_{G_\C/G^c}(h)^\sigma)_{eG^c}.\] 
But if $m\in W^+_{G_\C/G^c}(h)^\sigma_{eG^c}$, 
then $m\in M$ and $m=gG^c$ for some $g\in G$ 
(see \eqref{eq:FixM} in Subsection~\ref{sect:Complex}). 
The definition of $W^+_{G_\C/G^c}(h)$ entails 
\[\Ad (g)^{-1}h \in (\fg^c + i C_{\fg^c}^\circ) \cap \g 
= \fh + C^\circ.\]
Hence $m\in W^+_M(h)_{eG^c}$.
\end{prf} 

\section{The positivity domain $W^+_M(h)$} 
\mlabel{sec:7} 

Recall the positivity domain 
\begin{equation}
  \label{eq:posdomcharb}
 W_M^+(h) = \{ m \in G/H \: X_h^M(m) \in V_+(m)\} 
= \{ gH  \: \Ad(g)^{-1}h \in \fh + C_\fq^\circ\}.
\end{equation}
This set is not necessarily connected. In the first part we show that in
general    $W_M^+(h)_{eH} = W_M(h)$. Then we discuss the case where $G$ is simple
and $H=G^h$, so that $M=G/H$
is of Cayley type. In this case
we show that $W_M^+(h)$ is connected.

\subsection{The identity component $W_M^+(h)_{eH}$}

Recall that $M = G/H$, where 
$H = G \cap G^c$ holds for 
an open subgroup $G^c \subseteq (G_\C)^{\otau}$ 
(see \eqref{eq:Hintersec}). 
The man result in this section is the following theorem,
showing that the identity components of the
different wedge domains coincide. That the domains $W_M(h)$ and
$W^{\rm KMS}_M(h)$ are connected has already been pointed out. We show
in \cite{MNO22b} that the positivity domain $W^+_M(h)$ is also connected
under our assumption, but not for coverings.

 \begin{thm} \mlabel{thm:7.4} 
   Let $(\g,\tau,C_\fq,h)$ be a modular ncc symmetric Lie algebra,
   $G$ a corresponding connected Lie group satisfying
   {\rm(GP)} and {\rm(Eff)} and    $H = G \cap G^c$.  Then 
\[ W_{M}^+(h)_{eH} = W_M(h) = (G^h)_e.\Exp_{eH}((C_+ + C_-)^\pi).\]
\end{thm}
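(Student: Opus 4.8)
The plan is to prove the two inclusions $W_M(h) \subeq W_M^+(h)_{eH}$ and $W_M^+(h)_{eH} \subeq W_M(h)$ separately, using Theorem~\ref{thm:6.1} to identify $W_M(h)$ with $W_M^{\rm KMS}(h)$ and $\kappa_h^{-1}((\cT_M)^{\oline\tau_h})$ wherever convenient. For the first inclusion, I would first observe that $W_M(h)$ is connected (by definition, as the image of a connected set under a continuous map) and open (Lemma~\ref{cor:polwedgeopen}), and that it contains the base point $eH$ in its closure, since $0 \in \oline{(C_+ + C_-)^\pi}$. So it suffices to show $W_M(h) \subeq W_M^+(h)$, because then $W_M(h)$, being a connected subset of $W_M^+(h)$ whose closure meets $eH$, lies in the identity component $W_M^+(h)_{eH}$. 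The inclusion $W_M(h) \subeq W_M^+(h)$ should follow by the same reduction as in the proof of Theorem~\ref{thm:6.1}: both sides are $(G^h)_e$-invariant, so one reduces to points $\Exp_{eH}(x)$ with $x \in (C_+ + C_-)^\pi$; since $h, x \in \g^{\rm ct}$, one passes to the Cayley type subalgebra, then conjugates $x$ into the maximal abelian $\ft_\fq \subeq \fq_\fk$ using $\Inn(\fh_\fk)$, then reduces via Proposition~\ref{prop:testing} to a product of copies of $\fsl_2(\R)$, where the computation in Appendix~\ref{subsec:sl2b} (the $\dS^2$ case, Proposition~\ref{prop:desit}) gives $X_h^M(\Exp_{eH}(x)) \in V_+$.

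For the reverse inclusion $W_M^+(h)_{eH} \subeq W_M(h)$, the key is the characterization \eqref{eq:posdomchar}: $gH \in W_M^+(h)$ iff $\Ad(g)^{-1}h \in \fh + C_\fq^\circ = \cT_{C_\fq}$. Thus $W_M^+(h)$ corresponds, under the orbit map $G/G^h \to \cO_h = \Ad(G)h$, $gH \mapsto \Ad(g)^{-1}h$ (well-defined because $H$ centralizes $h$ up to the relevant identifications — here one uses $H \subeq G^\tau$ and $\tau(h)=h$, together with Remark~\ref{rem:Gh}), to the intersection $\cO_h \cap \cT_{C_\fq}$. Then the identity component $W_M^+(h)_{eH}$ maps to the connected component of $h$ in $\cO_h \cap \cT_{C_\fq}$. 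Here I would invoke Theorem~\ref{thm:crownchar-gen}, which identifies precisely that connected component — for a \emph{causal} Euler element $h_c$ — as the real crown domain $\cT_{M_H} = \Ad(H) e^{\ad \Omega_{\fq_\fk}} h_c$. The subtlety is that $h$ is an Euler element in $\fh$, not a causal Euler element in $\fq$; so Theorem~\ref{thm:crownchar-gen} does not apply directly to $h$. Instead I would apply it after passing through the Cayley-type picture: using Lemma~\ref{lem:Cayley} and Proposition~\ref{prop:2.18}, the relevant structure is governed by $\g^{\rm ct}$, where $\tau = \tau_h$ and $h$ itself plays the role of a causal Euler element for the dual $c$-dual cone; more precisely $-i\kappa_h(C^{\rm ct}) = C_+ + C_-$ is elliptic, and one translates the statement about $\cO_h \cap \cT_{C_\fq}$ through $\kappa_h$ into a statement about a crown-type domain, matching the polar description of $W_M(h)$ in \eqref{eq:wmh}.

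Putting this together, the cleanest route is probably: (i) establish $W_M(h) \subeq W_M^+(h)_{eH}$ as above; (ii) show both domains have the same "size" by comparing their descriptions as $(G^h)_e$-orbits of exponentials, using that $\Exp_{eH}$ restricted to $i(C^\pi)^{-\tau_h}$ (after applying $\kappa_h$) is injective with open image; (iii) conclude equality from connectedness — a connected open set containing $W_M(h)$ and contained in $W_M^+(h)$, with $W_M(h)$ itself open, forces equality provided one knows $W_M^+(h)_{eH}$ is contained in the $\kappa_h$-preimage of $(\cT_M)^{\oline\tau_h}$, which is Lemma~\ref{cor:4.12} combined with the observation that on the positivity domain one actually gets the full strip condition. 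Concretely, for $gH \in W_M^+(h)_{eH}$ one has $\Ad(g)^{-1}h \in \cT_{C_\fq}$, and following the curve $t \mapsto \alpha_{it}(gH)$ one checks it stays in $\cT_M$ for $t \in (0,\pi)$ by a connectedness/openness argument along the orbit, using Theorem~\ref{thm:crownchar-gen} to control the boundary behavior; this gives $gH \in W_M^{\rm KMS}(h) = W_M(h)$.

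The main obstacle I anticipate is step (iii): precisely verifying that membership in $W_M^+(h)_{eH}$ (an infinitesimal, single-time positivity condition) upgrades to the all-times strip condition defining $W_M^{\rm KMS}(h)$. The other direction of this — $W_M^{\rm KMS}(h) \subeq W_M^+(h)$ — is easy (differentiate at $t = 0$), but the upgrade requires knowing that the connected component of $h$ in $\cO_h \cap \cT_{C_\fq}$ is \emph{exactly} the real crown $\cT_{M_H}$ and not larger, which is the content of Theorem~\ref{thm:crownchar-gen} and is why that theorem was proved. So the real work is a careful bookkeeping of how the $h \in \fh$ Euler element, the causal Euler element $h_c \in \fq_\fp$, and the Cayley transform $\kappa_h$ interact, so that Theorem~\ref{thm:crownchar-gen} can be applied in the $\g^{\rm ct}$-reduction and its conclusion matched against \eqref{eq:wmh}.
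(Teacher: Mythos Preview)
Your argument for the inclusion $W_M(h) \subeq W_M^+(h)_{eH}$ is correct and matches the paper's Lemma~\ref{thm:pos-wedge-incl}: reduce by $(G^h)_e$-invariance to $\Exp_{eH}(x)$ with $x \in (C_++C_-)^\pi$, pass to $\g^{\rm ct}$, conjugate into $\ft_\fq$, and finish in $\fsl_2(\R)^s$.

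For the reverse inclusion you correctly identify Theorem~\ref{thm:crownchar-gen} as the essential tool and correctly diagnose the obstruction: it is formulated for a causal Euler element $h_c \in \fq_\fp$, not for $h \in \fh$. But your proposed workarounds (translate through $\kappa_h$ to a dual picture, or upgrade the pointwise positivity to the full KMS strip condition by a connectedness argument along $t \mapsto \alpha_{it}(gH)$) do not close the gap. The KMS route in particular is circular: establishing $W_M^+(h)_{eH} \subeq W_M^{\rm KMS}(h)$ directly would require exactly the boundary control that Theorem~\ref{thm:crownchar-gen} provides, so you cannot avoid making that theorem applicable.

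The missing idea is concrete and direct: inside the subalgebra $\fs_0 \cong \fsl_2(\R)^s$ generated by $h_c$ and $\ft_\fq$ (Proposition~\ref{prop:testing}), there is an explicit element
\[
  x_0 = \tfrac{\pi}{4}\sum_{j=1}^s (e_j - f_j) \in \ft_\fq, \qquad g_0 := \exp(x_0),
\]
with $\Ad(g_0)h_c = h$ (this is the $\fsl_2$ computation \eqref{eq:firstconj2}). Then $gH \in W_M^+(h)$ is equivalent to $\Ad(g^{-1}g_0)h_c \in \cT_{C_\fq}$, which is a statement about the orbit of the \emph{causal} Euler element $h_c$, so Theorem~\ref{thm:crownchar-gen} applies verbatim. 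One reads off $g^{-1}g_0 \in H\exp(\Omega_{\ft_\fq})G^{h_c}$, rewrites this as $g \in G^h\exp(x_0 - \Omega_{\ft_\fq})H$ using $g_0^{-1}G^{h_c}g_0 = G^h$, and checks via \eqref{eq:coneslice} that $x_0 - \Omega_{\ft_\fq} = (C_++C_-)^\pi \cap \ft_\fq$. A preliminary reduction to $\g = \g_{nr}$ and $C = C_\fq^{\rm max}$ (using Proposition~\ref{prop:reductiontominmax} and the inclusion $G_r \subeq (G^h)_e$) disposes of the Riemannian factors and makes Theorem~\ref{thm:crownchar-gen} applicable. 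This conjugation trick is what replaces your vaguer passage through $\kappa_h$; it stays entirely in the real group and makes the bookkeeping straightforward.
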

 
We prove this in two steps. We start with the one inclusion: 

\begin{lem} \mlabel{thm:pos-wedge-incl} 
If $(\g,\tau,C,h)$ is a reductive modular ncc symmetric Lie algebra 
and   that {\rm(GP)} and {\rm(Eff)} are satisfied for the corresponding
connected Lie group~$G$. Then $W_M(h) \subeq W_M^+(h)$. 
\end{lem}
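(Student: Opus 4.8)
\textbf{Plan of proof for Lemma~\ref{thm:pos-wedge-incl}.}
The goal is to show $W_M(h)\subeq W_M^+(h)$, i.e.\ that every point of the polar-type wedge domain lies in the positivity region of the modular vector field $X_h^M$. Since both sides are invariant under $(G^h)_e$ (for the left side this is \eqref{eq:wmh}; for the right side it follows from \eqref{eq:posdomcharb} because $\Ad((G^h)_e)$ commutes with $\ad h$ and preserves $C_\fq^\circ$ and $\fh$), it suffices to test points of the form $p = \Exp_{eH}(x)$ with $x \in (C_+ + C_-)^\pi$. For such a point I must show $X_h^M(p) \in V_+(p) = p.C_\fq^\circ$, which by \eqref{eq:Xh} amounts to checking $p_\fq(\Ad(g)^{-1}h) \in C_\fq^\circ$ for $g = \exp x$, equivalently $\Ad(\exp(-x))h \in \fh + C_\fq^\circ$ by \eqref{eq:tubepos}.

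\textbf{Reduction to the Cayley type subalgebra.} The first step is to observe that $x \in \fq^{-\tau_h} = \fq_{+1}(h)\oplus\fq_{-1}(h)$ and $h \in \fh^{\tau_h}$, so both lie in the Cayley type subalgebra $\g^{\rm ct}$ generated by $\fq^{-\tau_h}$ (Lemma~\ref{lem:Cayley}, Proposition~\ref{prop:2.18}). Hence the whole computation of $\Ad(\exp(-x))h$ takes place inside $\g^{\rm ct}$, and by Proposition~\ref{prop:reductiontominmax} the relevant cone there is $C \cap \g^{\rm ct}$ with $(C\cap\g^{\rm ct})^{-\tau_h} = C_+ - C_-$, which is insensitive to whether one uses the minimal or maximal cone. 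Thus I may assume $\tau = \tau_h$, i.e.\ that $(\g,\tau,C,h)$ is itself of Cayley type.

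\textbf{Reduction to $\fsl_2(\R)$ via the Jordan frame.} In the Cayley type case $E := \fq_1(h)$ is a euclidean Jordan algebra (\cite[Ch.~X]{FK94}), of rank $r = \dim\fa_\fq$. Every element of the elliptic cone $C_+^\circ + C_-^\circ$ is $\Inn(\fh_\fk)$-conjugate to an element of the maximal abelian elliptic subspace $\ft_\fq \subeq \fq_\fk$ attached to a Jordan frame (using that $-i\kappa_h$ sends $C_+-C_-$ to $C_++C_-$, Corollary~\ref{cor:Cayley type}(i), together with Theorem~\ref{thm:2.20}(iii)). By $(G^h)_e$-invariance — more precisely $\Inn(\fh_\fk) \subeq (G^h)_e$ — I may therefore assume $x \in \ft_\fq$. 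Then $h$ and $\ft_\fq$ generate a subalgebra $\fs \cong \fsl_2(\R)^r$ with $\fs_1(h) = E_\fa$ and $\ft_\fq$ a compactly embedded Cartan subalgebra (Proposition~\ref{prop:testing}), and by intersecting cones one gets a Cayley type modular subalgebra $(\fs,\tau_h, C\cap\fs, h)$ that is a direct sum of copies of $(\fsl_2(\R),\tau_{h^0}, C_{h^0}, h^0)$. The membership $\Exp_{eH}(x)\in W_M^+(h)$ now decomposes coordinatewise and reduces to the single $\fsl_2(\R)$ statement, which is exactly the de Sitter space $\dS^2$ computation: this is handled in Appendix~\ref{app:1} (Proposition~\ref{prop:desit}, specialized to $\dS^2$), where one verifies directly that $\Ad(\exp(-x))h^0 \in \fh + C_{h^0}^\circ$ for $x$ in the relevant interval. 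Assembling the factors gives $W_M(h)\subeq W_M^+(h)$.

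\textbf{Main obstacle.} The delicate point is the second reduction: one needs that conjugating $x\in C_+^\circ + C_-^\circ$ into $\ft_\fq$ can be done by an element of $(G^h)_e$ (not merely of $G$), so that the $(G^h)_e$-invariance of both domains actually applies — this hinges on $\Inn(\fh_\fk)\subeq G^h_e$ and on the diagonalization statement for the \emph{elliptic} cone $C_++C_-$ rather than the hyperbolic $C$. After that, the bookkeeping that the subalgebra $\fs$ really is $\tau_h$-invariant with $C\cap\fs$ splitting as a direct sum of $\fsl_2(\R)$-cones (so that the $\dS^2$ computation applies factor by factor) is the remaining technical heart; everything else is routine once Propositions~\ref{prop:testing} and~\ref{prop:desit} are in hand.
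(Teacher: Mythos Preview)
Your proof is correct and follows essentially the same route as the paper: reduce by $(G^h)_e$-invariance to points $\Exp_{eH}(x)$ with $x\in(C_++C_-)^\pi$, pass to the Cayley type subalgebra $\g^{\rm ct}$, conjugate $x$ into $\ft_\fq$ by $\Inn(\fh_\fk)\subeq(G^h)_e$, and then use Proposition~\ref{prop:testing} to land in $\fsl_2(\R)^r$ where the $\dS^2$ computation (Proposition~\ref{prop:desit} / Appendix~\ref{subsec:sl2b}) finishes the argument. You have also correctly identified the one delicate step, namely that the diagonalization of elements in the elliptic cone $C_+^\circ+C_-^\circ$ into $\ft_\fq$ is achieved by elements of $\Inn(\fh_\fk)$, which lies in $(G^h)_e$.
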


\begin{prf} Since both sets are invariant under 
$(G^h)_e$, and 
\[ W_M(h) = (G^{\tau_h})_e.\Exp_{eH}((C_+ + C_-)^\pi),\] 
it suffices to see that 
$p_\fq(e^{-\ad x}h) = -\sinh(\ad x) h \in C_\fq^\circ$ 
holds for every $x \in (C_+ + C_-)^\pi$. 

As $h \in \fh^{\tau_h}$ and $x \in \fq^{-\tau_h}$ 
are both contained in 
$\g^{\rm ct} = \fh^{\tau_h} + \fq^{-\tau_h}$, 
we may assume that $\tau = \tau_h$. 
In the same way as in the proof of Theorem~\ref{thm:6.1}, 
we further reduce to the case where 
$\g = \fsl_2(\R)^r$, endowed with the canonical diagonal 
Cayley type structure. This reduces the problem to 
$(\fsl_2(\R), \tau_h)$, which corresponds to 
$2$-dimensional de Sitter space. Hence the assertion follows from 
Subsection~\ref{subsec:a.3} 
or by the calculations in Section~\ref{subsec:sl2b}

\end{prf}

\begin{prf}[Proof of Theorem \ref{thm:7.4}:] In view of Lemma~\ref{thm:pos-wedge-incl}, it remains to show that 
  \begin{equation}
    \label{eq:goal72}
 W_{M}^+(h)_e 
\subeq (G^h)_e.\Exp_{eH}((C_+ + C_-)^\pi) 
= (G^h)_e\exp((C_+ + C_-)^\pi)H.
  \end{equation}

We write $\g = \g_r \oplus \g_{nr}$ for the decomposition of $\g$ 
into the maximal $\tau$-invariant Riemannian ideal $\g_r$ 
and its complement $\g_{nr}$ which is a direct sum of 
irreducible ncc symmetric Lie algebras  (Proposition~\ref{prop:decomp}).
Then $[h,\g_r] = \{0\}$ implies that $G_r := \la \exp \g_r \ra\subeq (G^h)_e$. 
We also note that 
\[ h = h_r + h_{nr}\in C^\circ \subeq C^{\rm max}_\fq 
= \fq_r \oplus C^{\rm max}_{\fq_{nr}} \] 
with $h_r \in \z(\g) \subeq \g_r$, so that 
the projection $h_{nr}$ of $h$ onto $\g_{nr}$ is contained in 
$C^{\rm max}_{\fq_{nr}}$. Further 
\[ C_\pm = \pm C \cap \fq_{\pm 1}(h) 
= \pm C^{\rm max}_\fq \cap \fq_{\pm 1}(h) \] 
by Proposition~\ref{prop:reductiontominmax}. 
We may therefore assume w.l.o.g.\ that $\g = \g_{nr}$ and 
that $C = C^{\rm max}_\fq$. 

Next we recall from \eqref{eq:posdomcharb} that
\[ W^+_{M}(h) = \{ g H \in M\: \Ad(g)^{-1} h \in 
\fh +  C^\circ\}.\] 
As we have seen in \eqref{eq:firstconj2} in Appendix~\ref{subsec:sl2b},
 the elements 
\begin{equation}
  \label{eq:whatisx1}
 x_0 = \frac{\pi}{4} \sum_{j=1}^s (e_j - f_j) \in \ft_\fq, \quad 
 h = \frac{1}{2} \sum_{j=1}^s h_j^0 
\quad \mbox{ and } \quad 
 h_c = \frac{1}{2} \sum_{j=1}^s (e_j + f_j) 
\end{equation} 
in the subalgebra $\fs_0 \cong \fsl_2(\R)^s$ generated by 
$h_c$ and $\ft_\fq$ satisfy 
\[ \Ad(g_0) h_c = h \quad \mbox{ for } \quad g_0 := \exp(x_0).\] 
Therefore $gH \in W_{M}^+(h)$ is equivalent to 
\[  \Ad(g^{-1}g) h_c =  \Ad(g)^{-1} h \in \cT_C = \fh + C^\circ.\] 
Theorem~\ref{thm:crownchar-gen} now shows that, 
if $gH$ is contained in the connected component 
of this set containing $g_0H$, then 
\[ g^{-1}g_0\in H \exp(\Omega_{\ft_\fq}) G^{h_c},\] 
which is equivalent to 
\[ g^{-1} \in 
H \exp(\Omega_{\ft_\fq}) G^{h_c}g_0^{-1} 
= H \exp(\Omega_{\ft_\fq}) g_0^{-1} G^{\Ad(g_0)h_c}
= H \exp(\Omega_{\ft_\fq}- x_0) G^h,\] 
resp., to 
\[ g \in G^h \exp(x_0 - \Omega_{\ft_\fq})H.\] 

In view of \eqref{eq:goal72}, it suffices to show that 
\[ x_0 - \Omega_{\ft_\fq} \subeq (C_+ + C_-)^\pi \cap \ft_\fq.\]
By \eqref{eq:coneslice}, the set on the right hand side is 
\[ (C_+ + C_-)^\pi \cap \ft_\fq 
= \Big\{ \sum_{j = 1}^s \frac{c_j}{2}(e_j - f_j) \: 0 < c_j < \pi\Big\}.\]
So it remains to observe that \eqref{eq:whatisx1} yields 
$x_0 - \Omega_{\ft_\fq} = \big\{ \sum_{j = 1}^r \frac{c_j}{2}(e_j -f_j) \: 0 < c_j < \pi\big\}.$ 
\end{prf}

The following proposition provides an interesting link between 
compactness of stabilizers and the positivity domain 
$W_M^+(h)$ of the vector field $X_h^M$ on $M$. 

\begin{prop} If $(\g,\tau,C,h)$ is reductive and 
$m \in W_M^+(h)$, then the stabilizer $G_m^h$ acts as a relatively 
compact group on $T_m(M)$. 
\end{prop}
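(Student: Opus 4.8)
The statement to prove is: if $(\g,\tau,C,h)$ is reductive and $m \in W_M^+(h)$, then the stabilizer $G_m^h$ acts as a relatively compact group on $T_m(M)$. Here $G_m^h = G_m \cap G^h$ is the subgroup of $G^h$ fixing $m$. The plan is to exploit the defining property $m \in W_M^+(h)$, namely $X_h^M(m) \in V_+(m) = g.C_\fq^\circ$ for $m = gH$, and translate it into a statement about which subalgebra of $\g$ the relevant stabilizer sits inside.

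First I would write $m = gH$ and observe that, by $G$-equivariance, it suffices to treat the case $g = e$ after replacing $h$ by $\Ad(g)^{-1}h$; so assume $m = eH$ and $X_h^M(eH) = p_\fq(h) \in C_\fq^\circ$, i.e.\ $h \in \fh + C_\fq^\circ$ by \eqref{eq:tubepos}. Write $h = h_0 + c$ with $h_0 \in \fh$ and $c \in C_\fq^\circ$. The stabilizer $G_m^h = G_{eH}^h = H \cap G^h$ has Lie algebra $\fh \cap \g_0(h) = \fh^{\tau_h}$ (using $\tau(h) = h$ so that $\tau$ and $\tau_h$ commute). The key point is that $G_m^h$ preserves the element $X_h^M(m) = p_\fq(h) = c \in \fq$, since $g \in G_m^h$ fixes $m$ and commutes with the modular flow, hence fixes the vector $X_h^M(m)$ in $T_m(M) \cong \fq$; that is, $\Ad(g)c = c$ for all $g \in G_m^h$. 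So $G_m^h$ is contained in the stabilizer in $H$ of the point $c \in C_\fq^\circ$. Now I invoke the mechanism already used in the proof of Lemma~\ref{lem:thetaCom}: since $\Ad(H)C_\fq = C_\fq$ and $c \in C_\fq^\circ$, the stabilizer of $c$ in $\Ad_\fq(H)$ is compact — this is precisely \cite[Prop.~V.5.11]{Ne00}, applied via the faithful closed-image representation $\Ad_\fq$ of $H$ on $\fq$ whose maximal compact subgroup is $\Ad_\fq(H_K)$. Hence $\Ad_\fq(G_m^h)$ is relatively compact in $\GL(\fq)$, i.e.\ $G_m^h$ acts as a relatively compact group on $T_m(M) \cong \fq$.

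The only subtlety — and the step I expect to need a little care — is the passage from ``relatively compact in $\GL(\fq)$'' for the $H$-action to the claim for the $G_m^h$-action on $T_m(M)$, and the identification $T_m(M) \cong \fq$ as an $H$-module. Under the standard identification $T(M) = G \times_H \fq$ the isotropy representation of $H$ on $T_{eH}(M)$ is exactly $\Ad_\fq$, so a subgroup of $H$ acts relatively compactly on $T_{eH}(M)$ iff its image under $\Ad_\fq$ is relatively compact; combined with the previous paragraph this finishes the case $m = eH$. For general $m = gH \in W_M^+(h)$ one transports everything by $g$: the map $v \mapsto g.v$ identifies $T_{eH}(M)$ with $T_{gH}(M)$ intertwining the $\Ad(g^{-1})G_m^h\,\Ad(g)$-action with the $G_m^h$-action, and $\Ad(g^{-1})G_m^h\,\Ad(g)$ is the stabilizer in $H$ of $\Ad(g)^{-1}h$'s $\fq$-component, which lies in $C_\fq^\circ$ by hypothesis — so the previous argument applies verbatim. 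Thus $G_m^h$ acts as a relatively compact group on $T_m(M)$ for every $m \in W_M^+(h)$.
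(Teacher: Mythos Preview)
Your argument is correct and is essentially the same as the paper's: both proofs observe that $G_m^h$ acts on the pointed generating cone $V_+(m)\subeq T_m(M)$ and fixes the interior point $X_h^M(m)$, whence the image in $\GL(T_m(M))$ is relatively compact (this is exactly the content of \cite[Prop.~V.5.11]{Ne00}). The paper states this directly at the point $m$ without reducing to $m=eH$; your reduction and the Lie algebra computation $\fh\cap\g_0(h)=\fh^{\tau_h}$ are extra detail that is not actually needed for the argument.
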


\begin{prf} The stabilizer group $G_m^h$ acts on the pointed generating 
cone $V_+(m) \subeq T_m(M)$ and fixes the element 
$X_h^M(m)$ in its interior. Therefore the image of this representation 
is relatively compact. 
\end{prf}

\subsection{The Cayley type case} 
\mlabel{subsec:7.1} 

In this section we discuss the special case where $M$ is
irreducible of Cayley type. We prove our main assertion \eqref{eq:e1} on wedge domains
for Cayley type spaces and show that, in this case, the positivity domain  
$W_M(h)$ is connected.

We assume that $(\g,\tau)$ is simple of Cayley type 
with Cartan involution $\theta$ such that the Euler element 
$h \in \fh$ satisfies $\theta(h) = -h$ and $\tau = \tau_h$. 
We consider the corresponding centerfree group 
$G= \Inn(\g)$ which automatically satisfies {\rm(GP)} and {\rm (Eff)}.
Further $H := G^h$ is an open subgroup of $G^\tau$ 
because $G^h$ commutes with $\tau = \tau_h$.
Note that \cite[Prop.~7.9]{MNO22a} implies that
  \begin{equation}
    \label{eq:ggc}
    G \cap G^c = K^{h_c} \exp(\fh_\fp)
  \end{equation}
holds for any causal
  Euler element $h_c \in  \fq_\fp$, so that
  $\tau_h = \tau = \theta \tau_{h_c}$.

\begin{ex}
A typical example is $2$-dimensional
de Sitter space $\dS^2 \cong \SO_{2,1}(\R)_e/\SO_{1,1}(\R)_e$. In this case
$\fg_0(h) = \fh$, $\fg_{\pm 1}(h) = \fq_{\pm 1}(h)=\fq_\pm$, 
$\fq = \fq_{+1}(h)\oplus \fq_{-1}(h)$ and, up to sign, the cone 
$C= C_+ - C_-$ is the only   $H=G^h$-invariant
hyperbolic convex generating cone in
$\fq$ (\cite[Thm. 2.6.8]{HO97}). For a detailed discussion 
of these spaces see \cite[Sec. 2.6]{HO97} and \cite{NeO00}.
\end{ex}

The involution $\tau$ commutes with $\theta$ 
because $\theta(h)= -h$. 
If we normalize $z \in \fz(\fk)$
in such a way that $iz \in \g_\C$ is an Euler element,
then 
\[ \tau(z) = -z \quad \mbox{ and }\quad 
\theta = e^{\pi \ad z} \in G^\tau \setminus G^h.\] 
As $G^h$ preserves the $3$-grading of $\g$ defined by $h$,
the cones 
$C_\pm = \pm C_\g \cap \g_{\pm 1}(h)$ are both $G^h$-invariant, 
but the Cartan involution satisfies $\theta(C_+) = C_-$
(Proposition~\ref{prop:2.18}). 
Therefore the elliptic cone $C_+ + C_-$ is $\theta$-invariant, 
but the hyperbolic cone $C := C_+ - C_-$ is not. 
The following lemma actually shows that 
\begin{equation}
  \label{eq:2comp}
  G^{\tau} = G^h \cup \theta G^h,
\end{equation}
so that $G^h$ is the maximal subgroup of $G^\tau$ leaving the
hyperbolic cone
$C = C_+ - C_-$ invariant.
From $G^c = e^{\frac{\pi i}{2} \ad h} G e^{-\frac{\pi i}{2} \ad h}$
  it  follows that $G^h \subeq G \cap G^c$.
  Further $K^{h_c} \subeq G^{\theta \tau_{h_c}} = G^{\tau}$
  leaves the cone $C = C_+ - C_- = C^{\rm max}$ invariant (\cite[Rem.~4.20]{MNO22a}).
  We thus also derive from \eqref{eq:ggc} and \eqref{eq:2comp} 
  that  $G \cap G^c \subeq G^h$,
  hence the equality
  \[ G \cap G^c = G^h.\]

\begin{lemma}
Let $\g$ be a simple real Lie algebra 
and $h \in \cE(\g) \cap \fh$ an Euler element. 
In the centerfree group  $G = \Inn(\g)$, we then  have 
$|G^{\tau_h}/G^h| \le 2$.
\end{lemma}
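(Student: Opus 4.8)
The claim is that in the centerfree group $G = \Inn(\g)$, the centralizer $G^{\tau_h}$ of the involution $\tau_h = e^{\pi i \ad h}$ contains the identity component group $G^h$ (the centralizer of the Euler element $h$) with index at most $2$. The natural strategy is to analyze $G^{\tau_h}$ via its action on the Lie algebra, exploiting the $3$-grading $\g = \g_1(h) \oplus \g_0(h) \oplus \g_{-1}(h)$. An element $g \in G$ commutes with $\tau_h$ if and only if $g$ preserves the decomposition of $\g$ into $\tau_h$-eigenspaces; since $\tau_h$ acts as $+1$ on $\g_0(h)$ and $-1$ on $\g_1(h) \oplus \g_{-1}(h)$, this means $g$ preserves $\g_0(h)$ and $\g_{-1}(h) \oplus \g_1(h)$ as subspaces. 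The subgroup $G^h$ consists of those $g$ that in addition preserve each of $\g_1(h)$ and $\g_{-1}(h)$ individually. So the content is that $G^{\tau_h}/G^h$ is detected by whether $g$ preserves or swaps the two extreme grading spaces — giving at most two cosets.

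\textbf{Key steps.} First I would record that $\g^{\tau_h} = \g_0(h) = \Fix(\tau_h)$, so the Lie algebra of $G^{\tau_h}$ equals $\g_0(h)$, which is also the Lie algebra of $G^h$ (since $\g^h := \ker(\ad h) = \g_0(h)$); hence $G^h = (G^{\tau_h})_e$ and $G^{\tau_h}/G^h = \pi_0(G^{\tau_h})$, a discrete group. Second, I would produce an element realizing the nontrivial coset: using an element $z$ with $iz$ an Euler element as in the paragraph preceding the lemma, one has $\theta = e^{\pi \ad z} \in G^{\tau_h}$ and $\theta(\g_{\pm 1}(h)) = \g_{\mp 1}(h)$ (this is the relation $\theta(h) = -h$, available since $z \in \fz(\fk)$, $\tau(z) = -z$, $\theta(C_+) = C_-$ as noted via Proposition~\ref{prop:2.18}), so $\theta \notin G^h$ and $G^{\tau_h} \supseteq G^h \cup \theta G^h$. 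Third — the heart of the matter — I would show there are no further cosets: given $g \in G^{\tau_h}$, the connected subgroup $G^h_e = (G^{\tau_h})_e$ acts on the set $\{\g_1(h), \g_{-1}(h)\}$ trivially (it is connected and the set is discrete), so $g$ either fixes both or swaps them; in the first case $g$ preserves each $\g_{\pm 1}(h)$, hence $g \in G^h$; in the second case $\theta^{-1} g$ preserves each $\g_{\pm 1}(h)$ and preserves $\g_0(h)$, so $\theta^{-1} g$ centralizes the grading and thus $\theta^{-1} g \in G^h$. This gives $|G^{\tau_h}/G^h| \le 2$.

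\textbf{Subtlety and main obstacle.} The step that requires care is the claim "$g$ preserves $\g_0(h)$ and each $\g_{\pm 1}(h) \implies g \in G^h$," i.e. that preserving the three graded pieces forces $g$ to commute with $\ad h$. This holds because $\ad h$ acts as the scalar $j$ on $\g_j(h)$, so preserving each graded piece means $g \circ \ad h \circ g^{-1}$ acts as $j$ on $g(\g_j(h)) = \g_j(h)$, i.e. equals $\ad h$; since $G = \Inn(\g)$ is centerfree, $\Ad$ is injective and one transfers this back to $g$ commuting with $h$ in $G$ (equivalently, $G^h$ really is the full centralizer of $h$). I would also need to double-check that $G^{\tau_h}$, for $G = \Inn(\g) = \Aut(\g)_e$ an algebraic group and $\tau_h$ an algebraic involution, has only finitely many components so that $\pi_0$ makes sense and the coset-counting argument is valid — but this is standard (fixed-point groups of involutions of algebraic groups are algebraic). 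The main obstacle is thus purely the bookkeeping of the graded-piece argument; there is no deep analytic input, and the whole lemma reduces to linear algebra of the $3$-grading plus injectivity of $\Ad$ for the adjoint group.
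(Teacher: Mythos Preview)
Your argument has a genuine gap at the step you call ``the heart of the matter.'' You assert that for $g \in G^{\tau_h}$, ``$g$ either fixes both [of $\g_1(h),\g_{-1}(h)$] or swaps them,'' justifying this by saying the identity component acts trivially on the set $\{\g_1(h),\g_{-1}(h)\}$. But this presupposes exactly what needs to be shown: that $g$ maps $\g_1(h)$ onto one of the two graded pieces, rather than onto some other subspace of $\g_1(h)\oplus\g_{-1}(h)$. All you know a priori is that $g$ preserves the $(-1)$-eigenspace $\g_1(h)\oplus\g_{-1}(h)$ of $\tau_h$; nothing in your argument rules out that $g.\g_1(h)$ sits diagonally. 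The connectedness remark about $(G^{\tau_h})_e$ is irrelevant until you have established that $G^{\tau_h}$ acts on the two-element set $\{\g_1(h),\g_{-1}(h)\}$ in the first place.

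The paper closes this gap with a representation-theoretic input you are missing: for $\g$ simple, the spaces $\g_1(h)$ and $\g_{-1}(h)$ are \emph{simple and pairwise inequivalent} $\g_0(h)$-modules (cf.\ \cite{HO97}). Since $g \in G^{\tau_h}$ normalizes $\g_0(h)$, the image $g.\g_1(h)$ is again a simple $\g_0(h)$-submodule of $\g_1(h)\oplus\g_{-1}(h)$, hence equal to one of the two summands. This immediately gives $g.h \in \{h,-h\}$, and the quotient $G^{\tau_h}/G^h$ is identified with the orbit $G^{\tau_h}.h \subseteq \{\pm h\}$, which has at most two elements. Note also that your side claim ``$G^h = (G^{\tau_h})_e$'' does not follow from equality of Lie algebras alone (that only gives $(G^h)_e = (G^{\tau_h})_e$); the paper avoids this issue entirely by working with the orbit map $g \mapsto g.h$ rather than with $\pi_0$.
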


If $h$ is symmetric, i.e., $-h \in \cO_h$, then 
any $g \in \Inn(\g)$ with $gh = -h$ is contained in 
$G^\tau \setminus G^h$.

\begin{prf} It is clear that $G^h \subset G^{\tau_h}$. 
Assume, conversely, that $g \in G^{\tau_h}$. 
Then $g$ normalizes $\fh = \g^{\tau_h}$ and leaves $\fq$ invariant. 
Next we observe that $\fq = \fq_1(h) \oplus \fq_{-1}(h)$ is the decomposition 
of $\fq$ into two simple pairwise inequivalent $\fh$-modules
(\cite{HO97}). 
Therefore either 
$g.\fq_{\pm 1}(h) = \fq_{\pm 1}(h)$ or 
$g.\fq_{\pm 1}(h) = \fq_{\mp 1}(h)$. 
In the first case $g.h = h$ and in the second case $g.h = -h$. 
So $G^{\tau_h}/G^h \cong G^{\tau_h}h \subeq \{\pm h\}$ has at most two elements.
\end{prf}

We put 
\[ P_\pm = \exp(\g_{\pm 1}(h)) G^h \quad \mbox{ and } \quad 
M_\pm := G/P_\mp.\] 
We then have an embedding 
\[ M = G/G^h \into G/P_- \times G/P_+,\quad  g G^h \mapsto (gP_-, gP_+) \] 
where the space on the right carries the order for which the above 
inclusion is a causal embedding. 
First we consider the causal homogeneous spaces 
$M_\pm$. The cones 
\[ \pm C_\pm \subeq \g_{\pm 1}(h) \cong \g/\fp_\mp \] 
are $P_{\mp}$-invariant because 
the group $\exp(\g_{\pm 1}(h))$ 
acts trivially on $\g/\fp_\pm \cong \g_{\mp 1}(h)$.
We therefore obtain $G$-invariant cone fields 
\[ V_+(gP_\mp) := g.(\pm C_\pm^\circ),
\quad \mbox{ on } \quad M_\pm = G/P_{\mp}. \]

\begin{prop} {\rm(The positivity domain in the flag manifold)} 
     In $M_+$ the positivity domain 
\[ W_{M_+}(h) := \{ m \in M_+ \: X_h^{M_+}(m) \in V_+(m) \} \] 
of the modular vector field 
$X_h^{M_+}$ coincides with the subset $\exp(C_+^\circ)P_-$, which 
corresponds to the positive cone $E_+$ under the open dense embedding 
of the euclidean Jordan algebra $E = \g_1(h)$ into~$M_+$. 
\end{prop}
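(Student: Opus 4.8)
The plan is to identify the flag manifold $M_+ = G/P_-$ with its open dense "big cell", which is the image of the exponential map on $\g_1(h) = E$, and then compute the modular vector field in these coordinates. First I would recall that, since $h$ is an Euler element defining the $3$-grading $\g = \g_{-1}(h) \oplus \g_0(h) \oplus \g_1(h)$, the subgroup $P_- = \exp(\g_{-1}(h))G^h$ is parabolic, and the map $E = \g_1(h) \to M_+ = G/P_-$, $v \mapsto \exp(v)P_-$, is a diffeomorphism onto an open dense subset $\mathcal{U}$. Under this identification the base point $eP_-$ corresponds to $0 \in E$, and the $G$-action restricts on $\mathcal{U}$ to the usual "fractional linear" action of $G$ on the Jordan algebra $E$; in particular $\exp(\g_1(h))$ acts by translations and $\exp(th)$ acts by the dilation $v \mapsto e^{t}v$ (because $\ad h = \mathrm{id}$ on $\g_1(h)$).

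From this the modular flow $\alpha_t$ on $\mathcal{U} \cong E$ is just $\alpha_t(v) = e^t v$, so its generating vector field at the point $v \in E$ is $X_h^{M_+}(v) = v \in E \cong T_v(\mathcal{U})$, using the canonical trivialization of $TE$. Next I would pin down the cone field: by construction $V_+(gP_-) = g.(C_+^\circ)$ where $C_+ = C \cap \g_1(h)$, and translation-invariance of the cone field along $\exp(\g_1(h))$ means that at the point $v = \exp(v).0$ the positive cone in $T_v(\mathcal{U}) \cong E$ is exactly $C_+^\circ$, independent of $v$. Here one invokes the identification (from the Cayley type structure and \cite[Ch.~X]{FK94}, cf.\ the discussion around Theorem~\ref{thm:6.1}) of $C_+$ with the closed positive cone $\overline{E_+}$ of the euclidean Jordan algebra $E$. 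Therefore the positivity condition $X_h^{M_+}(v) \in V_+(v)$ becomes simply $v \in C_+^\circ = E_+$, i.e.\ $v$ lies in the open positive cone.

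Consequently $W_{M_+}(h) \cap \mathcal{U} = \{\exp(v)P_- : v \in E_+\} = \exp(C_+^\circ)P_-$. To finish I would argue that $W_{M_+}(h)$ contains no points outside $\mathcal{U}$: a point $m \notin \mathcal{U}$ lies in the complement of the big cell, which is a union of Schubert cells of positive codimension; on such a point the stabilizer $\g_m$ in $\g_0(h) = \g^h$ is non-compactly embedded (it contains the nilpotent part coming from the missing $\g_1(h)$-directions), but by the Proposition just before Subsection~\ref{subsec:7.1}, if $m \in W_{M_+}^+(h)$ then $G_m^h$ acts relatively compactly on $T_m(M_+)$ — a contradiction. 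Alternatively, and perhaps more cleanly, one checks directly that on the boundary stratum the modular vector field $X_h^{M_+}$ either vanishes (at the $\alpha$-fixed point $P_+ \in M_+$, which is not in the open causal cone since $V_+$ is pointed) or points in a direction that cannot lie in the pointed cone $V_+$, because the grading forces a nontrivial $\g_0$- or $\g_{-1}$-component. I expect the main obstacle to be the careful bookkeeping identifying $C_+^\circ$ with $E_+$ and verifying that the $G$-invariant cone field on $M_+$, when pulled back to $E$ via the big-cell chart, is the \emph{constant} cone field $v \mapsto E_+$; this rests on the fact that $\exp(\g_1(h))$ acts trivially on $\g/\fp_- \cong \g_1(h)$ (already noted in the text) so that translations preserve the cone, but one must also check the normalization so that the Euler dilation $e^t$ acts with the correct sign, making $W_{M_+}(h)$ the \emph{positive} rather than the negative cone.
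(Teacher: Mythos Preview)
Your computation on the big cell is correct and matches the paper: the modular flow is the Euler flow $v\mapsto e^t v$ on $E$, the cone field is constant equal to $C_+^\circ=E_+$ by translation invariance, hence $W_{M_+}(h)\cap E=E_+$. This is also where the paper starts.

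The gap is in excluding points outside the big cell. Your approach (a) invokes the compactness-of-stabilizer argument, but the key input---that for every boundary point $m\notin\mathcal U$ the stabilizer $(G^h)_m$ acts non-relatively-compactly on $T_m(M_+)$---is asserted, not proved. The parenthetical justification (``it contains the nilpotent part coming from the missing $\g_1(h)$-directions'') does not make sense as written: $\g_0(h)$ contains no $\g_1(h)$-elements, and the boundary of the big cell is a union of Bruhat cells of various dimensions, so a uniform statement about their stabilizers in $G^h$ requires an actual argument. Your approach (b) is even less developed. Also, the ``main obstacle'' you flag (the constant cone field on $E$) is in fact the easy part; the genuine difficulty is precisely the boundary analysis you leave open.

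The paper handles this differently and more concretely. After observing that $W_{M_+}(h)$ is open and $E$ dense (so any hypothetical extra point must lie in the closure of $E_+$ in $M_+$), it uses that $K$ acts transitively on the compact manifold $M_+$ together with the decomposition $K=H_K\,T\,H_K$, where $T=\exp(\ft_\fq)$ and $H_K\subeq G^h\subeq P_-$. This reduces the question to the torus orbit $T_+:=T.eP_-\cong(\bS^1)^r\cong(\R_\infty)^r$, which lives in the subalgebra $\fs\cong\fsl_2(\R)^r$; on the dense $\R^r\subeq T_+$ the vector field is again the Euler field, and one reads off $W_{M_+}(h)\cap T_+=(\R_+)^r$, whence $W_{M_+}(h)=H_K(E_+\cap E_\fa)=E_+$. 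This $KAK$-type reduction to rank-one factors is what replaces your unproved stabilizer claim.
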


\begin{prf} 
We identify the euclidean Jordan algebra $E = \g_1(h)$ 
with the corresponding open subset of~$M_+$. 
On this subset the vector field $X_h^M$ coincides with the 
Euler vector field $X_h^M(v) = v$. Therefore the positivity domain 
$W_{M_+}(h)$ intersects $E$ precisely in the open cone $E_+$. 
As $W_{M_+}(h)$ is open and $E$ is dense in $M_+$, 
it follows that $W_{M_+}(h) \cap E = E_+$ is dense in $W_{M_+}(h)$. 
We therefore have to analyze points in the closure of $E_+$.

Let $\theta$ be a Cartan involution on $\g$ commuting with 
$\tau = \tau_h$. We then have 
\[ \fk = \fh_\fk \oplus \fq_\fk 
\quad \mbox{ and } \quad 
\fh_\fk 
= \fk \cap (\g_0(h) + \g_1(h)) 
= (\g_0(h) + \g_1(h))^\theta.\] 
From a Jordan frame in $E$, we obtain an associative 
Jordan subalgebra $E_\fa \subeq E$, and 
\[ \ft_\fq = \{ x + \theta(x) \: x \in E_\fa\subeq E = \g_1(h)\} \] 
is a maximal abelian subspace of $\fq_\fk$ 
(see the proof of Theorem~\ref{thm:6.1}). 
Then $h$ and $\ft_\fq$ generate a subalgebra 
\[ \fs \cong \fsl_2(\R)^r \subeq \g \] 
with $\fs_1(h) = E_\fa$ in which $\ft_\fq$ is a compactly 
embedded Cartan subalgebra. 
This is a Cayley type subalgebra
\[ (\fs, \tau_h, C \cap \fs, h) \] 
which is a direct sum of Cayley type subalgebras isomorphic to $\fsl_2(\R)$. 

For the groups $K = \exp \fk$, $H_K := \exp \fh_\fk$, and the torus 
$T := \exp \ft_\fq$, we then have $K = H_K T H_K$. 
Since $K$ acts transitively on $M_+$ and $H_K \subeq G^h \subeq P_-$, 
it suffices to determine the positive elements in 
the torus $T_+ := T.eP_+ \subeq M_+$. This set can be determined in terms 
of the subalgebra $\fs$ and the Jordan algebra $E_\fa$. It follows that 
\[ T_+ \cong (\bS^1)^r \cong (\R_\infty)^r, \] 
and that the modular vector field is given on the dense open subset 
$\R^r = T_+ \cap E \subeq T_+$ by 
\[ X_h(x_1, \ldots, x_r) = (x_1, \ldots, x_r).\] 
We conclude that 
\[ W_{M_+}(h) \cap T_+ = E_+ = (\R_+)^r,\] 
and this implies that 
\[ W_{M_+}(h) = H_K (W_{M_+}(h) \cap T_+) = H_K(E_+ \cap E_\fa) = E_+.
\qedhere \] 
\end{prf}

Now we turn to the Cayley type spaces, 
where $C = C_+ - C_- \subeq \fq = \fq_1 \oplus \fq_{-1}$ 
is a hyperbolic cone and $G^h = P_+ \cap P_-$. 

\begin{prop} \mlabel{prop:7.8}
  For a Cayley type space $M = G/G^h$, we have 
  \[   W_M^+(h) = \Exp_{eH}((C_+ + C_-)^\pi) = W_M(h).\]
In particular, $W_M^+(h)$ is connected.
\end{prop}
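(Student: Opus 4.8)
The strategy is to reduce everything, as in the proofs of Theorem~\ref{thm:6.1} and Lemma~\ref{thm:pos-wedge-incl}, to the case $\fg = \fsl_2(\R)^r$ with its diagonal Cayley type structure, and ultimately to a single copy of $\fsl_2(\R)$, i.e.\ to $2$-dimensional de Sitter space, where the positivity domain can be computed explicitly. Since $W_M(h) \subeq W_M^+(h)$ already holds by Lemma~\ref{thm:pos-wedge-incl} and $W_M(h) = \Exp_{eH}((C_+ + C_-)^\pi)$ is connected by definition, the remaining task is the reverse inclusion $W_M^+(h) \subeq \Exp_{eH}((C_+ + C_-)^\pi)$; note that in the Cayley type case $G^h = G\cap G^c$ acts on $M = G/G^h$ fixing the base point, so $\Exp_{eH}((C_+ + C_-)^\pi)$ is already $(G^h)_e$-invariant and we need not carry the extra factor $(G^h)_e$ present in \eqref{eq:wmh}. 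By \eqref{eq:posdomcharb} an element $gG^h$ lies in $W_M^+(h)$ iff $\Ad(g)^{-1}h \in \fh + C^\circ$, so the problem becomes understanding the full intersection $\cO_h \cap \cT_C$ with $\cT_C = \fh + C^\circ$, not merely the connected component of a base point as in Theorem~\ref{thm:crownchar-gen}.

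First I would use a compatible Cartan involution $\theta$ with $\theta(h) = -h$ (Lemma~\ref{lem:compcartan}), write $g = k\exp z$ with $k\in K$, $z\in\fp$ in the polar decomposition, and exploit that $G^h$-invariance together with the $K = H_K T H_K$ decomposition (for $T = \exp\ft_\fq$ with $\ft_\fq\subeq\fq_\fk$ maximal abelian) lets me assume the element $m$ lies on the torus orbit $T.eG^h$; this is exactly the reduction used in the previous proposition on $M_+$ and in the proof of Theorem~\ref{thm:6.1}. On this torus orbit the subalgebra $\fs$ generated by $h$ and $\ft_\fq$ is isomorphic to $\fsl_2(\R)^r$ (Proposition~\ref{prop:testing}), and the whole question decouples into $r$ independent copies of the $\fsl_2(\R)$ problem. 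For $\fsl_2(\R)\cong\so_{1,2}(\R)$ and $\dS^2$ the positivity domain $W_{\dS^2}^+(h)$ is computed directly in Appendix~\ref{subsec:a.3} / Appendix~\ref{subsec:sl2b}: one checks that $\Ad(g)^{-1}h\in\fh + C^\circ$ forces $g$ into $G^h\exp((C_+^\circ + C_-^\circ)^\pi)G^h$ with the explicit bound $s < \pi$ on the relevant root value, and the key point is that unlike the general ncc case there is no obstruction from higher-rank "compact root" directions, so the positivity region is exactly one connected polyhedral slice. Assembling the $r$ factors, one gets $W_M^+(h)\cap T.eG^h \subeq \Exp_{eH}((C_+ + C_-)^\pi \cap \ft_\fq)$, and then re-spreading by $H_K$ (which preserves both sides, using $\Ad(H_K)h = \{h\}$ and that the elliptic cone $C_+ + C_-$ is $\Ad(H_K)$-invariant with every element of $(C_+^\circ + C_-^\circ)$ being $\Inn(\fh_\fk)$-conjugate into $\ft_\fq$) yields $W_M^+(h) = \Exp_{eH}((C_+ + C_-)^\pi) = W_M(h)$.

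The main obstacle is the passage from "connected component of the base point" to "all of $W_M^+(h)$": Theorem~\ref{thm:crownchar-gen} only identifies $\cT_{M_H}$ as the connected component of $h_c$ in $\cO_{h_c}\cap\cT_{C_\fq}$, and a priori $\cO_h\cap\cT_C$ could have further components. Here the Cayley type hypothesis is what rescues the argument: because $\fq = \fq_1(h)\oplus\fq_{-1}(h)$ decomposes into exactly two inequivalent simple $\fh$-modules (the lemma above, citing \cite{HO97}) and $G^h = P_+\cap P_-$ with $M\hookrightarrow G/P_-\times G/P_+$, the positivity condition on $M$ refines into the two flag-manifold positivity conditions $X_h^{M_\pm}(m)\in V_+(m)$, each of which was just shown to cut out the positive Jordan cone $E_\pm$ with no extra components. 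Intersecting these two Jordan-cone conditions back on $M$ gives a single connected region, which one then identifies via the exponential chart with $\Exp_{eH}((C_+ + C_-)^\pi)$; I would spell this out using the $\fsl_2(\R)^r$ model where $M_\pm \cap T_\pm \cong (\R_\infty)^r$ and the modular field is the Euler field, so that the double positivity condition on each factor is literally "both coordinates positive", forcing each angular parameter into $(0,\pi)$, which is precisely the slice $(C_+ + C_-)^\pi \cap \ft_\fq$ described after \eqref{eq:coneslice}. Connectedness of $W_M^+(h)$ is then immediate since it equals the connected set $W_M(h)$.
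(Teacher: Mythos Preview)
Your proposal ultimately converges on the paper's approach and is essentially correct, but the first paragraph's reduction step is not valid as stated and you should drop it in favor of the mechanism you describe in the second paragraph.

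Concretely: the decomposition $K = H_K T H_K$ cannot be used to move an arbitrary $m \in W_M^+(h)$ onto the torus orbit $T.eG^h$, because $K$ does \emph{not} act transitively on $M = G/G^h$ (it does on the compact flag manifold $M_+$, which is why the argument works in the preceding proposition, but $M$ is noncompact and $KG^h \subsetneq G$). Only $G^h$ preserves $W_M^+(h)$, so the reduction to $\ft_\fq$ must come from a $G^h$-action, and it is not obvious a priori that every $G^h$-orbit in $W_M^+(h)$ meets $T.eG^h$.

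The paper's proof, which your second paragraph essentially sketches, avoids this: it first uses the causal embedding $\eta\colon M \hookrightarrow M_+ \times M_-$ to factor the positivity condition as $W_M^+(h) = \eta^{-1}\big(W_{M_+}^+(h) \times W_{M_-}^+(h)\big)$, then invokes the preceding proposition (and its $\theta$-dual for $M_-$) to identify each factor with the open Jordan cone, obtaining $W_M^+(h) = (\exp(C_+^\circ)P_- \cap \exp(C_-^\circ)P_+)/G^h$. At this stage the relevant reduction is the Jordan-algebra fact that every pair in $E_+ \times E_+$ can be simultaneously diagonalized by $(G^h)_e$, i.e.\ $E_+ \times E_+ = \Ad((G^h)_e).(E_{\fa,+} \times E_{\fa,+})$; this is what replaces your $K = H_K T H_K$ step and legitimately brings you down to $\fs \cong \fsl_2(\R)^r$ and hence to $(\dS^2)^r$, where Proposition~\ref{prop:desit} finishes the job. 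So your outline is right once you route the torus reduction through the product embedding and simultaneous diagonalization rather than through a $K$-decomposition on $M$.
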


\begin{prf}
We consider the open embedding 
\begin{equation}
  \label{eq:cayemb}
\eta \: G/G^h = M \into M_+ \times M_-, \quad 
gG^h \mapsto (gP_-, gP_+). 
\end{equation}
Accordingly, 
$\fq \cong T_{eH}(M)$ decomposes $\fq_1 \oplus \fq_{-1}$. 

The Cartan involution $\theta$ satisfies $\theta(\fq_1) = \fq_{-1}$ 
and $\theta\res_{\fa} = - \id_\fa$, so that 
\[ \theta(C^\circ) 
= \theta(\Ad(G^h)C_\fa^\circ)
= \Ad(G^h)\theta(C_\fa^\circ)
= \Ad(G^h)(-C_\fa^\circ) = - C^\circ  \]
and thus $\theta(C) = - C$. In particular, we have $\theta(C_+) = C_-$, 
so that 
\[ \tilde \theta \: M_+ \to M_-, \quad 
gP_- \mapsto \theta(g) P_+ = \theta(gP_-) \] 
is a diffeomorphism. It reverses the causal structure if 
$M_-$ carries the cone field defined in $\fq_{-1}$ by the cone $-C_- 
= - \theta(C_+)$. For this causal structure, 
\eqref{eq:cayemb} becomes an embedding of causal manifolds. 
The map $\tilde\theta$ intertwines the vector field 
$X_h^{M_+}$ with the vector field $-X_h^{M_-}$, so that 
\[ W_{M_-}(h) = \tilde\theta(W_{M_+}(h)) = \tilde \theta(E_+) = \exp(C_-^\circ) P_+.\] 
This shows that 
\begin{align*}
W_M^+(h) 
&= \eta^{-1}(W_{M_+}(h) \times W_{M_-}(h)) 
= \eta^{-1}(\exp(C_+^\circ)P_- \times \exp(C_-^\circ)P_+) \\
&= (\exp(C_+^\circ)P_- \cap \exp(C_-^\circ)P_+)/G^h.
\end{align*}

It remains to show that 
\begin{equation}
  \label{eq:2=1}
  W_M^+(h)\ {\buildrel ! \over =}\ \Exp_{eG^h}((C_+ + C_-)^\pi) 
= W_M(h). 
\end{equation}
In view of the $G^h$-invariance of both sides,
we can verify both inclusions  
on suitable representatives of $G^h$-orbits on both sides. 
As the cone $C_+ + C_-$ is elliptic 
and $\ft_\fq \subeq \fq_\fk$ is maximal abelian, every 
$(G^h)_e$-orbit in $(C_+ + C_-)^\circ$ meets $\ft_\fq$. 
On the Jordan algebra level, this corresponds to the fact that  
every pair in $E_+ \times E_+$ can 
be diagonalized simultaneously: 
\[ E_+ \times E_+ = \Ad((G^h)_e).(E_{\fa,+} \times E_{\fa,+}).\] 

It therefore suffices to verify \eqref{eq:2=1} for the 
Jordan algebra $E_\fa \cong \R^r$ with the conformal 
Lie algebra $\fs = \fsl_2(\R)^r$. 
Then $M \cong (\dS^2)^r \into (\T^2)^r$, and the product 
structure reduces the problem to 
the $2$-dimensional de Sitter space $M = \dS^2 
\cong \PSL_2(\R)/\SO_{1,1}(\R)_e$, for which we know from 
Proposition~\ref{prop:desit} that 
$W_{\dS^2}^+(h)$ coincides with $W_{\dS^2} (h)$. 
\end{prf}

\section{Perspectives and open problems} 
\mlabel{sec:8}

In this final section we collect some open problems related 
to wedge domains in non-compactly causal symmetric spaces.

\subsection{Connectedness of the positivity domain 
for ncc symmetric spaces} 
\mlabel{subsec:8.1} 

As we know from Subsection~\ref{subsec:a.3}, 
the positivity domain $W_{M}^+(h)$ is connected for de Sitter space 
and for the Cayley type 
spaces (Proposition~\ref{prop:7.8}).
One may therefore expect this to be the case in general, 
but this is not true. The reason for this are non-trivial coverings. 
If $M = G/H$ and 
$q_M \: \tilde M = \tilde G/\tilde G^\tau \to M$ 
denotes its simply connected covering, then 
\[ W_{\tilde M}^+(h)= q_M^{-1}(W_M^+(h)),\] 
and since  $W_M(h)$ is simply connected 
(even contractible) by \cite[Prop.~3.6]{MNO22b}, it lifts to~$\tilde M$, 
so that $W_{\tilde M}^+(h)$ is not connected if the covering  
is non-trivial. To avoid this source of non-connectedness, one has to consider
the space $M = G/H$ for
\begin{equation}
\label{eq:hmax}
  G = \Inn(\g) \quad \mbox{ and }\quad
  H := \{ g \in G^\tau \: \Ad(g)C = C \}.
\end{equation}

\begin{ex} The non-compactly causal space $M = \dS^2$ 
($2$-dimensional de Sitter space) is not simply connected 
with $\pi_1(\dS^2) \cong \Z$. The positivity domain 
$W_M^+(h)$ is contractible, hence simply connected. 
Therefore  its inverse image in $\tilde M$ 
has infinitely many connected components.
\end{ex}

In \cite[\S 7]{MNO22b} we show that: 
\begin{thm} Let $(\g,\tau,C)$ be an irreducible 
ncc symmetric Lie algebra, $G = \Inn(\g)$ 
and $H := H_{\rm max} = \Inn_\g(\fh) K^{h_c}   \subeq  G^\tau$
  for a causal Euler element~$h_c$.
Then the positivity domain $W_M^+(h_c)$ in
$M := G/H$ is connected. 
\end{thm}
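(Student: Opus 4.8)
The plan is to show that $W_M^+(h_c)$ coincides with its identity component $W_M^+(h_c)_{eH}$, which is already under control: by the same reasoning as in the proof of Theorem~\ref{thm:7.4} (with $h_c$ in place of $h$; here no auxiliary element $g_0$ is needed, since the defining condition already involves $h_c$ directly), together with Theorem~\ref{thm:crownchar-gen}, one identifies $W_M^+(h_c)_{eH}$ with the transport to $M$ of the crown component $\cT_{M_H} = \Ad(H)e^{\ad\Omega_{\fq_\fk}}h_c$, which is connected. It is convenient to pass to $G$ via $q\colon G\to M$: then $q^{-1}(W_M^+(h_c)) = \widetilde W := \{g\in G \: \Ad(g)^{-1}h_c\in \fh + C_\fq^\circ\}$ is open, right $H$-invariant (because $\Ad(H)\fh=\fh$ and $\Ad(H)C_\fq=C_\fq$), and — since $\Ad(g_0^{-1})h_c=h_c$ for $g_0\in G^{h_c}$ — also left $G^{h_c}$-invariant; moreover the orbit map $g\mapsto\Ad(g)^{-1}h_c$ has the left cosets $G^{h_c}g$ as fibres, so $G^{h_c}\backslash\widetilde W\cong \cO_{h_c}\cap\cT_{C_\fq}$, where $\cT_{C_\fq}=\fh+C_\fq^\circ$ is the real tube. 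Thus the component structure of $W_M^+(h_c)=\widetilde W/H$ is governed by that of $\cO_{h_c}\cap\cT_{C_\fq}$ together with the two commuting actions of $G^{h_c}$ (on the left) and $H$ (on the right) on $\widetilde W$.

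Next I would exploit the two standing hypotheses, which are exactly the ones singled out in Subsection~\ref{subsec:8.1}: $G=\Inn(\g)$ removes the disconnectedness coming from coverings (contrast $\dS^2$ with $\widetilde{\dS^2}$), while $H=H_{\rm max}$, being the full stabilizer $\{g\in G^\tau\:\Ad(g)C_\fq=C_\fq\}$, is large enough to glue the pieces of $\cO_{h_c}\cap\cT_{C_\fq}$ together. Since $(\g,\tau)$ is irreducible, $\g$ is simple (Proposition~\ref{prop:types}), and Proposition~\ref{prop:testing} produces a $\tau$-invariant test subalgebra $\fs_0\cong\fsl_2(\R)^s$ generated by $h_c$ and a maximal abelian $\ft_\fq\subeq\fq_\fk$, with $\fs_0^\tau\cong\so_{1,1}(\R)^s$ and $h_c$ a causal Euler element of $\fs_0$. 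Using the polar decomposition of $H=\Inn_\g(\fh)K^{h_c}$ — so that $\Ad(K^{h_c})\ft_\fq$ sweeps out the elliptic directions of $C_+^\circ+C_-^\circ$ — together with Theorem~\ref{thm:crownchar-gen}, the problem reduces, exactly as in the proofs of Theorems~\ref{thm:6.1} and~\ref{thm:7.4}, to the rank-one case $(\fsl_2(\R),\tau_{h_c})$, i.e.\ to $2$-dimensional de Sitter space, where $W_{\dS^2}^+(h_c)$ is explicitly connected (Proposition~\ref{prop:desit} and Appendix~\ref{subsec:sl2b}). The Cayley type computation in Proposition~\ref{prop:7.8}, where $W_M^+(h)=\Exp_{eH}((C_++C_-)^\pi)=W_M(h)$, is the prototype: the present statement is its extension from $\tau=\tau_h$ to the general ncc involution $\tau=\theta\tau_{h_c}$.

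The main obstacle is the genuinely global claim that $\cO_{h_c}\cap\cT_{C_\fq}$, after passing to the quotient by the left $G^{h_c}$- and right $H$-actions, is connected — equivalently, that there is no component of $W_M^+(h_c)$ made of points $gH$ at which the modular vector field $X_{h_c}^M$ is future-directed but which cannot be reached from $eH$ through $W_M^+(h_c)$. Handling this requires controlling the modular flow $\alpha_t$ (and, more generally, the $G^{h_c}$-action) on $M$ as $t\to\pm\infty$; the exhaustion function of Lemma~\ref{lem:closedorbit-gen} is the tool that forces such limits to stay in the real tube, hence in $W_M^+(h_c)$, while the classification of irreducible ncc symmetric Lie algebras feeds the $\fsl_2(\R)$-reduction. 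I expect the most delicate bookkeeping to be the interplay of the component groups $\pi_0(G^{h_c})$ and $\pi_0(H)$ in the non-split cases, where the rank identity $r=r_0+2r_1$, $s=r_0+r_1$ of Subsection~\ref{subsec:olaf} controls how many $\fsl_2(\C)$-factors occur and how the Cayley transform $\kappa_{h_c}$ matches them up.
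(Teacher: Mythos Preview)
The paper does not prove this theorem: it is stated in Subsection~\ref{subsec:8.1} with the attribution ``In \cite[\S 7]{MNO22b} we show that:'' and no further argument. So there is no proof in the paper to compare against; the result is imported from the companion paper \cite{MNO22b}.

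Regarding your proposal itself: the framework you set up is correct and natural --- passing to $\widetilde W\subeq G$, identifying $G^{h_c}\backslash\widetilde W$ with $\cO_{h_c}\cap\cT_{C_\fq}$, and invoking Theorem~\ref{thm:crownchar-gen} to pin down the identity component as $\cT_{M_H}$. But you have not closed the gap you yourself call the ``main obstacle'': Theorem~\ref{thm:crownchar-gen} only describes the \emph{connected component of $h_c$} in $\cO_{h_c}\cap\cT_{C_\fq}$; it says nothing about the absence of other components. Your proposed reduction to $\fsl_2(\R)$ via Proposition~\ref{prop:testing} does not address this, because the $\fsl_2$-reduction is a local/infinitesimal device (it controls what happens near a given point or along a given $\ft_\fq$-slice), whereas the missing step is genuinely global: one must rule out Euler elements $x\in\cO_{h_c}$ with $p_\fq(x)\in C_\fq^\circ$ that lie in a different component. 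Neither the exhaustion function of Lemma~\ref{lem:closedorbit-gen} nor the Cayley-type argument of Proposition~\ref{prop:7.8} supplies this, and your appeal to ``controlling the modular flow as $t\to\pm\infty$'' is a hope rather than an argument.

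There is also a structural mismatch you gloss over: Theorem~\ref{thm:7.4} concerns an Euler element $h\in\fh$, so $eH$ is a fixed point of the modular flow and the polar wedge $W_M(h)$ is available as a target. Here $h_c\in\fq$, the flow has (in general) no fixed points in $M$, and there is no polar wedge to compare with; the phrase ``by the same reasoning as in the proof of Theorem~\ref{thm:7.4}'' therefore needs substantial reinterpretation, not just the replacement $h\mapsto h_c$. The proof in \cite{MNO22b} presumably develops new tools (the ``modular geodesics'' of the title) precisely to handle this fixed-point-free situation.
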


\subsection{Fixed points of the modular flow} 
\mlabel{subsec:8.2} 

If $M = G/H$ is a symmetric space associated to the symmetric 
Lie algebra $(\g,\tau)$ and $h \in \cE(\g)$ is an Euler element, 
then its adjoint orbit $\cO_h = \Ad(G)h$ contains an element 
of $\fh$ if and only if the corresponding vector field 
$X_h^M$ on $M$ has a zero. In our axiomatic 
setup for modular causal symmetric Lie algebras 
we assume $h \in \fh$,  
which can be achieved by conjugation if $X_h^M$ has a zero.
However, there are cases where $\g$ contains an Euler element, 
but none is contained in $\fh$. In this subsection we discuss some 
of the related problems. \\

If $(\g,\tau,C)$ is irreducible and 
non-compactly causal, then $C$ always contains a causal 
Euler element $h_c$ which may not be conjugate to an element of $\fh$ 
(cf.~\cite[Thm.~4.4]{MNO22a}). 
However, we may use it to define a modular flow on $M$, 
and since $h_c \in C^\circ$, the corresponding positivity domain 
\[ W_M^+(h_c) = \{ m \in M \: X_{h_c}^M(m) \in V_+(m) \}
= \{ gH \: \Ad(g)^{-1}h_c \in \fh + C^\circ\} \]
is a non-empty open subset of~$M$. To determine the structure 
of this set, one needs a good description 
of the intersection $\cO_{h_c} \cap (\fh + C^\circ)$, 
which is an analog of a real crown domain of the Riemannian 
symmetric space $\Ad(H)h_c \cong H/H_K$ 
(see in particular \cite[Thm.~6.7]{MNO22a} for details).

In \cite{MNO22b} we show that it is 
also possible to develop a rich theory of wedge domains in 
non-compactly causal symmetric spaces 
for causal Euler elements $h_c$, even if $X^M_{h_c}$ has no zeros.
Its flavor is very different from the approach in the present paper. 

\begin{prob}
The tube domain 
\[ \cT_M := G.\Exp_{eH}(i C^\pi) \subeq M_\C \] 
does always make sense, but is it open in $M_\C$? 
As the polar map of $\cT_M$ has singularities, this is not clear. 
Since $\cT_M$ is defined, 
\[ W_M^{\rm KMS}(h) := \{ m \in M \: 
(\forall z \in \cS_\pi)\ \alpha_z(m) \in \cT_M\}\] 
also makes sense. 
Since $\tau_{h_c}(h_c) = h_c$, the two cones $C$ and $-\tau_{h_c}(C)$ 
will certainly be different, so that it is not clear what 
the correct analog of $C^{-\tau_h}$ is in this context. 
\end{prob}

In view of \cite[Thm.~5.4]{MNO22a}, in the irreducible case, 
the vector field $X^M_{h_c}$ associated to a causal 
Euler element $h_c$ has a zero if and only if 
$h_c$ is symmetric, i.e., $-h_c \in \cO_{h_c}$,
which in turn is equivalent to $\cE(\g) \cap \fh \not=\eset$.
If this is the case, then 
there exists an element $h \in \cE(\g) \cap \fh_\fp$, 
so that $\theta(h) = -h$ and $\tau(h) = h$ implies that 
$\tau_{h_c}(h) = \tau \theta(h) = - h$. 
Therefore \cite[Thm.~3.13]{MN21} implies that 
$h$ and $h_c$ generate a subalgebra isomorphic to $\fsl_2(\R)$ 
which is invariant under $\tau$ and $\theta$. In particular 
$\cO_h= \cO_{h_c}$. This shows that 
\begin{equation}
  \label{eq:hincl}
 \cE(\g) \cap \fh \subeq \cO_{h_c} \quad \mbox{ and }  \quad 
\cO_h = \cO_{h_c} = - \cO_{h_c}. 
\end{equation}

\begin{ex} (The complex case)
  If $\g$ is simple hermitian not of tube type, then 
$(\g_\C,\tau)$, with $\tau(z) = \oline z$, is irreducible ncc 
of complex type. The Lie algebra $\g_\C$ contains an Euler element, 
but none is contained in $\g$ if it is not of tube type 
(cf.~Proposition~\ref{prop:types}). 

Moreover, there may be many adjoint orbits of Euler elements 
in $\g_\C$, for instance 
$\fsl_n(\C)$ contains $n-1$ classes (cf.\ \cite[Thm.~3.10]{MN21}), 
but once a hermitian real form is fixed, the conjugacy class is 
determined by the requirement $h_c \in i\g$. 
In fact, if $h_c \in i \g$ is an Euler element, then 
$z := i h_c \in  \g$ is elliptic with 
$\Spec(\ad z)  = \{ 0, \pm i\}$ and 
$\fk := \ker(\ad z)$ is maximal compactly embedded in~$\g$. 
Hence there are precisely two $\Ad(G)$-orbits of 
Euler elements in $i\fg$. 
\end{ex}

Here is a prototypical example of an ncc symmetric space of Riemannian type: 

\begin{ex} (Modular flow without fixed points and degenerate wedge domain) \\
Consider the open cone $M := \Herm_n(\R)_+$ as a 
symmetric space of 
\[G = \GL_n(\R)_e/\Gamma \] 
 as in Example~\ref{ex:gln}. 
Then $\alpha_t(A) = e^t A$ defines a ``modular'' flow on 
$M$ without fixed points.  
Here $M$ is a Riemannian symmetric space and  
\[ \g = \gl_n(\R) \cong \R h \oplus \fsl_n(\R), \] 
so that $h$ is central. Then $\Ad(g)h = h$ for every $g \in G$ and 
$W_M^+(h) = M$. 
\end{ex}

\subsection{Different choices of $h \in \fh \cap \cE(\g)$} 
\mlabel{subsec:8.3} 

A classification of $\Inn_\g(\fh)$-orbits in 
$\fh \cap \cE(\g)$ has been carried out in \cite[\S 3.1]{NO21}.
It is the same for cc and ncc space because
$\g = \fh + \fq$ and $\g^c = \fh + i \fq$ share the same
subalgebra~$\fh$.
By \eqref{eq:hincl}, 
the different possibilities for $h \in \cE(\g) \cap \fh$ 
are symmetric Euler elements conjugate to the causal Euler element~$h_c$. 

    \appendix

\section{Irreducible modular ncc symmetric Lie algebras}
\mlabel{sec:app.4}

\hspace{-10mm}
\begin{tabular}{||l|l|l|l|l|l||}\hline
$\g$ &  $\g^c = \fh + i \fq$  & $\fh$ 
& $\Delta(\g,\fa)$  & $h$ &  $\g_1(h)$   \\ 
\hline
\hline 
Complex type \phantom{\Big)} &&& &&  \\
\hline 
 $\fsl_{2r}(\C)$ & $\su_{r,r}(\C)^{\oplus 2}$  \phantom{\Big)}& $\su_{r,r}(\C)$
& $A_{2r-1}$ & $h_r$ & $M_r(\C)$   \\
 $\sp_{2r}(\C)$  & $\sp_{2r}(\R)^{\oplus 2}$ &  $\sp_{2r}(\R)$ 
& $C_r$ & $h_r$ & $\Sym_r(\C)$     \\
$\so_{2k}(\C), k > 2$ &  $\so_{2,2k-2}(\R)^{\oplus 2}$ & $\so_{2,2k-2}(\R)$ 
& $D_k$  &$h_1$ & $\C^{2k-2}$    \\
$\so_{2k+1}(\C), k > 1$ &  $\so_{2,2k-1}(\R)^{\oplus 2}$ & $\so_{2,2k-1}(\R)$ 
& $B_k$  &$h_1$ & $\C^{2k-1}$    \\
$\so_{4r}(\C)$ & $\so^*(4r)^{\oplus 2}$ &  $\so^*(4r)$& 
$D_{2r}$ & $h_{2r-1}, h_{2r}$ & $\Skew_{2r}(\C)$      \\
 $\fe_7(\C)$ &  $(\fe_{7(-25)})^{\oplus 2}$ &$\fe_{7(-25)}$ &
$E_7$ & $h_7$ & $\Herm_3(\bO)_\C$     \\
\hline
Cayley type (CT) \phantom{\Big)}&& &&&  \\
\hline 
$\su_{r,r}(\C)$ &  $\su_{r,r}(\C)$ &$\R \oplus \fsl_r(\C)$&
$C_r$ & $h_r$ & $\Herm_r(\C)$ \\
$\sp_{2r}(\R)$  & $\sp_{2r}(\R)$ &$\R \oplus \fsl_r(\R)$ &
$C_r$ & $h_r$ & $\Sym_r(\R)$    \\
$\so_{2,d}(\R), d> 2$ &  $\so_{2,d}(\R)$ &
$\R \oplus \so_{1,d-1}(\R)$ 
&   $C_2$ &$ h_2$ & $\R^{1,d-1}$  \\
$\so^*(4r)$ &  $\so^*(4r)$ &
$\R \oplus \fsl_r(\H)$& $C_r$ & $h_r$ & $\Herm_r(\H)$   \\
 $\fe_{7(-25)}$ & $\fe_{7(-25)}$ & 
$\R \oplus \fe_{6(-26)}$ &  $C_3$ & $h_3$ & $\Herm_3(\bO)$   \\
\hline
Split type (ST) \phantom{\Big (} &&&&&  \\
\hline 
$\fsl_{2r}(\R)$ &$\su_{r,r}(\C)$ &  $\so_{r,r}(\R)$ &
$A_{2r-1}$ & $h_r$ & $M_r(\R)$  \\
$\so_{2r,2r}(\R)$ & $\so^*(4r)$ & $\so_{2r}(\C)$ &
$D_{2r}$ & $h_{2r-1}, h_{2r}$ & $\Skew_{2r}(\R)$  \\
$\fe_7(\R)$ & $\fe_{7(-25)}$ &$\fsl_4(\H)$ &  $E_7$ & $h_7$ &
$\Herm_3(\bO_{\rm split})$   \\
  $\so_{p+1,q+1}(\R)$ &$\so_{2,p+q}(\R)$&
  $\so_{1,p}(\R) \oplus \so_{1,q}(\R)$
& $ B_{p+1}\, (p<q)$  & $h_1$ &$\R^{p,q}$    \\
$p,q > 1$ && & $D_{p+1}\, (p = q)$  & &    \\
\hline 
Nonsplit type (NST) \phantom{\Big)} && &&&  \\
\hline 
 $\fsl_{2s}(\H)$ &$\su_{2s,2s}(\C)$ & $\fu_{s,s}(\H)$ 
& $A_{2s-1}$ & $h_s$ & $M_s(\H)$   \\
 $\fu_{s,s}(\H)$ & $\sp_{4s}(\R)$ &$\sp_{2s}(\C)$ 
& $C_{s}$ & $h_s$ & $\Aherm_s(\H)$  \\
$\so_{1,d+1}(\R)$  & $\so_{2,d}(\R)$ & $\so_{1,d}(\R)$
& $A_1$ & $h_1$ & $\R^d$ \\
\hline
\end{tabular} \\[2mm] 
{\rm Table 1: Irreducible ncc symmetric Lie algebras 
$(\g,\tau)$ with $\cE(\g) \cap \fh \not=\eset$} \\

\section{Some calculations in $\fsl_2(\R)$} 
\mlabel{subsec:sl2b}

Arguments are often reduced to relatively simple $\fsl_2(\R)$ calculations. We therefore
collect the basic notations and calculations here in one place for reference.
For $\g  = \fsl_2(\R)$, we fix the Cartan involution 
$\theta(x) = - x^\top$, so that 
\[ \fk = \so_2(\R) \quad \mbox{ and }\quad 
\fp = \{ x \in \fsl_2(\R) \: x^\top = x \}.\]
The basis elements 
\begin{equation}
  \label{eq:sl2ra}
h^0 := \frac{1}{2} \pmat{1 & 0 \\ 0 & -1},\quad 
e^0 =  \pmat{0 & 1 \\ 0 & 0}\quad \mbox{ and }\quad 
f^0 =  \pmat{0 & 0 \\ 1 & 0}
\end{equation}
and 
\begin{equation}
  \label{eq:sl2rb}
h^1 = \frac{1}{2} \pmat{0 & 1 \\ 1 & 0} = \frac{1}{2}(e^0 + f^0),\quad 
e^1 =  \frac{1}{2}\pmat{-1 & 1 \\ -1 & 1}\quad \mbox{ and }\quad 
f^1= \frac{1}{2} \pmat{-1 & -1 \\ 1 & 1}
\end{equation}
satisfy
\[  [h^j, e^j] = e^j, \quad [h^j, f^j] = -f^j, \quad 
[e^j, f^j] = 2 h^j \quad \mbox{ and }\quad \theta(e^j) = -f^j.\] 

For the involution 
\[ \tau\pmat{a & b \\ c & d} = \pmat{a & -b \\ -c & d} \]
we have
\[ \fh = \g^\tau = \R h^0,  \quad 
\fq = \g^{-\tau} = \R h^1 + \R (e^0 - f^0)
\quad\text{and} \quad  C = [0,\infty) e^0 + [0,\infty) f^0  \] 
is a hyperbolic $\Inn(\fh)$-invariant cone in $\fq$, containing 
$h^1$ as a causal Euler element. Further 
\[ C_+ = [0,\infty) e^0 =\left\{\begin{pmatrix} 0 & x\\ 0 & 0\end{pmatrix}
\: x\ge 0\right\}\quad \mbox{ and }\quad 
C_- = - [0,\infty) f^0 =\left\{\begin{pmatrix} 0 & 0\\ -y & 0\end{pmatrix}\:
y\ge0\right\} \] 
lead to 
\[ C_+ + C_- = [0,\infty) e^0 - [0,\infty) f^0 =\left\{\begin{pmatrix} 0 & x \\ -y & 0\end{pmatrix}\: x,y\ge 0\right\},\] 
so that 
\begin{equation}
  \label{eq:coneslice}
 (C_+ + C_-)^\pi \cap \R(e^0 - f^0) 
= \Big\{ t (e^0 - f^0) \: 0 < t < \frac{\pi}{2}\Big\}.
\end{equation}

The subspace $\ft_\fq := \R (e^0 - f^0) = \so_2(\R)$ of $\fq$ 
is maximal elliptic. For 
\[ x_0 := \frac{\pi}{4} (e^0 - f^0) = \frac{\pi}{4} \pmat{0 & 1 \\ -1 & 0} \in \ft_\fq \]
and 
\[ g_0 := \exp(x_0) = \pmat{\cos\big(\frac{\pi}{4}\big) & \sin\big(\frac{\pi}{4}\big)\\
-\sin\big(\frac{\pi}{4}\big) & \cos\big(\frac{\pi}{4}\big)} 
= \frac{1}{\sqrt 2} \pmat{1 & 1 \\ -1 & 1},\]
we then have 
\begin{equation}
  \label{eq:firstconj2}
 \Ad(g_0) h^1 = h^0 \quad \mbox{ and } \quad \Ad(g_0) h^0 = - h^1,
\end{equation}
More generally, we have for $t \in \R$ 
\begin{equation}
  \label{eq:turn}
 e^{t\ad(e^0 - f^0)} h^1 = \cos(2t) h^1 + \sin(2t) h^0 
 \end{equation}
because $
[e^0 - f^0, h^1] = 2 h^0,  \quad 
[e^0 - f^0, h^0] = -2 h^1.
$ 

Now fix the Euler element
\[ h = h^0 = \frac{1}{2} \pmat{1 & 0 \\ 0 & -1} \quad \mbox{ and } \quad 
\tau_h = e^{\pi i \ad h},\] 
and similarly drop the ${}^0$ in other places. Then
we have 
\[ \g_1(h) = \R e, \quad e = \pmat{0 & 1 \\ 0 & 0} \quad \mbox{ and } \qquad 
\g_{-1}(h) = \R f, \quad f = \pmat{0 & 0 \\ 1 & 0}.\] 
Recall that 
\begin{equation}
  \label{eq:hef-rel}
 [h,e] = e, \qquad [h,f] = -f \quad \mbox{ and } \quad 
[e,f] = 2h.
\end{equation}
We have 
\[ (C_\g^\circ)^{-\tau_h} = \underbrace{\R_+ e}_{C_+^\circ}  - 
\underbrace{\R_+ f}_{C_-^\circ}
\quad \mbox{ with } \quad 
C_+^\circ = \R_+ e, \quad 
C_-^\circ = \R_+ f.\] 

For $y = \lambda e + \mu f \in \g^{-\tau_h}$, we have 
\[ [y,h] = -\lambda e + \mu f \quad \mbox{ and } \quad 
[y,[y,h]] = [\lambda e + \mu f, -\lambda e + \mu f] 
= \lambda \mu 2[e,f] = 4 \lambda \mu h.\] 
We define the function $S \: \C \to \C$ by 
$S(z) = \sum_{k = 0}^\infty \frac{(-1)^k}{(2k+1)!} z^k$, so that 
$\sin(z) = z S(z^2).$ 
Then 
\begin{align*}
\sin(\ad y)h 
&= S((\ad y)^2) [y,h] 
= S(4 \lambda \mu) \cdot (-\lambda e+ \mu f).
\end{align*}
This element is contained in $-C_\g^\circ$ 
if $\lambda, \mu > 0$ 
and $\sin(2\sqrt{\lambda\mu}) > 0$. This is satisfied for 
$0 < 4\mu\lambda < \pi^2$. Then $y$ is hyperbolic with 
$\Spec(\ad y) = \{ \pm 2 \sqrt{\lambda\mu}\} \subeq (-\pi, \pi)$.

\begin{ex} \mlabel{ex:b.1} 
(cf.\ Example~\ref{rem:Gh}) 
 Let $G$ be a connected Lie group with Lie algebra $\g$ 
and $h \in \g$ an Euler element. 
In general $G^h$ may be much larger than $G^{\tau_h}$. 
This can be seen for $G=\tilde\SL_2(\R)$ (the simply connected 
covering group of $\SL_2(\R)$) and the Euler element 
$h = \diag(1/2,-1/2)$, 
 where $\diag$ stands for ``diagonal matrix''.
The group $G^h$ contains $Z(G) \cong \Z$ 
and $G^h_e = \exp (\R h)$, 
which shows that $\pi_0(G^h) \cong \Z$.  
The involution $\tau_h$  acts on $Z(G)\subeq \exp(\so_2(\R))$ by inversion 
(cf.~\cite[Ex.~2.10(d)]{MN21}). 

For $G = \SL_2(\R)$ and the same Euler element~$h$, 
the centralizer $G^h$ is the non-connected subgroup of diagonal matrices, 
which coincides with the fixed point group $G^{\tau_h}$ of the corresponding 
involution 
\[ \tau_h\pmat{a & b \\ c & d} = \pmat{a & -b \\ -c & d}.\] 

On the other hand, if we consider the centerfree group 
$G=\SO_{1,2}(\R)_e \cong \PSL_2(\R)$, then 
$G^h=G^h_e$, but the subgroup $G^{\tau_h}$ has $2$ connected components: 
$G^{\tau_h}= G^h\cup \theta G^h$, where $\theta$ is 
a Cartan involution with $\theta (h)=-h$. 
\end{ex}

\section{Polar maps}
\mlabel{app:polmaps}

In this section we discuss
polar maps associated to an involution on a symmetric space, resp., 
to a pair of commuting involutions on a Lie group.
Key properties  are collected in Lemma \ref{lem:2.7}.
These results are used in particular in Subsection~\ref{subsec:4.1a} 
to obtain a polar decomposition of the crown domain 
of the Riemannian symmetric space $M^r = G/K$ and 
in the characterization of this domain as a subset of the tube domain of 
the cone $C$ (Theorem~\ref{thm:crownchar-gen}).

\subsection{Some spectral theory} 

Let $V$ be a finite dimensional real vector space and 
$A \in \End(V)$. 

\begin{lem} \mlabel{lem:c.1} $\ker\big(\frac{\sinh(A)}{A}\big) 
= \bigoplus_{n \not=0} \ker(A^2 + n^2\pi^2 \1)$. 
\end{lem}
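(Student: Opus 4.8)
The statement is a purely linear-algebraic fact about the entire function $g(z) := \frac{\sinh z}{z} = \sum_{k\ge 0} \frac{z^{2k}}{(2k+1)!}$, evaluated at the operator $A$. The natural approach is to pass to the complexification $V_\C$, decompose $A$ into its Jordan form, and note that $\ker g(A)$ is the sum of the generalized eigenspaces $V_\lambda$ of $A$ on which $g$ vanishes to sufficiently high order. The key observation is that $\sinh z = 0$ exactly for $z \in i\pi\Z$, and that $g(z) = \sinh(z)/z$ removes precisely the (simple) zero at $z = 0$; hence the zeros of $g$ are exactly the points $z = in\pi$ with $n \in \Z\setminus\{0\}$, each a simple zero (since $\cosh(in\pi) = \cos(n\pi) = (-1)^n \neq 0$, the derivative $\sinh'(in\pi) \neq 0$, and $g'(in\pi) = \frac{\cosh(in\pi)\cdot in\pi - \sinh(in\pi)}{(in\pi)^2} = \frac{\cosh(in\pi)}{in\pi} \neq 0$).

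\textbf{Key steps.} First I would recall the standard fact that, for a holomorphic function $f$ defined on a neighbourhood of $\Spec(A)$ and the primary decomposition $V_\C = \bigoplus_{\lambda \in \Spec(A)} V_\lambda$ into generalized eigenspaces, one has $f(A) = \bigoplus_\lambda f(A|_{V_\lambda})$, and $f(A|_{V_\lambda}) = 0$ iff $f$ vanishes at $\lambda$ to order at least the size of the largest Jordan block of $A$ at $\lambda$. Since every zero of $g$ is simple, $g(A|_{V_\lambda}) = 0$ forces $V_\lambda$ to be an actual eigenspace (no nontrivial Jordan blocks) for those $\lambda \in i\pi(\Z\setminus\{0\})$; conversely on such an eigenspace $g$ evidently vanishes. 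Therefore $\ker g(A) = \bigoplus_{n\neq 0} \ker(A - in\pi\,\1)$ inside $V_\C$. Next I would group the pair $\pm in\pi$ and observe that $\ker(A - in\pi\1)\oplus\ker(A + in\pi\1) = \ker((A - in\pi\1)(A + in\pi\1)) = \ker(A^2 + n^2\pi^2\1)$ as complex subspaces, again using that each of $\pm in\pi$ contributes only a genuine eigenspace (so the product's kernel is the direct sum of the two factors' kernels and involves no generalized part). Finally, since $A$ is a real operator, both $\ker g(A)$ and each $\ker(A^2 + n^2\pi^2\1)$ are complexifications of the corresponding real subspaces of $V$, so the identity descends from $V_\C$ to $V$, giving $\ker\!\big(\tfrac{\sinh A}{A}\big) = \bigoplus_{n\neq 0}\ker(A^2 + n^2\pi^2\1)$ as claimed.

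\textbf{Main obstacle.} There is essentially no deep obstacle here; the only point requiring care is the bookkeeping with generalized eigenspaces versus eigenspaces, i.e. making sure one genuinely uses the simplicity of the zeros of $g$ to rule out Jordan blocks contributing to $\ker g(A)$ — a generalized eigenspace on which $A - in\pi\1$ is nilpotent of index $\ge 2$ is \emph{not} annihilated by $g(A)$, so the sum on the right-hand side is over ordinary kernels and the equality is exact. One should also note explicitly that the sum on the right is finite (only finitely many $n$ with $in\pi \in \Spec(A)$) and direct, which is immediate since distinct $n^2$ give coprime-in-the-relevant-sense polynomials $A^2 + n^2\pi^2\1$ whose kernels intersect trivially. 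A brief remark that $g(A)$ is invertible on the complementary summand $\bigoplus_{\lambda \notin i\pi(\Z\setminus\{0\})} V_\lambda$ (since $g$ has no zero there) completes the argument that the displayed sum is all of $\ker g(A)$.
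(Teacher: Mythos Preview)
Your proposal is correct and follows essentially the same route as the paper's proof: complexify, decompose into generalized eigenspaces, and use the simplicity of the zeros of $\sinh(z)/z$ at $z = in\pi$ ($n \neq 0$) to identify the kernel on each block as the genuine eigenspace. The paper carries out the last step by rewriting $\ker(\sinh A) = \ker(e^{2A} - \1)$ and invoking the logarithm on the nilpotent part, whereas you use the Taylor expansion of $g$ at~$\lambda$; these are equivalent bookkeeping devices.
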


\begin{prf} We may w.l.o.g.\ assume that $V$ is complex. 
Then $B := \sinh(A)/A$ is invertible on all generalized 
eigenspace corresponding to eigenvalues 
$\lambda \not= \pi n i$, $n\in \Z\setminus \{0\}$. 
We may therefore assume that $V$ has only 
one eigenvalue $\lambda = n \pi i$, $n \not=0$. 
Then $A$ is invertible, so that 
\[ \ker(B) = \ker(\sinh(A)) = \ker(e^A- e^{-A}) 
= \ker(e^{2A} - \1).\] 
Writing $A = A_s + A_n$ for the Jordan decomposition of $A$, it 
follows that 
\[ e^{2A} = e^{2 A_s} e^{2 A_n} = e^{2 n \pi i} e^{2 A_n} = e^{2A_n}.\] 
As $\ker(e^{2A_n}-\1) = \ker(A_n)$ follows from 
$2A_n = \log(e^{2A_n})$ as a polynomial in $e^{2A_n}-1$, we see that 
$\ker(B) = \ker(A_n)$ is the $\lambda$-eigenspace of $A$. 
\end{prf}

With similar arguments, or by replacing $A$ by  
$A - \frac{\pi i}{2}\1$, we get: 
\begin{lem} \mlabel{lem:c.2}  $\ker(\cosh(A)) = \ker(e^{-2A} + \1) 
= \bigoplus_{n \in \N}\ker\Big(A^2 + \big(n + \frac{1}{2}\big)^2 \pi^2 \1\Big)$. 
\end{lem}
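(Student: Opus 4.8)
\begin{prf}
The plan is to follow the proof of Lemma~\ref{lem:c.1} almost verbatim. First I would record the two factorisations $2\cosh(A) = e^A + e^{-A} = e^{-A}(e^{2A} + \1)$ and $e^{2A} + \1 = e^{2A}(e^{-2A} + \1)$. Since $e^{\pm A}$ is invertible, this shows that the three operators $\cosh(A)$, $e^{2A} + \1$ and $e^{-2A} + \1$ have the same kernel; in particular $\ker(\cosh A) = \ker(e^{-2A}+\1)$, which is the first asserted equality, and it remains to identify $\ker(e^{-2A}+\1)$. As in Lemma~\ref{lem:c.1}, I would then pass to the complexification: both $\ker(\cosh A)$ and the (finite) sum $\bigoplus_{n\in\N}\ker\big(A^2 + (n+\tfrac12)^2\pi^2\1\big)$ are kernels of real polynomial, resp.\ entire, functions of the real operator $A$, and forming such kernels commutes with $\otimes_\R\C$, so we may assume $V$ is complex.

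Next, decompose $V$ into the generalised eigenspaces $V_\lambda$ of $A$. On $V_\lambda$ the operator $e^{-2A}$ is $e^{-2\lambda}$ times a unipotent operator, so every eigenvalue of $(e^{-2A}+\1)\res_{V_\lambda}$ equals $e^{-2\lambda}+1$; hence $e^{-2A}+\1$ is invertible on $V_\lambda$ unless $e^{-2\lambda} = -1$, i.e.\ unless $\lambda \in \pi i(\Z + \tfrac12)$, and in that case $\lambda \neq 0$. Since $e^{-2A}+\1$ preserves each $V_\lambda$, this gives $\ker(e^{-2A}+\1) = \bigoplus_{\lambda\in\pi i(\Z+1/2)}\big(\ker(e^{-2A}+\1)\cap V_\lambda\big)$. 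Now fix such a $\lambda$ and write $A = A_s + A_n$ for the Jordan decomposition; on $V_\lambda$ one has $A_s = \lambda\1$, $A_n$ nilpotent, and $A$ invertible. Then $e^{-2A}\res_{V_\lambda} = e^{-2\lambda}e^{-2A_n} = -\,e^{-2A_n}$, so on $V_\lambda$ we get $e^{-2A}+\1 = \1 - e^{-2A_n}$, and exactly as in Lemma~\ref{lem:c.1} (using that $-2A_n = \log(e^{-2A_n})$ is a polynomial without constant term in $e^{-2A_n}-\1$) we obtain $\ker(\1 - e^{-2A_n}) = \ker(A_n)$, which on $V_\lambda$ is the genuine $\lambda$-eigenspace $\ker(A-\lambda\1)$. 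Summing over $\lambda$ yields $\ker(e^{-2A}+\1) = \bigoplus_{\lambda\in\pi i(\Z+1/2)}\ker(A-\lambda\1)$.

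Finally I would group the eigenvalues into the pairs $\{\pi i(n+\tfrac12),\,-\pi i(n+\tfrac12)\}$ with $n\in\N$: for $c := \pi(n+\tfrac12)\neq 0$ the polynomial $X^2 + c^2 = (X-ic)(X+ic)$ has distinct, hence coprime, linear factors, so the primary decomposition gives $\ker(A^2 + c^2\1) = \ker(A-ic\1)\oplus\ker(A+ic\1)$. Assembling these identities over all $n$ turns the previous display into $\ker(e^{-2A}+\1) = \bigoplus_{n\in\N}\ker\big(A^2 + (n+\tfrac12)^2\pi^2\1\big)$, and combined with the first step this is the lemma. I do not expect a genuine obstacle here; the only points requiring a little care are the bookkeeping that the $\lambda$ with $e^{-2\lambda} = -1$ are precisely the numbers $\pm\pi i(n+\tfrac12)$, $n\in\N$ (so that each occurs exactly once and none is omitted), and the elementary splitting of the kernel of a product of commuting operators with coprime factors --- both handled exactly as in the proof of Lemma~\ref{lem:c.1}.
\end{prf}
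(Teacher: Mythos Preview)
Your proof is correct and follows exactly the ``similar arguments'' route the paper indicates (the paper's own proof is just the one-line remark that one either repeats the Lemma~\ref{lem:c.1} argument or substitutes $A\mapsto A-\tfrac{\pi i}{2}\1$). Your write-up simply spells out the first of these two options in full detail.
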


\subsection{Fine points on polar maps} 

\mlabel{subsec:4.1} 

In this subsection, we consider two commuting involutions $\sigma$ and $\tau$ 
on a connected, not necessarily reductive, 
Lie group $G$ and an open $\sigma$-invariant subgroup $H \subeq G^\tau$.
We shall study the polar map 
\begin{equation}
  \label{eq:polmap1}
 \Phi \: G^\sigma \times_{H^\sigma} \fq^{-\sigma} \to M, 
\quad [g,x] \mapsto g.\Exp_{eH}(x) = g \exp(x)H\in G/H
\end{equation}
and its applications. 

As $H$ is invariant under $\tau$ and $\sigma$, 
both define commuting involutions on $M$ 
and their fixed point 
manifolds intersect transversally in $eH$. 
The map 
\[ \Psi \: G^\sigma\times_{H^\sigma} \fq^{-\sigma}\to N(G^\sigma/H^\sigma), \quad \
[g,x] \mapsto g.x\]
is a diffeomorphism onto the {\it normal bundle} 
$N(G^\sigma/H^\sigma)$ of 
the subspace $G^\sigma.eH \cong G^\sigma/H^\sigma$ 
and $\Phi = \Exp \circ \Psi$, where 
$\Exp \: T(M) \to M$ is the exponential map.

First, we determine the regular points of $\Phi$. 
As $\Phi$ is $G^\sigma$-equivariant, it suffices to 
determine for which points $[e,x]$ the 
tangent map $T_{[e,x]}(\Phi)$ is injective, hence bijective 
for dimensional reasons. 
In the following calculation, we shall use 
the formula 
\begin{equation}
  \label{eq:pmap1}
 T_x(\Exp_{eH})y = \exp(x).\frac{\sinh(\ad x)}{\ad x} y 
\quad \mbox{ for } \quad x,y \in \fq
\end{equation}
for the differential of $\Exp$ 
(\cite[Lemma~4.6]{DN93}), where 
$\fq \to T_{\Exp_{eH}(x)}, v \mapsto \exp x.v$, is the linear isomorphism 
induced by the action of $\exp x \in G$ on $M$. 
For $a \in \g^\sigma , x,b \in \fq^{-\sigma}$, we obtain 
\begin{align}\label{eq:tangmap}
 T_{[e,x]}(\Phi)(a,b) 
&= a.\Exp(x) + T_{x}(\Exp_{eH})(b) \notag \\ 
&= \exp(x).\Big(p_{\fq}(e^{-\ad x} a) + \frac{\sinh(\ad x)}{\ad x} b\Big)
\end{align}

Note that $e^{-\ad x}a = \cosh (\ad x) a - \sinh (\ad x) a$. 
If  $a\in \fh^{\sigma}$ then $p_\fq(e^{-\ad x}a) = - \sinh(\ad x)a$, and if 
$a \in \fq^\sigma$, then $p_\fq(e^{-\ad x}a) = \cosh(\ad x)a$. 
Writing $a = a_\fh + a_\fq$ with $a_\fh \in \fh^\sigma$ 
and $a_\fq \in \fq^\sigma$, we thus  obtain 
\begin{equation}\label{eq:tangmap2}T_{[e,x]}(\Phi)(a,b) 
= \exp(x).\Big(\underbrace{\cosh(\ad x)a_\fq}_{\in \fq^\sigma} 
+ \underbrace{\frac{\sinh(\ad x)}{\ad x} b
- \sinh(\ad x)a_\fh}_{\in \fq^{-\sigma}}
\Big).
\end{equation}

The following lemma provides a characterization of the regular points. 

\begin{lem} \mlabel{lem:2.3} 
For $x \in \fq$, the following assertions hold: 
  \begin{itemize}
 \item[\rm(a)] $\Exp_{eH}$ is regular in $x$ if and only if the map 
$\frac{\sinh(\ad x)}{\ad x} \: \fq \to \fq$ 
is invertible, which is equivalent to 
\begin{equation}
  \label{eq:2.3.a}
\Spec(\ad x\res_{\fq_L}) \cap  \Z\pi i \subeq \{0\}, 
\quad \mbox{ where } \quad \fq_L := \fq + [\fq,\fq].
\end{equation}
  \item[\rm(b)] If $\Exp_{eH}\res_{\fq^{-\sigma}}$ is regular in 
$x \in \fq^{-\sigma}$, 
then the polar map $\Phi$ in \eqref{eq:pmap1} is regular in 
$[g,x]$ if and only if, in addition, 
$\cosh(\ad x) \: \fq^\sigma \to \fq^\sigma$ 
is invertible, which is equivalent to 
\begin{equation}
  \label{eq:2.3.b}
\Spec(\ad x\res_{\fq_L}) \cap  \Big(\frac{\pi}{2} + \Z\pi\Big) i 
= \eset.
\end{equation}
  \end{itemize}
\end{lem}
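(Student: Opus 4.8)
The plan is to reduce the regularity of the two maps to an invertibility statement for the endomorphisms $\frac{\sinh(\ad x)}{\ad x}$ and $\cosh(\ad x)$ acting on the relevant subspaces, and then to translate that invertibility into the spectral conditions \eqref{eq:2.3.a} and \eqref{eq:2.3.b} via the spectral lemmas \ref{lem:c.1} and \ref{lem:c.2}. The starting point for (a) is the formula \eqref{eq:pmap1} for $T_x(\Exp_{eH})$: since the action of $\exp x$ on $M$ is a diffeomorphism, $\Exp_{eH}$ is regular at $x$ precisely when $\frac{\sinh(\ad x)}{\ad x}\colon \fq \to \fq$ is invertible. For (b), the analogous reduction uses \eqref{eq:tangmap2}, which decomposes $T_{[e,x]}(\Phi)$ into a block acting as $\cosh(\ad x)$ on $\fq^\sigma$ and a block acting as $\frac{\sinh(\ad x)}{\ad x}b - \sinh(\ad x)a_\fh$ with values in $\fq^{-\sigma}$; once $\Exp_{eH}\restriction_{\fq^{-\sigma}}$ is already regular at $x$, the second block is surjective onto $\fq^{-\sigma}$ for $b$ ranging over $\fq^{-\sigma}$ (taking $a_\fh = 0$), so injectivity of the whole map is equivalent to injectivity of $\cosh(\ad x)$ on $\fq^\sigma$, hence to its invertibility by dimension count.

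The second half of each part is the passage from "invertible on $\fq$ (resp.\ $\fq^\sigma$)" to the displayed spectral conditions. Here the key observation is that $\ad x$ preserves the subspace $\fq_L = \fq + [\fq,\fq]$, which is a $\tau$-invariant subalgebra, and that both $\frac{\sinh(\ad x)}{\ad x}$ and $\cosh(\ad x)$ are invertible on $\fq$ (resp.\ $\fq^\sigma$) if and only if they are invertible as operators on $\fq_L$; indeed $\frac{\sinh(\ad x)}{\ad x}$ has the eigenvalue $1$ on any complement of $\fq_L$ on which $\ad x$ acts trivially, and the only obstruction to invertibility lives in $\fq_L$ where $\ad x$ acts faithfully modulo its centralizer. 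Lemma~\ref{lem:c.1} gives $\ker\!\big(\tfrac{\sinh(\ad x)}{\ad x}\big) = \bigoplus_{n\neq 0}\ker((\ad x)^2 + n^2\pi^2\1)$, so nonvanishing of this kernel on $\fq_L$ is exactly the statement that $\Spec(\ad x\restriction_{\fq_L})$ meets $\Z\pi i \setminus \{0\}$; intersecting $\Spec(\ad x\restriction_{\fq_L})$ with $\Z\pi i$ being contained in $\{0\}$ is the contrapositive, which is \eqref{eq:2.3.a}. For (b), Lemma~\ref{lem:c.2} gives $\ker(\cosh(\ad x)) = \bigoplus_{n\in\N}\ker\!\big((\ad x)^2 + (n+\tfrac12)^2\pi^2\1\big)$, and one must additionally check that this kernel, if nonzero, already shows up inside $\fq^\sigma \cap \fq_L$; this uses that $\ad x$ with $x \in \fq^{-\sigma}$ anticommutes with $\sigma$, so $(\ad x)^2$ commutes with $\sigma$ and its eigenspaces are $\sigma$-graded, and the purely imaginary eigenvalues $\pm(n+\tfrac12)\pi i$ of $\ad x$ come in $\sigma$-swapped pairs whose $\sigma$-eigenspace decomposition meets both $\fq^\sigma$ and $\fq^{-\sigma}$. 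Hence the kernel of $\cosh(\ad x)$ on $\fq^\sigma$ is nonzero iff $\Spec(\ad x\restriction_{\fq_L})$ meets $(\tfrac\pi2 + \Z\pi)i$, giving \eqref{eq:2.3.b}.

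I expect the main obstacle to be the bookkeeping in step (b): one has to be careful that the "already regular" hypothesis on $\Exp_{eH}\restriction_{\fq^{-\sigma}}$ is used correctly (it rules out eigenvalues in $\Z\pi i$ on $\fq_L$ but says nothing about the half-integer multiples of $\pi i$), and one has to verify that the $\cosh$-kernel genuinely obstructs surjectivity of $T_{[e,x]}(\Phi)$ onto the $\fq^\sigma$-part of $T_{\Exp(x)}(M)$ rather than being absorbed by the $\fq^{-\sigma}$ block. The decomposition \eqref{eq:tangmap2} is precisely engineered so that the two blocks have disjoint target subspaces $\fq^\sigma$ and $\fq^{-\sigma}$, so this separation is clean, but it is worth spelling out. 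The rest — the reductions to $\fq_L$ and the citations of Lemmas~\ref{lem:c.1} and~\ref{lem:c.2} — is routine linear algebra once the setup is in place.
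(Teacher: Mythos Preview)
Your proposal is correct and follows essentially the same route as the paper: reduce regularity to invertibility of $\frac{\sinh(\ad x)}{\ad x}$ on $\fq$ (resp.\ $\cosh(\ad x)$ on $\fq^\sigma$) via the differential formulas \eqref{eq:pmap1} and \eqref{eq:tangmap2}, then translate into spectral conditions via Lemmas~\ref{lem:c.1} and~\ref{lem:c.2}. The paper's own proof is extremely terse---it simply cites Lemma~\ref{lem:c.1} for (a) and asserts the equivalence in (b) without further argument---so your additional bookkeeping about $\fq_L$ and the $\sigma$-grading is more than the paper provides.

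One caveat worth flagging: your justification that the $\cosh$-kernel, if nonzero on $\fq_L$, must meet $\fq^\sigma$ is not quite airtight as stated. The eigenspaces $V_\lambda \oplus V_{-\lambda}$ for $\lambda = (n+\tfrac12)\pi i$ are indeed $\tau$- and $\sigma$-invariant, but the involution $\tau\sigma$ commutes with $\ad x$ and hence preserves $V_\lambda$; the intersection with $\fq^\sigma_\C$ corresponds to the $(-1)$-eigenspace of $\tau\sigma\vert_{V_\lambda}$, which you have not shown is nonzero. The paper does not address this point either, and in the applications (Proposition~\ref{prop:4.9}, Theorem~\ref{thm:crownchar-gen}) only the direction \eqref{eq:2.3.b} $\Rightarrow$ regularity is used, which is unproblematic. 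So this is not a gap relative to the paper, but you should be aware that the full ``iff'' in (b) is more delicate than your sketch suggests.
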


\begin{prf} (a) follows from the spectral theoretic description 
of the kernel of $\frac{\sinh(\ad x)}{\ad x}\big|_{\fq}$ as the intersection of 
$\fq$ with the sum of the eigenspaces of $\ad x$ in $\g_\C$ for the 
eigenvalues $\lambda \in \pi i \Z \setminus \{0\}$ 
(Lemma~\ref{lem:c.1}). 

\nin (b) Suppose that the restriction of $\Exp_{eH}$ to $\fq^{-\sigma}$ 
is regular, i.e., that 
$\frac{\sinh(\ad x)}{\ad x} \: \fq^{-\sigma} \to \fq^{-\sigma}$ 
is invertible. 
Then 
\eqref{eq:tangmap2} shows that $\Phi$ is regular in $[e,x]$ 
if and only if 
$\cosh(\ad x) \: \fq^\sigma \to \fq^\sigma$ 
is invertible, and this is equivalent to the condition on 
$\Spec(\ad x\res_{\fq_L})$ stated in~(b).
\end{prf}

The following lemma contains a wealth of information 
on singular points of the polar map~$\Phi$. 

\begin{lem} \mlabel{lem:2.7} 
Let $\Omega \subeq \fq^{-\sigma}$ be an open 
$H^\sigma$-invariant subset consisting of $\Exp$-regular elliptic elements, 
and consider the polar map 
\[ \Phi \: G^\sigma_e \times_{H^\sigma} \Omega \to M, \quad [g,y] 
\mapsto g.\Exp_{eH}(y).\] 
Let $x \in \Omega$, write 
$m := \Exp_{eH}(x) \in M = G/H$, $\cO_m := G^\sigma_e.m$ for its orbit, and put 
\[ \sigma_x := e^{-2 \ad x}, \qquad \zeta_x := e^{-\ad x} \in \Aut(\g).\] 
Then the following assertions hold: 
\begin{itemize}
\item[\rm(a)] $\g^{-\sigma_x} := \{ y \in \g \: \sigma_x(y) = - y \} 
= \ker(\cosh(\ad x))$. 
\item[\rm(b)] $\fq^{\sigma, -\sigma_x}$ complements the subspace 
$\exp(-x).\im(T_{[e,x]}(\Phi)) = \cosh(\ad x)\fq^\sigma \oplus \fq^{-\sigma}$  
in $\fq$. 
\item[\rm(c)] The eigenspaces $\g^{\pm\sigma_x}$ are $\tau$-invariant, and on 
the Lie subalgebra $\g^{\sigma_x^2} = \g^{\sigma_x} \oplus \g^{-\sigma_x}$, 
the involution $\tau$ commutes with $\sigma_x$. 
\item[\rm(d)] The eigenspaces $\g^{\pm\sigma_x}$ are $\zeta_x$-invariant, 
on $\g^{\sigma_x}$ the automorphisms $\zeta_x$ and $\tau$ commute, 
and on on $\g^{-\sigma_x}$ the complex structure $\zeta_x$ and $\tau$ 
anticommute. In particular, we have 
\[ \zeta_x(\fh^{\sigma_x}) = \fh^{\sigma_x}, \quad 
\zeta_x(\fq^{\sigma_x}) = \fq^{\sigma_x}, \quad 
\zeta_x(\fh^{-\sigma_x}) = \fq^{-\sigma_x}, \quad 
\zeta_x(\fq^{-\sigma_x}) = \fh^{-\sigma_x}.\] 
\item[\rm(e)] The eigenspaces $\g^{\pm\sigma_x}$ are $\sigma$-invariant, and on 
the Lie subalgebra $\g^{\sigma_x^2}$, 
the involution $\sigma$ commutes with $\sigma_x$. 
On $\g^{\sigma_x}$, the automorphisms $\zeta_x$ and $\sigma$ commute, 
and on on $\g^{-\sigma_x}$ the complex structure $\zeta_x$ 
anticommute with  $\sigma$. In particular, we have 
\[ \zeta_x(\g^{\sigma,\sigma_x}) = \g^{\sigma,\sigma_x}, \quad 
\zeta_x(\g^{-\sigma,\sigma_x}) = \fg^{-\sigma, \sigma_x}, \quad 
\zeta_x(\g^{\pm\sigma, -\sigma_x}) = \g^{\mp\sigma, -\sigma_x}.\] 
\item[\rm(f)] The stabilizer Lie algebra of $m$ in $\g$ is 
\[ \g_m = \g^{\tau\sigma_x}= \{ y \in \g \: \sigma_x(y) = \tau(y)\} 
 \quad \mbox{ and } \quad 
\g_m^{\sigma_x^2} =  \fh^{\sigma_x} \oplus \fq^{-\sigma_x}.\]
The stabilizer Lie algebra in $\g^\sigma$ is 
\begin{equation}
  \label{eq:gmsigma}
 \g_m^\sigma =\fh^{\sigma,\sigma_x} \oplus \fq^{\sigma, -\sigma_x}.  
\end{equation}
The stabilizer group $G_m$ acts on $T_m(M) = \exp x.\fq$ 
by $g.(\exp x.y) = \exp x.\big(\Ad(\zeta_x^G(g))y\big)$. 
\item[\rm(g)] The tangent space of the orbit $\cO_m$ is 
\[ T_m(\cO_m) = \exp(x).\big(\cosh(\ad x)\fq^\sigma + [x,\fh^\sigma]\big).\] 
\item[\rm(h)] Let $\fq_x := \fq^{\sigma,-\sigma_x} + \fq^{-\sigma,\sigma_x}$ 
and $\fh_x := [\fq_x,\fq_x]$.  
If the group  $\Inn_\g(\fh_x)$ acts as a relatively compact group on $\fq_x$ and 
$\Inn_\g(\fh_x)\fq^{-\sigma,\sigma_x} = \fq_x$, then 
$m$ is an interior point of $\im(\Phi)$. 
\end{itemize}
\end{lem}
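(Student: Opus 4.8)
The final statement to prove is Lemma~\ref{lem:2.7}, parts (a)--(h). Since this is a collection of linear-algebraic facts about the polar map $\Phi \: G^\sigma_e \times_{H^\sigma} \Omega \to M$ at a point $m = \Exp_{eH}(x)$ with $x \in \Omega$ elliptic and $\Exp$-regular, I would organize the proof around the single key observation that $\sigma_x = e^{-2\ad x}$ is an \emph{involution on the subalgebra $\g^{\sigma_x^2}$}, and that $\zeta_x = e^{-\ad x}$ restricts there to a square root of $\sigma_x$, with $\tau$ and $\sigma$ each either commuting or anticommuting with $\zeta_x$ on the $\pm 1$-eigenspaces according to the sign of $\cosh$ versus $\sinh$ in the expansion $e^{-\ad x} = \cosh(\ad x) - \sinh(\ad x)$.

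\textbf{Plan of proof.} First I would prove (a): since $\sigma_x(y) = e^{-2\ad x} y$, the eigenvalue equation $\sigma_x(y) = -y$ is $e^{-2\ad x} y = -y$, i.e.\ $(e^{-2\ad x} + \1)y = 0$, which by Lemma~\ref{lem:c.2} is exactly $\ker(\cosh(\ad x))$ (up to the harmless factor $e^{-\ad x}$, which is invertible). Part (b) then follows from the formula \eqref{eq:tangmap2} for $T_{[e,x]}(\Phi)$: after multiplying by $\exp(-x)$, the image is $\cosh(\ad x)\fq^\sigma \oplus \fq^{-\sigma}$, and since $x$ is $\Exp$-regular the $\fq^{-\sigma}$ part is all of $\fq^{-\sigma}$; the cokernel inside $\fq^\sigma$ is $\ker(\cosh(\ad x)) \cap \fq^\sigma = \fq^{\sigma,-\sigma_x}$ by (a), which gives the claimed complement. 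For (c)--(e), I would argue that $\ad x$ commutes with $\tau$ composed with $-1$ (because $x \in \fq$, so $\tau\ad x \tau = -\ad x$), hence $\tau$ conjugates $e^{s\ad x}$ to $e^{-s\ad x}$; therefore $\tau \sigma_x \tau = \sigma_x^{-1}$ and $\tau \zeta_x \tau = \zeta_x^{-1}$. On $\g^{\sigma_x}$ this means $\tau \zeta_x = \zeta_x^{-1}\tau = \zeta_x \tau$ (since $\zeta_x^2 = \sigma_x = \id$ there... more precisely $\zeta_x^{-1} = \sigma_x^{-1}\zeta_x$ and on $\g^{\sigma_x}$, $\sigma_x = \id$), and on $\g^{-\sigma_x}$ one gets $\tau\zeta_x = -\zeta_x\tau$; the four displayed identities on $\fh^{\pm\sigma_x}, \fq^{\pm\sigma_x}$ are immediate consequences, noting $\zeta_x^2 = \sigma_x = -\id$ on $\g^{-\sigma_x}$ so $\zeta_x$ is a complex structure there. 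The same argument with $\sigma$ in place of $\tau$, using $\sigma(x) = -x$, gives (e). For (f): $g \exp(x) H = \exp(x) H$ iff $\exp(-x) g \exp(x) \in H = G^\tau$ locally, iff $\zeta_x^G(g) \in G^\tau$, and differentiating gives $\g_m = \zeta_x^{-1}(\g^\tau) = \g^{\zeta_x \tau \zeta_x^{-1}}$; since $\tau$ inverts $\zeta_x$, $\zeta_x\tau\zeta_x^{-1} = \zeta_x^2\tau = \sigma_x\tau = \tau\sigma_x$ on the relevant subalgebra, so $\g_m = \g^{\tau\sigma_x}$; intersecting with $\g^{\sigma_x^2}$ and using $\tau\sigma_x = \sigma_x\tau$ there splits it as $\fh^{\sigma_x} \oplus \fq^{-\sigma_x}$. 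Intersecting instead with $\g^\sigma$ and using (e) gives \eqref{eq:gmsigma}. The action of $G_m$ on $T_m(M)$ via $\Ad \circ \zeta_x^G$ is the standard computation of the isotropy representation. Part (g) follows by differentiating the orbit map $G^\sigma_e \to M$ at $e$: the image of the differential is $\exp(x).(p_\fq(e^{-\ad x}\g^\sigma)) = \exp(x).(\cosh(\ad x)\fq^\sigma + [x,\fh^\sigma])$ using again the decomposition $e^{-\ad x} = \cosh - \sinh$ applied to $\fh^\sigma \oplus \fq^\sigma = \g^\sigma$.

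\textbf{The main obstacle: part (h).} This is the genuinely nontrivial point, asserting that $m$ lies in the \emph{interior} of $\im(\Phi)$ even though $T_{[e,x]}(\Phi)$ need not be surjective. The cokernel of the differential is $\fq^{\sigma,-\sigma_x}$ (by (b)), and more symmetrically the "missing directions" inside $T_m(M)$ are $\fq_x = \fq^{\sigma,-\sigma_x} + \fq^{-\sigma,\sigma_x}$ (the first summand missing from the $\fq^\sigma$-side of $\Phi$, the second from the orbit directions not reached). The strategy I would use: consider the auxiliary subgroup $\Inn_\g(\fh_x)$ where $\fh_x = [\fq_x,\fq_x]$; by (d) the complex structure $\zeta_x$ interchanges $\fq^{-\sigma,\sigma_x}$ with $\fh^{\sigma,-\sigma_x}$... rather, one uses $\zeta_x$ to identify the "bad" directions with genuine group directions in a larger symmetric pair. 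Concretely: the hypothesis says $\Inn_\g(\fh_x)$ acts relatively compactly on $\fq_x$ with $\Inn_\g(\fh_x)\fq^{-\sigma,\sigma_x} = \fq_x$, so the orbit $\Inn_\g(\fh_x).m$ fills out an open neighborhood of $m$ inside a complementary direction to $\im(T\Phi)$, and combining the $G^\sigma$-orbit directions with the $\fh_x$-directions and the $\Exp$-regular normal directions $\fq^{-\sigma}$ one covers a full neighborhood. The delicate part is making "relatively compact action $\Rightarrow$ open image" precise: I would use that a relatively compact group acting on a vector space, restricted to an orbit through a point whose orbit spans, together with transversal regular directions, yields via the inverse function theorem applied to an appropriately enlarged polar-type map $G^\sigma_e \times \Inn_\g(\fh_x) \times \Omega \to M$ a submersion at the relevant point --- here relative compactness guarantees the extra parameters do not "escape" and that one genuinely gets an open set rather than just an immersed submanifold. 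This is where I expect to spend the most care, likely invoking a lemma of the type "\cite[Lemma~1.4]{Ne99}" on $K$-invariant open sets or a direct inverse-function-theorem argument on the product manifold.
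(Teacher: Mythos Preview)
Your treatment of (a)--(g) is essentially the paper's argument and is fine; the only places to tighten are the derivation of \eqref{eq:gmsigma} (you need the extra observation that $\g_m \cap \g^\sigma \subeq \g^{\sigma_x^2}$, which follows since $\sigma(y)=y$ and $\tau\sigma_x(y)=y$ together give $\tau\sigma_x^{-1}(y)=y$, hence $\sigma_x^2(y)=y$) and the precise identification $\fq^{-\sigma,\sigma_x} = \fq^{-\sigma} \cap \ker(\ad x)$, which uses that $x$ is elliptic and $\Exp$-regular.

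The real gap is in (h). Your proposed map $G^\sigma_e \times \Inn_\g(\fh_x) \times \Omega \to M$ does not make sense: $\Inn_\g(\fh_x)$ is a group of automorphisms of $\g$, not a subgroup of $G$ acting on $M$, and the obvious lift $\exp(\fh_x) \subeq G$ is \emph{not} contained in $G^\sigma$ (indeed $\fh_x \subeq \fh^{\sigma,\sigma_x} \oplus \fh^{-\sigma,-\sigma_x}$), so acting by it would leave $\im(\Phi)$. The paper's mechanism is different and is the missing idea: the space $\fq_x$ is taken as a complement to the \emph{orbit} tangent $T_m(\cO_m)$, not to $\im(T\Phi)$. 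The summand $\fq^{-\sigma,\sigma_x} = \fq^{-\sigma}\cap\ker(\ad x)$ is already reached by $\Phi$, via $\Exp_{eH}(x+y) = \exp(x).\Exp_m(y)$ for small $y$ commuting with $x$. The key computation, using (d) and (e), is the inclusion $\zeta_x^{-1}(\fh_x) \subeq \g_m^\sigma$: thus the group generated by $\fh_x$ corresponds, after conjugation by $\exp(-x)$, to a subgroup of the \emph{stabilizer} $(G^\sigma_e)_m$. This subgroup does not move $m$ at all; instead, by the isotropy formula in (f), it acts on $\exp(x).\fq_x \subeq T_m(M)$ exactly as $\Inn_\g(\fh_x)$ acts on $\fq_x$. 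Since $(G^\sigma_e)_m \subeq G^\sigma_e$ preserves $\im(\Phi)$, the set $\{y \in \fq_x : \Exp_m(\exp(x).y) \in \im(\Phi)\}$ is $\Inn_\g(\fh_x)$-invariant and contains a $0$-neighborhood in $\fq^{-\sigma,\sigma_x}$; the relative compactness hypothesis (take an invariant inner product) together with $\Inn_\g(\fh_x)\fq^{-\sigma,\sigma_x} = \fq_x$ then makes $\Inn_\g(\fh_x).U_0$ a $0$-neighborhood in all of $\fq_x$, and the inverse function theorem for $(g,y)\mapsto g.\Exp_m(\exp(x).y)$ on $G^\sigma_e \times \fq_x$ finishes. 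So the role of $\Inn_\g(\fh_x)$ is to rotate tangent directions at the fixed point $m$, not to produce new points of $M$.
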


\begin{prf} (a) follows directly from $2\cosh(\ad x) = e^{\ad x} + e^{-\ad x}$. 

\nin (b) With \eqref{eq:tangmap} 
we see that  the image of $T_{[e,x]}(\Phi)$ is the subspace 
\[ \exp(x).\Big(\underbrace{\cosh(\ad x)\fq^{\sigma}}_{\subeq \fq^{\sigma}}
+ 
\underbrace{\frac{\sinh(\ad x)}{\ad x} \fq^{-\sigma}}_{\subeq 
\fq^{-\sigma}}\Big).\]
As $\ad x$ is semisimple, a complement of 
$\exp(-x).\im(T_{[e,x]}(\Phi))$ in $\fq$ is 
\begin{equation}
  \label{eq:kercosh}
\ker\big(\cosh(\ad x)\res_{\fq^{\sigma}}\big)  
= \fq^{\sigma, -\sigma_x}.
\end{equation}

\nin (c) In view of $\tau \sigma_x^2 \tau = \sigma_x^{-2}$, 
the fixed point space 
$\g^{\sigma_x^2}$ is $\tau$-invariant. On this subspace $\sigma_x= \sigma_x^{-1} 
= \tau \sigma_x \tau$, so that $\tau$ preserves the two eigenspaces
$\g^{\pm \sigma_x}$ of $\sigma_x$ on $\g^{\sigma_x^2}$. 

\nin (d) As $\zeta_x$ commutes with $\sigma_x$, 
the eigenspaces $\g^{\pm\sigma_x}$ are $\zeta_x$-invariant. 
Further, (c) implies that, on $\g^{\sigma_x}$, we have 
$\tau \zeta_x \tau = \zeta_x^{-1} = \zeta_x$. 

\nin (e) is shown with similar arguments as (c) and (d). 

\nin (f) In $M = G/H$, the point $m = \Exp_{eH}(\exp x) = \exp x H$ 
is obtained by acting 
with $\exp x$ on the base point $eH$. Therefore its stabilizer group is 
$G_m = \exp x H \exp(-x)$ with the Lie algebra 
\[ \g_m = e^{\ad x} \fh = \Fix(e^{\ad x} \tau e^{-\ad x}) 
= \Fix(\tau e^{-2\ad x}) = \Fix(\tau \sigma_x).\]
Now the $\tau$-invariance of $\g^{\sigma_x^2}$ implies that 
\[ \g^{\sigma_x^2}_m 
= \g_m \cap \g^{\sigma_x^2} 
= \g^{\sigma_x,\tau} \oplus \g^{-\sigma_x,-\tau}
= \fh^{\sigma_x} \oplus \fq^{-\sigma_x}.\] 

To verify \eqref{eq:gmsigma}, let $y = y_\fh+ y_\fq \in \g^\sigma$ with 
$y_\fh \in\fh^\sigma$ and $y_\fq \in \fq^\sigma$. 
Then the corresponding vector field $Y_M$ on $M$ satisfies 
\begin{align} 
  \label{eq:ymm}
 Y_M(m) &= y.m 
= \exp(x).p_{\fq}(e^{-\ad x}y) 
= \exp(x).\Big(\frac{1}{2}(e^{-\ad x}y - \tau(e^{-\ad x}y)\Big) \notag\\
&= \exp(x).\Big(\frac{1}{2}(e^{-\ad x}y - e^{\ad x}\tau(y))\Big)
= \exp(x).\big(\cosh(\ad x)y_\fq - \sinh(\ad x)y_\fh\big).
\end{align} 
Therefore $Y_M(m) = 0$ is equivalent to 
\[ 0 = \underbrace{\cosh(\ad x)y_\fq}_{\in \fq^{\sigma}} - 
\underbrace{\sinh(\ad x) y_\fh}_{\in \fq^{-\sigma}}.\] 
Thus both summands have to vanish, which is equivalent to 
\[ e^{-2\ad x} y_\fq = - y_\fq \quad \mbox{ and }\quad 
e^{-2\ad x} y_\fh =  y_\fh.\] 
This implies \eqref{eq:gmsigma}.

To complete the proof of (f), we note that, 
for $v \in \fq$ and $g \in G_m$, we have 
\begin{equation}
  \label{eq:expxtrans}
 g.(\exp(x).y)
= \exp(x).\big(\Ad(\zeta^G_x(g))y\big),
\end{equation}
where  $\zeta^G_x(G_m) = H$ acts on $T_{eH}(M) \cong \fq$ by 
the adjoint representation. 

\nin (g) From \eqref{eq:ymm} it follows that 
\begin{align*}
\exp(-x).T_m(\cO_m) 
&= \cosh(\ad x) \fq^\sigma +  \sinh(\ad x) \fh^\sigma 
= \cosh(\ad x) \fq^\sigma +  \frac{\sinh(\ad x)}{\ad x} [x,\fh^\sigma]\\
&{\buildrel ! \over =} \cosh(\ad x) \fq^\sigma + [x,\fh^\sigma]. 
\end{align*}
Here the last equality follows from 
$(\ad x)^2 \fh^\sigma \subeq \fh^\sigma$ and the 
$\Exp$-regularity of $x$, which, by Lemma~\ref{lem:2.3}, 
is equivalent to the invertibility 
of $\frac{\sinh(\ad x}{\ad x}$ on $\fq$. 

\nin (h) By \eqref{eq:kercosh}, a natural complement of 
\[ \exp(-x).T_m(\cO_m) = \cosh(\ad x) \fq^\sigma + [x,\fh^\sigma] \] 
in $\fq$ is the subspace 
\[ \fq_x := 
\fq^{\sigma,-\sigma_x} \oplus (\fq^{-\sigma} \cap \ker(\ad x)) 
= \fq^{\sigma,-\sigma_x} \oplus \fq^{-\sigma, \sigma_x},\] 
where the last equality follows from 
\[ \fq^{\sigma_x} = \bigoplus_{n \in \Z}  \ker( (\ad x)^2 + \pi^2 n^2 \id_\fq) 
\quad \mbox{ and }\quad 
\Spec(\ad x) \cap \Z \pi i \subeq \{0\}.\] 

As $x$ is $\Exp_{eH}$-regular and 
\[ \Exp_{eH}(x + y) = \exp(x).\Exp_m(y) \quad 
\mbox{ for }\quad y \in \fq^{-\sigma} \cap \ker(\ad x) 
= \fq^{-\sigma,\sigma_x},\]
the subset 
\[ \Omega := \{ y \in \fq_x \: \Exp_m(\exp(x).y) \in \im(\Phi) \} \] 
contains a $0$-neighborhood $U_0$ in $\fq^{-\sigma,\sigma_x}$. 
Now our assumption 
$\Inn_\g(\fh_x) \fq^{-\sigma,\sigma_x} =\fq_x$ and the relative 
compactness of the group $\Inn_\g(\fh_x)$ on $\fq_x$ imply that 
$U_1 := \Inn_\g(\fh_x) U_0$ is a $0$-neighborhood in $\fq_x$. 

Finally, we obtain from \eqref{eq:gmsigma} 
\[ \zeta_x^{-1}(\fh_x) \subeq 
 \zeta_x^{-1}(\fh^{\sigma,\sigma_x} + \fh^{-\sigma,-\sigma_x}) 
\ {\buildrel {(d),(e)} \over \subeq}\  \fh^{\sigma, \sigma_x} + \fq^{\sigma, -\sigma_x} 
\ {\buildrel \eqref{eq:gmsigma} \over =}\  \g_m^\sigma,\] 
so that 
\[ \Exp_m(\exp(x).U_1) 
= \Exp_m(\exp(x).\Inn_\g(\fh_x) U_0) 
= (G^\sigma_e)_m.\Exp_m(\exp(x).U_0) 
\subeq \im(\Phi).\] 
This means that $\Omega$ is a $0$-neighborhood in $\fq_x$, 
and hence that $\im(\Phi)$ is a neighborhood of $m$ 
because the map 
\[ G^\sigma_e \times \fq_x \to M, \quad 
(g,y) \mapsto g.\Exp_m(\exp(x).y) \] 
has surjective differential in $(e,0)$. 
\end{prf}

\begin{rem} If $\rho_i(\ad x) < \pi/2$, then 
the polar map $\Phi$ in the preceding lemma 
is regular in $[e,x]$ (Lemma~\ref{lem:2.3}). In this 
case $\ker(\cosh(\ad x)) = \g^{-\sigma_x}$ is trivial, so that 
\[ \g_m^\sigma = \fh^{\sigma,\sigma_x} = \fh^\sigma \cap \ker(\ad x) 
= \fz_{\fh^\sigma}(x)\]
follows from Lemma~\ref{lem:2.7}(f).
\end{rem}

The next lemma is useful to verify condition (h) in the preceding lemma. 

\begin{lem} \mlabel{lem:2.4} 
Suppose that $(\g,\tau)$ is a reductive symmetric Lie algebra 
such that $\fq$ consist of elliptic elements,  $\sigma\tau= \tau\sigma$, 
and that $\fa \subeq \fq^{-\sigma}$ 
is a maximal abelian subspace in $\fq$. Then 
\[ \fq^\sigma =[\fh^{-\sigma}, \fq^{-\sigma}], 
\qquad \Inn_\g([\fq,\fq])\fq^{-\sigma} = \fq,\] 
and the group $\Inn_\g([\fq,\fq])$ is compact. 
\end{lem}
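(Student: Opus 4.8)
The statement concerns a reductive symmetric Lie algebra $(\g,\tau)$ with $\fq$ elliptic, commuting involutions $\sigma,\tau$, and a maximal abelian subspace $\fa\subeq\fq^{-\sigma}$; we must show $\fq^\sigma=[\fh^{-\sigma},\fq^{-\sigma}]$, $\Inn_\g([\fq,\fq])\fq^{-\sigma}=\fq$, and that $\Inn_\g([\fq,\fq])$ is compact. The plan is to work on the $c$-dual side: passing to $\g^\star := \fq\oplus i\fh$ (or rather to the Riemannian symmetric Lie algebra attached to $(\g,\tau)$), the ellipticity of $\fq$ means exactly that this is a compact symmetric Lie algebra, so $[\fq,\fq]\oplus\fq$ is a compact Lie algebra. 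This immediately gives the third assertion: $\Inn_\g([\fq,\fq])$ is a connected subgroup of the compact group $\Inn([\fq,\fq]\oplus\fq)$ acting on $\fq$, hence relatively compact, and being the inner automorphism group of a compact Lie algebra it is in fact compact.

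For the first identity I would argue inside the Lie subalgebra $\fg_L := [\fq,\fq]\oplus\fq$, which is compact and $\tau$-stable, and on which $\sigma$ still acts as a commuting involution. Decomposing $\fq^\sigma$ and using that $\fq$ is a Lie triple system, one has $[\fh^{-\sigma},\fq^{-\sigma}]\subeq\fq^\sigma$ from the sign bookkeeping ($\tau$ sends each factor to $\fq$, $\sigma$ multiplies the product of a $(-\sigma)$ and a $(-\sigma)$ element by $+1$). For the reverse inclusion, I would invoke the structure of Riemannian symmetric spaces of compact type: the symmetric pair obtained from $\fg_L$ with the involution $\sigma$ restricted to it has the property that $\fq^\sigma$ (which plays the role of the ``$\fk$-part'' relative to $\sigma$ inside the Riemannian space) is generated by brackets with $\fq^{-\sigma}$ because $\fg_L$ itself is generated by $\fq$. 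Concretely: $[\fq^{-\sigma},\fq^{-\sigma}]\oplus\fq^{-\sigma}$ is a $\sigma$-fixed subalgebra, its orthocomplement in $\fg_L$ is an ideal centralizing $\fq^{-\sigma}$, hence — since $\fa\subeq\fq^{-\sigma}$ is maximal abelian in all of $\fq$ and $\fg_L$ is semisimple with no center — this ideal must be trivial, forcing $\fg_L=[\fq^{-\sigma},\fq^{-\sigma}]\oplus\fq^{-\sigma}$; reading off the $\fh^{-\sigma}$-component and bracketing with $\fq^{-\sigma}$ then yields $\fq^\sigma\subeq[\fh^{-\sigma},\fq^{-\sigma}]$. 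I would model this on the argument already used in the excerpt in the proof of Proposition~\ref{prop:decomp} (the ``$[W_k,W_k]\oplus W_k$ is an ideal'' trick) and in Lemma~\ref{lem:qbrackh}(a).

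The second identity, $\Inn_\g([\fq,\fq])\fq^{-\sigma}=\fq$, is the transitivity-on-directions statement analogous to Lemma~\ref{lem:3.1}(b) but in the compact (elliptic) setting. Here I would use that any $x\in\fq$ is conjugate under $\Inn_\g(\fh)$ into the maximal abelian $\fa$ (by the standard argument for Riemannian symmetric pairs of compact type — every element of the ``$\fp$-part'' lies in a maximal torus, all of which are conjugate), combined with the fact that $\fa\subeq\fq^{-\sigma}$; and one checks that the relevant conjugating elements may be taken in $\Inn_\g([\fq,\fq])$ since $[\fq,\fq]$ already contains $[\fh^{-\sigma},\fq^{-\sigma}]$ and, together with the first identity, spans enough of $\fh$ to move any $x$ into $\fa$. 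Then $\Inn_\g([\fq,\fq])\fq^{-\sigma}\supeq\Inn_\g([\fq,\fq])\fa\ni$ every $\Inn_\g(\fh)$-conjugate of every element, which exhausts $\fq$.

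\textbf{Main obstacle.} The delicate point is not the compactness (which is essentially immediate from ellipticity) but pinning down \emph{why the maximality of $\fa$ inside all of $\fq$}, rather than just inside $\fq^{-\sigma}$, is exactly what makes the orthocomplement ideal vanish in the proof of the first identity, and correspondingly why one may keep the conjugating elements inside $\Inn_\g([\fq,\fq])$ in the second. The cleanest route is probably to reduce to the irreducible/simple situation by decomposing $\fg_L$ into $\tau$-simple ideals, handle the center separately (where the statements are trivial since $\fz(\fg_L)\cap\fq$ contributes nothing to brackets and lies in $\fq^{-\sigma}$ up to conjugation), and on each simple ideal quote the known structure theory of compact symmetric pairs with two commuting involutions. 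I expect the bookkeeping of which of $\fh_\fk,\fq_\fp$-type pieces survive each involution — i.e.\ tracking the four-fold decomposition $\fg_L=\bigoplus_{\pm,\pm}\fg_L^{\pm\sigma,\pm\sigma_?}$ — to be the most error-prone part, so I would set up that decomposition explicitly once at the start and refer back to it throughout.
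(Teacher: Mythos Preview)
Your ideal-complement argument for the first identity does not work. You claim that the orthocomplement of $[\fq^{-\sigma},\fq^{-\sigma}]\oplus\fq^{-\sigma}$ in $\g_L$ is an ideal centralizing $\fq^{-\sigma}$, and that its vanishing forces $\g_L=[\fq^{-\sigma},\fq^{-\sigma}]\oplus\fq^{-\sigma}$. But $[\fq^{-\sigma},\fq^{-\sigma}]\subeq\fh^{\sigma}$, so the right-hand side lies entirely in $\fh^{\sigma}\oplus\fq^{-\sigma}$ and contains neither $\fq^{\sigma}$ nor $\fh^{-\sigma}\cap[\fq,\fq]$. Hence the asserted equality is generically false, and nothing in your argument produces $\fh^{-\sigma}$-elements to bracket with. (Also, the orthocomplement of a subalgebra is in general not an ideal; that step would need justification even if the conclusion were attainable.)

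The paper's proof bypasses all of this with a one-line root-space argument. Since $\fa\subeq\fq^{-\sigma}$ is maximal abelian in \emph{all of} $\fq$ and consists of semisimple (elliptic) elements, one has $\g=\fz_\g(\fa)\oplus[\fa,\g]$, hence $\fq=\fa\oplus[\fa,\fh]$. Now $\fa\subeq\fq^{-\sigma}$ implies $[\fa,\fh^{\sigma}]\subeq\fq^{-\sigma}$ and $[\fa,\fh^{-\sigma}]\subeq\fq^{\sigma}$, so projecting the decomposition onto $\fq^{\sigma}$ gives $\fq^{\sigma}=[\fa,\fh^{-\sigma}]\subeq[\fq^{-\sigma},\fh^{-\sigma}]$. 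This is exactly where the hypothesis ``$\fa$ maximal abelian in $\fq$, not just in $\fq^{-\sigma}$'' enters, and it replaces your entire ideal discussion. The remaining two assertions then follow as you outlined: $\Inn_\g([\fq,\fq])\fa=\fq$ by conjugacy of maximal tori in the compact symmetric pair $([\fq,\fq]\oplus\fq,\tau)$, and compactness of $\Inn_\g([\fq,\fq])$ because $[\fq,\fq]\oplus\fq$ is a compact ideal of~$\g$.
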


\begin{prf} Let $\fa \subset \fq$ be maximal abelian in $\fq$. Then $\fa $ contains $\z (\fq)$. 
We then note that $\fz_\g(\fa) = \fa \oplus \fz_\fh(\fa)$ 
is $\tau$- and $\sigma$-invariant. Further 
$\g = \fz_\g(\fa) \oplus [\fa,\g]$ implies that 
$\fq = \fa \oplus [\fa,\fh].$ As 
\[ [\fa,\fh] 
= [\fa, \fh^\sigma \oplus \fh^{-\sigma}] 
\subeq \fq^{-\sigma} \oplus \fq^{\sigma},\] 
this leads to 
\[ \fq^{\sigma} = [\fa, \fh^{-\sigma}] 
\subeq [\fq^{-\sigma}, \fh^{-\sigma}] \subeq \fq^\sigma.\]  

The second assertion follows from $\Inn_\g([\fq,\fq])\fa = \fq$, and the third 
from the fact that the reductive Lie algebra $\fq + [\fq,\fq]$ 
(it is an ideal of $\g$) is compact. 
\end{prf}

\subsection{Fibers of the polar map} 
\mlabel{subsec:1.3}

For the polar map we have to analyze the relation 
\[ \Exp(x) = g.\Exp(y).\] 
Applying the quadratic representation yields 
\[ \exp(2x) = g \exp(2y) g^\sharp = \exp(2 \Ad(g)y) gg^\sharp.\] 
For $x,y \in \fq^{-\sigma}$ and $g \in G^\sigma$, we also have 
$gg^\sharp \in G^\sigma$, so that 
\[ \exp(4x) 
= \exp(2x) \sigma(\exp(2x))^{-1}
= \exp(4 \Ad(g)y).\] 

If $x$ and $y$ are sufficiently small 
(imaginary spectral radius $< \frac{\pi}{4}$), we thus obtain 
$\Ad(g)y = x$, and thus $gg^\sharp = e$, i.e., 
$\tau(g) = g$.

See \cite[Cor.~7.35]{HN93} for similar arguments.

\subsection{Fibers of $\Exp$} 
\mlabel{subsec:1.2}

Suppose that $x, y \in \fq$ have the same exponential image 
$\Exp(x) = \Exp(y)$ in $M$. We further assume that 
$\Spec(\ad x) \cap i \pi \Z \subeq \{0\}$, so that 
$\Exp$ is regular in $x$. Then we obtain in $G$ the identity 
\[ \exp(2x) = Q(\Exp x) = Q(\Exp y) = \exp(2y),\] 
and since $\Spec(\ad(2x)) \cap 2 \pi i \Z \subeq \{0\}$, 
$\exp$ is regular in $x$. Therefore \cite[Lemma~9.2.31]{HN12} implies that 
\[ [x,y]= 0 \quad \mbox{ and } \quad \exp(2x - 2y) = e.\]
We conclude that $\exp(x-y) = \exp(y-x)$, which leads to 
\[ \Exp(y-x) = \tau_M(\Exp(x-y)) = \exp(y-x)H = \exp(x-y)H = \Exp(x-y),\] 
so that $\Exp(y-x) \in M^\tau$, 
and $\Exp(\R (x-y)) \subeq M$ is a closed geodesic. 

We also conclude that $x-y$ is elliptic 
with $\Spec(\ad(x-y))\subeq \pi i \Z$. 

\begin{lem}
If $\rho_i(\ad x), \rho_i(\ad y) < \pi,$ 
then $\exp(x) = \exp(y)$ implies 
$x-y \in \fz(\g)$. 

If, in addition, $G$ is simply connected 
or $\g$ is semisimple, then $x = y$. 
\end{lem}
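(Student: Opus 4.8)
The statement to prove is: if $\rho_i(\ad x),\rho_i(\ad y)<\pi$ and $\exp(x)=\exp(y)$ in $G$, then $x-y\in\fz(\g)$; and if moreover $G$ is simply connected or $\g$ is semisimple, then $x=y$. The plan is to imitate the argument already carried out in Subsection~\ref{subsec:1.2} for the fibers of $\Exp$ on the symmetric space, but now working directly in the group $G$ and keeping track of the sharper spectral bound $\rho_i<\pi$ rather than $<\pi/2$. First I would invoke \cite[Lemma~9.2.31]{HN12}: since $\rho_i(\ad x)<\pi$ means $\Spec(\ad x)\cap 2\pi i\Z\subeq\{0\}$, the exponential function of $G$ is regular at $x$, and from $\exp(x)=\exp(y)$ that lemma yields
\[ [x,y]=0 \quad\text{and}\quad \exp(x-y)=e. \]
From $[x,y]=0$ we get $e^{\ad(x-y)}=e^{\ad x}e^{-\ad y}$, and since $\exp(x-y)=e$ implies $e^{\ad(x-y)}=\id_\g$, the element $\ad(x-y)$ has spectrum contained in $2\pi i\Z$. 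On the other hand $\rho_i(\ad(x-y))\le \rho_i(\ad x)+\rho_i(\ad y)<2\pi$ because $\ad x$ and $\ad y$ commute (commuting operators can be simultaneously put in triangular form, so the imaginary spectral radius is subadditive on commuting pairs). Hence $\Spec(\ad(x-y))\subeq 2\pi i\Z\cap(-2\pi,2\pi)i=\{0\}$, so $\ad(x-y)$ is nilpotent. But $x,y\in\fq$ are elliptic (this is the standing hypothesis on $\fq$ in this subsection, inherited from the ncc setting), hence $x-y$—being a commuting difference of elliptic, thus semisimple, elements—is semisimple; a semisimple nilpotent operator is zero, so $\ad(x-y)=0$, i.e. $x-y\in\fz(\g)$. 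This proves the first assertion.

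For the second assertion, suppose in addition that $G$ is simply connected or $\g$ is semisimple. If $\g$ is semisimple then $\fz(\g)=\{0\}$ and we are immediately done. If $G$ is simply connected, then by Lemma~\ref{lem:2.3.1}(d) and (f) (or directly: the simple connectedness gives $G\cong[G,G]\times Z(G)_e$ with $Z(G)_e\cong(\fz(\g),+)$ via $\exp$), the restriction of $\exp$ to $\fz(\g)$ is injective. Writing $z:=x-y\in\fz(\g)$ and using $[x,y]=0$ together with $z\in\fz(\g)$, the identity $\exp(x-y)=e$ becomes $\exp(z)=e$ with $z$ central, whence $z=0$ by injectivity of $\exp$ on $\fz(\g)$. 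Therefore $x=y$.

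\textbf{Main obstacle.} The only genuinely delicate point is the subadditivity of $\rho_i$ on the commuting pair $(\ad x,\ad y)$ and the argument that $x-y$ is semisimple: one must be careful that ellipticity of $x$ and $y$ individually, plus $[x,y]=0$, really forces $\ad x$ and $\ad y$ to be simultaneously diagonalizable over $\C$ and hence $\ad(x-y)=\ad x-\ad y$ to be diagonalizable with purely imaginary eigenvalues that are differences of eigenvalues of $\ad x$ and $\ad y$; this is where the bound $<2\pi$ on $\rho_i(\ad(x-y))$ comes from and where the hypothesis $\rho_i(\ad x),\rho_i(\ad y)<\pi$ (rather than the weaker $<\pi/2$ used earlier in the subsection, which was only needed to conclude $\Ad(g)y=x$ there) is used in its full strength. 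Everything else is a direct transcription of the preceding paragraph's reasoning into the group, so I expect the write-up to be short.
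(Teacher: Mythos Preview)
Your approach is essentially the paper's: invoke \cite[Lemma~9.2.31]{HN12} to get $[x,y]=0$ and $\exp(x-y)=e$, bound $\rho_i(\ad(x-y))<2\pi$ by subadditivity on commuting operators, intersect with $\Spec(\ad(x-y))\subeq 2\pi i\Z$ to get $\Spec(\ad(x-y))=\{0\}$, and conclude $\ad(x-y)=0$. The second assertion is handled the same way.

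There is one small gap. You justify the passage from ``$\ad(x-y)$ nilpotent'' to ``$\ad(x-y)=0$'' by asserting that $x,y\in\fq$ are elliptic as a standing hypothesis of the subsection. That is not a hypothesis here (and in the ncc setting elements of $C^\circ\subeq\fq$ are hyperbolic, not elliptic), so this appeal is unfounded. The fix is simpler than your workaround: from $\exp(x-y)=e$ you already have $e^{\ad(x-y)}=\id_\g$, and the multiplicative Jordan decomposition of $\id_\g$ forces the nilpotent part of $\ad(x-y)$ to vanish, so $\ad(x-y)$ is automatically semisimple. With this replacement your argument goes through verbatim and matches the paper's (terse) proof. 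Your subadditivity step only needs $[\ad x,\ad y]=0$, which you have, so the ellipticity assumption can be dropped there as well.
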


\begin{prf} The preceding discussion implies that 
$[x,y] = 0$. Now $\rho_i(\ad (x-y)) < 2\pi$ leads to $\ad(x-y) = 0$, 
i.e., to $x-y \in \fz(\g)$. 
\end{prf}

\section{Quadrics as symmetric spaces} 
\mlabel{app:1}

In this appendix we discuss an important example
of a non-compactly causal symmetric spaces:
$d$-dimensional de Sitter space $\dS^d$, realized as a hyperboloid
in Minkowski space.

\subsection{Quadrics as symmetric spaces} 
\mlabel{app:a.1}

\begin{definition} \mlabel{def:ss} (\cite{Lo69}) (a) Let $M$ be a 
smooth manifold and 
\[ \mu \: M \times M \to M, \quad (x,y) \mapsto x \cdot y =: s_x(y) \] 
be a smooth map with the following properties: 
each $s_x$ is an involution for which $x$ is an  isolated fixed point and 
\begin{equation}
  \label{eq:symspcond}
 s_x(y \cdot z) = s_x(y)\cdot s_x(z) \quad \mbox{ for all } \quad x,y \in M.
\end{equation}
Then we call $(M,\mu)$ a {\it symmetric space}. 

 (b) A morphism of symmetric spaces $M$ and $N$ is a smooth
map $\varphi : M\to N$ such that $\varphi (x\cdot y)= \varphi (x)\cdot 
\varphi (y)$ for $x, y \in M$.

 (c) A {\it geodesic} of a symmetric space is a 
morphism $\gamma \: \R \to M$ of symmetric spaces, i.e., 
\[ \gamma(2x-y) = \gamma(x) \cdot \gamma(y) \quad \mbox{ for } \quad 
x,y \in \R.\] 
Any geodesic is uniquely determined by $\gamma'(0) \in T_{\gamma(0)}(M)$, 
and, conversely, every $v \in T_p(M)$ generates a unique 
geodesic $\gamma_v$ with $\gamma_v(0) = p$ and $\gamma_v'(0) = v$. 
Accordingly, geodesics are encoded in the {\it exponential functions} 
\[ \Exp_p \:  T_p(M) \to M, \quad \Exp_p(v) := \gamma_v(1).\] 
We then have 
\[ \gamma_v(t) = \Exp_p(tv) \quad \mbox{ for } \quad t \in \R.\]
\end{definition}

\begin{ex}
Let $(V,\beta)$ be a finite dimensional real vector space, 
endowed with a non-degenerate symmetric bilinear form~$\beta$. 
Then every anisotropic element $x \in V$ defines an 
involution 
\[ s_x(y) := -y + 2 \frac{\beta(x,y)}{\beta(x,x)} x \]
fixing $x$ and satisfying 
$\Fix(-s_x) = x^\bot = \{ y \in V\: \beta(x,y) = 0\}$. 

For $c \in \R^\times$, let 
\[ Q_c := Q_c(V,\beta) := \{ v \in V \: \beta(v,v) = c\} \] 
denote the corresponding quadric in $(V,\beta)$. 
Then $(Q_c, \mu)$ with 
$\mu(x,y) = s_x(y)$ 
is a symmetric space (Definition~\ref{def:ss}) 
and $\dim Q_c = \dim V -1$. 
Note that dilation by $r \in \R^\times$ is an isomorphism of symmetric spaces 
from $Q_c$ to $Q_{r^2c}$. 

For $p \in Q_c$, the tangent space is $T_p(Q_c) = p^\bot 
= \{ v \in V \: \beta(p,v) = 0\}$. 
To describe the exponential function of the symmetric space $Q_c$, 
we use the entire functions $C, S \: \C \to \C$ defined by 
\begin{equation}
  \label{eq:CandS}
 C(z) := \sum_{k = 0}^\infty \frac{(-1)^k}{(2k)!} z^{k} \quad \mbox{ and } \quad 
 S(z) := \sum_{k = 0}^\infty \frac{(-1)^k}{(2k+1)!} z^{k}
\end{equation}
which satisfy 
\begin{equation}
  \label{eq:csrela}
 \cos z = C(z^2) \quad \mbox{ and } \quad \sin z = z S(z^2) 
\quad \mbox{ for } \quad z \in \C
  \end{equation} 
and 
\begin{equation}
  \label{eq:csrelb}
\cosh z = C(-z^2) \quad \mbox{ and } \quad \sinh z = z S(-z^2) 
\quad \mbox{ for } \quad z \in \C.
\end{equation}
Note that 
\begin{equation}
  \label{eq:csrel1}
1 = C(z)^2 + z S(z)^2 \quad \mbox{ for }\quad z \in \C
\end{equation}
follows from $1 = \cos^2 z + \sin^2 z$ and the surjectivity of the 
square map on $\C$. 
\end{ex}

\begin{prop}
  \mlabel{prop:expfct}
The exponential function of the symmetric space $Q_c$ is given by 
\begin{equation}
 \label{eq:exp-rel}
\Exp_p(v) = C\Big(\frac{\beta(v,v)}{\beta(p,p)}\Big) p 
+ S\Big(\frac{\beta(v,v)}{\beta(p,p)}\Big) v \quad \mbox{ for }  \quad 
p \in Q_c, v \in T_p(Q_c) = p^\bot. 
\end{equation}
\end{prop}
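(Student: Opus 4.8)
The plan is to compute the geodesics of the quadric $Q_c$ explicitly and then specialize to $t = 1$. First I would recall that, by Definition~\ref{def:ss}, a geodesic $\gamma$ through $p = \gamma(0)$ with $\gamma'(0) = v$ satisfies the functional equation $\gamma(2x - y) = s_{\gamma(x)}(\gamma(y))$, and that such a geodesic is unique. So it suffices to exhibit a curve $\gamma_v \: \R \to Q_c$ with $\gamma_v(0) = p$, $\gamma_v'(0) = v$, which is a morphism of symmetric spaces, and then $\Exp_p(v) = \gamma_v(1)$. The natural candidate, motivated by the classical case of the sphere and hyperbolic space, is
\[
\gamma_v(t) = C\!\Big(\tfrac{\beta(v,v)}{\beta(p,p)}\, t^2\Big) p + \tfrac{\sin\text{-like factor}}{\,} \cdots,
\]
more precisely $\gamma_v(t) = C\big(\lambda t^2\big) p + t\, S\big(\lambda t^2\big) v$ with $\lambda := \beta(v,v)/\beta(p,p)$; by \eqref{eq:csrela}--\eqref{eq:csrelb} this is $\cos(\sqrt{\lambda}\, t) p + \sin(\sqrt\lambda\,t)/\sqrt\lambda \cdot v$ when $\lambda > 0$ and the analogous $\cosh,\sinh$ expression when $\lambda < 0$, with the entire-function form covering $\lambda = 0$ uniformly.

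The verification then splits into three routine checks. First, $\gamma_v(t) \in Q_c$ for all $t$: using $\beta(p,v) = 0$ (since $v \in T_p Q_c = p^\bot$) and bilinearity,
\[
\beta(\gamma_v(t), \gamma_v(t)) = C(\lambda t^2)^2 \beta(p,p) + t^2 S(\lambda t^2)^2 \beta(v,v) = \beta(p,p)\big(C(\lambda t^2)^2 + \lambda t^2 S(\lambda t^2)^2\big) = \beta(p,p) = c,
\]
where the middle step uses $\beta(v,v) = \lambda \beta(p,p)$ and the last step is the identity \eqref{eq:csrel1}. Second, $\gamma_v(0) = p$ and $\gamma_v'(0) = v$: immediate from $C(0) = S(0) = 1$ and differentiating. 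Third, and this is the only step requiring a genuine identity, $\gamma_v$ is a morphism of symmetric spaces, i.e. $\gamma_v(2s - t) = s_{\gamma_v(s)}(\gamma_v(t))$. Expanding the right-hand side with the formula $s_x(y) = -y + 2\frac{\beta(x,y)}{\beta(x,x)}x$ and $\beta(\gamma_v(s),\gamma_v(s)) = c$, one must show
\[
-\gamma_v(t) + \tfrac{2}{c}\beta(\gamma_v(s),\gamma_v(t))\,\gamma_v(s) = \gamma_v(2s - t).
\]
Substituting the $C,S$-expressions reduces this, after collecting the $p$- and $v$-components separately and using $\beta(p,v) = 0$ once more, to the pair of scalar identities
\[
C(\lambda(2s-t)^2) = 2\big(C(\lambda s^2)C(\lambda t^2) + \lambda st\, S(\lambda s^2)S(\lambda t^2)\big)C(\lambda s^2) - C(\lambda t^2),
\]
and the companion one for the $v$-coefficient; both are just the angle-addition formulas $\cos(2\sigma - \tau) = 2\cos(\sigma - \tau)\cos\sigma - \cos\tau$ (and similarly for $\sin$) rewritten through \eqref{eq:csrela}, valid for complex arguments by analytic continuation / the surjectivity of squaring on $\C$ that was already invoked for \eqref{eq:csrel1}.

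The main obstacle I anticipate is purely bookkeeping: keeping the $\lambda = 0$, $\lambda > 0$, $\lambda < 0$ cases unified by working throughout with the entire functions $C$ and $S$ rather than with $\cos/\sin$ and $\cosh/\sinh$, so that the trigonometric addition theorems have to be translated cleanly into identities among $C$ and $S$ and then checked as identities of entire functions (it suffices to check them on an interval where one of the classical parametrizations applies). Once the functional equation is verified, uniqueness of geodesics in a symmetric space gives $\gamma_v = \gamma_{v}^{\text{the geodesic}}$, and evaluating at $t = 1$ yields exactly \eqref{eq:exp-rel}. No deep input is needed beyond the structure already set up in the appendix; the argument is self-contained modulo the elementary identities \eqref{eq:csrela}--\eqref{eq:csrel1}.
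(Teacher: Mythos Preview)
Your proposal is correct and follows essentially the same approach as the paper: verify that the explicit curve $\gamma_v(t) = C(\lambda t^2)p + t\,S(\lambda t^2)v$ lies in $Q_c$ via \eqref{eq:csrel1}, has the right initial data, and satisfies the symmetric-space functional equation $\gamma_v(2s-t) = s_{\gamma_v(s)}(\gamma_v(t))$ by reducing to trigonometric addition formulas. The only difference is cosmetic: the paper normalizes so that $\eps := \beta(v,v)/\beta(p,p) \in \{-1,0,1\}$ and then checks the three cases separately with $\cos/\sin$, $\cosh/\sinh$, and affine lines, whereas you keep $\lambda$ general and appeal to identities among the entire functions $C,S$; both routes amount to the same computation.
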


\begin{prf}
To verify this claim, abbreviate 
$\eps := \frac{\beta(v,v)}{\beta(p,p)}$. 
By \eqref{eq:csrel1}, on the right hand 
side of \eqref{eq:exp-rel}, 
the quadratic form $\beta(\cdot,\cdot)$ has the value 
\[ C(\eps)^2 \beta(p,p)
+ S(\eps)^2 \beta(v,v)
= \big(C(\eps)^2 + S(\eps)^2 \eps\big) \beta(p,p) = \beta(p,p) = c.\] 
Therefore the right hand side of \eqref{eq:exp-rel} is contained in $Q_c$. 

It remains to show that, for $\eps \in \{-1,0,1\}$, the curve 
\[ \gamma_v(t) := C(t^2\eps) p + S(t^2\eps) t v, \] 
which satisfies $\gamma_v'(0) = v$, actually is a geodesic, i.e., 
\[ \gamma_v(2t-s) 
= \gamma_v(t) \cdot \gamma_v(s)\quad \mbox{ for } \quad t,s \in \R.\] 
We consider three  cases: \\
\nin {\bf $\eps = 0$:} Then $\gamma_v(t) = p + tv$ and 
\[ \gamma_v(t) \cdot \gamma_v(s) 
= - \gamma_v(s) + \frac{2}{c} \beta(\gamma_v(t),\gamma_v(s)) \gamma_v(t) 
= - (p + sv) + 2(p  + tv) 
= p + (2t-s)v.\] 
\nin {\bf $\eps = 1$:} Then 
\[ \gamma_v(t) = \cos(t) p + \sin(t) v\] 
by \eqref{eq:csrela}, 
and 
\begin{align*}
\gamma_v(t) \cdot \gamma_v(s) 
&= - \gamma_v(s) + \frac{2}{c} \beta(\gamma_v(t),\gamma_v(s)) \gamma_v(t) \\
&= - (\cos(s) p + \sin(s) v) + 
2(\cos(t)\cos(s) + \sin(t) \sin(s)) (\cos(t)p + \sin(t)v) \\
&= (-\cos(s) + 2\cos(t)^2\cos(s)  + 2\sin(t)\sin(s)\cos(t)) p \\
&\quad + (-\sin(s) + 2\sin(t)\cos(t)\cos(s)  +2 \sin(t)^2\sin(s)) v \\ 
&= (\cos(s) - 2\sin(t)^2\cos(s)  + 2\sin(t)\sin(s)\cos(t)) p \\
&\quad + (-\sin(s) + 2\sin(t)\cos(t)\cos(s)  +2 \sin(t)^2\sin(s)) v \\ 
&= \cos(2t-s) p + \sin(2t-s) v = \gamma_v(2t-s).
\end{align*}
\nin {\bf $\eps = -1$:} Then 
\[ \gamma_v(t) = \cosh(t) p + \sinh(t) v\] 
by \eqref{eq:csrelb}, 
and 
\begin{align*}
\gamma_v(t) \cdot \gamma_v(s) 
&= - \gamma_v(s) + \frac{2}{c} \beta(\gamma_v(t),\gamma_v(s)) \gamma_v(t) \\
&= - (\cosh(s) p + \sinh(s) v) + 
2(\cosh(t)\cosh(s) + \sinh(t) \sinh(s)) (\cosh(t)p + \sinh(t)v) \\
&= (-\cosh(s) + 2\cosh(t)^2\cosh(s)  + 2\sinh(t)\sinh(s)\cosh(t)) p \\
&\quad + (-\sinh(s) + 2\sinh(t)\cosh(t)\cosh(s)  +2 \sinh(t)^2\sinh(s)) v \\ 
&= (\cosh(s) +2\sinh(t)^2\cosh(s)  + 2\sinh(t)\sinh(s)\cosh(t)) p \\
&\quad + (-\sinh(s) + 2\sinh(t)\cosh(t)\cosh(s)  +2 \sinh(t)^2\sinh(s)) v \\ 
&= \cosh(2t-s) p + \sinh(2t-s) v = \gamma_v(2t-s).
\qedhere\end{align*}
\end{prf}

\subsection{Minkowski space} 
\mlabel{subsec:a.2}

On $V := \R^{1,d}$, we consider the Lorentzian form 
\[ [x,y] := x_0 y_0 - \bx \by,\] 
the {\it future light cone} 
\[ V_+ := \{ x = (x_0, \bx) \in V \: x_0 > 0, x_0^2 - \bx^2 > 0\}, \] 
and the tube domain 
\[ \cT_V = V + i V_+ \subeq V_\C.\]  
The standard boost vector field $X_h(v) = hv$ is defined by 
$h \in \so_{1,d}(\R)$, given by 
\begin{equation}
  \label{eq:h}
hx = (x_1, x_0, 0,\ldots, 0).
\end{equation}
It generates the flow 
\[ \alpha_t(x) = e^{th} x 
= (\cosh t \cdot x_0 + \sinh t \cdot x_1, 
\cosh t \cdot x_1 + \sinh t \cdot x_0, x_2, \ldots, x_d)\] 
and defines the involution 
\[\tau_h(x) := e^{\pi i h} x= (-x_0, -x_1, x_2, \ldots, x_d), \] 
that we extend to an antilinear involution 
$\oline\tau_h$ on $V_\C$. It also defines a 
Wick rotation 
\[ \kappa_h(x) = e^{-\frac{\pi i}{2} h} x
= (-i x_1, -i x_0, x_2, \ldots, x_d) \] 
satisfying $\kappa_h^2 = \tau_h$.

\begin{lem} \mlabel{lem:minko}
The following subsets of $V$ are equal: 
  \begin{itemize}
  \item[\rm(a)] The {\it standard right wedge} 
$W_R := \{ x \in V \: x_1 > |x_0| \}$. 
\item[\rm(b)] The positivity domain 
$W_V^+(h) := \{ v \in V \: X_h(v) \in V_+ \}$ of $X_h(v) = hv$.
  \item[\rm(c)] The KMS domain of $\alpha$: 
$\{ x \in V \: (\forall t \in (0,\pi))\ \alpha_{it}(x) \in \cT_V\}$.   
\item[\rm(d)] The Wick rotation $\kappa_h(\cT_V^{\oline\tau_h})$ 
of the fixed point set $\cT_V^{\oline\tau_h} 
= \cT_V \cap (V^{\tau_h} + i V^{-\tau_h})$ of the antiholomorphic 
involution $\oline\tau_h$ on $\cT_V$. 
  \end{itemize}
\end{lem}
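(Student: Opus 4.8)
\textbf{Proof plan for Lemma~\ref{lem:minko}.}
The plan is to verify the four-fold equality by a completely explicit computation in the two-dimensional span $V' := \R e_0 \oplus \R e_1$, exploiting the fact that the boost $h$ acts trivially on the complementary coordinates $x_2,\dots,x_d$ and that all four sets are manifestly invariant under the stabilizer of $h$, which in particular acts transitively on the ``transverse'' directions and leaves the $(x_0,x_1)$-pair untouched. So after reducing to $d=1$ (equivalently, fixing $x_2=\dots=x_d=0$, since each condition only constrains $(x_0,x_1)$ and, for (a) and (b), does so independently of the transverse part), one is left to show that the four subsets of $\R^{1,1}$ determined by the respective conditions all equal $\{x_1 > |x_0|\}$. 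I would organize this as a cycle of inclusions, or simply compute each set directly.

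First I would treat (b): here $X_h(v) = hv = (v_1, v_0, 0, \ldots, 0)$, so $X_h(v) \in V_+$ means $v_1 > 0$ and $v_1^2 - v_0^2 > 0$, i.e. exactly $v_1 > |v_0|$. This already gives (a) $=$ (b) with no work. Next, for (d): the involution $\tau_h$ has $V^{\tau_h} = \Spann\{e_2,\ldots,e_d\}$ and $V^{-\tau_h} = \Spann\{e_0,e_1\}$, so $\cT_V^{\oline\tau_h} = \cT_V \cap (V^{\tau_h} + i V^{-\tau_h})$ consists of points $x + i y$ with $x \in \Spann\{e_2,\ldots,e_d\}$, $y \in \Spann\{e_0,e_1\}$, and $y \in V_+$ (the real part contributes nothing to the light-cone condition on an element of $V+iV_+$ since $\cT_V = V + iV_+$); so such a point has purely imaginary $(e_0,e_1)$-part $iy$ with $y_0 > 0$, $y_0^2 - y_1^2 > 0$. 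Applying $\kappa_h(x) = (-ix_1, -ix_0, x_2, \ldots, x_d)$ sends $x + iy$ with $x = (0,0,x_2,\ldots)$ and $y = (y_0, y_1, 0, \ldots)$ to $(-i(iy_1), -i(iy_0), x_2, \ldots) = (y_1, y_0, x_2, \ldots)$, which is a \emph{real} vector whose $(e_0,e_1)$-part $(y_1,y_0)$ satisfies $y_0 > 0$ and $y_0^2 - y_1^2 > 0$, i.e. $|y_1| < y_0$; writing the resulting point as $(v_0, v_1, \ldots)$ with $v_0 = y_1$, $v_1 = y_0$ this is exactly $v_1 > |v_0|$, so (d) $= W_R$. (One should double-check that every point of $W_R$ arises this way, which is immediate by running the substitution backwards, and that the transverse real part $x_2,\ldots,x_d$ ranges over everything.)

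The only genuinely computational point is (c): one must show that $\alpha_{it}(x) \in \cT_V = V + iV_+$ for all $t \in (0,\pi)$ is equivalent to $x \in W_R$. Using the explicit flow, $\alpha_{it}(x)$ has $(e_0,e_1)$-component $(\cos t\, x_0 + i\sin t\, x_1,\ \cos t\, x_1 + i\sin t\, x_0)$ and unchanged transverse part; its imaginary part is $\sin t\,(x_1, x_0, 0, \ldots)$, and since $\sin t > 0$ on $(0,\pi)$ the condition ``imaginary part in $V_+$'' is precisely $x_1 > 0$ and $x_1^2 - x_0^2 > 0$, i.e. $x \in W_R$, \emph{independently of} $t$. (The real part of $\alpha_{it}(x)$ is automatically in $V$, so $\cT_V = V + iV_+$ imposes no further constraint.) Thus all four descriptions collapse to $W_R$, and the lemma follows. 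I expect the main (very mild) obstacle to be bookkeeping: making sure the transverse coordinates are handled correctly in each of the four conditions and that one has not tacitly used $d=1$ where the general case behaves differently — but since $h$ annihilates $e_2,\ldots,e_d$ and none of the four conditions couples the transverse part to the $(x_0,x_1)$-part, this reduction is legitimate. I would close by remarking that this computation is exactly the classical Bisognano--Wichmann picture for the free field, and that it is the base case $\fg = \so_{1,2}(\R)$, $d=2$, to which several later arguments reduce.
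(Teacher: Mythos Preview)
Your proposal is correct and follows essentially the same approach as the paper: direct explicit computation of each of the four sets, using $hx = (x_1, x_0, 0, \ldots, 0)$ for (a)$=$(b), the imaginary part $\sin t \cdot (x_1, x_0, 0, \ldots, 0)$ of $\alpha_{it}(x)$ for (c), and the identification $\cT_V^{\oline\tau_h} = V^{\tau_h} + i V_+^{-\tau_h}$ followed by applying $\kappa_h$ for (d). The paper's proof is simply a terser version of yours, without the preliminary discussion of the reduction to the $(x_0,x_1)$-plane.
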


\begin{prf} The equality of $W_R$ and $W_V^+(h)$ follows  immediately from 
\eqref{eq:h}. For $0 < t < \pi$, we have 
\[ \Im(\alpha_{it}(x)) 
= \Im(\cos t \cdot x_0 + \sin t \cdot i x_1, 
\cos t \cdot x_1 + \sin t \cdot i x_0, x_2, \ldots, x_d) 
= \sin t\cdot (x_1, x_0, 0,\ldots,0).\] 
This implies the equality of the sets in (a) and (c). 
Finally, we observe that 
\[ \cT_V^{\oline\tau_h} = (V + i V_+)^{\oline\tau_h} 
= V^{\tau_h}  + i V_+^{-\tau_h} 
=\{ (ix_0, i x_1, x_2, \ldots, x_d) \in V \: 
x_0 > |x_1|\}\] 
implies that 
$\kappa_h(\cT_V^{\oline\tau_h}) = W_R$. 
\end{prf}

\begin{rem} For the closed light cone $C := \oline{V_+}$ and the 
$h$-eigenspaces 
\[ V_{\pm 1}(h) = \R (\be_1 \pm \be_0),
\quad \mbox{ we put } \quad 
C_\pm := \pm C \cap V_{\pm 1}(h) 
= [0,\infty) (\be_1 \pm \be_0),\] 
so that we obtain the following description of the standard right wedge 
\[ W_R = W_V^+(h) 
= V_0(h) + C_+^\circ + C_-^\circ 
= V_0(h) + \R_+ (\be_1 + \be_0) + \R_+ (\be_1 - \be_0).\] 
\end{rem}

\subsection{De Sitter space $\dS^d, d \geq 2$} 
\mlabel{subsec:a.3}

We write $G := \SO_{1,d}(\R)^\up$ for the connected Lorentz group 
acting on Minkowski space $(V,[\cdot,\cdot])$ and consider the 
{\it de Sitter space} 
\[ M := \dS^d := \{ x \in V \: [x,x] = -1\} = G.\be_1.\]
The purpose of this appendix is derive our main results
  for the example $\dS^d$ by direct calculations without the elaborate
  structure theory developed for the general case.  
By Proposition~\ref{prop:expfct}, the 
exponential function of the symmetric space $\dS^d$ is given by 
\[ \Exp_p(v) = C(-[v,v]) p + S(-[v,v]) v,\] 
so that spacelike vectors $v$ generate closed geodesics. 

De Sitter space inherits the structure of a 
non-compactly causal symmetric space from the embedding into 
Minkowski space $(V,V_+)$. In particular, the positive cone 
in a point $x \in \dS^d$ is given by 
\[ V_+(x) := V_+ \cap T_x(\dS^d) = V_+ \cap x^\bot.\] 

We keep the notation $h$, $\alpha_t$ and $X_h$ from our discussion 
of Minkowski space in Subsection~\ref{subsec:a.2}. 
As $\dS^d$ is $\alpha$-invariant, 
the vector field $X_h$ is tangential to $\dS^d$, 
and Lemma~\ref{lem:minko} immediately implies that 
its {\it positivity domain} is given by 
\[ W_{\dS^d}^+(h)
:= \{ x \in \dS^d \:  X_h(x) \in V_+(x) \} 
= W_R \cap \dS^d = \{ x \in \dS^d \: x_1 > |x_0|\}.\] 
The centralizer of $h$ in $G$ is the subgroup 
\begin{equation}
  \label{eq:gh-lorentz}
G^h = \exp(\R h) \SO_{d-1}(\R) \cong 
\SO_{1,1}(\R)^\up \times \SO_{d-1}(\R) 
\end{equation}
(cf.\ \cite[Lemma~4.12]{NO17}).

\begin{lem} \mlabel{lem:a.4} 
For 
\[ V_+^\pi(\be_1) := \{ x \in T_{\be_1}(\dS^d) \cong \be_1^\bot 
\: x_0 > 0, 0 < [x,x] < \pi^2 \}, \]
we have 
\begin{equation}
  \label{eq:desit-tube}
 (V + i V_+) \cap \dS^d_\C 
= \{ z \in V_\C \: [z,z] = -1, \Im  z \in V_+\} 
=  G.\Exp_{\be_1}(iV_+^\pi(\be_1)).
\end{equation}
\end{lem}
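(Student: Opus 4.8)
The statement to prove is Lemma~\ref{lem:a.4}, which identifies the complex tube domain $\cT_{\dS^d} := (V + iV_+) \cap \dS^d_\C$ with the orbit $G.\Exp_{\be_1}(iV_+^\pi(\be_1))$, and in particular gives both the intrinsic ``imaginary part in $V_+$'' description and the polar-coordinate description. The plan is to prove the two displayed equalities in \eqref{eq:desit-tube} separately. The first equality, $(V + iV_+)\cap\dS^d_\C = \{z \in V_\C : [z,z]=-1,\ \Im z \in V_+\}$, is essentially a tautology once one writes $z = x + iy$ with $x,y \in V$: membership in $V + iV_+$ means $y \in V_+$, and membership in $\dS^d_\C$ means $[z,z] = [x,x] - [y,y] + 2i[x,y] = -1$; so only the notational unpacking is needed here.

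The substantive part is the second equality. First I would show the inclusion $G.\Exp_{\be_1}(iV_+^\pi(\be_1)) \subseteq \cT_{\dS^d}$. Since $G$ preserves both $\dS^d_\C$ and the cone $V_+$ (it acts by real Lorentz transformations, hence $\R$-linearly and commuting with complexification), it suffices to check that $\Exp_{\be_1}(iv) \in V + iV_+$ for $v \in V_+^\pi(\be_1)$. Using Proposition~\ref{prop:expfct}, $\Exp_{\be_1}(iv) = C([v,v])\be_1 + S([v,v])\,iv$, since $[iv,iv] = -[v,v]$ and $-[iv,iv] = [v,v]$. As $v$ is spacelike with $0 < [v,v] < \pi^2$, we have $[v,v] = \vartheta^2$ with $0 < \vartheta < \pi$, and $C(\vartheta^2) = \cos\vartheta$, $S(\vartheta^2) = \frac{\sin\vartheta}{\vartheta}$ by \eqref{eq:csrela}. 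Thus $\Exp_{\be_1}(iv) = \cos\vartheta\,\be_1 + i\frac{\sin\vartheta}{\vartheta}\,v$, whose imaginary part is $\frac{\sin\vartheta}{\vartheta}\,v$. Since $0 < \vartheta < \pi$ gives $\sin\vartheta > 0$, and $v \in \be_1^\bot$ has $v_0 > 0$ with $v$ spacelike, I need to verify $v \in V_+$; here $v \perp \be_1$ means $v_1 = 0$ in the bilinear form sense, i.e. $[v,\be_1] = 0$ forces $v_1 = 0$, so $[v,v] = v_0^2 - v_2^2 - \cdots - v_d^2$, and $[v,v] > 0$ together with $v_0 > 0$ gives exactly $v \in V_+$. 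Hence $\Im\Exp_{\be_1}(iv) \in V_+$.

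For the reverse inclusion $\cT_{\dS^d} \subseteq G.\Exp_{\be_1}(iV_+^\pi(\be_1))$, I would take $z = x + iy \in \dS^d_\C$ with $y \in V_+$ and produce $g \in G$ and $v \in V_+^\pi(\be_1)$ with $z = g.\Exp_{\be_1}(iv)$. The key is that $G = \SO_{1,d}(\R)^\up$ acts transitively on $V_+$ modulo scaling, and more precisely the stabilizer considerations let one normalize. From $[z,z]=-1$ one gets $[x,x] - [y,y] = -1$ and $[x,y] = 0$. Since $y \in V_+$, after applying a suitable $g^{-1} \in G$ I may assume $y$ is a positive multiple of $\be_0$, say $y = s\be_0$ with $s > 0$; then $[x,y]=0$ forces $x \in \be_0^\bot = \Spann\{\be_1,\dots,\be_d\}$, so $x$ is spacelike or zero with $[x,x] = s^2 - 1 \geq -1$. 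A further rotation in $\SO_{d-1}(\R) \subseteq G^{\be_0}$ fixing $\be_0$ lets me assume $x$ is a multiple of $\be_1$ (if $x\neq 0$) — but I must be careful because I also want the base point to come out as $\be_1$. The cleaner route: conjugate so that $x = c\,\be_1$ with $c \in \R$ (allowing $c$ to be any real, using that $\SO_{1,d-1}$ acting on $\be_0^\bot$ moves any spacelike vector to a multiple of $\be_1$, and $x$ is spacelike because $[x,x] = s^2-1$... this needs $s \geq 1$, which may fail). To handle $s < 1$ as well, I would instead write $z = \cos\vartheta\,w + i\frac{\sin\vartheta}{\vartheta}\,v$ in the form matching the exponential and solve: set $\vartheta := $ the value with $[y,y] = \big(\frac{\sin\vartheta}{\vartheta}\big)^2[v,v]$... this is getting circular, so the honest approach is: (i) $G$ acts transitively on $\dS^d = G.\be_1$, reducing to the case where the ``real direction'' of $z$ is $\be_1$; (ii) parametrize the fiber over $\be_1$ of the complex exponential and match. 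Concretely, given $z \in \cT_{\dS^d}$, the curve $t \mapsto \Exp$-geodesics through points of $\dS^d$ hitting $z$ must be analyzed; equivalently, one shows the holomorphic extension of $\Exp_{\be_1}$ restricted to $i V_+^\pi(\be_1)$ is a bijection onto $\cT_{\dS^d} \cap (\text{appropriate slice})$ and then sweep by $G$.

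\textbf{The main obstacle.} The genuinely delicate point is the surjectivity/injectivity analysis of the polar map for the complex domain: showing that every $z \in \cT_{\dS^d}$ lies in the $G$-orbit of some $\Exp_{\be_1}(iv)$ with $v$ in the \emph{bounded} range $0 < [v,v] < \pi^2$ — i.e. that no points ``wrap around'' and that the $\pi$-bound is exactly the right cutoff so that the imaginary part stays inside $V_+$ rather than crossing a light cone. I expect to handle this by the explicit substitution $z = x+iy$, $[x,x]-[y,y]=-1$, $[x,y]=0$, $y \in V_+$; using transitivity of $G$ on the $y$-direction to set $y = s\be_0$; then $x \in \be_0^\bot$ with $[x,x] = s^2 - 1$, and using $\SO_{d-1}(\R)$ to arrange $x \in \Spann\{\be_1,\be_2\}$ with $x_2 \geq 0$ — then a direct computation identifies $z$ with $\Exp_{g.\be_1}(iv)$ for explicit $g$ and $v$, reading off $\vartheta \in (0,\pi)$ from $s = \frac{\sin\vartheta}{\vartheta}\cdot\|v\|$ and $\sqrt{s^2-1}$-data, invoking the surjectivity of $\vartheta \mapsto (\cos\vartheta, \sin\vartheta)$ on $(0,\pi)$ onto the relevant parameter set. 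The bound $[v,v] < \pi^2$ corresponds exactly to $\vartheta < \pi$, which is forced by $\Im z \in V_+$ (the imaginary part changes causal character precisely when $\vartheta$ reaches $\pi$). I would also cross-check consistency with Lemma~\ref{lem:minko} via the Wick rotation $\kappa_h$, which gives an independent verification that the $G$-orbit description and the tube-domain description agree.
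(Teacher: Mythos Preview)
Your overall plan matches the paper's proof exactly: both directions are handled by $G$-invariance plus a normal-form computation, using the explicit formula from Proposition~\ref{prop:expfct}. The ``$\supseteq$'' direction you carry out correctly (the paper simplifies slightly by first using $G_{\be_1} \cong \SO_{1,d-1}(\R)^\up$ to reduce to $v = x_0\be_0$, but your direct computation is fine).

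Where you go astray is in ``$\subseteq$'', and it is purely a sign confusion. After normalizing $y = s\be_0$ with $s > 0$, you have $x \in \be_0^\bot$ with $[x,x] = s^2 - 1$. In the signature $(+,-,\ldots,-)$ used here, \emph{every} nonzero vector in $\be_0^\bot$ has $[x,x] < 0$; hence $s^2 - 1 \le 0$ is automatic, and in fact $0 < s \le 1$ always. The stabilizer of $\be_0$ in $G$ is $\SO_d(\R)$ (not $\SO_{d-1}$ or $\SO_{1,d-1}$), which acts transitively on each sphere in $\be_0^\bot$, so you can always rotate $x$ to $x_1\be_1$ with $x_1 \ge 0$. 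There is no problematic case ``$s < 1$'' to handle separately --- that is the only case. Once $z = x_1\be_1 + is\be_0$ with $x_1^2 + s^2 = 1$ and $s > 0$, pick $t \in (0,\pi)$ with $\cos t = x_1$, $\sin t = s$, and then $\Exp_{\be_1}(it\be_0) = z$ on the nose; this is exactly what the paper does.

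Your fallback ideas should be discarded: the real part $x$ of $z$ generally satisfies $[x,x] \in (-1,0]$, not $-1$, so $x \notin \dS^d$ and the transitivity of $G$ on $\dS^d$ cannot be used to normalize it. Stick with your first approach, with the stabilizer of the (timelike) imaginary direction being the compact group $\SO_d(\R)$.
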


\begin{prf} First we observe that both sides of \eqref{eq:desit-tube} 
are $G$-invariant. 

\nin ``$\supeq$'':  We have to show that any $x \in V_+^\pi(\be_1)$ satisfies 
$\Exp_{\be_1}(ix) \in V + i V_+$. Since the orbit of $x$ under 
$G_{\be_1}$ contains an element in $\R \be_0$, we may assume that 
$x = x_0 \be_0$ with $0 < x_0 < \pi$. Then 
\[ \Exp_{\be_1}(ix) = \Exp_{\be_1}(ix_0 \be_0) 
= C(x_0^2) \be_1 + S(x_0^2)  x_0 i \be_0 
= \cos(x_0) \be_1 + \sin(x_0) i \be_0  \in V + i V_+ \] 
follows from $\sin(x_0) > 0$. 

\nin ``$\subeq$'': If $z = x + i y\in \dS^d_\C \cap (V + i V_+)$, then 
$[z,z] = -1$ and $y \in V_+$. Acting with $G$, we may thus assume that 
$y = y_0 \be_0$ with $y_0 > 0$. Then $x \in y^\bot = \be_0^\bot$ follows from 
$\Im [z,z] = 0$, so that we 
may further assume that $x = x_1 \be_1$ for some $x_1 \geq 0$. 
Now $z = i y_0 \be_0 + x_1 \be_1 \in \dS^d_\C$ implies that 
\[ -1 = [z,z] = - y_0^2 - x_1^2.\] 
Hence there exists a $t \in (0,\pi)$ with 
$y_0 = \sin t$ and $x_1 = \cos t$. This leads to 
\[ \Exp_{\be_1}(i t \be_0) = 
\cos(t) \be_1 + \sin(t) i \be_0 
= x_1 \be_1 + y_0 i \be_0 = z.
\qedhere\] 
\end{prf}

\begin{defn} \mlabel{def:desittube}
The complex manifold 
\begin{equation}
  \label{eq:desit-tubes}
\cT_M = \cT_{\dS^d} := G.\Exp_{\be_1}(i V_+^\pi(\be_1)) = \dS^d_\C \cap 
(V + i V_+) 
\end{equation}
is called the {\it tube domain of $\dS^d$}. If coincides with 
$G.\Exp_m(i V_+^\pi(m))$ for every $m \in \dS^d$. 
\end{defn}

Starting from the relation $\cT_{\dS^d} = G.\Exp_{\be_2}(i V_+^\pi(\be_2))$, 
the invariance of $\be_2$ under $\tau_h = e^{\pi i h}$ 
permits us to obtain a nice description of the fixed point set of
$\oline\tau_h$ on $\cT_{\dS^d}$. 

\begin{lem} \mlabel{lem:a.9}
$\cT_{\dS^d}^{\oline\tau_h} 
= G^h.\Exp_{\be_2}(i V_+^\pi(\be_2)^{-\tau_h}).$
\end{lem}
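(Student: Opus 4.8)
The statement to prove is that on the tube domain $\cT_{\dS^d}$, the fixed point set of the antiholomorphic involution $\oline\tau_h$ is exactly $G^h.\Exp_{\be_2}(i V_+^\pi(\be_2)^{-\tau_h})$. The key is to use the alternative description $\cT_{\dS^d} = G.\Exp_{\be_2}(i V_+^\pi(\be_2))$ from Definition~\ref{def:desittube}, exploiting that the base point $\be_2$ is chosen to be $\tau_h$-fixed (since $\tau_h$ acts as $-1$ on the $\be_0,\be_1$ coordinates and as $+1$ on $\be_2,\dots,\be_d$). This is the direct de Sitter analogue of the argument in Case~1 of Theorem~\ref{thm:4.5}, but here everything can be done by elementary linear algebra in Minkowski space instead of invoking the crown-domain machinery.

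\textbf{Key steps.} First I would record that $\oline\tau_h$ acts on $\cT_{\dS^d}$ in the polar coordinates based at $m_0 := \be_2$ by
\[ \oline\tau_h\big(g.\Exp_{m_0}(iy)\big) = \tau_h(g).\Exp_{m_0}\big(-i\tau_h(y)\big), \qquad g \in G,\ y \in V_+^\pi(\be_2), \]
using $\oline\tau_h(m_0) = m_0$ and that $\oline\tau_h$ is antiholomorphic with $\oline\tau_h|_V = \tau_h|_V$. Second, I would prove a polar decomposition/injectivity statement for the map $G \times_{G_{m_0}} V_+^\pi(\be_2) \to \cT_{\dS^d}$ analogous to Proposition~\ref{prop:4.9}: this should follow from Lemma~\ref{lem:a.4} together with a direct check that two points $g_1.\Exp_{m_0}(iy_1) = g_2.\Exp_{m_0}(iy_2)$ force $g_2^{-1}g_1 \in G_{m_0}$ and $y_2 = \Ad(g_2^{-1}g_1)y_1$ — here one can reduce to the normal-form elements as in the proof of Lemma~\ref{lem:a.4} and argue directly in $V_\C$. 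Third, given a fixed point $g.\Exp_{m_0}(iy) = \oline\tau_h(g.\Exp_{m_0}(iy)) = \tau_h(g).\Exp_{m_0}(-i\tau_h(y))$, the injectivity statement yields $\tau_h(g)^{-1}g \in G_{m_0}$ and then, writing $g$ via a $\tau_h$-compatible decomposition (using that $\tau_h$ commutes with the Cartan involution $\theta$ of $G = \SO_{1,d}(\R)^\uparrow$ since $\theta(h) = -h$), one concludes $g \in G^{\tau_h}$ up to a stabilizer factor, and then $y \in V^{-\tau_h}$, i.e. $y \in V_+^\pi(\be_2)^{-\tau_h}$. For the group factor I would use $G^{\tau_h}_e = G^h_e$ (Remark~\ref{rem:Gh}; and for $G = \SO_{1,d}(\R)^\uparrow$ one checks directly whether $G^{\tau_h} = G^h$ or has an extra component, adjusting by absorbing any $\theta$-type component into the stabilizer of $\be_2$, which is harmless as $\be_2$ is also $\theta$-fixed). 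The reverse inclusion $\supeq$ is immediate: if $g \in G^h$ and $y \in V_+^\pi(\be_2)^{-\tau_h}$, then $\tau_h(g) = g$ and $\tau_h(y) = -y$ give $\oline\tau_h(g.\Exp_{m_0}(iy)) = g.\Exp_{m_0}(iy)$.

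\textbf{Main obstacle.} The routine but slightly delicate point is establishing the needed injectivity of the polar map on $\cT_{\dS^d}$ (the analogue of Proposition~\ref{prop:4.9}): since the differential of this polar map is not everywhere surjective (the cone $V_+^\pi(\be_2)$ touches the light cone at its boundary), one must argue by reducing to the normal form $y = y_0\be_0$, $0 < y_0 < \pi$ and tracking stabilizers carefully, exactly as in the proof of Lemma~\ref{lem:a.4}. A secondary bookkeeping issue is the precise component structure of $G^{\tau_h}$ versus $G^h$ for $\SO_{1,d}(\R)^\uparrow$; this is resolved using \eqref{eq:gh-lorentz} and the observation that the extra component (if present) is generated by a Cartan-type involution fixing $\be_2$, so it does not change the fixed-point set of $\oline\tau_h$ beyond what $G^h$ already contributes. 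Once these are in place, the argument is a clean specialization of the general proof of Theorem~\ref{thm:4.5}.
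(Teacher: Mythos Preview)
Your plan is a legitimate specialization of the proof of Theorem~\ref{thm:4.5}, but it takes a detour that the paper avoids. The obstacle you yourself flag---injectivity of the polar map $G \times_{G_{\be_2}} V_+^\pi(\be_2) \to \cT_{\dS^d}$---is a real one: the stabilizer $G_{\be_2} \cong \SO_{1,d-1}(\R)^\uparrow$ is \emph{noncompact}, so Proposition~\ref{prop:4.9} (which uses the compactness of $K$) is not directly available, and in fact there are nontrivial identifications (e.g.\ elements with $g.\be_2 = -\be_2$ for $d \geq 3$). Your sketch does not resolve this, and the subsequent component bookkeeping for $G^{\tau_h}$ versus $G^h$ would also need work.

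The paper's proof sidesteps all of this by working \emph{extrinsically} in the ambient space~$V_\C$. Instead of the polar parametrization, it uses the description $\cT_{\dS^d} = (V + i V_+) \cap \dS^d_\C$ from Lemma~\ref{lem:a.4}. The fixed point set of $\oline\tau_h$ on $V + iV_+$ is immediately $V^{\tau_h} + i V_+^{-\tau_h}$, so
\[
\cT_{\dS^d}^{\oline\tau_h} = \{ (ix_0, ix_1, x_2, \ldots, x_d) \in \dS^d_\C : x_0 > |x_1| \}.
\]
One then uses the explicit form $G^h = \exp(\R h)\,\SO_{d-1}(\R)$ from \eqref{eq:gh-lorentz} to bring any such element to the form $(iy_0, 0, y_2, 0, \ldots, 0)$ with $y_0 > 0$, $y_0^2 + y_2^2 = 1$, which equals $\Exp_{\be_2}(it\be_0)$ for a unique $t \in (0,\pi)$. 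No polar injectivity or component analysis is needed; the ambient linear structure of $V_\C$ does the work.
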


\begin{prf} We clearly have 
\[ \cT_{\dS^d}^{\oline\tau_h} 
\supeq G^{\tau_h}.\Exp_{\be_2}(i V_+^\pi(\be_2)^{-\tau_h}).\] 
We also note that 
\[ V_+^\pi(\be_2)^{-\tau_h} 
= V_+^\pi(\be_2) \cap (\R \be_0 + \R \be_1) 
= \{ x_0 \be_0 + x_1 \be_1 \:  x_0 > 0, 0 < x_0^2 - x_1^2 < \pi^2\}.\] 
Next we observe that 
\[ \cT_{\dS^d}^{\oline\tau_h}
= (V + i V_+)^{\oline \tau_h} \cap \dS^d_\C 
= (V^{\tau_h} + i V_+^{-\tau_h}) \cap \dS^d_\C
= \{ (i x_0, i x_1, x_2, \ldots, x_d) \in \dS^d_\C 
\: x_0 > |x_1|\}.\] 
Hence the elements $z \in \cT_{\dS^d}^{\oline\tau_h}$ are of the form 
\[ z  = (ix_0, ix_1, x_2, \ldots, x_d), \quad x_0 > |x_1|, \ 
x_0^2 - x_1^2 + x_2^2 + \cdots +  x_d^2 = 1.\] 
Therefore the orbit of $z$ under $G^h = \exp(\R h) \SO_{d-1}(\R)$ 
(see \eqref{eq:gh-lorentz}) 
contains an element of the form $y = (iy_0, 0, y_2,0,\ldots,0)$ 
with $y_0 > 0$ and $y_0^2 + y_2^2 = 1$. 
Hence there exists a $t \in (0,\pi)$ with 
$y_0 = \sin(t)$ and $y_2 = \cos(t)$. Then 
\[ \Exp_{\be_2} (it\be_0) = \cos(t) \be_2 + \sin(t) i \be_0 
= y_2 \be_2 + i y_0 \be_0 = y.\] 
This implies that 
\[ \cT_{\dS^d}^{\oline\tau_h} 
\subeq G^h.\Exp_{\be_2}(i (0,\pi) \be_0) = G^h.\Exp_{\be_2}(i V_+^\pi(\be_2)^{-\tau_h}).
\qedhere\] 
\end{prf}

\begin{prop} \mlabel{prop:desit}
The following subsets of de Sitter space $M =\dS^d$ are equal: 
  \begin{itemize}
  \item[\rm(a)] $W_R \cap \dS^d = \{ x \in \dS^d \: x_1 > |x_0|\}$. 
\item[\rm(b)] $W_M^+(h) := \{ x \in \dS^d  \: X_h(x) \in V_+(x) \}$ 
(the positivity domain of $X_h$). 
\item[\rm(c)] 
$W_M^{\rm KMS}(h) = \{ x \in \dS^d \: (\forall t \in (0,\pi))
\ \alpha_{it}(x) \in \cT_M\}$ 
(the KMS domain of $\alpha$). 
 \item[\rm(d)] $\kappa_h(\cT_M^{\oline\tau_h})$. 
 \item[\rm(e)] $W_M(h) = G^h.\Exp_{\be_2}((C_+^\circ + C_-^\circ)^\pi)$. 
  \end{itemize}
\end{prop}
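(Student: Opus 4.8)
The strategy is to reduce everything to the five-fold equality already established for Minkowski space, namely Lemma~\ref{lem:minko} and the remark following it, combined with the explicit description of the tube domain $\cT_{\dS^d}$ in Lemma~\ref{lem:a.4} and Definition~\ref{def:desittube}, the description of the fixed point set $\cT_{\dS^d}^{\oline\tau_h}$ in Lemma~\ref{lem:a.9}, and the eigenspace picture of the light cone $C = \oline{V_+}$. First I would prove the equality of (a) and (b): both $W_R \cap \dS^d$ and $W_M^+(h)$ are the intersection of $\dS^d$ with the corresponding subsets of $V$, and the equality $W_R = W_V^+(h)$ in $V$ is exactly the first assertion of Lemma~\ref{lem:minko}; since $X_h$ is tangential to $\dS^d$ and $V_+(x) = V_+ \cap x^\bot$, restricting gives (a)${}={}$(b). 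This step is immediate and was in fact already noted just before Lemma~\ref{lem:a.4}.

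Next I would handle (c). By definition $W_M^{\rm KMS}(h) = \{ x \in \dS^d : (\forall t \in (0,\pi))\ \alpha_{it}(x) \in \cT_M\}$, and $\cT_M = \dS^d_\C \cap (V + iV_+)$ by Definition~\ref{def:desittube}. Since $\alpha_{it}$ is the restriction to $\dS^d_\C$ of the linear flow $e^{ith}$ on $V_\C$, the condition $\alpha_{it}(x) \in \cT_M$ for $x \in \dS^d$ is equivalent to $e^{ith}x \in V + iV_+$. Thus $W_M^{\rm KMS}(h) = \dS^d \cap (\text{KMS domain of }\alpha\text{ in }V)$, which by the equality of (a) and (c) in Lemma~\ref{lem:minko} equals $\dS^d \cap W_R$. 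Then (d): we have $\kappa_h(\cT_M^{\oline\tau_h}) = \kappa_h(\cT_V^{\oline\tau_h} \cap \dS^d_\C)$, and since $\kappa_h$ is linear and preserves $\dS^d_\C$ (it lies in $\SO_{1,d}(\C)$), this equals $\kappa_h(\cT_V^{\oline\tau_h}) \cap \dS^d$. Lemma~\ref{lem:minko}(d) identifies $\kappa_h(\cT_V^{\oline\tau_h})$ with $W_R$, so (d) also equals $W_R \cap \dS^d$. Alternatively, one can run the same argument purely inside $\dS^d$ using Lemma~\ref{lem:a.9} to describe $\cT_M^{\oline\tau_h}$ and then apply $\kappa_h$ directly; I would include whichever is cleaner to typeset.

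The remaining point, (e)${}={}$(a), requires identifying $W_M(h) = G^h.\Exp_{\be_2}((C_+^\circ + C_-^\circ)^\pi)$ with $\{x \in \dS^d : x_1 > |x_0|\}$. Here I would use the eigenspace decomposition $V_{\pm 1}(h) = \R(\be_1 \pm \be_0)$ and $C_\pm = \pm C \cap V_{\pm 1}(h) = [0,\infty)(\be_1 \pm \be_0)$ from the remark after Lemma~\ref{lem:minko}, so that $C_+^\circ + C_-^\circ = \R_+(\be_1 + \be_0) + \R_+(\be_1 - \be_0)$, an open cone in $\fq^{-\tau_h}$; the condition $s(\cdot) < \pi$ cuts this down to the relevant polyhedron $(C_+^\circ + C_-^\circ)^\pi$, and on $\R(\be_0 - \be_1)$ this is the segment $\{t(\be_0-\be_1) : 0 < t < \pi/2\}$ (the $\dS^2$ version of \eqref{eq:coneslice}). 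Computing $\Exp_{\be_2}(x)$ for $x \in (C_+^\circ+C_-^\circ)^\pi$ via Proposition~\ref{prop:expfct} — using that $\be_2 \in V^{\tau_h}$ is fixed and that such $x$ are spacelike with $[x,x] = 4\lambda\mu \in (0,\pi^2)$ when $x = \lambda(\be_1+\be_0) - \mu(\be_1-\be_0)$ wait, I must be careful with signs: $x = \lambda(\be_1+\be_0) + \mu(\be_1-\be_0)$ with $\lambda,\mu > 0$ has $[x,x] = -4\lambda\mu < 0$, so in fact $x$ is \emph{timelike} — gives a point with positive $\be_0$- or $\be_1$-coordinate pattern placing it in $W_R$; then spreading by $G^h = \exp(\R h)\SO_{d-1}(\R)$ and checking that every point of $W_R \cap \dS^d$ is reached (diagonalize using that every $G^h$-orbit in $W_R \cap \dS^d$ meets the plane $\Spann\{\be_1,\be_2\}$, reducing to an explicit $\dS^2$ computation). \textbf{The main obstacle} will be this last identification (e)${}={}$(a): the polar map defining $W_M(h)$ has singular differential (this was flagged for Lemma~\ref{cor:polwedgeopen} in the general case), so surjectivity onto $W_R \cap \dS^d$ and injectivity modulo $G^h$ are not automatic and must be verified by the direct $2$-dimensional computation in $\fsl_2(\R) \cong \so_{1,2}(\R)$, keeping careful track of which elements of $C_+^\circ + C_-^\circ$ are timelike versus spacelike and hence whether $\Exp$ produces $\cosh/\sinh$ or $\cos/\sin$ factors. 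Everything else is bookkeeping with the already-proved Minkowski-space lemma.
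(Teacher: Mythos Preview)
Your treatment of (a)${}={}$(b)${}={}$(c)${}={}$(d) matches the paper's essentially verbatim: intersect the corresponding Minkowski statements from Lemma~\ref{lem:minko} with $\dS^d$, using that $\cT_M=\dS^d_\C\cap(V+iV_+)$ and that $\alpha_{it}$, $\kappa_h$ restrict from $V_\C$. (For (d) the paper opts for your ``alternatively'' route and reads off $\cT_M^{\oline\tau_h}$ explicitly from the proof of Lemma~\ref{lem:a.9}, then applies $\kappa_h$; either works.)

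For (e) your plan is correct in principle but needlessly hard, and the paper bypasses your ``main obstacle'' entirely. Instead of computing $\Exp_{\be_2}$ on $(C_+^\circ+C_-^\circ)^\pi$ directly and then analyzing the polar map, the paper simply applies $\kappa_h$ to the identity of Lemma~\ref{lem:a.9}:
\[
\cT_M^{\oline\tau_h}=G^h.\Exp_{\be_2}\bigl(iV_+^\pi(\be_2)^{-\tau_h}\bigr).
\]
Since $\kappa_h$ commutes with $G^h$, fixes $\be_2\in V_0(h)$, and is an automorphism of the quadric (hence intertwines $\Exp$), one gets
\[
\kappa_h(\cT_M^{\oline\tau_h})=G^h.\Exp_{\be_2}\bigl(\kappa_h(iV_+^\pi(\be_2)^{-\tau_h})\bigr),
\]
and an elementary linear computation (already implicit in the explicit formula for $\kappa_h$) shows $\kappa_h(iV_+^\pi(\be_2)^{-\tau_h})=(C_+^\circ+C_-^\circ)^\pi$. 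Combining with (d) gives (e)${}={}$(a) in one line. This avoids the sign bookkeeping you flagged and makes the singularity of the polar map irrelevant: you are not proving that the polar parametrization is a diffeomorphism, only that its image equals a set already identified via~(d).
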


\begin{prf} The equality of the sets under (a) and (b) 
follows from Lemma~\ref{lem:minko}. 

As $\alpha_{it}(m) \in M_\C$ for every $t \in \R$ and $m \in M$, 
Lemma~\ref{lem:a.4} shows that the set under (c) coincides 
with $W_V^+(h) \cap M = W_M ^+(h)$. 

From the proof of Lemma~\ref{lem:a.9}, we recall that 
\[ \cT_{\dS^d}^{\oline\tau_h}
= \{ (i x_0, i x_1, x_2, \ldots, x_d) \in \dS^d_\C 
\: x_0 > |x_1|\}.\] 
Now 
$\kappa_h(x) = (-i x_1, -i x_0, x_2, \ldots, x_d)$ implies
\[ \kappa_h(\cT_{\dS^d}^{\oline\tau_h})
= \{ (x_0, x_1, x_2, \ldots, x_d) \in \dS^d 
\: x_1 > |x_0|\} = W_R \cap \dS^d = W_M(h).\] 
Combining Lemma~\ref{lem:a.9} with (d), we finally obtain 
\begin{align*}
W_M^+(h) 
&= \kappa_h(\cT_{\dS^d}^{\oline\tau_h}) 
= G^h.\Exp_{\be_2}( \kappa_h(iV_+^\pi(\be_2)^{-\tau_h}))
= G^h.\Exp_{\be_2}((C_+^\circ + C_-^\circ)^\pi)
\end{align*}
for $C_\pm^\circ = \R_+ (\be_1 \pm \be_0).$
\end{prf}

\begin{rem}
For the $\alpha$-fixed base point $\be_2 \in \dS^d$ the cone  
\[ W_R(\be_2) := W_R \cap T_{\be_2}(\dS^d) 
= C_+^\circ + C_-^\circ + T_{\be_2}(\dS^d)^{\tau_h}, 
\quad \mbox{ where } \quad  C_\pm^\circ = \R_+ (\be_1 \pm \be_0)\] 
is an infinitesimal version of the wedge domain in $M = \dS^d$. 
\end{rem}
\medskip

\noindent
{\bf Data availability:} Data sharing is not applicable to this article as no 
dataset were used or generated in the article.

\end{document}